\newcommand{\fullornot}[2]{#1}
\newcommand{\citegs}{\cite{gs}}
\newcommand{\basis}{\vec{B}}
\renewcommand{\C}{\mathcal{C}}
\newcommand{\supp}{\mathsf{Supp}}
\newcommand{\brak}[1]{\llbracket #1 \rrbracket}
\newcommand{\suppvec}{\vv{\mathbf{Supp}}}
\newcommand{\mindist}{d_{\min}}
\newcommand{\gs}{\widetilde}
\title{\bf More basis reduction for linear codes: backward reduction, BKZ, slide reduction, and more}
\author{Surendra Ghentiyala\thanks{Cornell University.  \email{sg974@cornell.edu}. This work is supported in part by the NSF under Grants Nos.~CCF-2122230 and CCF-2312296, a Packard Foundation Fellowship, and a generous gift from Google.}
\and Noah Stephens-Davidowitz\thanks{Cornell University. \email{noahsd@gmail.com}. This work is supported in part by the NSF under Grants Nos.~CCF-2122230 and CCF-2312296, a Packard Foundation Fellowship, and a generous gift from Google. Some of this work was completed while the author was visiting the National University of Singapore and the Centre for Quantum Technologies}}
\date{}
\pgfplotsset{width=10cm,compat=1.10}
\begin{document}
\pagenumbering{roman}
\maketitle

\begin{abstract}
We expand on recent exciting work of Debris-Alazard, Ducas, and van Woerden [Transactions on Information Theory, 2022], which introduced the notion of \emph{basis reduction for codes}, in analogy with the extremely successful paradigm of basis reduction for lattices. 
We generalize DDvW's LLL algorithm and size-reduction algorithm from codes over $\F_2$ to codes over $\F_q$, and we further develop the theory of proper bases. We then show how to instantiate for codes the BKZ and slide-reduction algorithms, which are the two most important generalizations of the LLL algorithm for lattices. 

Perhaps most importantly, we show a new and very efficient basis-reduction algorithm for codes, called \emph{full backward reduction}. This algorithm is quite specific to codes and seems to have no analogue in the lattice setting. We prove that this algorithm finds vectors as short as LLL does in the worst case (i.e., within the Griesmer bound) and does so in less time. We also provide both heuristic and empirical evidence that it outperforms LLL in practice, and we give a variant of the algorithm that provably outperforms LLL (in some sense) for random codes.

Finally, we explore the promise and limitations of basis reduction for codes. In particular, we show upper and lower bounds on how ``good'' of a basis a code can have, and we show two additional illustrative algorithms that demonstrate some of the promise and the limitations of basis reduction for codes. \end{abstract}

\thispagestyle{empty}
\newpage

\tableofcontents

\newpage
\pagenumbering{arabic}

\section{Introduction}
\label{sec:intro}
\subsection{Codes and lattices}

A \emph{linear code} $\C \subseteq \F_q^n$ is a subspace of the vector space $\F_q^n$ over the finite field $\F_q$, i.e., it is the set of all $\F_q$-linear combinations of a linearly independent list of vectors $\basis := (\vec{b}_1;\ldots ; \vec{b}_k)$, 
\[
    \C := \C(\basis) := \{z_1 \vec{b}_1 + \cdots + z_k \vec{b}_k \ : \ z_i \in \F_q \}
    \; .
\]
We call $\vec{b}_1,\ldots, \vec{b}_k$ a \emph{basis} for the code and $k$ the \emph{dimension} of the code. We are interested in the geometry of the code induced by the Hamming weight $|\vec{c}|$ for $\vec{c} \in \F_q^n$, which is simply the number of coordinates of $\vec{c}$ that are non-zero. For example,  it is natural to ask about a code's \emph{minimum distance}, which is the minimal Hamming weight of a non-zero codeword, i.e.,
\[
    \mindist(\C) := \min_{\vec{c} \in \C_{\neq \vec0}} |\vec{c}|
    \; .
\]

At a high level, there are two fundamental computational problems associated with linear codes. In the first, the goal is to find a short non-zero codeword---i.e., given a basis for a code $\C$, the goal is to find a non-zero codeword $\vec{c} \in \C_{\neq \vec0}$ with relatively small Hamming weight $|\vec{c}|$. In the second, the goal is to find a codeword that is close to some target vector $\vec{t} \in \F_q^n$---i.e., given a basis for a code $\C$ and a target vector $\vec{t} \in \F_q^n$, the goal is to find a codeword $\vec{c} \in \C$ such that $|\vec{c} - \vec{t}|$ is relatively small. (Here, we are being deliberately vague about what we mean by ``relatively small.'')

We now describe another class of mathematical objects, which are also ubiquitous in computer science. Notice the striking similarity between the two descriptions.

A \emph{lattice} $\lat \subset \Q^n$ is the set of all \emph{integer} linear combinations of linearly independent basis vectors $\vec{A} := (\vec{a}_1;\ldots; \vec{a}_k) \in \Q^n$, i.e.,
\[
    \lat := \lat(\vec{A}) := \{z_1 \vec{a}_1 + \cdots + z_k \vec{a}_k \ : \ z_i \in \Z\}
    \; .
\]
We are interested in the geometry of the lattice induced by the Euclidean norm $\|\vec{a}\| := (a_1^2 + \cdots + a_n^2)^{1/2}$. In particular, it is natural to ask about a lattice's \emph{minimum distance}, which is simply the minimal Euclidean norm of a non-zero lattice vector, i.e.,
\[
    \lambda_1(\lat) := \min_{\vec{y} \in \lat_{\neq \vec0}} \|\vec{y}\|
    \; .
\]

At a high level, there are two fundamental computational problems associated with lattices. In the first, the goal is to find a short non-zero lattice vector---i.e., given a basis for a lattice $\lat$, the goal is to find a non-zero lattice vector $\vec{y} \in \lat_{\neq \vec0}$ with relatively small Euclidean norm $\|\vec{y}\|$. In the second, the goal is to find a lattice vector that is close to some target vector $\vec{t} \in \Q^n$---i.e., given a basis for a lattice $\lat$ and a target vector $\vec{t} \in \Z^n$, the goal is to find a lattice vector $\vec{y} \in \lat$ such that $\|\vec{y} - \vec{t}\|$ is relatively small. (Again, we are being deliberately vague here.)

Clearly, there is a strong analogy between linear codes and lattices, where to move from codes to lattices, one more-or-less just replaces a finite field $\F_q$ with the integers $\Z$ and the Hamming weight $|\cdot|$ with the Euclidean norm $\|\cdot\|$. It is therefore no surprise that this analogy extends to many applications. For example, lattices and codes are both used for decoding noisy channels. They are both used for cryptography (see, e.g.,~\cite{mceliecePublickeyCryptosystemBased1978,ajtaiGeneratingHardInstances1996,ajtaiPublicKeyCryptosystemWorstCase1997,hoffsteinNTRURingbasedPublic1998,alekhnovichMoreAverageCase2003,regevLatticesLearningErrors2009}; in fact, both are used specifically for \emph{post-quantum} cryptography). And, many complexity-theoretic hardness results were proven simultaneously or nearly simultaneously for coding problems and for lattice problems, often with similar or even identical techniques.\footnote{For example, similar results that came in separate works in the two settings include \cite{boasAnotherNPcompleteProblem1981} and \cite{berlekampInherentIntractabilityCertain1978}, \cite{ajtaiShortestVectorProblem1998} and \cite{VarAlgorithmicComplexityCoding1997}, \cite{Mic01svp} and \cite{dumerHardnessApproximatingMinimum2003}, \cite{bennettQuantitativeHardnessCVP2017,ASGapETHHardness2018} and \cite{stephens-davidowitzSETHhardnessCodingProblems2019}, etc. Works that simultaneously published the same results in the two settings include~\cite{aroraHardnessApproximateOptima1997} and~\cite{DKRSApproximatingCVPAlmostpolynomial2003}.}

\subsection{Basis reduction for lattices}

However, the analogy between lattices and codes has been much less fruitful for algorithms. Of course, there are many algorithmic techniques for finding short or close codewords and many algorithmic techniques for finding short or close lattice vectors. But, in many parameter regimes of interest, the best algorithms for lattices are quite different from the best algorithms for codes. 

In the present work, we are interested in \emph{basis reduction}, a ubiquitous algorithmic framework in the lattice literature. 
At a very high level, given a basis $\vec{A} := (\vec{a}_1;\ldots; \vec{a}_k)$ for a lattice $\lat$, basis reduction algorithms work by attempting to find a ``good'' basis of $\lat$ (and in particular, a basis whose first vector $\vec{a}_1$ is short) by repeatedly making ``local changes'' to the basis. Specifically, such algorithms manipulate the Gram-Schmidt vectors $\gs{\vec{a}}_1 := \vec{a}_1,\gs{\vec{a}}_2 := \pi_{\{\vec{a}_1\}}^\perp(\vec{a}_2), \ldots, \gs{\vec{a}}_k := \pi_{\{\vec{a}_1,\ldots, \vec{a}_{k-1}\}}^\perp(\vec{a}_k)$. Here, $\pi^\perp_{\{\vec{a}_1,\ldots, \vec{a}_{i-1}\}}$ represents orthogonal projection onto the subspace orthogonal to $\vec{a}_1,\ldots, \vec{a}_{i-1}$. Notice that we can view the Gram-Schmidt vector $\gs{\vec{a}}_i$ as a lattice vector in the lower-dimensional lattice generated by the \emph{projected block} $\vec{A}_{[i,j]} := \pi^\perp_{\{\vec{a}_1,\ldots, \vec{a}_{i-1}\}}(\vec{a}_i;\ldots; \vec{a}_j)$. Basis reduction algorithms work by making many ``local changes'' to $\vec{A}$, i.e., changes to the block $\vec{A}_{[i,j]}$ that leave the lattice $\lat(\vec{A}_{[i,j]})$ unchanged. The goal is to use such local changes to make earlier Gram-Schmidt vectors shorter. (In particular, $\gs{\vec{a}}_1 = \vec{a}_1$ is a non-zero lattice vector. So, if we can make the first Gram-Schmidt vector short, then we will have found a short non-zero lattice vector.) One accomplishes this, e.g., by finding a short non-zero vector $\vec{y}$ in $\lat(\vec{A}_{[i,j]})$ and essentially replacing the first vector in the block with this vector $\vec{y}$. (Here, we are ignoring how exactly one does this replacement.)

This paradigm was introduced in the celebrated work of Lenstra, Lenstra, and Lov{\'a}sz~\cite{lll82}, which described the polynomial-time LLL algorithm. Specifically, (ignoring important details that are not relevant to the present work) the LLL algorithm works by repeatedly replacing Gram-Schmidt vectors $\gs{\vec{a}}_i$ with a shortest non-zero vector in the lattice generated by the dimension-two block $\vec{A}_{[i,i+1]}$. The LLL algorithm itself has innumerable applications. (See, e.g.,~\cite{nguyenLLLAlgorithmSurvey2010}.) Furthermore, generalizations of LLL yield the most efficient algorithms for finding short non-zero lattice vectors in a wide range of parameter regimes, including those relevant to cryptography.

Specifically, the Block-Korkine-Zolotarev basis-reduction algorithm (BKZ), originally due to Schnorr~\cite{schnorrHierarchyPolynomialTime1987}, is a generalization of the LLL algorithm that works with larger blocks. It works by repeatedly modifying blocks $\vec{A}_{[i,i+\beta-1]}$ of a lattice basis $\vec{A} := (\vec{a}_1;\ldots;\vec{a}_k)$ in order to ensure that the Gram-Schmidt vector $\gs{\vec{a}}_i$ is a shortest non-zero vector in the lattice generated by the block. Here, the parameter $\beta \geq 2$ is called the \emph{block size}, and the case $\beta = 2$ corresponds to the LLL algorithm (ignoring some technical details). Larger block size $\beta$ yields better bases consisting of shorter lattice vectors. But, to run the algorithm with block size $\beta$, we must find shortest non-zero vectors in a $\beta$-dimensional lattice, which requires running time $2^{O(\beta)}$ with the best-known algorithms~\cite{ajtaiSieveAlgorithmShortest2001,conf/stoc/AggarwalDRS15,BDGLNewDirectionsNearest2016}. So, BKZ yields a tradeoff between the quality of the output basis and the running time of the algorithm. (Alternatively, one can view BKZ as a reduction from an approximate lattice problem in high dimensions to an exact lattice problem in lower dimensions, with the approximation factor depending on how much lower the resulting dimension is.)

BKZ is the fastest known algorithm in practice for the problems relevant to cryptography. However, BKZ is notoriously difficult to understand. Indeed, we still do not have a proof that the BKZ algorithm makes at most polynomially many calls to its $\beta$-dimensional oracle, nor do we have a tight bound on the quality of the bases output by BKZ, despite much effort. (See, e.g., \cite{walterLatticeBlogReduction}. However, for both the running time and the output quality of the basis, we now have a very good \emph{heuristic} understanding~\cite{CNBKZBetterLattice2011,BSWMeasuringSimulatingExploiting2018}.)

Gama and Nguyen's slide-reduction algorithm is an elegant alternative to BKZ that is far easier to analyze~\cite{gamaFindingShortLattice2008}. In particular, it outputs a basis whose quality (e.g., the length of the first vector) essentially matches our heuristic understanding of the behavior of BKZ, and it provably does so with polynomially many calls to a $\beta$-dimensional oracle for finding a shortest non-zero lattice vector. Indeed, for a wide range of parameters (including those relevant to cryptography), \cite{gamaFindingShortLattice2008} yields the fastest algorithm with proven correctness for finding short non-zero lattice vectors.

\paragraph{Dual reduction, and some foreshadowing.} One of the key ideas used in Gama and Nguyen's slide-reduction algorithm (as well as in other work, such as~\cite{micciancioPracticalPredictableLattice2016}) is the notion of a \emph{dual-reduced block} $\vec{A}_{[i,j]}$. The motivation behind dual-reduced blocks starts with the observation that the product $\|\gs{\vec{a}}_{i}\| \cdots \|\gs{\vec{a}}_{j}\|$ does not change if the lattice $\lat(\vec{A}_{[i,j]})$ is not changed. Formally, this quantity is the determinant of the lattice $\lat(\vec{A}_{[i,j]})$, which is a lattice invariant. So, while it is perhaps more natural to think of basis reduction in terms of making earlier Gram-Schmidt vectors in a block shorter, with the ultimate goal of making $\vec{a}_1$ short, one can more-or-less equivalently think of basis reduction in terms of making \emph{later} Gram-Schmidt vectors \emph{longer}.

One therefore defines a \emph{dual-reduced} block as a block $\vec{A}_{[i,j]}$ such that the last Gram-Schmidt vector $\gs{\vec{a}}_{j}$ is as \emph{long} as it can be without changing the associated lattice $\lat(\vec{A}_{[i,j]})$. When $\beta := j-i+1> 2$, a dual-reduced block is not the same as a block whose first Gram-Schmidt vector is as short as possible. However, there is still some pleasing symmetry here. In particular, it is not hard to show that the last Gram-Schmidt vector $\gs{\vec{a}}_j$ corresponds precisely to a non-zero (primitive) vector in the \emph{dual} lattice of $\lat(\vec{A}_{[i,j]})$ with length $1/\|\gs{\vec{a}}_{j}\|$. This of course explains the terminology. It also means that making the last Gram-Schmidt vector $\gs{\vec{a}}_{j}$ as long as possible corresponds to finding a shortest non-zero vector in the dual of $\lat(\vec{A}_{[i,j]})$, while making the first Gram-Schmidt vector $\gs{\vec{a}}_i$ as short as possible of course corresponds to finding a shortest non-zero vector in $\lat(\vec{A}_{[i,j]})$ itself. Either way, this amounts to finding a shortest non-zero vector in a $\beta$-dimensional lattice, which takes time that is exponential in the block size $\beta$.

\subsection{Basis reduction for codes!}

As far as the authors are aware, until very recently there was no work attempting to use the ideas from basis reduction in the setting of linear codes. This changed with the recent exciting work of Debris-Alazard, Ducas, and van Woerden, who in particular showed a simple and elegant analogue of the LLL algorithm for codes~\cite{DDvAlgorithmicReductionTheory2022}.

Debris-Alazard, Ducas, and van Woerden provide a ``dictionary'' (\cite[Table 1]{DDvAlgorithmicReductionTheory2022}) for translating important concepts in basis reduction from the setting of lattices to the setting of codes, and it is the starting point of our work. Below, we describe some of the dictionary from~\cite{DDvAlgorithmicReductionTheory2022}, as well as some of the barriers that one encounters when attempting to make basis reduction work for codes.

\subsubsection{Projection, epipodal vectors, and proper bases}

Recall that when one performs basis reduction on lattices, one works with the Gram-Schmidt vectors $\gs{\vec{a}}_i := \pi_{\{\vec{a}_1,\ldots, \vec{a}_{i-1}\}}^\perp(\vec{a}_i)$ and the projected blocks $\vec{A}_{[i,j]} := \pi^\perp_{\{\vec{a}_1,\ldots, \vec{a}_{i-1}\}}(\vec{a}_i;\ldots; \vec{a}_j)$, i.e., the \emph{orthogonal projection} of $\vec{a}_i,\ldots,\vec{a}_j$ onto the subspace $\{\vec{a}_1,\ldots, \vec{a}_{i-1}\}^\perp$ orthogonal to $\vec{a}_1,\ldots, \vec{a}_{i-1}$.

So, if we wish to adapt basis reduction to the setting of linear codes (and we do!), it is natural to first ask what the analogue of \emph{projection} is in the setting of codes. \cite{DDvAlgorithmicReductionTheory2022} gave a very nice answer to this question.\footnote{\cite{DDvAlgorithmicReductionTheory2022} formally worked with the case $q = 2$ everywhere. Rather than specialize our discussion here to $\F_2$, we will largely ignore this distinction in this part of the introduction. While the more general definitions that we provide here for arbitrary $\F_q$ are new to the present work, when generalizing to $\F_q$ is straightforward, we will not emphasize this in the introduction.} In particular, for a vector $\vec{x} = (x_1,\ldots, x_n) \in \F_q^n$ we call the set of indices $i$ such that $x_i$ is non-zero the \emph{support} of $\vec{x}$, i.e.,
\[
    \supp(\vec{x}) := \{i \in [n] \ : \ x_i \neq 0\}
    \; .
\]
Then,~\cite{DDvAlgorithmicReductionTheory2022} define $\vec{z} := \pi_{\{\vec{x}_1,\ldots, \vec{x}_\ell\}}^\perp(\vec{y})$ as follows. If $i \in \bigcup_j \supp(\vec{x}_j)$, then $z_i = 0$. Otherwise, $z_i = y_i$. In other words, the projection simply ``zeros out the coordinates in the supports of the $\vec{x}_j$.'' This notion of projection shares many (but certainly not all) of the features of orthogonal projection in $\R^n$, e.g., it is a linear contraction mapping (though not a strict contraction) that is idempotent.

Armed with this notion of projection,  \cite{DDvAlgorithmicReductionTheory2022} then defined the \emph{epipodal vectors} associated with a basis $\vec{b}_1,\ldots, \vec{b}_n$ as $\vec{b}_1^+ := \vec{b}_1, \vec{b}_2^+ := \pi_{\{\vec{b}_1\}}^\perp(\vec{b}_2), \ldots, \vec{b}_n^+ := \pi_{\{\vec{b}_1,\ldots, \vec{b}_{n-1}\}}^\perp(\vec{b}_n)$, in analogy with the Gram-Schmidt vectors. In this work, we go a bit further and define 
\[
    \vec{B}_{[i,j]} := \pi_{\{\vec{b}_1,\ldots, \vec{b}_{i-1}\}}^\perp(\vec{b}_{i};\ldots; \vec{b}_{j})
    \; ,
\]
in analogy with the notation in the literature on \emph{lattice} basis reduction.

Here, \cite{DDvAlgorithmicReductionTheory2022} already encounters a bit of a roadblock. Namely, the epipodal vectors $\vec{b}_i^+$ can be zero! E.g., if $\vec{b}_1 = (1,1,\ldots,1)$ is the all-ones vector, then $\vec{b}_i^+$ will be zero for all $i > 1$!\footnote{Of course, similar issues do not occur over $\R^n$, because if $\vec{a}_1,\ldots, \vec{a}_k \in \R^n$ are linearly independent, then $\pi_{\vec{a}_1,\ldots, \vec{a}_{k-1}}^\perp(\vec{a}_k)$ cannot be zero.} This is rather troublesome and could lead to many issues down the road. For example, we might even encounter entire blocks $\vec{B}_{[i,j]}$ that are zero! Fortunately, \cite{DDvAlgorithmicReductionTheory2022} shows how to get around this issue by defining \emph{proper} bases, which are simply bases for which all the epipodal vectors are non-zero. They then observe that proper bases exist and are easy to compute. (In \cref{sec:proper}, we further develop the theory of proper bases.) So, this particular roadblock is manageable, but it already illustrates that the analogy between projection over $\F_q^n$ and projection over $\R^n$ is rather brittle.

The LLL algorithm for codes then follows elegantly from these definitions. In particular, a basis $\basis = (\vec{b}_1;\ldots;\vec{b}_k)$ is \emph{LLL-reduced} if it is proper and if $\vec{b}_i^+$ is a shortest non-zero codeword in the dimension-two code generated by $\basis_{[i,i+1]}$ for all $i = 1,\ldots, k-1$. \cite{DDvAlgorithmicReductionTheory2022} then show a simple algorithm that computes an LLL-reduced basis in polynomial time. Specifically, the algorithm repeatedly makes simple local changes to any block $\basis_{[i,i+1]}$ for which this condition is not satisfied until the basis is reduced. 

In some ways, this new coding-theoretic algorithm is even more natural and elegant than the original LLL algorithm for lattices. For example, the original LLL algorithm had to worry about numerical blowup of the basis entries. And, the original LLL algorithm seems to require an additional slack factor $\delta$ in order to avoid the situation in which the algorithm makes a large number of minuscule changes to the basis. Both of these issues do not arise over finite fields, where all vectors considered by the algorithm have entries in $\F_q$ and integer lengths between $1$ and $n$.

\subsubsection{What's a good basis and what is it good for?}

Given the incredible importance of the LLL algorithm for lattices, it is a major achievement just to show that one can make sense of the notion of ``LLL for codes.'' But, once \cite{DDvAlgorithmicReductionTheory2022} have defined an LLL-reduced basis for codes and shown how to compute one efficiently, an obvious next question emerges: what can one do with such a basis? 

In the case of lattices, the LLL algorithm is useful for many things, but primarily for the two most important computational lattice problems: finding short non-zero lattice vectors and finding close lattice vectors to a target. In particular, the first vector $\vec{a}_1$ of an LLL-reduced basis is guaranteed to $\|\vec{a}_1\| \leq 2^{k-1} \lambda_1(\lat)$. This has proven to be incredibly useful, despite the apparently large approximation factor.

For codes over $\F_2$, \cite{DDvAlgorithmicReductionTheory2022} show that the same is true, namely that if $\basis = (\vec{b}_1;\ldots; \vec{b}_k)$ is an LLL-reduced basis for $\C \subseteq \F_2^n$, then $|\vec{b}_1| \leq 2^{k-1} \mindist(\C)$. 
They prove this by showing that if $\vec{b}_i^+$ has minimal length among the non-zero codewords in $\C(\basis_{[i,i+1]})$, then $|\vec{b}_i^+| \leq 2 |\vec{b}_{i+1}^+|$. It follows in particular that $|\vec{b}_1| = |\vec{b}_1^+| \leq 2^{i-1} |\vec{b}_i^+|$ for all $i$. One can easily see that $\mindist(\C) \geq \min_i |\vec{b}_i^+|$, from which one immediately concludes that $|\vec{b}_1^+| \leq 2^{k-1} \mindist(\C)$. A simple generalization of this argument shows that over $\F_q$, we have $|\vec{b}_1^+| \leq q^{k-1} \mindist(\C)$. (We prove something more general and slightly stronger in \cref{subsec: bkz_approx_factor}.)

However, notice that all codewords have length at most $n$ and $\mindist(\C)$ is always at least $1$. Therefore, an approximation factor of $q^{k-1}$ is non-trivial only if $n > q^{k-1}$. Otherwise, literally any non-zero codeword has length less than $q^{k-1} \mindist(\C)$! On the other hand, if $n > q^{k-1}$, then we can anyway find an exact shortest vector in time roughly $q^k n \lesssim n^2$ by simply enumerating all codewords. (The typical parameter regime of interest is when $n = O(k)$.)

In some sense, the issue here is that the space $\F_q^n$ that codes live in is too ``cramped.'' While lattices are infinite sets that live in a space $\Q^n$ with arbitrarily long and arbitrarily short non-zero vectors, codes are finite sets that live in a space $\F_q^n$ in which all non-zero vectors have integer lengths between $1$ and $n$. So, while for lattices, any approximation factor between one and, say, $2^{k}$ is very interesting, for codes the region of interest is simply more narrow.

\cite{DDvAlgorithmicReductionTheory2022} go on to observe that because $|\vec{b}_{i+1}^+|$ is an integer, for codes over $\F_2$ an LLL-reduced basis must actually satisfy
\[
    |\vec{b}_{i+1}^+| \geq \left\lceil \frac{|\vec{b}_i^+|}{2} \right\rceil
    \; .
\]
With this slightly stronger inequality together with the fact that $\sum |\vec{b}_i^+| \leq n$, they are able to show that $\vec{b}_1$ of an LLL-reduced basis will meet the Griesmer bound~\cite{GriBoundErrorcorrectingCodes1960},
\begin{equation}
    \label{eq:griesmer}
    \sum_{i=1}^k \left\lceil \frac{|\vec{b}_1|}{2^{i-1}} \right\rceil \leq n
    \; ,
\end{equation}
which is non-trivial. E.g., as long as $k \geq \log n$, it follows that 
\begin{equation}
    \label{eq:griesmer_special_case}
    |\vec{b}_1| - \frac{\ceil{\log |\vec{b}_1|}}{2} \leq \frac{n-k}{2} + 1
\end{equation}
(We generalize this in \cref{sec:LLL} to show that the appropriate generalization of LLL-reduced bases to codes over $\F_q$ also meet the $q$-ary Griesmer bound.)

\paragraph{Finding close codewords and size reduction.} For lattices, Babai also showed how to use an LLL-reduced basis to efficiently find \emph{close} lattice vectors to a given target vector~\cite{babaiLovaszLatticeReduction1986}, and like the LLL algorithm itself, Babai's algorithm too has innumerable applications. More generally, Babai's algorithm tends to obtain closer lattice vectors if given a ``better'' basis, in a certain precise sense. \cite{DDvAlgorithmicReductionTheory2022} showed an analogous ``size-reduction'' algorithm that finds close codewords to a given target vector, with better performance given a ``better'' basis. Here, the notion of ``better'' is a bit subtle, but essentially a basis is ``better'' if the epipodal vectors tend to have similar lengths. (Notice that $\sum_i |\vec{b}_i^+| = |\supp(\C)|$, so we cannot hope for all of the epipodal vectors to be short.) 

The resulting size-reduction algorithm finds relatively close codewords remarkably quickly. (Indeed, in nearly linear time.) Furthermore, \cite{DDvAlgorithmicReductionTheory2022} showed how to use their size-reduction algorithm combined with techniques from information set decoding to speed up some information set decoding algorithms for finding short codewords or close codewords to a target. For this, their key observation was the fact that typically most epipodal vectors actually have length one (particularly the later epipodal vectors, as one would expect given that later epipodal vectors by definition have more coordinates ``zeroed out'' by projection orthogonal to the earlier basis vectors) and that their size-reduction algorithm derives most of its advantage from how it treats the epipodal vectors with length greater than one. They therefore essentially run information set decoding on the code projected onto the support of the epipodal vectors with length one and then ``lift'' the result to a close codeword using their size-reduction algorithm. 

They call the resulting algorithm Lee-Brickell-Babai because it is a hybrid of Babai-style size reduction and the Lee-Brickell algorithm~\cite{LBObservationSecurityMcEliece1988}. The running time of this hybrid algorithm is dominated by the cost of running information set decoding on a code with dimension $k-k_1$, where 
\[
    k_1 := |\{i \ : \ |\vec{b}_i^+| > 1\}|
\]
is the number of epipodal vectors that do not have length $1$. 
Indeed, the (heuristic) running time of this algorithm is better than Lee-Brickell by a factor that is exponential in $k_1$, so that even a small difference in $k_1$ can make a large difference in the running time.
They then show that LLL-reduced bases have $k_1 \gtrsim \log n$ (for random codes) and show that this reduction in dimension can offer significant savings in the running time of information set decoding.

Indeed, though the details are not yet public, the current record in the coding problem challenges~\cite{DecodingChallenge} was obtained by Ducas and Stevens, apparently using such techniques.

\subsection{Our contributions}

In this work, we continue the study of basis reduction for codes, expanding on and generalizing the results of \cite{DDvAlgorithmicReductionTheory2022} in many ways, and beginning to uncover a rich landscape of algorithms.

\subsubsection{Expanding on the work of \texorpdfstring{\cite{DDvAlgorithmicReductionTheory2022}}{DDvW22}}

\paragraph{Generalization to $\F_q$.}  Our first set of (perhaps relatively minor) contributions are generalizations of many of the ideas in \cite{DDvAlgorithmicReductionTheory2022} from $\F_2$ to $\F_q$, a direction proposed in that work. In fact, they quite accurately anticipated this direction. So, we quote directly from \cite[Section 1.3]{DDvAlgorithmicReductionTheory2022}:
\begin{displayquote}
    In principle, the definitions, theorems and algorithms of this article should be
generalizable to codes over $\F_q$ endowed with the Hamming metric\ldots\ 
 Some algorithms
may see their complexity grow by a factor $\Theta(q)$, meaning that the algorithms remains polynomial-time only for $q = n^{O(1)}$. It is natural to hope that such a generalised LLL would still match [the]
Griesmer bound for $q > 2$. However, we expect that the analysis of the fundamental
domain [which is necessary for understanding size reduction]\ldots\ would become significantly harder to carry out.
\end{displayquote}

In \cref{sec: fq}, we generalize from $\F_2$ to $\F_q$ the definitions of projection, epipodal vectors, and proper bases; the definition of an LLL-reduced bases and the LLL algorithm;\footnote{We actually describe the LLL algorithm as a special case of the more general algorithms that we describe below. See \fullornot{\cref{sec:LLL}}{the full version \citegs}.} and the size-reduction algorithm and its associated fundamental domain. Some of this is admittedly quite straightforward---e.g., given the definitions in \cite{DDvAlgorithmicReductionTheory2022} of projection, epipodal vectors, and proper bases for codes over $\F_2$, the corresponding definitions for codes over $\F_q$ are immediate (and we have already presented them in this introduction). And, the definition of an LLL-reduced basis and of size reduction follow more-or-less immediately from these definitions. In particular, we do in fact confirm that LLL over $\F_q$ achieves the Griesmer bound.

As \cite{DDvAlgorithmicReductionTheory2022} anticipated, the most difficult challenge that we encounter here is in the analysis of the fundamental domain that one obtains when one runs size reduction with a particular basis $\basis$. We refer the reader to \fullornot{\cref{sec:size_reduction,subsec: F_analysis}}{the full version \citegs} for the details.

(We do not encounter the running time issue described in the quote above---except for our algorithm computing the number of vectors of a given length in $\mathcal{F}(\basis^+)$. In particular, our versions of the LLL algorithm and the size-reduction algorithm---and even our generalizations like slide reduction---run in time that is proportional to a small polynomial in $\log q$.)

Along the way, we make some modest improvements to the work of \cite{DDvAlgorithmicReductionTheory2022}, even in the case of $\F_2$. In particular, using more careful analysis, we shave a factor of roughly $n$ from the proven running time of LLL.  (See \fullornot{\cref{sec:LLL}}{the full version \citegs}.)

\paragraph{More on the theory of proper bases. } In order to develop the basis-reduction algorithms that we will describe next, we found that it was first necessary to develop (in \cref{sec:proper}) some additional tools for understanding and working with \emph{proper bases}, which might be of independent interest. Specifically, we define the concept of a \emph{primitive codeword}, which is a non-zero codeword $\vec{c}$ such that $\supp(\vec{c})$ does not strictly contain the support of any other non-zero codeword. We then show that primitive codewords are closely related to proper bases. For example, we show that $\vec{c}$ is the first vector in some proper basis if and only if $\vec{c}$ is primitive, and that a basis is proper if and only if the epipodal vectors are primitive vectors in their respective projections. 

We find this perspective to be quite useful for thinking about proper bases and basis reduction in general. In particular, we use this perspective to develop algorithms that perform basic operations on proper bases, such as inserting a primitive codeword into the first entry of a basis without affecting properness. The resulting algorithmic tools seems to be necessary for the larger-block-size versions of basis reduction that we describe below, in which our algorithms must make more complicated changes to a basis.

\subsubsection{Backward reduction and redundant sets}

Our next contribution is the notion of \emph{backward reduction}, described in \cref{sec:redundant}. Recall that in the context of lattices, a key idea is the notion of a \emph{dual-reduced} block $\vec{A}_{[i,j]}$, in which the \emph{last} Gram-Schmidt vector $\gs{\vec{a}}_j$ is as \emph{long} possible, while keeping $\lat(\vec{A}_{[i,j]})$ fixed.

Backward-reduced blocks are what we call the analogous notion for codes. Specifically, we say that a block $\vec{B}_{[i,j]}$ is \emph{backward reduced} if it is a proper and the last epipodal vector $\vec{b}_j^+$ is as \emph{long} as possible, while keeping $\C(\vec{B}_{[i,j]})$ fixed. Just like in the case of lattices, this idea is motivated by an invariant. Here, the invariant is $|\vec{b}_i^+| + \cdots + |\vec{b}_j^+|$, which is precisely the support size of the code $\C(\vec{B}_{[i,j]})$. So, if one wishes to make earlier epipodal vectors shorter (and we do!), then one will necessarily make later epipodal vectors longer, and vice versa. In particular, in the case of LLL, when the block size $\beta := j-i+1$ is equal to $2$, there is no difference between minimizing the length of the first epipodal vector and maximizing the length of the second epipodal vector. Therefore, one could describe the LLL algorithm in \cite{DDvAlgorithmicReductionTheory2022} as working by repeatedly backward reducing blocks $\vec{B}_{[i,i+1]}$.

The above definition of course leads naturally to two questions. First of all, how do we produce a backward-reduced block (for block size larger than $2$)? And, second, what can we say about them? Specifically, what can we say about the length $|\vec{b}_j^+|$ of the last epipodal vector in a backward-reduced block $\vec{B}_{[i,j]}$?

One might get discouraged here, as one quickly discovers that long last epipodal vectors \emph{do not} correspond to short non-zero codewords in the dual code. So, the beautiful duality that arises in the setting of lattices simply fails in our new context. (The \emph{only} exception is that last epipodal vectors with length \emph{exactly} two correspond to dual vectors with length \emph{exactly} two.) This is why we use the terminology ``backward reduced'' rather than ``dual reduced.'' One might fear that the absence of this correspondence would make backward-reduced blocks very difficult to work with.

Instead, we show that long last epipodal vectors $\vec{b}_j^+$ in a block $\vec{B}_{[i,j]}$ have a simple interpretation. They correspond precisely to large \emph{redundant sets of coordinates} of the code $\C(\vec{B}_{[i,j]})$. In the special case when $q = 2$, a redundant set $S \subseteq [n]$ of coordinates is simply a set of coordinates in the support of the code such that for every $a,b \in S$ and every codeword $\vec{c}$, $c_a = c_b$. For larger $q$, we instead have the guarantee that $c_a = z c_b$ for fixed non-zero scalar $z \in \F_q^*$ depending only on $a$ and $b$. In particular, maximal redundant sets correspond precisely to the non-zero coordinates in a last epipodal vector. (See \cref{lem:last_epipodal_redundant}.)

This characterization immediately yields an algorithm for backward reducing a block. (See \cref{alg: backward_reduce}.) In fact, finding a backward-reduced block boils down to finding a set of most common elements in a list of at most $n$ non-zero columns, each consisting of $\beta := j-i+1$ elements from $\F_q$. One quite surprising consequence of this is that one can actually find backward-reduced blocks efficiently, even for large $\beta$! (Compare this to the case of lattices, where finding a dual-reduced block for large $\beta$ is equivalent to the NP-hard problem of finding a shortest non-zero vector in a lattice of dimension $\beta$.)

Furthermore, this simple combinatorial characterization of backward-reduced blocks makes it quite easy to prove a simple tight lower bound on the length of $\vec{b}_j^+$ in a backward-reduced block $\vec{B}_{[i,j]}$. (See \fullornot{\cref{lem:lower_bound_on_max_last_epipodal}}{the full version \citegs}.) Indeed, such a proof follows immediately from the pigeonhole principle. This makes backward-reduced blocks quite easy to analyze. In contrast, as we will see below, \emph{forward-reduced} blocks, in which the first epipodal vector $\vec{b}_i^+$ is as short as possible, are rather difficult to analyze for $\beta > 2$.

\subsubsection{Fully backward-reduced bases}

With this new characterization of backward-reduced blocks and the realization that we can backward reduce a block quite efficiently, we go on to define the notion of a \emph{fully backward-reduced basis}. We say that a basis is \emph{fully backward reduced} if \emph{all} of the prefixes $\basis_{[1,j]}$ are backward reduced for all $1 \leq j \leq k$.\footnote{Notice that this implies that $\basis_{[i,j]}$ is also backward reduced for any $1 \leq i < j$. So, such bases really are \emph{fully} backward reduced.} In fact, we are slightly more general than this, and consider bases that satisfy this requirement for all $j$ up to some threshold $\tau \leq k$.

We show that a fully backward-reduced basis achieves the Griesmer bound (\cref{eq:griesmer} for $q = 2$), just like an LLL-reduced basis. This is actually unsurprising, since it is not difficult to see that when the threshold $\tau = k$ is maximal, a fully backward-reduced basis is also LLL reduced. However, even when $\tau = \log_q n$, we still show that a backward-reduced basis achieves the Griesmer bound. (See \cref{thm:griesmer_backwards}.)

We then show a very simple and very efficient algorithm for computing fully backward-reduced bases. In particular, if the algorithm is given as input a \emph{proper} basis, then it will convert it into a fully backward-reduced basis up to threshold $\tau$ in time $\tau^2 n \cdot \poly(\log n, \log q)$. Notice that this is \emph{extremely} efficient when $\tau \leq \poly(\log n)$.\footnote{We argue (in \fullornot{\cref{sec: k1_heuristic}}{the full version \citegs}) that there is not much point in taking $\tau$ significantly greater than $\log_q^2(n)$.} Indeed, for most parameters of interest, this running time is in fact less than the time $O(n k \log q)$ needed simply to read the input basis $\vec{B} \in \F_q^{k \times n}$. (Of course, this is possible because the algorithm only looks at the first $\tau$ rows of the input basis.) So, if one already has a proper basis, one can convert it into a fully backward-reduced basis nearly for free.\footnote{Computing a proper basis seems to require time $O(nk^2 \log^2 q)$ (without using fast matrix multiplication algorithms), but in many contexts the input basis is in systematic form and is therefore proper.}

In contrast, the LLL algorithm runs in time $O(k n^2 \log^2 q)$. One \emph{can} perform a similar ``threshold'' trick and run the LLL algorithm only on the first $\tau$ basis vectors for $\tau = \ceil{\log_q n}$ (which would still imply that $|\vec{b}_1|$ must be bounded by the Griesmer bound). But, this would still yield a running time of $\Omega(\tau n^2 \log^2 q)$ in the worst case. The speedup that we achieve from fully backward reduction comes from the combination of this threshold trick together with the fact that full backward reduction runs in time proportional to $\tau^2 n$, rather than $\tau n^2$.

Furthermore, we show empirically that the resulting algorithm tends to produce better bases than LLL in practice. (See \fullornot{\cref{sec: experiments}}{the full version \citegs}.)

(It seems unlikely that any similar algorithm exists for lattices for two reasons. First, in the setting of lattices, computing a dual-reduced basis for large block sizes is computationally infeasible. Second, while for codes it is not unreasonable to look for a short non-zero codeword in the subcode generated by the first $\tau$ basis vectors, for lattices the lattice generated by the first $k-1$ basis vectors often contains no shorter non-zero vectors than the basis vectors themselves, even when the full lattice contains much shorter vectors.)

\paragraph{Heuristic analysis of full backward reduction. } We also provide heuristic analysis of full backward reduction, providing a compelling heuristic explanation for why its performance in practice seems to be much better than what worst-case analysis suggests. In particular, recall that we essentially characterize the length of the last epipodal vector of a backward-reduced block $\vec{B}_{[i,j]}$ in terms of the maximal number of times that a column in $\vec{B}_{[i,j]}$ repeats. We then naturally use the pigeonhole principle to argue that for suitable parameters there must be a column that repeats many times, even in the worst case. 

E.g., for $q = 2$, there must be at least one repeated non-zero column if the number of non-zero columns $s$ is larger than the number of possible non-zero columns $2^{\beta}-1$, where $\beta := j-i+1$ is the length of a column. This analysis is of course tight in the worst case. However, in the average case when the columns are independent and uniformly random, we know from the birthday paradox that we should expect to see a repeated column even if $s$ is roughly $2^{\beta/2}$, rather than $2^\beta$.

So, under the (mild but unproven) heuristic that the blocks $\basis_{[1,j]}$ in a fully backward-reduced basis behave like a random matrices for all $j$ (in terms of the number of redundant coordinates), it is easy to see that $k_1 \gtrsim 2\log_q n$, which is significantly better than what LLL achieves (both in the worst case and empirically).

This heuristic argument is backed up by experiments. (See \fullornot{\cref{sec: experiments}}{the full version \citegs}.) We also show (in \fullornot{\cref{sec:selective}}{the full the version \citegs}) a less natural variant of this algorithm that provably achieves $k_1 \gtrsim 2 \log_q n$ when its input is a random matrix. This variant works by carefully ``choosing which coordinates to look at'' for each block, in order to maintain independence. We view this as an additional heuristic explanation for full backward reduction's practical performance, since one expects an algorithm that ``looks at all coordinates'' to do better than one that does not.

This result about $k_1$ for backward-reduced bases also compares favorably with the study of the LLL algorithm in \cite{DDvAlgorithmicReductionTheory2022}. In particular, in \cite{DDvAlgorithmicReductionTheory2022}, they proved that LLL achieves $k_1 \gtrsim \log n$ for a random code for $q =2$, but in their experiments they observed that LLL combined with a preprocessing step called EpiSort actually seems to achieve $k_1 \approx c \log n$ for some constant $1 < c \leq 2$. However, the behavior of LLL and EpiSort seems to be much more subtle than the behavior of full backwards reduction. We therefore still have no decent explanation (even a heuristic one) for why LLL and EpiSort seem to achieve $k_1\approx  c\log n$ or for what the value of this constant $c$ actually is.

\subsubsection{BKZ and slide reduction for codes} 

Our next set of contributions are adaptations of the celebrated BKZ and slide-reduction algorithms to the setting of codes.

\paragraph{BKZ for codes. } Our analogue of the BKZ algorithm for codes is quite natural.\footnote{We note that the name ``BKZ algorithm for codes'' is perhaps a bit misleading. In the case of lattices, the BKZ algorithm is named after Korkine and Zolotarev due to their work on Korkine-Zolotarev-reduced bases (which can be thought of as BKZ-reduced bases with maximal block size $\beta = k$, and is sometimes also called a \emph{Hermite}-Korkine-Zolotarev-reduced basis). A \emph{Block}-Korkine-Zolotarev-reduced basis is (unsurprisingly) a basis in which each block $\basis_{[i,i+\beta-1]}$ is a Korkine-Zolotarev-reduced basis. For codes, the analogous notion of a Korkine-Zolotarev-reduced basis was called a Griesmer-reduced basis in \cite{DDvAlgorithmicReductionTheory2022}. So, we should perhaps call our notion ``Block-Griesmer-reduced bases'' and the associated algorithm ``the block-Griesmer algorithm.'' However, the authors decided to use the term ``BKZ'' here in an attempt to keep terminology more consistent between lattices and codes.} Specifically, our algorithm works by repeatedly checking whether the epipodal vector $\vec{b}_i^+$ is a shortest non-zero codeword in the code generated by the block $\basis_{[i,i+\beta-1]}$. If not, it updates the basis so that this is the case (using the tools that we have developed to maintain properness). The algorithm does this repeatedly until no further updates are possible. At least intuitively, a larger choice of $\beta$ here requires a slower algorithm because the resulting algorithm will have to find shortest non-zero codewords in $\beta$-dimensional codes. But, larger $\beta$ will result in a better basis.

As we mentioned above, in the setting of lattices, the BKZ algorithm is considered to be the best performing basis-reduction algorithm in most parameter regimes, but it is notoriously difficult to analyze. We encounter a roughly similar phenomenon in the setting of linear codes. In particular, we run experiments that show that the algorithm performs quite well in practice. (Though it requires significantly more running time than full backward reduction to achieve a similar profile. See \fullornot{\cref{sec: experiments}}{the full version \citegs}.) However, we are unable to prove that it terminates efficiently, except in the special case of $\beta = 2$, in which case we recover the LLL algorithm of \cite{DDvAlgorithmicReductionTheory2022}. For $\beta > 2$, we offer only an extremely weak bound on the running time. As in the case of lattices, the fundamental issue is that it is difficult to control the effect that changing $\vec{b}_i^+$ can have on the other epipodal vectors $\vec{b}_{i+1}^+,\ldots, \vec{b}_{i+\beta-1}^+$ in the block.

Here, we encounter an additional issue as well. In the case of lattices, there is a relatively simple tight bound on the minimum distance of the lattice generated by the block $\vec{A}_{[i,i+\beta-1]}$ to the lengths of the Gram-Schmidt vectors $\|\gs{\vec{a}}_i\|, \ldots, \|\gs{\vec{a}}_{i+\beta-1}\|$ in the block. In particular, Minkowski's celebrated theorem tells us that $\lambda_1(\lat(\vec{A}_{[i,i+\beta-1]})) \leq C\sqrt{\beta} (\|\gs{\vec{a}}_i\|\cdots \|\gs{\vec{a}}_{i+\beta-1}\|)^{1/\beta}$ for some constant $C > 0$, and one applies this inequality repeatedly with different $i$ to understand the behavior of basis reduction for lattices.

However, in the case of codes, there is no analogous simple tight bound on $d_{\min}(\C(\basis_{[i,i+\beta-1]}))$ in terms of the lengths of the epipodal vectors $|\vec{b}_i^+|,\ldots, |\vec{b}_{i+\beta-1}^+|$, \emph{except} in the special case when $\beta = 2$. Instead, there are many known incomparable upper bounds on $d_{\min}$ in terms of the dimension $\beta$ and the support size $s := |\vec{b}_i^+| + \cdots + |\vec{b}_{i+\beta-1}|$ (and, of course, the alphabet size $q$). Each of these bounds is tight or nearly tight for some support sizes $s$ (for fixed $\beta$) but rather loose in other regimes.
The nature of our basis-reduction algorithms is such that different blocks have very different support sizes $s$, so that we cannot use a single simple bound that will be useful in all regimes. And, due to the relatively ``cramped'' nature of $\F_q^n$, applying even slightly loose bounds on $d_{\min}$ can easily yield trivial results, or results that do offer no improvement over the $\beta = 2$ case. 
As a result, the bound that we obtain on the length of $\vec{b}_1$ for a BKZ-reduced basis does not have a simple closed form. (Since the special case of $\beta = 2$ yields a very simple tight bound $d_{\min} \leq (1-1/q) s$, this is not an issue in the analysis of the LLL algorithm in \cite{DDvAlgorithmicReductionTheory2022}.) 

In fact, we do not even know if the worst-case bound on $|\vec{b}_1|$ for a BKZ-reduced basis is efficiently computable, even if one knows the optimal minimum distance of $\beta$-dimensional codes for all support sizes. However, we do show an efficiently computable bound that is nearly as good. And, we show empirically that in practice it produces quite a good basis. (See \fullornot{\cref{sec: experiments}}{the full version \citegs}.)

\paragraph{Slide reduction for codes.} Given our difficulties analyzing the BKZ algorithm, it is natural to try to adapt Gama and Nguyen's slide-reduction algorithm~\cite{gamaFindingShortLattice2008} from lattices to codes. In particular, recall that in the case of lattices, the slide-reduction algorithm has the benefit that (unlike BKZ) it is relatively easy to prove that it terminates efficiently.

In fact, recall that the idea for backward-reduced bases was inspired by dual-reduced bases for lattices, which are a key component of slide reduction. We therefore define slide-reduced bases for codes by essentially just substituting backward-reduced blocks for dual-reduced blocks in Gama and Nguyen's definition for lattices.  Our slide-reduction algorithm (i.e., an algorithm that produces slide-reduced bases) follows similarly.

We then give a quite simple proof that this algorithm terminates efficiently. Indeed, our proof is a direct translation of Gama and Nguyen's elegant potential-based argument from the case of lattices to the case of codes. (Gama and Nguyen's proof is itself a clever variant of the beautiful original proof for the case when $\beta = 2$ in \cite{lll82}.)

Finally, we give an efficiently computable upper bound on $|\vec{b}_1|$ for a slide-reduced basis in a similar spirit to our upper bound on BKZ. Here, we again benefit from our analysis of backward-reduced blocks described above. Indeed, the behavior of the epipodal vectors in our backward-reduced blocks is quite easy to analyze. However, our bound does not have a simple closed form because the behavior of the forward-reduced blocks still depends on the subtle relationship between the minimal distance of a code and the parameters $n$ and $k$, as we described in the context of BKZ above. 

In our experiments (in \fullornot{\cref{sec: experiments}}{the full version \citegs}), slide reduction is far faster than BKZ but does not find bases that are as good.

\subsubsection{Two illustrative algorithms}

In \fullornot{\cref{sec:pathological}}{the full version \citegs}, we show yet two more basis-reduction algorithms for codes. We think of the importance of these algorithms as being less about their actual usefulness and more about what they show about the potential and limitations of basis reduction for codes. We explain below.

\paragraph{One-block reduction. } The one-block-reduction algorithm is quite simple. It finds a short non-zero codeword in a code $\C$ generated by some basis $\vec{B}$ by first ensuring that $\vec{B}$ is proper, and then by simply finding a shortest non-zero codeword in the subcode $\C(\basis_{[1,\beta]})$ generated by the prefix basis $\basis_{[1,\beta]}$. Notice that if $\beta \leq O(\log_q n)$, then this algorithm runs in polynomial time. In particular, enumerating all codewords in the subcode can be done in time roughly $O(n q^\beta \log q)$.

Furthermore, it is not hard to see that when $\beta = \ceil{\log_q n}$, this simple algorithm actually meets the Griesmer bound! (See \fullornot{\cref{clm:one_block_griesmer}}{the full version \citegs}.) At a high level, this is because (1) the worst case in the Griesmer bound has $|\vec{b}_i^+| = 1$ for all $i \geq \beta$; and  (2) the resulting bound is certainly not better than the minimum distance of a code with dimension $\beta$ and support size $n-(k-\beta)$. Here, the $k-\beta$ term comes from the fact that $\supp(\basis_{[1,\beta]}) = n - |\vec{b}_{\beta+1}^+| - \cdots - |\vec{b}_k^+|$. (Similar logic explains why full backward reduction achieves the Griesmer bound with $\tau \approx \log_q n$.)

More generally, it seems unlikely that a basis-reduction algorithm will be able to find $\vec{b}_1$ that is shorter than what is achieved by this simple approach if we take $\beta \geq \max\{k_1^*,\beta'\}$, where $\beta'$ is block size of the basis reduction algorithm and $k_1^*$ is the maximal index of an epipodal vector that has length larger than one. (In practice, $k_1^*$ is almost never much larger than $k_1$.) In particular, for a basis reduction algorithm to do better than this, it must manage to produce a block $\basis_{[1,\beta]}$ that has minimum distance less than what one would expect given its support size.

We therefore think of this algorithm as illustrating two points. 

First, the existence of this algorithm further emphasizes the importance of the parameter $k_1^*$ (and the closely related parameter $k_1$) as a sort of ``measure of non-triviality.'' If an algorithm achieves large $k_1^*$, then the above argument becomes weaker, since we must take $\beta \geq k_1^*$. Indeed, if $\beta$ is significantly larger than $2 \log_q k$, then the running time of one-block reduction (if implemented by simple enumeration) becomes significantly slower.

Second, the existence of the one-block-reduction algorithm illustrates that we should be careful not to judge basis-reduction algorithms \emph{entirely} based on $|\vec{b}_1|$. We certainly think that $|\vec{b}_1|$ is an important measure of study, and indeed it is the main way that we analyze the quality of our bases in this work. However, the fact that one-block-reduction exists shows that this should not be viewed as the only purpose of a basis-reduction algorithm.

Of course, the algorithms that we have discussed thus far are in fact non-trivial, because they (1) find short non-zero codewords \emph{faster} than one-block reduction; and (2) find whole reduced bases and not just a single short non-zero codeword. Such reduced bases have already found exciting applications in \cite{DDvAlgorithmicReductionTheory2022} and \cite{DecodingChallenge}, and we expect them to find more.

\paragraph{Approximate Griesmer reduction. }  Recall that \cite{DDvAlgorithmicReductionTheory2022} calls a basis $\vec{B} \in \F_q^{k \times n}$ \emph{Griesmer reduced} if $\vec{b}_i^+$ is a shortest non-zero codeword in $\C(\basis_{[i,k]})$ for all $i$. And, notice that, if one is willing to spend the time to find shortest non-zero codewords in codes with dimension at most $k$, then one can compute a Griesmer-reduced basis iteratively, by first setting $\vec{b}_1$ to be a shortest non-zero codeword in the whole code, then projecting orthogonal to $\vec{b}_1$ and building the rest of the basis recursively. (Griesmer-reduced bases are the analogue of Korkine-Zolotarev bases for lattices. We discuss Griesmer-reduced bases more below.)

Our approximate-Griesmer-reduction algorithm is a simple variant of this idea. In particular, it is really a family of algorithms parameterized by a subprocedure that finds short (but not necessarily shortest) non-zero codewords in a code. Given such a subprocedure, the algorithm first finds a short non-zero codeword $\vec{b}_1$ in the input code $\C$. It then projects the code orthogonally to $\vec{b}_1$ and builds the rest of the basis recursively. (To make sure that we end up with a proper basis, care must be taken to assure that $\vec{b}_1$ is primitive. We ignore this in the introduction. See \fullornot{\cref{sec: approx}}{the full version \citegs}.)

The running time of this algorithm and the quality of the basis produced of course depends on the choice of subprocedure. Given the large number of algorithms for finding short non-zero codewords with a large variety of performance guarantees for different parameters (some heuristic and some proven), we do not attempt here to study this algorithm in full generality. We instead simply instantiate it with the Lee-Brickell-Babai algorithm from \cite{DDvAlgorithmicReductionTheory2022} (an algorithm which itself uses \cite{DDvAlgorithmicReductionTheory2022}'s LLL algorithm as a subroutine). Perhaps unsurprisingly, we find that this produces significantly better basis profiles (e.g., smaller $|\vec{b}_1|$ and larger $k_1$ and $k_1^*$) than all of the algorithms that we designed here. The price for this is, of course, that the subprocedure itself must run in enough time to find non-zero short codewords in dimension $k$ codes.

We view this algorithm as a proof of concept, showing that at least in principle one can combine basis-reduction techniques with other algorithms for finding short codewords to obtain bases with very good parameters. This meshes naturally with the Lee-Brickell-Babai algorithm in \cite{DDvAlgorithmicReductionTheory2022}, which shows how good bases can be combined with other algorithmic techniques to find short non-zero codewords. Perhaps one can merge these techniques more in order to show a way to use a good basis to find a better basis, which itself can be used to find a better basis, etc?

\subsubsection{On ``the best possible bases''}

Finally, in \fullornot{\cref{sec: k1_heuristic}}{the full version \citegs}, we prove bounds on ``the best possible bases'' in terms of the parameters $k_1$ and $k_1^*$. Indeed, recall that the (heuristic) running time of \cite{DDvAlgorithmicReductionTheory2022}'s Lee-Brickell-Babai algorithm beats Lee-Brickell by a factor that is exponential in $k_1$. And, we argued above that $k_1^*$ can be viewed as a measure of the ``non-triviality'' of a basis reduction algorithm. So, it is natural to ask how large $k_1$ and $k_1^*$ can be in principle.

In \fullornot{\cref{sec:k1_lower}}{the full version \citegs}, we show that \emph{any} code over $\F_2$ has a basis with $k_1^* \geq \Omega(\log k^2)$, provided that the support size is at least $k + \sqrt{k}$ (which is quite a minor restriction). For this, we use Griesmer-reduced bases (not to be confused with the approximate-Griesmer-reduced bases described above; note in particular that it is NP-hard to compute a Griesmer-reduced basis).  Notice that this is a factor of $\Omega(\log k)$ better than the logarithmic $k_1^*$ achieved by all known efficient basis-reduction algorithms.

Here, we use the parameter $k_1^*$ and not $k_1$ because it is easy to see that in the worst case a code can have arbitrarily large support but still have no proper basis with $k_1 > 1$.\footnote{For example, take that code $\F_2^{k-1} \cup (\F_2^{k-1} + \vec{c})$ where $\vec{c} := (1,1,\ldots,1) \in \F_2^n$. Any proper basis of this code must have $k-1$ epipodal vectors with length one and therefore must have $k_1 = 1$.} Typically, of course, one expects $k_1^*$ and $k_1$ to be very closely related, so that one can view this as heuristic evidence that typical codes have bases with $k_1 \geq \Omega(\log^2 k)$.

In \fullornot{\cref{sec:k1_upper}}{the full version \citegs}, we argue under a mild heuristic assumption that any basis for a random code over $\F_2$ has $k_1 \leq k_1^* \leq O(\log^2 k)$, even if the support size $n$ is a large polynomial in the dimension $k$.

Taken together, these results suggest that the best possible bases for codes over $\F_2$ that we should expect to find in practice should have $k_1 \approx k_1^* = \Theta(\log^2 k)$ for typical settings of parameters. Such a basis would (heuristically) yield a savings of $k^{\Theta(\log k)}$ in \cite{DDvAlgorithmicReductionTheory2022}'s Lee-Brickell-Babai algorithm. So, it would be very exciting to find an efficient algorithm that found such a basis.

On the other hand, our (heuristic) upper bound on $k_1$ suggests a limitation of basis reduction for codes. In particular, we should not expect any improvement better than $k^{\Theta(\log k)}$ in Lee-Brickell-Babai. And, the upper bound also suggests that basis-reduction algorithms are unlikely to outperform the simple one-block-reduction algorithm for block sizes larger than $\Omega(\log^2 k)$. 
\section{Preliminaries}
\label{sec:prelims}

\subsection{Some notation}

Logarithms are base two unless otherwise specified, i.e., $\log (2^x) = x$. We write $\vec{I}_m$ for the $m \times m$ identity matrix.

If $\vec{b}_1, \dots, \vec{b}_k \in \F_q^n$, then $(\vec{b}_1, \dots, \vec{b}_k) \in \F_q^{n \times k}$ denotes the matrix where each $\vec{b}_i$ is a column and $(\vec{b}_1; \dots; \vec{b}_k) \in \F_q^{k \times n}$ denotes the matrix where each $\vec{b}_i$ is row $i$ of $\vec{B}$.

We use the notation $\vec{c}|_{S} \in \F_q^k$ to denote the projection of $\vec{c} = (c_1, \dots, c_n) \in \F_q^n$ onto a set of coordinates $S = \{ s_1, \dots, s_k \} \subseteq [1, n]$ where $s_1 \leq \dots \leq s_k$, i.e.,
$$\vec{c}|_{S} = (c_{s_1}, c_{s_2}, \dots, c_{s_k})
\; ,$$
and we similarly extend this to matrices.

We say that a matrix $\vec{B} \in \F_q^{k \times n}$ is \emph{in systematic form} if $\vec{A} = (\mathbf{I}_k, \vec{X}) \vec{P}$, where $\vec{P}$ is a permutation matrix (i.e., if $k$ contains the columns $\vec{e}_1^T,\ldots, \vec{e}_k^T$). 

For any basis $\vec{B} \in \F_q^{k \times n}$ and any subset $S \subseteq [n]$ with $|S| = k$ such that $\vec{B}|_{S}$ has full rank, we call the process of replacing $\vec{B}$ by $(\vec{B}|_S)^{-1} \vec{B}$ \emph{systematizing $\vec{B}$ with respect to $S$}. When the set $S$ is not important, we simply call this \emph{systematizing $\vec{B}$}. This procedure is useful at least in part because it results in a proper basis.

 We define two notions of the support of a vector. Specifically, we write
\[\supp(\vec{x}) := \{ i \in \brak{1, n} : x_i \neq 0 \}
\; ,\]
and similarly
\begin{align*}
\suppvec(\vec{x})_i := 
    \begin{cases} 
      0 & x_i = 0 \\
      1 & x_i \neq 0
      \; ,
   \end{cases}
\end{align*}
i.e., $\suppvec(\vec{x}) \in \F_q^n$ is the indicator vector of $\supp(\vec{x}) \subseteq [n]$.
We can also define the support of an $[n, k]_q$ code $\C$ by extending the definitions of $\supp$ and $\suppvec$,
\[
    \supp(\C) \triangleq \bigcup_{\vec{c} \in \C} \supp(\vec{c}) \qquad \qquad \suppvec(\C) = \bigvee_{\vec{c} \in \C} \suppvec(\vec{c})
    \;,
\]
and we define the support of a matrix $\vec{B} \in \F_q^{k \times n}$ as the support of the code generated by the matrix.

If $\vec{A} \in \F_q^{m \times n}$, $\vec{B} \in \F_q^{r \times s}$, then the direct sum of $\vec{A}$ and $\vec{B}$, denoted $\vec{A} \oplus \vec{B} \in \F_q^{(m+r) \times (n+s)}$, is
\[
\vec{A} \oplus \vec{B} = 
\begin{pmatrix}
\vec{A} & \vec{0}_{m \times s}\\
\vec{0}_{n \times r} & \vec{B}
\end{pmatrix}
\]
We will often use the following important property regarding matrix direct sums. If $\vec{A} \in \F_q^{m \times n}$, $\vec{B} \in \F_q^{r \times s}, \vec{x} \in \F_q^n, \vec{y} \in \F_q^s$, then
\[
(\vec{A} \oplus \vec{B}) 
\begin{pmatrix}
\vec{x}\\
\vec{y}
\end{pmatrix}
= 
\begin{pmatrix}
\vec{A} \vec{x}\\
\vec{B} \vec{y}
\end{pmatrix}
\; .
\]

We write $\odot$ to represent the Hadamard product. That is, for any $\vec{A}, \vec{B} \in \F_q^{m \times n}$, $\vec{A} \odot \vec{B} \in \F_q^{m \times n}$ is defined so that $$(\vec{A} \odot \vec{B})_{i, j} = \vec{A}_{i, j} \vec{B}_{i, j}
\; .$$

\subsection{Some basic probability facts}

We will need the following useful facts.

\begin{fact}
    \label{fact:random_full_rank}
    For any $\ell \geq k$, if $\vec{B} \sim \F_q^{k \times \ell}$, then $\vec{B}$ has $k$ linearly independent columns with probability at least
    \[
        1- q^{-\ell} \cdot \frac{q^{k}-1}{q-1}
        \; .
    \]
\end{fact}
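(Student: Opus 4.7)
The plan is to bound the failure probability, i.e.\ the probability that $\vec{B}$ has rank strictly less than $k$. Since $\ell \geq k$, the matrix $\vec{B}$ has $k$ linearly independent columns if and only if it has full row rank, which in turn holds if and only if its left kernel is trivial. So it suffices to upper bound the probability that there exists a nonzero $\vec{v} \in \F_q^k$ with $\vec{v}^T \vec{B} = \vec{0}$.

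The naive union bound over all $q^k - 1$ nonzero $\vec{v}$ gives the weaker estimate $(q^k - 1) \cdot q^{-\ell}$, so I would instead group vectors into one-dimensional subspaces: for every nonzero scalar $\alpha \in \F_q^*$, we have $\vec{v}^T \vec{B} = \vec{0}$ if and only if $(\alpha \vec{v})^T \vec{B} = \vec{0}$. Hence the event ``some nonzero $\vec{v}$ lies in the left kernel'' is the union, over the $(q^k - 1)/(q-1)$ one-dimensional subspaces $L \subseteq \F_q^k$, of the event ``any (equivalently, every) nonzero $\vec{v} \in L$ satisfies $\vec{v}^T \vec{B} = \vec{0}$.''

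Next I would argue that, for any fixed nonzero $\vec{v} \in \F_q^k$, the $\ell$ entries of $\vec{v}^T \vec{B}$ are independent and uniform over $\F_q$: each such entry is $\vec{v}^T \vec{b}_i$ for the corresponding uniformly random column $\vec{b}_i$, and since $\vec{v} \neq \vec0$, the map $\vec{b} \mapsto \vec{v}^T \vec{b}$ is a surjective $\F_q$-linear map onto $\F_q$, hence pushes the uniform distribution on $\F_q^k$ to the uniform distribution on $\F_q$. Independence of the columns then gives $\Pr[\vec{v}^T \vec{B} = \vec{0}] = q^{-\ell}$.

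Finally, a union bound over the $(q^k - 1)/(q-1)$ one-dimensional subspaces yields the claimed upper bound $q^{-\ell} \cdot (q^k - 1)/(q-1)$ on the failure probability, and the complementary bound is exactly the statement. The only mildly delicate step is recognizing that scalar-multiplication equivalence lets us replace the $(q^k-1)$ union bound with a $(q^k-1)/(q-1)$ one; everything else is routine.
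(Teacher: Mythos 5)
Your proof is correct, but it takes a different route from the paper's. The paper bounds the probability that the \emph{rows} $\vec{b}_1,\ldots,\vec{b}_k$ are linearly dependent by a sequential union bound over the row index $i$: the probability that $\vec{b}_i$ falls in the span of the previous rows is at most $q^{i-1-\ell}$ (since that span has at most $q^{i-1}$ elements), and summing the geometric series $q^{-\ell} + q^{1-\ell} + \cdots + q^{k-1-\ell}$ gives exactly $q^{-\ell}(q^k-1)/(q-1)$. You instead pass to the left kernel and union-bound over the $(q^k-1)/(q-1)$ one-dimensional subspaces of $\F_q^k$, using that for each fixed nonzero $\vec{v}$ the vector $\vec{v}^T\vec{B}$ is uniform in $\F_q^\ell$, so each line lies in the left kernel with probability exactly $q^{-\ell}$; all steps there are sound (the full-row-rank/trivial-left-kernel equivalence, the uniformity of $\vec{v}^T\vec{B}$, and the scalar-equivalence grouping that shaves the factor $q-1$). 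Your version has the mild aesthetic advantage that each atomic event has probability exactly $q^{-\ell}$ and the only slack is the union bound over projective points (it is essentially a first-moment count of lines in the kernel), whereas the paper's per-row bound $q^{i-1-\ell}$ is itself an overestimate when earlier rows are dependent; the paper's sequential argument, on the other hand, avoids any counting of one-dimensional subspaces and localizes the failure to the first dependent row. Both arguments are elementary and yield the identical bound.
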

\begin{proof}
    Since the row rank of a matrix is equal to its column rank, we can equivalently study the probability that the rows $\vec{b}_1,\ldots, \vec{b}_k$ of $\vec{B}$ are linearly independent. But, the probability that $\vec{b}_i$ is linearly \emph{dependent} on the first $i-1$ vectors is at most $q^{i-1-\ell}$ because there are at most $q^{i-1}$ vectors in the subspace spanned by $\vec{b}_1,\ldots, \vec{b}_{i-1}$. By the union bound, the probability that any of the $\vec{b}_i$ are linearly dependent is at most
    \[
        1/q^\ell + 1/q^{\ell-1} + \cdots + 1/q^{\ell-k+1} = q^{-\ell} \cdot \frac{q^{k}-1}{q-1}
        \; ,
    \]
    and the result follows.
\end{proof}

\begin{lemma}
	\label{lemma:chernoff}
	Suppose $X_1, \dots, X_n$ are independent random variables taking values in \{0, 1\}. Let $X$ denote their sum, and $\mu := \E[X]$. Then for any $0 \leq \eps \leq 1$
	\[\Pr[X \leq (1 - \eps) \mu] \leq \exp\Big(\frac{-\eps^2 \mu}{2}\Big).\]
\end{lemma}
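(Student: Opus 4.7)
The plan is to prove this standard multiplicative Chernoff lower-tail bound by the exponential moment (Laplace transform) method. The first move is to convert the lower-tail event into an upper-tail event on a monotone-decreasing function of $X$: for any parameter $t > 0$,
\[
    \Pr[X \leq (1-\eps)\mu] = \Pr[e^{-tX} \geq e^{-t(1-\eps)\mu}] \leq e^{t(1-\eps)\mu} \cdot \expect[e^{-tX}]
\]
by Markov's inequality. The parameter $t$ will be optimized at the end.

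Next I would compute (or rather, upper bound) the moment generating function $\expect[e^{-tX}]$. By independence of the $X_i$'s, this factors as $\prod_{i=1}^n \expect[e^{-tX_i}]$. Writing $p_i := \Pr[X_i = 1]$, each factor equals $(1-p_i) + p_i e^{-t} = 1 + p_i(e^{-t}-1)$, and using the elementary inequality $1+x \leq e^x$ (valid for all real $x$) this is at most $\exp(p_i(e^{-t}-1))$. Multiplying over $i$ and using $\sum_i p_i = \mu$ gives
\[
    \expect[e^{-tX}] \leq \exp\bigl(\mu(e^{-t} - 1)\bigr)
    \; .
\]
Combining with the Markov step yields $\Pr[X \leq (1-\eps)\mu] \leq \exp\bigl(\mu(e^{-t}-1) + t(1-\eps)\mu\bigr)$ for every $t > 0$.

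The final step is to optimize the exponent in $t$. Differentiating with respect to $t$ gives optimum $t^* = -\ln(1-\eps)$ (which is positive for $0 \leq \eps \leq 1$), and plugging in yields the classical bound
\[
    \Pr[X \leq (1-\eps)\mu] \leq \left(\frac{e^{-\eps}}{(1-\eps)^{1-\eps}}\right)^\mu
    \; .
\]
To get the stated cleaner form, I would then verify that $(1-\eps)\ln(1-\eps) \geq -\eps + \eps^2/2$ for $\eps \in [0,1]$, so that the exponent is at most $-\eps^2\mu/2$. The cleanest way to see this inequality is to define $f(\eps) := (1-\eps)\ln(1-\eps) + \eps - \eps^2/2$, check $f(0) = 0$, and compute $f'(\eps) = -\ln(1-\eps) - \eps \geq 0$ for $\eps \in [0,1)$ via the Taylor series $-\ln(1-\eps) = \sum_{k\geq 1} \eps^k/k \geq \eps$.

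The only mildly delicate step is the final calculus inequality comparing $(1-\eps)^{1-\eps}$ to $e^{-\eps + \eps^2/2}$; everything else is routine Markov plus the $1+x \leq e^x$ trick. Since the lemma statement is the textbook Chernoff bound, I do not expect any genuine obstacles.
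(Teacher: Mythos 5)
Your proposal is correct: it is the standard exponential-moment Chernoff argument (Markov applied to $e^{-tX}$, factoring the moment generating function by independence, the bound $1+x \leq e^x$, optimizing at $t^* = -\ln(1-\eps)$, and the calculus inequality $(1-\eps)\ln(1-\eps) \geq -\eps + \eps^2/2$), and every step checks out. The paper itself states this lemma as a standard fact and gives no proof, so there is nothing to compare against; the only tiny loose end in your write-up is the endpoint $\eps = 1$, where $t^*$ is infinite, but that case follows directly from $\Pr[X = 0] = \prod_i (1-p_i) \leq e^{-\mu} \leq e^{-\mu/2}$ (and $\eps = 0$ is trivial).
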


\begin{lemma}[{\cite{hoeffdingProbabilityInequalitiesSums1963}}]
    \label{lem:repeated_column}
    If $\vec{B} \sim \F_q^{i \times \beta}$ is sampled uniformly at random with $\beta \geq 7$, then with probability at least $1-2^{-\beta/20} - 2^{-\beta^2/(20 q^i)}$, $\vec{B}$ will have at least two identical non-zero columns.
\end{lemma}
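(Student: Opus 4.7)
The plan is to combine a Chernoff bound on the number of non-zero columns with a birthday-paradox-style collision argument. Let $Y$ denote the number of non-zero columns of $\vec{B}$, and fix a threshold $T := \lceil c \beta \rceil$ for an absolute constant $c \in (0,1)$ to be tuned at the end. Then
\[
    \Pr[\text{no repeated non-zero column}] \leq \Pr[Y < T] + \Pr[\text{no two of the $Y$ non-zero columns coincide} \mid Y \geq T]
    \; ,
\]
and I would bound the two summands separately by $2^{-\beta/20}$ and $2^{-\beta^2/(20 q^i)}$ respectively.

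For the first term, each column is independently non-zero with probability $1 - q^{-i} \geq 1/2$ (using $q^i \geq 2$), so $\E[Y] \geq \beta/2$ and $Y$ is a sum of i.i.d.\ indicators. Apply Lemma~\ref{lemma:chernoff} with $\eps = 1 - T/\E[Y]$: for $c$ chosen small enough (say $c \approx 1/3$), we get $\eps = \Omega(1)$ and hence $\Pr[Y < T] \leq \exp(-\Omega(\beta)) \leq 2^{-\beta/20}$, using $\beta \geq 7$ to absorb the rounding in $T$.

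For the second term, condition on the set $S$ of coordinates with non-zero columns, with $|S| \geq T$. The columns indexed by $S$ are independent and uniform on $\F_q^i \setminus \{\vec0\}$, a set of size $q^i - 1$. The probability that all $|S|$ of them are distinct is the classical birthday product
\[
    \prod_{j=0}^{|S|-1} \left(1 - \frac{j}{q^i - 1}\right) \leq \exp\!\left(-\frac{|S|(|S|-1)}{2(q^i-1)}\right)
    \; ,
\]
by $1-x \leq e^{-x}$. Since $|S| \geq T \geq c\beta$, the exponent is at least $\Omega(\beta^2/q^i)$; converting from base $e$ to base $2$ and choosing $c$ so the absolute constant clears $1/20$, this is at most $2^{-\beta^2/(20 q^i)}$. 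Union bound then yields the lemma.

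The main obstacle is simply calibrating the single constant $c$ so that the Chernoff bound and the birthday bound both clear the $1/20$ thresholds simultaneously under the mild hypothesis $\beta \geq 7$; once one checks (e.g.) that $c = 1/3$ works for the Chernoff side against $\E[Y] \geq \beta/2$ and also makes $c^2 \beta^2 /(2 q^i \ln 2) \geq \beta^2/(20 q^i)$ on the birthday side (where $\beta \geq 7$ lets us replace $|S|(|S|-1)$ by $|S|^2 / 2$ with room to spare), the rest is bookkeeping.
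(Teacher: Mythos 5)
Your proposal is correct and follows essentially the same route as the paper's own proof: the paper likewise conditions on the number $\ell$ of non-zero columns, bounds the probability that they are all distinct by the birthday product $e^{-\ell(\ell-1)/(2q^i)}$, and invokes the Chernoff--Hoeffding bound (\cref{lemma:chernoff}) to guarantee $\ell \geq \lceil \beta/3 \rceil$ except with probability at most $2^{-\beta/20}$ --- precisely your decomposition with $c = 1/3$. The constant calibration that you defer to the end is also the only delicate point in the paper's argument (which simply asserts it), so the two proofs do not differ in any substantive way.
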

\begin{proof}
    Let $\ell$ be the number of non-zero columns in $\vec{B}$. Conditioned on some fixed value of $\ell$, the probability that all $\ell$ of these non-zero columns are distinct is 
    \begin{align*}
        (1-1/(q^i-1)) \cdot (1-2/(q^i-1)) \cdots (1-(\ell-1)/(q^i-1))
            &\leq e^{-1/q^i - 2/q^i - \cdots - (\ell-1)/q^i}
            \\
            &= e^{-\ell(\ell-1)/(2q^i)}
        \; .
    \end{align*}
    Furthermore, by the Chernoff-Hoeffding bound \cref{lemma:chernoff} we have that $\ell \geq \ceil{\beta/3}$ except with probability at most $2^{-\beta/20}$. The result follows.
\end{proof} 
\section{Generalizing epipodal vectors, size reduction, and the fundamental domain to \texorpdfstring{$\F_q$}{Fq}}
\label{sec: fq}
In this section, we generalize many of the fundamental concepts in \cite{DDvAlgorithmicReductionTheory2022} from codes over $\F_2$ to codes over $\F_q$. Specifically, we generalize the notions of projection, epipodal matrices, and the size-reduction algorithm. We then study the geometry of the fundamental domain that one obtains by running the size-reduction algorithm on a given input basis. 

Much of this generalization is straightforward (once one knows the theory developed for $\F_2$ in \cite{DDvAlgorithmicReductionTheory2022}). So, one might read much of this section as essentially an extension of the preliminaries. The most difficult part, in \fullornot{\cref{subsec: F_analysis}}{the full version \citegs}, is the analysis of the fundamental domain (which is not used in the rest of the paper).

\subsection{Projection and epipodal vectors}

The notions of projection and epipodal vectors extend naturally to $\F_q$ from the notions outlined in \cite{DDvAlgorithmicReductionTheory2022} for $\F_2$. However, to ensure that this work is as self-contained as possible, we will now explicitly outline how these notions extend to $\F_q$. Notice that these operations are roughly analogous to orthogonal projection maps over $\R^n$.

\begin{definition}
    If $\vec{x}_1 = (x_{1,1}, \dots, x_{1, n}), \dots, \vec{x}_k = (x_{k,1}, \dots, x_{k, n}) \in \F_{q}^{n}$, the function $\pi_{\{\vec{x}_1, \dots, \vec{x}_k\}}: \F_q^{n} \to \F_q^n$ is defined as follows:
    \[
    \pi_{\{\vec{x}_1, \dots, \vec{x}_k\}}(\vec{y})_i = 
    \begin{cases} 
      y_i & x_{1,i} \neq 0 \lor \dots \lor x_{k, i} \neq 0\\
      0 & \text{otherwise.}
   \end{cases}
    \]
        We call this ``projection onto the support of $\vec{x}_1,\ldots, \vec{x}_k$.''
\end{definition}
\begin{definition}
    If $\vec{x}_1 = (x_{1,1}, \dots, x_{1, n}), \dots, \vec{x}_k = (x_{k,1}, \dots, x_{k, n}) \in \F_{q}^{n}$, the function $\pi^\perp_{\{\vec{x}_1, \dots, \vec{x}_k\}}: \F_q^{n} \to \F_q^n$ is defined as follows: 
    \[
    \pi^\perp_{\{\vec{x}_1, \dots, \vec{x}_k\}}(\vec{y})_i = 
    \begin{cases} 
      y_i & x_{1, i} = 0 \land \dots \land x_{k, i} = 0\\
      0 & \text{otherwise.}
   \end{cases}
    \]
    We call this ``projection orthogonal to $\vec{x}_1,\ldots, \vec{x}_k$.''
\end{definition}
\noindent We will often simply write $\pi_{\vec{x}}$ to denote $\pi_{\{ \vec{x} \}}$ and $\pi^\perp_{\vec{x}}$ to denote $\pi^\perp_{\{ \vec{x} \}}$

\begin{fact}
    \label{lem: projection_linearity}
    For any $c \in \F_q$, $\vec{x} \in \F_{q'}^n$, $\vec{y}, \vec{z} \in \F_q^n$
    
    $$\pi^{\perp}_{\vec{x}}(c \vec{y} + \vec{z}) = c \pi^\perp_{\vec{x}}(\vec{y}) + \pi^{\perp}_{\vec{x}}(\vec{z})$$
\end{fact}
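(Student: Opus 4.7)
The plan is to verify the identity coordinate-by-coordinate, exploiting the fact that $\pi^\perp_{\vec{x}}$ is defined purely as a pointwise operation on coordinates (zero out coordinate $i$ if $x_i \neq 0$, otherwise keep $y_i$). Since every operation in sight (scalar multiplication, addition, and the projection itself) acts independently on each coordinate, the statement should reduce to a trivial case split on whether $x_i = 0$ or $x_i \neq 0$.

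More precisely, I would fix an arbitrary coordinate $i \in [n]$ and split into two cases. In the case $x_i \neq 0$, each of the three applications of $\pi^\perp_{\vec{x}}$ zeros out the $i$th coordinate by definition, so both sides equal $0$ in that entry. In the case $x_i = 0$, each application of $\pi^\perp_{\vec{x}}$ acts as the identity on the $i$th coordinate, so the left-hand side's $i$th entry equals $(c\vec{y}+\vec{z})_i = cy_i + z_i$, and the right-hand side's $i$th entry equals $c \cdot y_i + z_i$ as well. Since these agree in every coordinate, the vectors are equal.

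There is no real obstacle here; the fact is essentially a restatement that ``zeroing out a fixed set of coordinates'' is an $\F_q$-linear map on $\F_q^n$, which follows immediately from the definition. The only very mild subtlety is that the statement allows $\vec{x}$ to live over a potentially different field $\F_{q'}$, but this does not affect the argument because $\vec{x}$ is used only through the predicate ``$x_i = 0$ or $x_i \neq 0$,'' which is well-defined regardless of which field $\vec{x}$ comes from. Hence the case analysis goes through unchanged.
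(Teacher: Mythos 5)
Your proof is correct, and it is exactly the argument the paper has in mind: the paper states this as a Fact without proof precisely because the coordinate-wise case split on $x_i = 0$ versus $x_i \neq 0$ is immediate from the definition of $\pi^\perp_{\vec{x}}$. Your remark that only the zero/nonzero pattern of $\vec{x}$ matters (so the field $\F_{q'}$ is irrelevant) is also the right way to handle that detail.
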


We now define the epipodal matrix of a basis for a code, which is the analogue of the Gram–Schmidt matrix.
 \begin{definition}
    Let $\vec{B} = (\vec{b}_1; \dots ; \vec{b}_k) \in \F_q^{k \times n}$ be a matrix with elements from $\F_q$. The $i$th projection associated to the matrix $\vec{B}$ is defined as $\pi_i := \pi^\perp_{\{\vec{b}_1, \dots, \vec{b}_{i-1}\}}$, where $\pi_1$ denotes the identity.

    The $i$th \emph{epipodal vector} is defined as $\vec{b}^+_i := \pi_i(\vec{b}_i)$.
    The matrix $\vec{B}^+ := (\vec{b}_1^+; \dots; \vec{b}_k^+) \in \F_q^{k \times n}$ is called the epipodal matrix of $\vec{B}$. 
\end{definition}

We will use the following fact quite a bit.

\begin{fact}
    \label{fact: lengt_invariance}
    For any basis $\vec{B} = (\vec{b}_1; \dots; \vec{b}_k) \in \F_q^{k \times n}$ of an $[n, k]_q$ code $\C$, if $|\supp(\C)|$ denotes the size of the support of the code, then
    $$|\supp(\C)| = \sum_{i=1}^k |\vec{b}_i^+|$$
\end{fact}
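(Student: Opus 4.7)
The plan is to show that the supports of the epipodal vectors $\vec{b}_1^+, \ldots, \vec{b}_k^+$ partition the support of the code $\C$, after which the claimed identity is immediate since $|\vec{b}_i^+|$ is just $|\supp(\vec{b}_i^+)|$.

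First I would establish pairwise disjointness of $\supp(\vec{b}_i^+)$ and $\supp(\vec{b}_j^+)$ for $i < j$. By the definition of $\pi_i = \pi^\perp_{\{\vec{b}_1,\ldots,\vec{b}_{i-1}\}}$, we have the exact description
\[
    \supp(\vec{b}_i^+) = \supp(\vec{b}_i) \setminus \bigcup_{\ell < i} \supp(\vec{b}_\ell)
    \; .
\]
In particular $\supp(\vec{b}_j^+)$ is disjoint from $\supp(\vec{b}_i) \supseteq \supp(\vec{b}_i^+)$ whenever $i < j$, so the epipodal supports are pairwise disjoint. Therefore $\sum_{i=1}^k |\vec{b}_i^+| = \bigl| \bigcup_{i=1}^k \supp(\vec{b}_i^+) \bigr|$.

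Second, I would show $\bigcup_{i=1}^k \supp(\vec{b}_i^+) = \supp(\C)$. The inclusion $\subseteq$ is immediate since $\supp(\vec{b}_i^+) \subseteq \supp(\vec{b}_i) \subseteq \supp(\C)$. For the reverse inclusion, given any $j \in \supp(\C)$, pick a codeword $\vec{c} = \sum_\ell a_\ell \vec{b}_\ell$ with $c_j \neq 0$; then $j$ must lie in $\supp(\vec{b}_\ell)$ for some $\ell$. Letting $i$ be the smallest such index, the description above gives $j \in \supp(\vec{b}_i^+)$, completing the equality.

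There is no real obstacle here: the statement essentially unpacks the combinatorial meaning of the projections $\pi_i$, which zero out coordinates already supported by $\vec{b}_1,\ldots,\vec{b}_{i-1}$. The only thing one must be a bit careful about is that the epipodal vectors $\vec{b}_i^+$ themselves need not lie in $\C$ — but this is irrelevant, because the argument only uses the containments $\supp(\vec{b}_i^+) \subseteq \supp(\vec{b}_i)$ and the fact that any element of $\supp(\C)$ belongs to $\supp(\vec{b}_\ell)$ for some $\ell$.
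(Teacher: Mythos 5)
Your proof is correct: the identity $\supp(\vec{b}_i^+) = \supp(\vec{b}_i) \setminus \bigcup_{\ell < i} \supp(\vec{b}_\ell)$ immediately gives both disjointness and the covering of $\supp(\C)$, and this partition argument is exactly the (implicit) reasoning behind the fact, which the paper states without proof. Your side remark that the epipodal vectors need not be codewords is also the right caution, and, as you note, harmless here.
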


The following notation for a projected block will be helpful in defining our reduction algorithms. (The same notation is used in the lattice literature.) 

\begin{definition}
    \label{def: proj_subbasis}
    For a basis $\vec{B} = (\vec{b}_1; \dots; \vec{b}_k) \in \F_q^{k \times n}$ and $i, j \in [1, k]$ where $i \leq j$, we use the notation $\vec{B}_{[i, j]}$ as shorthand for $(\pi_i(\vec{b}_i); \dots; \pi_i(\vec{b}_j))$. Furthermore, for $i \in [1, k]$ and $j>k$, we define $\vec{B}_{[i, j]} = \vec{B}_{[i, k]}$ for all $j > k$.
\end{definition}

We will often write $\ell_i$ to denote $|\vec{b}_i^+|$ when the basis $\vec{B} = (\vec{b}_1; \dots; \vec{b}_k)$ is clear from context. 
To speed up the computation of the epipodal vectors, we introduce a matrix $\vec{S}(\vec{B})$ defined recursively by the recurrence
\begin{align*}
\vec{s}_0 = \vec{0}, &&
\vec{s}_i = \vec{s}_{i-1} \lor \suppvec(\vec{b}_i)
\; .
\end{align*}
In words, $\vec{s}_i$ is the support vector corresponding to the code generated by $\vec{b}_1,\ldots, \vec{b}_i$. 

Notice that $\vec{S}(\vec{B})$ is closely related to the epipodal matrix $\vec{B}^+$. In particular,
\begin{align*}
\pi_i(\vec{x}) = \pi^\perp_{\vec{s}_i}(\vec{x}), &&
\vec{b}_i^+ = \pi^\perp_{\vec{s}_i}(\vec{b}_i)
\; .
\end{align*}
In fact, as \cite{DDvAlgorithmicReductionTheory2022} observed, when working with the epipodal vectors $\vec{b}_1^+,\ldots, \vec{b}_k^+$, it is often convenient to store $\vec{S}(\vec{B})$ as a sort of auxiliary data structure. In particular, if we perform elementary row operations on the rows $\vec{b}_i, \vec{b}_{i+1}, \dots, \vec{b}_{i+\beta}$ of $\vec{B}$ (as we will often), then we can update $\vec{S}(\vec{B})$ in time $O(n \beta \log q)$---in particular, we only need to recompute $\vec{s}_i, \vec{s}_{i+1}, \dots, \vec{s}_{i+\beta}$. 

We can then use $\vec{S}(\basis)$ to compute $\vec{B}_{[i, i+\beta]}$ as $(\pi_{\vec{s}_i}^\perp(\vec{b}_i); \dots; \pi_{\vec{s}_i}^\perp(\vec{b}_{i + \beta}))$ in $O(n \beta \log(q))$ time. This will be particularly helpful for us since we focus on block reduction techniques, so we often need to compute $\vec{B}_{[i, i+\beta]}$ or update a contiguous set of vectors (a block) $\vec{b}_i, \dots, \vec{b}_{i+\beta}$. 

So, by maintaining the auxiliary data structure $\vec{S}(\vec{B})$, we can perform many elementary row operations on our basis $\basis$ and then compute a block $\basis_{[i,j]}$ in essentially the same amount of time that it takes to perform the elementary row operations (without the auxiliary data structure) and simply to read the block $\basis_{[i,j]}$---i.e., with essentially no overhead.
To keep our descriptions of our algorithms uncluttered, we will not explicitly mention the matrix $\vec{S}(\vec{B})$, but of course when our algorithms perform elementary row operations on our basis $\basis$ and then later compute a block $\vec{B}_{[i,j]}$, they always do so using this auxiliary data structure $\vec{S}(\vec{B})$.

\subsubsection{Basic operations on blocks}

It will often be beneficial for us to think about performing a linear transformation $\vec{A}$ to a block $\vec{B}_{[i, j]}$ ``without affecting the rest of the basis.'' In this section, we prove a number of basic lemmas about how to do this.

This first lemma shows how to modify $\vec{B}$ in order to apply a linear transformation $\vec{A} \in \F_q^{(j-i+1) \times (j-i+1)}$ to a block $\basis_{[i,j]}$.

\begin{lemma}
    \label{lem: lifting_reduction}
    For any basis $\vec{B} \in \F_q^{k \times n}$, integers $1 \leq i \leq j \leq k$, and  invertible matrix $\vec{A} \in \F_q^{(j-i+1) \times (j-i+1)}$,
    $$((\vec{I}_{i-1} \oplus \vec{A} \oplus \vec{I}_{k-j}) \vec{B})_{[i, j]} = \vec{A} (\vec{B}_{[i, j]})$$
\end{lemma}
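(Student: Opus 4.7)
My plan is to prove the identity by a direct computation that rests on two observations: (i) the block-diagonal multiplier leaves the first $i-1$ rows of $\vec{B}$ untouched, so the relevant projection operator does not change; and (ii) the projection operator is linear, so it commutes with the $\vec{A}$ acting on rows $i, \ldots, j$.

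Concretely, write $\vec{B}' := (\vec{I}_{i-1} \oplus \vec{A} \oplus \vec{I}_{k-j}) \vec{B}$ with rows $\vec{b}'_1, \ldots, \vec{b}'_k$. By the direct-sum structure (combined with the preliminary identity relating $\vec{A} \oplus \vec{B}$ applied to a column vector to the blockwise application of $\vec{A}$ and $\vec{B}$, transposed to the row setting), $\vec{b}'_\ell = \vec{b}_\ell$ for $\ell < i$ and for $\ell > j$, while for $\ell \in \{i,\ldots,j\}$ we have
\[
    \vec{b}'_\ell = \sum_{m=i}^{j} A_{\ell-i+1,\,m-i+1}\, \vec{b}_m \; .
\]
In particular, since $\vec{b}'_1 = \vec{b}_1, \ldots, \vec{b}'_{i-1} = \vec{b}_{i-1}$, the projection operator associated to the first $i-1$ rows of $\vec{B}'$ coincides with $\pi_i := \pi^\perp_{\{\vec{b}_1, \ldots, \vec{b}_{i-1}\}}$.

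Therefore, by \cref{def: proj_subbasis}, the $\ell$th row of $(\vec{B}')_{[i,j]}$ is $\pi_i(\vec{b}'_\ell)$. Applying \cref{lem: projection_linearity} (linearity of $\pi_i$) to the expansion above gives
\[
    \pi_i(\vec{b}'_\ell) = \sum_{m=i}^{j} A_{\ell-i+1,\,m-i+1}\, \pi_i(\vec{b}_m) \; ,
\]
which is exactly the $(\ell-i+1)$st row of $\vec{A} \cdot \vec{B}_{[i,j]}$. Assembling rows $\ell = i, \ldots, j$ yields $(\vec{B}')_{[i,j]} = \vec{A}\,\vec{B}_{[i,j]}$, as claimed.

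There is essentially no hard step here: the invertibility hypothesis on $\vec{A}$ is not used for the identity itself (it will be used in downstream lemmas to ensure that $\vec{B}'$ remains a basis of the same code), and the argument reduces entirely to bookkeeping plus the fact, already recorded in \cref{lem: projection_linearity}, that $\pi_i$ is $\F_q$-linear. The only point that requires any care is to verify that the projection operator does not shift when we pass from $\vec{B}$ to $\vec{B}'$, which is immediate from the identity block $\vec{I}_{i-1}$ in the upper-left corner of the multiplier.
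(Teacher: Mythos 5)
Your proposal is correct and follows essentially the same route as the paper's proof: both express each row of the transformed block as the fixed linear combination $\sum_m A_{\cdot,\cdot}\vec{b}_m$, observe that the projection $\pi_i$ is unchanged because the first $i-1$ rows are untouched, and conclude by linearity of the projection (\cref{lem: projection_linearity}). Your side remark that invertibility of $\vec{A}$ is not needed for the identity itself is also consistent with the paper's argument, which never uses it.
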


\begin{proof}
    Let us first decompose $\vec{B} = (\vec{b}_1; \dots; \vec{b}_k)$ into  $\vec{B} = (\vec{B}_1; \vec{B}_2; \vec{B}_3)$ such that $\vec{B}_1 \in \F_q^{(i-1) \times n}$, $\vec{B}_2 \in \F_q^{(j-i+1) \times n}$, and $\vec{B}_3 \in \F_q^{(k-j) \times n}$. Let $\vec{v} := \suppvec(\vec{B}_1)$. Then, we have
    \begin{align*}
        ((\vec{I}_{i-1} \oplus \vec{A} \oplus \vec{I}_{k-j}) \vec{B})_{[i, j]} &= ((\vec{I}_{i-1} \oplus \vec{A} \oplus \vec{I}_{k-j}) (\vec{B}_1; \vec{B}_2; \vec{B}_3))_{[i, j]}\\
        &= (\vec{B}_1; \vec{A} \vec{B}_2 ; \vec{B}_3)_{[i, j]}
        \; .
    \end{align*}
    Let $\vec{A} \vec{B}_2 = (\vec{c}_1; \dots; \vec{c}_{j-i+1})$. Notice that for every $t \in [1, j-i+1]$, $\vec{c}_t = A_{t, 1} \vec{b}_i + \dots + A_{t, i-j+1} \vec{b}_j$ where $A_{g, h}$ denotes the entry in row $g$ and column $h$ of $\vec{A}$. So row $t$ of $(\vec{B}_1; \vec{A} \vec{B}_2 ; \vec{B}_3)_{[i, j]}$ is equal to
    $$\pi_{\vec{v}}^\perp(\vec{c}_t)=
    \pi_{\vec{v}}^\perp(A_{t, 1} \vec{b}_i + \dots + A_{t, i-j+1} \vec{b}_j)
    = A_{t, 1} \pi_{\vec{v}}^\perp(\vec{b}_i) + \dots + A_{t, i-j+1} \pi_{\vec{v}}^\perp(\vec{b}_j)$$
    where the last equality holds due to the linearity of $\pi^\perp_{\vec{v}}$ (\cref{lem: projection_linearity}). Notice that this is exactly row $t$ of $\vec{A} (\vec{B}_{[i, j]})$, as needed.
\end{proof}

The next few lemmas tell us how applying a linear transformation to a block $\vec{B}_{[i,j]}$ as above affect other blocks.

\begin{lemma}
    \label{lem: epipodal_locality}
    For any basis $\vec{B} = (\vec{b}_1; \dots; \vec{b}_k) \in \F_q^{k \times n}$, integers $1 \leq i \leq j \leq k$, and invertible matrix $\vec{A} \in \F_q^{(j-i+1) \times (j-i+1)}$, if we let $\vec{B}' := (\vec{b}'_1; \dots; \vec{b}'_k) = (\vec{I}_{i-1} \oplus \vec{A} \oplus \vec{I}_{k-j}) \vec{B}$, then for any integer $1 \leq m_1 \leq m_2 \leq k$ such that $[m_1, m_2] \cap [i, j] = \emptyset$, we have $\vec{B}_{[m_1, m_2]} = \vec{B}'_{[m_1, m_2]}$. In particular, $|\vec{b}_m^+| = |(\vec{b}'_m)^+|$ for all $m \notin [i, j]$.
\end{lemma}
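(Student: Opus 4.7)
The plan is to split into two cases based on the disjointness condition $[m_1, m_2] \cap [i, j] = \emptyset$: either $m_2 < i$ or $m_1 > j$. To show that $\vec{B}_{[m_1, m_2]} = \vec{B}'_{[m_1, m_2]}$, I need to verify two things in each case: (a) the rows $\vec{b}_{m_1}, \ldots, \vec{b}_{m_2}$ themselves are unchanged by left-multiplication by $\vec{I}_{i-1} \oplus \vec{A} \oplus \vec{I}_{k-j}$, and (b) the projection $\pi_{m_1} = \pi^\perp_{\{\vec{b}_1, \ldots, \vec{b}_{m_1-1}\}}$ used in forming the block is the same for $\vec{B}$ and $\vec{B}'$.

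For the case $m_2 < i$, both claims are immediate: the rows $\vec{b}_1, \ldots, \vec{b}_{m_2}$ are all indexed below $i$, so they sit in the part of $\vec{B}$ acted on by the top-left $\vec{I}_{i-1}$ and are untouched. This handles (a) and (b) simultaneously.

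For the case $m_1 > j$, the rows $\vec{b}_{m_1}, \ldots, \vec{b}_{m_2}$ lie in the bottom-right $\vec{I}_{k-j}$ block of the transformation and are therefore unchanged, giving (a). For (b), I would use the fact that the definition of $\pi^\perp_{\{\vec{b}_1, \ldots, \vec{b}_{m_1-1}\}}$ depends only on the union $\bigcup_{t < m_1} \supp(\vec{b}_t)$, and this union equals $\supp(\C(\vec{b}_1; \ldots; \vec{b}_{m_1-1}))$, a quantity that depends only on the span of those rows. Now in passing from $\vec{B}$ to $\vec{B}'$, the rows indexed in $[1, m_1-1] \setminus [i, j]$ are unchanged, while the rows indexed in $[i, j]$ are replaced by $\vec{A}(\vec{b}_i; \ldots; \vec{b}_j)$; since $\vec{A}$ is invertible, the span of rows $1$ through $m_1 - 1$ is preserved, hence so is the support of that span, hence so is the projection $\pi_{m_1}$. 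Combining (a) and (b) gives $\vec{B}_{[m_1, m_2]} = \vec{B}'_{[m_1, m_2]}$.

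The final statement about epipodal lengths is then an immediate corollary: setting $m_1 = m_2 = m$ for any $m \notin [i, j]$ yields $\vec{b}_m^+ = \pi_m(\vec{b}_m) = \pi'_m(\vec{b}'_m) = (\vec{b}'_m)^+$, which in particular gives $|\vec{b}_m^+| = |(\vec{b}'_m)^+|$. The only mildly subtle point is the identification of $\bigcup_t \supp(\vec{b}_t)$ with the support of the code spanned by those rows (to confirm it is a linear invariant); this is straightforward but is the one place where the argument genuinely uses invertibility of $\vec{A}$ rather than merely the unchanged-rows bookkeeping.
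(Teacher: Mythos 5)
Your proof is correct and follows essentially the same route as the paper's: the same case split into $m_2 < i$ (where everything is literally unchanged) and $m_1 > j$ (where the rows are unchanged and the projection is preserved because invertibility of $\vec{A}$ preserves the span, hence the support, of the first $m_1-1$ rows), with the epipodal-length claim obtained by taking $m_1 = m_2 = m$. Your remark that the union of supports equals the support of the spanned code is the same implicit step the paper uses when it passes from equality of the codes $\C(\vec{b}_1,\ldots,\vec{b}_{m_1-1})$ to equality of the projections.
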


\begin{proof}
    Let $\vec{B}^* := (\vec{b}_i; \dots; \vec{b}_j)$. Then $\vec{B}'= (\vec{b}_1; \dots; \vec{b}_{i-1}; \vec{A} \vec{B}^*; \vec{b}_{j+1}; \dots; \vec{b}_k)$. If $m_1 \leq m_2 < i$, then since $(\vec{b}_1; \dots, \vec{b}_{m_2}) = (\vec{b}'_1; \dots, \vec{b}'_{m_2})$, we have $\vec{B}_{[m_1, m_2]} = \vec{B}'_{[m_1, m_2]}$. 

    In the case that $m_1 > j$, $\C(\vec{b}_1, \dots, \vec{b}_{m_1-1}) = \C(\vec{b}'_1, \dots, \vec{b'}_{m_1-1})$ since $\vec{I}_{i-1} \oplus \vec{A} \oplus \vec{I}_{m_1-j-1}$ is invertible.
    This in particular implies that $\pi_{\vec{b}_1',\ldots, \vec{b}_{m_1-1}'}^\perp = \pi_{\vec{b}_1,\ldots, \vec{b}_{m_1-1}}^\perp$. And, since $\vec{b}_{\ell}' = \vec{b}_\ell$ for all $\ell \geq m_1 > j$, by definition, it follows that
    \[
        \vec{B}_{[m_1,m_2]}' =( \pi_{\vec{b}_1',\ldots, \vec{b}_{m_1-1}'}^\perp(\vec{b}_{m_1}');\ldots; \pi_{\vec{b}_1',\ldots, \vec{b}_{m_1-1}'}^\perp(\vec{b}_{m_2}'))  = (\pi_{\vec{b}_1,\ldots, \vec{b}_{m_1-1}}^\perp(\vec{b}_{m_1});\ldots;\pi_{\vec{b}_1,\ldots, \vec{b}_{m_1-1}}^\perp(\vec{b}_{m_2})) = \vec{B}_{[m_1,m_2]}
        \; ,
    \]
    as claimed. (The fact that $\vec{b}_m^+ = (\vec{b}_m')^+$ follows by setting $m_1 = m_2 = m$.)
\end{proof}

Finally, we observe that a certain simple modification to a basis does not change the epipodal vectors.

\begin{lemma}
    \label{lem: constant_epipodal_sum}
    For any basis $\vec{B} = (\vec{b}_1; \dots; \vec{b}_k) \in \F_q^{k \times n}$, integers $1 \leq i \leq j \leq k$, and invertible $\vec{A} \in \F_q^{(j-i+1) \times (j-i+1)}$, if we let $\vec{B}' := (\vec{b}'_1; \dots; \vec{b}'_k) := (\vec{I}_{i-1} \oplus \vec{A} \oplus \vec{I}_{k-j}) \vec{B}$, then $\sum_{m=i}^j |(\vec{b}'_m)^+| = \sum_{m=i}^j |\vec{b}_m^+|$.
\end{lemma}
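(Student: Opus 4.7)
The plan is to use the fact that the sum of epipodal vector lengths of a basis equals the support size of the generated code (\cref{fact: lengt_invariance}), combined with the already-established \cref{lem: epipodal_locality}, to reduce the claim to a statement about total support sizes of the full codes $\C(\vec{B})$ and $\C(\vec{B}')$.

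First, I would observe that $\vec{M} := \vec{I}_{i-1} \oplus \vec{A} \oplus \vec{I}_{k-j}$ is invertible (since $\vec{A}$ is), so $\vec{B}$ and $\vec{B}' = \vec{M} \vec{B}$ generate the same code, i.e., $\C(\vec{B}) = \C(\vec{B}')$. In particular, they have the same support, so $|\supp(\C(\vec{B}))| = |\supp(\C(\vec{B}'))|$. Applying \cref{fact: lengt_invariance} to both bases then gives
\[
    \sum_{m=1}^k |\vec{b}_m^+| = |\supp(\C(\vec{B}))| = |\supp(\C(\vec{B}'))| = \sum_{m=1}^k |(\vec{b}'_m)^+|
    \; .
\]

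Next, I would invoke \cref{lem: epipodal_locality}, which tells us precisely that $|\vec{b}_m^+| = |(\vec{b}'_m)^+|$ for every $m \notin [i,j]$ (taking $m_1 = m_2 = m$ in that lemma). Subtracting these equal terms from both sides of the displayed equality above leaves exactly
\[
    \sum_{m=i}^j |\vec{b}_m^+| = \sum_{m=i}^j |(\vec{b}'_m)^+|
    \; ,
\]
as desired.

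There is no real obstacle here; the result is essentially a combination of invariance of the code under invertible row transformations (giving global preservation of the support) and the locality of the epipodal vectors outside the modified block (given by \cref{lem: epipodal_locality}). The only subtle point worth double-checking is that $\vec{M}$ is genuinely invertible, but this follows immediately from the block-diagonal structure together with the invertibility of $\vec{A}$, since the determinant of a block-diagonal matrix is the product of the determinants of its diagonal blocks.
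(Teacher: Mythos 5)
Your proof is correct, but it takes a slightly different route from the paper's. The paper argues locally on the block: by \cref{lem: lifting_reduction}, $\vec{B}'_{[i,j]} = \vec{A}\vec{B}_{[i,j]}$, so the two blocks generate the same code and hence have the same support, and then \cref{fact: lengt_invariance} applied to the block bases (whose epipodal vectors are exactly $\vec{b}_i^+,\ldots,\vec{b}_j^+$ and $(\vec{b}'_i)^+,\ldots,(\vec{b}'_j)^+$) gives the equality of the two sums directly. You instead argue globally: $\C(\vec{B}) = \C(\vec{B}')$ gives equality of the \emph{full} sums via \cref{fact: lengt_invariance}, and then you cancel the terms outside $[i,j]$ using the ``in particular'' clause of \cref{lem: epipodal_locality}. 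Both arguments are short and valid; the paper's avoids invoking \cref{lem: epipodal_locality} at all (though it implicitly uses that the epipodal vectors of the block $\vec{B}_{[i,j]}$ coincide with the corresponding epipodal vectors of $\vec{B}$), while yours sidesteps that observation at the cost of leaning on the locality lemma, which fortunately is established immediately beforehand, so there is no circularity.
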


\begin{proof}
     By \cref{lem: lifting_reduction}, $\vec{B}'_{[i,j]} = \vec{A} \vec{B}_{[i,j]}$. Since $\vec{A}$ is invertible, this means that $\C(\vec{B}'_{[i,j]}) = \C(\vec{B}_{[i,j]})$. In particular, these two codes have the same support. The result then follows by \cref{fact: lengt_invariance}, which shows that the two sums must both be equal to this support.
\end{proof}

\begin{lemma}
    \label{lem: late_epipodal_invariant}
    For any basis $\vec{B} = (\vec{b}_1; \dots; \vec{b}_k) \in \F_q^{k \times n}$, $c \in \F_q$, and integers $1 \leq i < j \leq k$, let $\vec{B}' := (\vec{b}_1;\ldots, \vec{b}_{j-1};\vec{b}_j + c \vec{b}_i; \vec{b}_{j+1};\ldots;\vec{b}_k)$. Then, $\vec{b}_m^+ = (\vec{b}_m')^+$ for all $m$.
\end{lemma}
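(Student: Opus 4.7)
The plan is to split into three cases based on the position of $m$ relative to $j$, treating the easy cases ($m < j$ and $m > j$) by showing that the projection operator is unchanged, and the interesting case $m = j$ by using linearity of $\pi_j^\perp$ (\cref{lem: projection_linearity}) together with the fact that $\vec{b}_i$ projects to zero.

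First, for $m < j$, observe that $\vec{b}_m' = \vec{b}_m$ and $(\vec{b}_1',\ldots,\vec{b}_{m-1}') = (\vec{b}_1,\ldots,\vec{b}_{m-1})$, so $(\vec{b}_m')^+ = \vec{b}_m^+$ follows immediately from the definition of the epipodal vector.

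For $m = j$, note that $(\vec{b}_j')^+ = \pi_{\{\vec{b}_1,\ldots,\vec{b}_{j-1}\}}^\perp(\vec{b}_j + c \vec{b}_i)$. By \cref{lem: projection_linearity}, this equals $\pi_{\{\vec{b}_1,\ldots,\vec{b}_{j-1}\}}^\perp(\vec{b}_j) + c \cdot \pi_{\{\vec{b}_1,\ldots,\vec{b}_{j-1}\}}^\perp(\vec{b}_i)$. The first summand is exactly $\vec{b}_j^+$. For the second, since $i < j$, the vector $\vec{b}_i$ is among $\{\vec{b}_1,\ldots,\vec{b}_{j-1}\}$, so $\supp(\vec{b}_i)$ is contained in the union $\bigcup_{\ell < j} \supp(\vec{b}_\ell)$; by the definition of $\pi^\perp$, every coordinate of $\vec{b}_i$ is therefore zeroed out, giving $\pi_{\{\vec{b}_1,\ldots,\vec{b}_{j-1}\}}^\perp(\vec{b}_i) = \vec{0}$. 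Hence $(\vec{b}_j')^+ = \vec{b}_j^+$.

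For $m > j$, the key observation is that the modification is reversible: $\vec{b}_j = \vec{b}_j' - c \vec{b}_i'$, and since $i < j < m$ both $\vec{b}_j'$ and $\vec{b}_i'$ are among $\vec{b}_1',\ldots,\vec{b}_{m-1}'$. Therefore the codes spanned by $\{\vec{b}_1,\ldots,\vec{b}_{m-1}\}$ and $\{\vec{b}_1',\ldots,\vec{b}_{m-1}'\}$ are equal, and in particular share the same support. Since the projection $\pi^\perp_{\{\cdot\}}$ depends only on the union of supports of its defining vectors (equivalently, only on the support of the code they span), we get $\pi_{\{\vec{b}_1',\ldots,\vec{b}_{m-1}'\}}^\perp = \pi_{\{\vec{b}_1,\ldots,\vec{b}_{m-1}\}}^\perp$. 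Combined with $\vec{b}_m' = \vec{b}_m$, this yields $(\vec{b}_m')^+ = \vec{b}_m^+$.

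The only subtlety worth noting is the $m = j$ case, where one must appeal to linearity of the projection and the ``absorbing'' behavior of $\pi^\perp$ on any vector whose support is already covered; everything else is bookkeeping about which vectors index the projection.
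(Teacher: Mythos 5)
Your proposal is correct and follows essentially the same route as the paper: the key case $m = j$ is handled identically (linearity of $\pi^\perp$ plus $\pi_{\{\vec{b}_1,\ldots,\vec{b}_{j-1}\}}^\perp(\vec{b}_i) = \vec{0}$), while for $m \neq j$ you re-derive directly what the paper obtains by citing \cref{lem: epipodal_locality,lem: lifting_reduction} — namely that the relevant vectors and the support governing the projection are unchanged. Both treatments are sound; yours is just slightly more self-contained.
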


\begin{proof}
    It follows from \cref{lem: epipodal_locality} that $\vec{b}_m^+ = (\vec{b}_m')^+$ for all $m \notin [i,j]$. For $m \in [i,j]$ with $m \neq j$, it follows from \cref{lem: lifting_reduction} that $\vec{b}_m^+ = (\vec{b}_m')^+$. Finally, notice that $\pi_{\vec{b}_1,\ldots, \vec{b}_{j-1}}^\perp = \pi_{\vec{b}_1',\ldots, \vec{b}_{j-1}'}^\perp$, since the first $j-1$ basis vectors are unchanged. It follows that
    \[
        (\vec{b}_j')^+ = \pi_{\vec{b}_1',\ldots, \vec{b}_{j-1}'}^\perp(\vec{b}_j + c \vec{b}_i) = \pi_{\vec{b}_1,\ldots, \vec{b}_{j-1}}^\perp(\vec{b}_j) + c \pi_{\vec{b}_1,\ldots, \vec{b}_{j-1}}^\perp(\vec{b}_i) =\pi_{\vec{b}_1,\ldots, \vec{b}_{j-1}}^\perp(\vec{b}_j) = \vec{b}_j^+
        \; ,
    \]
    where we have used the linearity of projection plus the fact that $\pi_{\vec{b}_1,\ldots, \vec{b}_{j-1}}^\perp(\vec{b}_i) = \vec0$.
\end{proof}

\subsection{Size reduction and its fundamental domain}
\label{sec:size_reduction}

(This section and the next can be safely skipped by a reader who is only interested in our basis reduction algorithms.)

We will now define the size-reduction algorithm. Intuitively, the size reduction algorithm takes as input $\vec{e}$ and then works to iteratively minimize $|\pi_{\vec{b}_i^+}(\vec{e})|$  by adding scalar multiples of $\vec{b}_i$ to $\vec{e}$ for $i = k,\ldots, 1$. To ensure that the behavior of the algorithm is unambiguous (which is important for its analysis), we must be careful to appropriately handle the case when there are multiple choices of $a$ such that $|\pi_{\vec{b}_i^+}(\vec{e} + \vec{b}_i)|$ is minimal. 

To that end (following \cite{DDvAlgorithmicReductionTheory2022}) we define the a tie-breaking function $\mathrm{TB}$. In fact, we could use any function $\mathrm{TB}_{\vec{p}}(\vec{y})$ for $\vec{p} \neq 0$ with the properties that (1) $0 \leq \mathrm{TB}_{\vec{p}}(\vec{y}) < 1$, and (2) if $\vec{p} \neq \vec0$, then $\mathrm{TB}_{\vec{p}}(\vec{y}) \neq \mathrm{TB}_{\vec{p}}(\vec{y} + a\vec{p})$ for any $\vec{y} \in \F_q^n$ and $a \in \F_q^*$. It will be convenient to take 
\[
\mathrm{TB}_{\vec{p}}(\vec{y}) := y_j p_j^{-1}/q
\; ,
\]
where $j := \min \supp(\vec{p})$ is the minimal index in the support of $\vec{p}$ and where we interpret this as a value in $[0,1)$ by identifying each $y_j \in \F_q$ with a unique integer in $\{0,\ldots, q-1\}$ (in any way). We note that our choice of the tie-breaking function has the advantage that for any $\vec{c} \in (\F_q^*)^n$, $\mathrm{TB}_{\vec{p}}(\vec{y}) = \mathrm{TB}_{\vec{c} \odot \vec{p}}(\vec{c} \odot \vec{y})$ (where recall that $\odot$ represents the coordinate-wise product of vectors).

\begin{definition}
    \label{def: fundamental_domain}
    Let $\vec{B} = (\vec{b}_1, \dots, \vec{b}_k)$ be a proper basis. The fundamental domain relative to $\vec{B}$ is defined as
    $$
    \mathcal{F}(\vec{B}^+) := 
    \{ \vec{y} \in \mathbb{F}_q^n :
    \forall i\in [1, k], \forall a \in \F_q^*,
    |\pi_{\vec{b}_i^+}(\vec{y})| + \mathrm{TB}_{\vec{b}_i^+}(\vec{y})
    < |\pi_{\vec{b}_i^+}(a \vec{b}_i+\vec{y})| + \mathrm{TB}_{\vec{b}_i^+}(a \vec{b}_i+\vec{y})\}
    \; .
    $$
    Vectors in the fundamental domain are said to be \emph{size reduced with respect to $\vec{B}$}.
\end{definition}

\cref{def: fundamental_domain} captures the idea that a word $\vec{y} \in \F_q^n$ is in the fundamental domain if, for all $i$, $|\pi_{\vec{b}_i^+}(\vec{y})| + \mathrm{TB}_{\vec{b}_i^+}(\vec{y})$ cannot be made smaller by adding any multiple of the basis vector $\vec{b}_i$ to it. We now show that what we have defined as the fundamental domain in \cref{def: fundamental_domain} is indeed a fundamental domain of the code $\C$ by showing that it is both covering and packing. In fact, we only prove that it is packing. The fact that it is covering will follow immediately from the correctness of our size-reduction algorithm.

\begin{claim}
    Let $\vec{B} := (\vec{b}_1; \dots, \vec{b}_k) \in \F_q^{k \times n}$ be a proper basis for a code $\C$. Then, $\mathcal{F}(\vec{B}^+)$ is $\mathcal{C}$-packing, i.e., for every non-zero codeword $\vec{c} \in \C_{\neq \vec0}$, 
    \[(\vec{c} + \mathcal{F}(\vec{B}^+)) \cap \mathcal{F}(\vec{B}^+) = \emptyset
    \]
\end{claim}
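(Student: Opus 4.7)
I will argue by contradiction. Suppose $\vec{y}, \vec{y}' \in \mathcal{F}(\vec{B}^+)$ with $\vec{y}' - \vec{y} = \vec{c}$ for some non-zero $\vec{c} \in \C$. Write $\vec{c} = a_1 \vec{b}_1 + \cdots + a_k \vec{b}_k$ and let $i^* := \max\{i : a_i \neq 0\}$; since $\vec{B}$ is proper, $\vec{b}_{i^*}^+ \neq \vec{0}$, which is what lets us invoke the fundamental-domain condition at index $i^*$.

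The key geometric observation is that $\supp(\vec{b}_{i^*}^+) = \supp(\vec{b}_{i^*}) \setminus \bigcup_{j < i^*} \supp(\vec{b}_j)$, and in particular $\supp(\vec{b}_{i^*}^+) \cap \supp(\vec{b}_j) = \emptyset$ for every $j < i^*$. Consequently $\pi_{\vec{b}_{i^*}^+}(\vec{b}_j) = \vec{0}$ for $j < i^*$, while $\pi_{\vec{b}_{i^*}^+}(\vec{b}_{i^*}) = \vec{b}_{i^*}^+$. Using linearity of $\pi_{\vec{b}_{i^*}^+}$, this gives
\[
    \pi_{\vec{b}_{i^*}^+}(a_{i^*}\vec{b}_{i^*} + \vec{y}) \;=\; \pi_{\vec{b}_{i^*}^+}(\vec{y}) + a_{i^*}\vec{b}_{i^*}^+ \;=\; \pi_{\vec{b}_{i^*}^+}(\vec{y} + \vec{c}) \;=\; \pi_{\vec{b}_{i^*}^+}(\vec{y}')\,,
\]
and the same identity for the tie-breaking function $\mathrm{TB}_{\vec{b}_{i^*}^+}$ follows from the fact that the distinguishing coordinate $j := \min \supp(\vec{b}_{i^*}^+)$ lies outside $\supp(\vec{b}_\ell)$ for every $\ell < i^*$, so coordinate $j$ of $a_{i^*}\vec{b}_{i^*} + \vec{y}$ and of $\vec{y}'$ coincide.

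Now apply the defining inequality of $\mathcal{F}(\vec{B}^+)$ to $\vec{y}$ with $i = i^*$ and $a = a_{i^*} \in \F_q^*$: by the identities above this yields
\[
    |\pi_{\vec{b}_{i^*}^+}(\vec{y})| + \mathrm{TB}_{\vec{b}_{i^*}^+}(\vec{y}) \;<\; |\pi_{\vec{b}_{i^*}^+}(\vec{y}')| + \mathrm{TB}_{\vec{b}_{i^*}^+}(\vec{y}')\,.
\]
Symmetrically, applying the definition of $\mathcal{F}(\vec{B}^+)$ to $\vec{y}'$ with $i = i^*$ and $a = -a_{i^*}$, and noting that $-a_{i^*}\vec{b}_{i^*} + \vec{y}' = \vec{y} + \sum_{\ell < i^*} a_\ell \vec{b}_\ell$ differs from $\vec{y}$ only on coordinates outside $\supp(\vec{b}_{i^*}^+)$, the same reasoning gives the reverse strict inequality. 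This is the desired contradiction.

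The only real subtlety is the bookkeeping in the second step---checking that both the projected Hamming weight $|\pi_{\vec{b}_{i^*}^+}(\cdot)|$ and the tie-breaker $\mathrm{TB}_{\vec{b}_{i^*}^+}(\cdot)$ are insensitive to adding an $\F_q$-combination of $\vec{b}_1,\ldots,\vec{b}_{i^*-1}$. Both reduce to the support-disjointness observation above, which is exactly the content of properness plus the definition of $\vec{b}_{i^*}^+$ as projection orthogonal to the earlier basis vectors. Once this is in hand, the packing claim is immediate; the covering half is deferred, as the excerpt notes, to the correctness proof of the size-reduction algorithm.
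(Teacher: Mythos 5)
Your proposal is correct and is essentially the paper's own argument: both take the maximal index $i^*$ with $a_{i^*}\neq 0$, use the fact that $\pi_{\vec{b}_{i^*}^+}$ and $\mathrm{TB}_{\vec{b}_{i^*}^+}$ are unaffected by adding combinations of $\vec{b}_1,\ldots,\vec{b}_{i^*-1}$, and apply the fundamental-domain condition at index $i^*$ to both points to obtain contradictory strict inequalities (the paper phrases the two applications as a single inequality chain contradicting membership of $\vec{x}-\vec{c}$, but the content is the same). No gaps.
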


\begin{proof}
    Suppose for contradiction that $\vec{x} \in (\vec{c} + \mathcal{F}(\vec{B}^+) \cap \mathcal{F}(\vec{B}^+)$. Then, notice that both $\vec{x}$ and $\vec{x} - \vec{c}$ are in $\mathcal{F}(\vec{B}^+)$. Since $\vec{c} \in \C_{\neq \vec0}$, there are some coefficients $a_i \in \mathbb{F}_q$ not all zero such that $\vec{c} = \sum_{i=1}^k a_i \vec{b}_i$. 
    
    Let $i$ be maximal such that $a_i \neq 0$. Notice that since $\vec{x} \in \mathcal{F}(\vec{B}^+)$ and $a_i \neq 0$, we must have that 
    \[|\pi_{\vec{b}_i^+}(\vec{x})| + \mathrm{TB}_{\vec{b}_i^+}(\vec{x}) < |\pi_{\vec{b}_i^+}(\vec{x} - a_i \vec{b}_i)| + \mathrm{TB}_{\vec{b}_i^+}(\vec{x} - a_i \vec{b}_i)
    \; .
    \]
    But, notice that by the definition of $i$, we have that $\pi_{\vec{b}_i^+}(\vec{c}) = a_i \vec{b}_i$ and notice that $\mathrm{TB}_{\vec{b}_i^+}(\vec{y})$ depends only on $\pi_{\vec{b}_i^+}(\vec{y})$. It follows that
    \begin{align*}
        |\pi_{\vec{b}_i^+}(\vec{x} - \vec{c} + a_i\vec{b}_i)| + \mathrm{TB}_{\vec{b}_i^+}(\vec{x} - \vec{c} + a_i \vec{b}_i) 
            &= |\pi_{\vec{b}_i^+}(\vec{x})| + \mathrm{TB}_{\vec{b}_i^+}(\vec{x}) \\
            &< |\pi_{\vec{b}_i^+}(\vec{x} - a_i \vec{b}_i)| + \mathrm{TB}_{\vec{b}_i^+}(\vec{x} - a_i \vec{b}_i) \\
            &= |\pi_{\vec{b}_i^+}(\vec{x} - \vec{c})| + \mathrm{TB}_{\vec{b}_i^+}(\vec{x} - \vec{c})
        \; .
    \end{align*}
    But, this contradicts the assumption that $\vec{x} - \vec{c} \in \mathcal{F}(\basis^+)$, as needed.
\end{proof}

We present in \cref{alg:size_red} our generalization to $\F_q$ of the size-reduction algorithm of \cite{DDvAlgorithmicReductionTheory2022}. The algorithm uses as a subprocedure a process for finding $a \in \F_q$ minimizing $|\pi_{\vec{b}_i^+}(\vec{e}+a\vec{b}_i^+)| + \mathrm{TB}_{\vec{b}_i^+}(\vec{e}+a\vec{b}_i^+)$. We show how to do this in $O(n \log^2 q)$ time in \cref{subsec: log_q}.

\RestyleAlgo{ruled}
\begin{algorithm2e}
\KwIn{A proper basis $\vec{B} = (\vec{b}_1; \dots; \vec{b}_k) \in \F_q^{k \times n}$ and a target $\vec{y} \in \F_q^n$}
\KwOut{$\vec{e} \in \mathcal{F}(\vec{B}^+)$ such that $\vec{e} - \vec{y} \in \C(\vec{B})$}
\caption{Size-reduce $\vec{y}$ with respect to $\vec{B}$}
\label{alg:size_red}

$\vec{e} \gets \vec{x}$\\
\For{$i = k$ down to 1}{

find $a \in \F_q$ minimizing $|\pi_{\vec{b}_i^+}(\vec{e}+a\vec{b}_i^+)| + \mathrm{TB}_{\vec{b}_i^+}(\vec{e}+a\vec{b}_i^+)$\

$\vec{e} \gets \vec{e} + a \vec{b}_i$
}

\Return $\vec{e}$

\end{algorithm2e}

\begin{claim}
    On input a proper basis $\basis \in \F_q^{k \times n}$ and target $\vec{y} \in \F_q^n$ \cref{alg:size_red} is correctly outputs a point $\vec{e} \in \mathcal{C}(\basis)$ with $\vec{e} \in \mathcal{F}(\basis^+)$. Furthermore, the algorithm runs in time $O( kn\log^2 q)$.
\end{claim}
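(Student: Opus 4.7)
The plan is to split the argument into three parts: (1) maintenance of the invariant $\vec{e} - \vec{y} \in \C(\basis)$; (2) a downward induction establishing that after the loop finishes, $\vec{e} \in \mathcal{F}(\basis^+)$; (3) a brief running time accounting. Note that I interpret the first part of the claim as stating $\vec{e} - \vec{y} \in \C(\basis)$, matching the algorithm's output specification (and reading $\vec{x}$ as $\vec{y}$ on the initialization line).

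For part (1), the invariant $\vec{e} - \vec{y} \in \C(\basis)$ holds trivially at initialization (where $\vec{e} = \vec{y}$) and is preserved by the update $\vec{e} \gets \vec{e} + a \vec{b}_i$, since $a \vec{b}_i \in \C(\basis)$.

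For part (2), the key structural observation is that $\supp(\vec{b}_i) \cap \supp(\vec{b}_j^+) = \emptyset$ whenever $i < j$. Indeed, by definition $\vec{b}_j^+ = \pi^\perp_{\{\vec{b}_1,\ldots,\vec{b}_{j-1}\}}(\vec{b}_j)$, so $\supp(\vec{b}_j^+)$ is disjoint from $\bigcup_{i < j} \supp(\vec{b}_i)$. Consequently, for any $i < j$ and any $a \in \F_q$, $\pi_{\vec{b}_j^+}(a \vec{b}_i) = \vec0$, and moreover the single coordinate $\min \supp(\vec{b}_j^+)$ used to compute $\mathrm{TB}_{\vec{b}_j^+}$ lies outside $\supp(\vec{b}_i)$, so both $\pi_{\vec{b}_j^+}(\vec{e})$ and $\mathrm{TB}_{\vec{b}_j^+}(\vec{e})$ are invariant under the update $\vec{e} \gets \vec{e} + a\vec{b}_i$. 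I would then argue by downward induction on $i$ that after the loop iteration for index $i$ completes, the size-reduction condition holds at all indices $i'' \geq i$: the iteration for index $i$ itself chooses $a$ to make $|\pi_{\vec{b}_i^+}(\vec{e})| + \mathrm{TB}_{\vec{b}_i^+}(\vec{e})$ strictly minimal over $\vec{e} + \F_q \vec{b}_i$ (using that $\mathrm{TB}$ breaks ties, so the minimizer is unique and strict inequality holds against any other shift), while the conditions previously established at indices $i'' > i$ survive by the invariance just noted. Taking $i = 1$ at the end of the loop yields $\vec{e} \in \mathcal{F}(\basis^+)$.

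For part (3), the loop runs $k$ times. In each iteration, the minimization step over $a \in \F_q$ is performed in time $O(n \log^2 q)$ by the subroutine promised in the reference to \cref{subsec: log_q}, and the subsequent update $\vec{e} \gets \vec{e} + a \vec{b}_i$ costs $O(n \log q)$ work. The total is $O(kn \log^2 q)$. The main (and only mildly delicate) obstacle is getting the tie-breaking book-keeping right in the induction of part (2) — specifically, confirming both that the minimizer chosen at step $i$ yields the \emph{strict} inequality required by \cref{def: fundamental_domain} (which uses the uniqueness property of $\mathrm{TB}$) and that $\mathrm{TB}_{\vec{b}_{i''}^+}$, which depends only on the coordinate $\min \supp(\vec{b}_{i''}^+)$, is genuinely unaffected by later updates. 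Both follow cleanly from the support-disjointness observation above.
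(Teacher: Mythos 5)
Your proof is correct and follows essentially the same route as the paper's: the invariant $\vec{e} - \vec{y} \in \C(\basis)$ is immediate, and the size-reduction condition established at step $i$ is preserved by later iterations because $\pi_{\vec{b}_i^+}(\vec{b}_j) = \vec0$ (equivalently, your support-disjointness observation) so adding multiples of $\vec{b}_j$ with $j<i$ leaves $|\pi_{\vec{b}_i^+}(\vec{e})| + \mathrm{TB}_{\vec{b}_i^+}(\vec{e})$ unchanged, with the running time accounted identically. Your extra care about the strictness of the inequality via the tie-breaking function and about $\mathrm{TB}$ depending only on the coordinate $\min\supp(\vec{b}_i^+)$ just makes explicit what the paper leaves implicit.
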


\begin{proof}
    To see that the algorithm runs in time $O(nk \log^2 q)$, it suffices to notice the finding $a$ takes time $O(n\log^2 q)$ (see \cref{subsec: log_q}) and of course that computing $\vec{e} + a \vec{b}_i$ similarly takes time $O(n \log^2 q)$. The loop runs $k$ times, so the total running time is $O(kn \log^2  q)$, as claimed.
    
    To see that the algorithm is correct, first notice that throughout the algorithm we have $\vec{e} - \vec{y} \in \C(\vec{B})$, since we only add multiples of basis vectors $\vec{b}_i$ to $\vec{e}$. Therefore, all that remains to show correctness is proving that $\vec{e} \in \mathcal{F}(\vec{B}^+)$. Notice that step $i$ of the for loop enforces that for any $a \in \F_q^*$,
    $|\pi_{\vec{b}_i^+}(\vec{e})| + \mathrm{TB}_{\vec{b}_i^+}(\vec{e})
    < |\pi_{\vec{b}_i^+}(a \vec{b}_i+\vec{e})| + \mathrm{TB}_{\vec{b}_i^+}(a \vec{b}_i+\vec{e})$. Furthermore, this constraint is maintained by subsequent loop iterations since for $j < i$, $\pi_{\vec{b}_i^+}(\vec{b}_j) = \vec{0}$, so that adding multiples of $\vec{b}_j$ to $\vec{e}$ does not change $|\pi_{\vec{b}_i^+}(\vec{e})| + \mathrm{TB}_{\vec{b}_i^+}(\vec{e})$. It follows that when the algorithm outputs $\vec{e}$, all such constraints are satisfied, i.e., $\vec{e} \in \mathcal{F}(\basis^+)$, as claimed.
\end{proof}

\subsection{Analysis of \texorpdfstring{$\mathcal{F}(\vec{B}^+)$}{F(B+)}}
\label{subsec: F_analysis}

Here, we derive an expression (which can be computed in polynomial time when $q$ is constant) for the weight distribution of the fundamental domain and give intuition as to why a more balanced epipodal profile produces a better fundamental domain. First, for any set $S \subseteq \F_q^n$, we define
$$W_p(S) = \{ \vec{y} \in S : |\vec{y}| = p\}
\; ,$$
and we then define the \emph{weight distribution} of $S$ as
$$W(S) := (|W_0(S)|, |W_1(S)|, \dots, |W_{n}(S)|)
\; .$$

For constant $q$, we will derive a polynomial-time computable expression for $W(\mathcal{F}(\vec{B}^+))$, which one can view as a representation of the performance of size reduction with basis $\vec{B}$. E.g., if we run size reduction on a random target $\vec{y} \sim \F_q^n$, then the probability that the resulting output will have length $p$ is precisely $|W_p(S)|/q^{n-k}$.

To understand the weight distribution, first notice that we may permute the coordinates of $\vec{B}^+$ without changing the weight distribution $W(\mathcal{F}(\vec{B}^+))$. So without loss of generality, we can assume that $\mathbf{B}$ is ``upper triangular'' in the sense that $B_{i,j}  = 0$ for all $j > \ell_1 + \cdots + \ell_i$. Next, we show that we can define an isometry that will allow us to assume that all entries in $\vec{B}^+$ are either $1$ or $0$.

\begin{lemma}
    \label{lem: isometry}
    Let $\vec{c}$ be any non-zero vector in $\F_q^n$ and define $\phi(\vec{x}) := \vec{c} \odot \vec{x}$ and $\phi((\vec{b}_1; \dots; \vec{b}_k)) := (\phi(\vec{b}_1); \dots; \phi(\vec{b}_k))$. Then $\vec{y} \in \mathcal{F}(\vec{B}^+)$ if and only if $\phi(\vec{y}) \in \mathcal{F}(\phi(\vec{B})^+)$
\end{lemma}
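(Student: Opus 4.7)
The plan is to verify that $\phi$ is a Hamming-weight-preserving, $\F_q$-linear bijection that intertwines with the projections $\pi_{(\cdot)}$ and $\pi^\perp_{(\cdot)}$ and leaves the tie-breaking function invariant, so that the defining inequalities of $\mathcal{F}(\vec{B}^+)$ and $\mathcal{F}(\phi(\vec{B})^+)$ match term-by-term under $\vec{y} \mapsto \phi(\vec{y})$. Throughout I work under the natural hypothesis $\vec{c} \in (\F_q^*)^n$ (the only case in which $\phi$ is a bijection of $\F_q^n$, and the one the sequel actually uses).

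First I would record the basic properties of $\phi$. The map is $\F_q$-linear with inverse given by coordinate-wise multiplication by $c_i^{-1}$, and $\supp(\phi(\vec{x})) = \supp(\vec{x})$ for every $\vec{x}$, so $|\phi(\vec{x})| = |\vec{x}|$. Since support-based projection acts coordinate-wise and only depends on the union of supports of its defining vectors, a direct check yields
\[
\pi^\perp_{\{\phi(\vec{x}_1),\ldots,\phi(\vec{x}_m)\}}(\phi(\vec{y})) \;=\; \phi\bigl(\pi^\perp_{\{\vec{x}_1,\ldots,\vec{x}_m\}}(\vec{y})\bigr)
\]
for all $\vec{x}_1,\ldots,\vec{x}_m,\vec{y} \in \F_q^n$, and similarly with $\pi$ in place of $\pi^\perp$. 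Applying this identity inductively to the definition of the epipodal matrix gives $\phi(\vec{B})^+_i = \phi(\vec{b}_i^+)$ for every $i \in [1,k]$.

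Next I would check the two ingredients of the $\mathcal{F}$-defining inequality. For the Hamming-weight piece, combining support-preservation of $\phi$ with the intertwining identity gives $|\pi_{\phi(\vec{b}_i^+)}(\phi(\vec{z}))| = |\phi(\pi_{\vec{b}_i^+}(\vec{z}))| = |\pi_{\vec{b}_i^+}(\vec{z})|$ for every $\vec{z} \in \F_q^n$. For the tie-breaking piece, I invoke the identity $\mathrm{TB}_{\vec{p}}(\vec{y}) = \mathrm{TB}_{\vec{c} \odot \vec{p}}(\vec{c} \odot \vec{y})$ recorded immediately after the definition of $\mathrm{TB}$, which is the whole point of having chosen $\mathrm{TB}$ that way; specialising $\vec{p} := \vec{b}_i^+$ and $\vec{y} := \vec{z}$ gives $\mathrm{TB}_{\phi(\vec{b}_i^+)}(\phi(\vec{z})) = \mathrm{TB}_{\vec{b}_i^+}(\vec{z})$. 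This step is precisely where $\vec{c}$ having nowhere-zero coordinates is essential.

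Finally, linearity of $\phi$ gives $\phi(a\vec{b}_i + \vec{y}) = a\phi(\vec{b}_i) + \phi(\vec{y})$, so applying the two preservation statements with $\vec{z} := \vec{y}$ and $\vec{z} := a\vec{b}_i + \vec{y}$ shows that each defining inequality
\[
|\pi_{\vec{b}_i^+}(\vec{y})| + \mathrm{TB}_{\vec{b}_i^+}(\vec{y}) \;<\; |\pi_{\vec{b}_i^+}(a\vec{b}_i+\vec{y})| + \mathrm{TB}_{\vec{b}_i^+}(a\vec{b}_i+\vec{y})
\]
holds if and only if the corresponding inequality with $\vec{b}_i \mapsto \phi(\vec{b}_i)$, $\vec{b}_i^+ \mapsto \phi(\vec{B})^+_i$, and $\vec{y} \mapsto \phi(\vec{y})$ holds; taking the conjunction over all $a \in \F_q^*$ and $i \in [1,k]$ gives the biconditional. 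The only slightly delicate step is establishing the projection-intertwining identity (and hence $\phi(\vec{B})^+_i = \phi(\vec{b}_i^+)$), since everything else is a direct substitution; once that identity is in hand, the proof is essentially a translation of the definition of $\mathcal{F}$ through $\phi$.
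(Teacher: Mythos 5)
Your proof is correct and follows essentially the same route as the paper's: identify $\phi(\vec{B})^+_i = \phi(\vec{b}_i^+)$, use the weight-preservation of $\phi$ under the support-based projections, invoke the $\mathrm{TB}_{\vec{p}}(\vec{y}) = \mathrm{TB}_{\vec{c}\odot\vec{p}}(\vec{c}\odot\vec{y})$ identity, and conclude by linearity that the defining inequalities of the fundamental domain correspond term-by-term. Your explicit restriction to $\vec{c} \in (\F_q^*)^n$ (and your spelled-out intertwining identity justifying $\phi(\vec{B})^+_i = \phi(\vec{b}_i^+)$) only makes precise what the paper assumes implicitly and asserts without detail.
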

\begin{proof}
    Notice that $\phi(\vec{B})^+ = (\vec{c} \odot \vec{b}_1^+; \dots; \vec{c} \odot \vec{b}_k^+) = (\phi(\vec{b}_1^+), \dots, \phi(\vec{b}_k^+))$. Furthermore, as we observed above, our choice of $\mathrm{TB}$ has the convenient property that $\mathrm{TB}_{\vec{b}_i^+}(\vec{x}) = \mathrm{TB}_{\phi(\vec{b}_i^+)}(\phi(\vec{x}))$. Furthermore, notice that $|\pi_{\vec{b}_i^+}(\vec{x})| = |\pi_{\phi(\vec{b}_i^+)}(\phi(\vec{x}))|$.
    Using these facts plus the linearity of $\phi$, we see that $|\pi_{\vec{b}_i^+}(\vec{y})| + \mathrm{TB}_{\vec{b}_i^+}(\vec{y}) < |\pi_{\vec{b}_i^+}(a \vec{b}_i + \vec{y})| + \mathrm{TB}_{\vec{b}_i^+}(a\vec{b}_i + \vec{y})$ if and only if $|\pi_{\phi(\vec{b}_i^+)}(\phi(\vec{y}))| + \mathrm{TB}_{\phi(\vec{b}_i^+)}(\phi(\vec{y})) < |\pi_{\phi(\vec{b}_i^+)}(a \phi(\vec{b}_i) + \phi(\vec{y}))| + \mathrm{TB}_{\phi(\vec{b}_i^+)}(a\phi(\vec{b}_i) + \phi(\vec{y}))$ for any scalar $a \in \F_q$ and $i \in [k]$. In particular, $\vec{y}$ is size reduced with respect to $\vec{B}$ if and only if $\phi(\vec{y})$ is size reduced with respect to $\phi(\vec{B})$, as claimed.
\end{proof}

We assume without loss of generality that the basis $\vec{B}$ has full support. Let $c_i$ denote the inverse of the first non-zero element in the $i$th column of $\vec{B}$, i.e., $c_i = B_{j, i}^{-1}$ where $j$ is minimal such that $B_{j,i} \neq 0$. Let $\vec{c} = (c_1, \dots, c_n)$, $\phi(\vec{x}) = \vec{c} \odot \vec{x}$ and $\phi(\vec{B}) = (\phi(\vec{b}_1), \dots, \phi(\vec{b}_k))$. \cref{lem: isometry} and the fact that $\phi$ is an isometry imply that $W(\mathcal{F}(\vec{B}^+)) = W(\mathcal{F}(\phi(\vec{B})^+))$. Furthermore, $\phi(\vec{B})^+$ consists only of zeros and ones. Therefore, by swapping $\vec{B}$ with $\phi(\vec{B})$, we may assume without loss of generality that all entries in $\vec{B}^+$ are either zero or one. 

So, we have reduced to the case when $\vec{B}^+$ consists of zeros and ones, with $B_{i,j} = 1$ if and only if $\ell_1 + \cdots + \ell_{i-1} < j \leq \ell_1 + \cdots + \ell_i$.
Now, let $Y_i$ denote the set of all vectors $\vec{y}_i \in \F_q^{\ell_i}$ such that (1) $N_{0}(\vec{y}_i) \geq N_a(\vec{y}_i)$ for all $a \in \F_q$ where $N_a$ is the number of coordinates that take value $a$; and (2) $y_{i,1} \in \{0,\ldots, q-1\}$ is minimal among the $y_{i,1} - a$ where $a$ is such that $N_a(\vec{y}_i) = N_0(\vec{y}_i)$. Then
\begin{equation}
    \label{eq: product}
    \mathcal{F}(\vec{B}^+) = \{ (\vec{y}_1, \dots, \vec{y}_k) : \vec{y}_1 \in Y_1, \dots, \vec{y}_k \in Y_k \} = Y_1 \times \cdots \times Y_k
    \; .
\end{equation}

We therefore wish to study $W(Y_i)$. To that end, let us consider the number of vectors $\vec{y}_i \in \F_q^{\ell_i}$ such that $z = N_0(\vec{y})$, $|\{a \ : \ N_a(\vec{y}_i) = z\}| = s$, and $N_a(\vec{y}_i) \leq z$ for all $a \in \F_q$. (In other words, $\vec{y}_i$ is such that zero is the most common symbol in $\vec{y}_i$, which occurs $z$ times, and there are precisely $s$ total symbols that occur $z$ times in $\vec{y}_i$.)
To count the number of such $\vec{y}_i$, we see that each $\vec{y}_i$ occurs uniquely in the following process. We first choose $s-1$ non-zero symbols $a_1, \dots, a_{s-1}$ from $\F_q^*$ (corresponding to the non-zero elements $a$ with $N_a(\vec{y}) = z$), then choose the $z$ coordinates in $\vec{y}_i$ that will contain zeros, then the $z$ coordinates that will take value $a_1$, then $a_2$, etc. Finally, we choose the values for the remaining $\ell_i-sz$ coordinates of $\vec{y}_i$ from $\F_q \setminus \{ 0, a_1, \dots, a_{s-1} \}$ such that each of those letters occurs at most $z-1$ times. Therefore, the total number of such $\vec{y}_i$ is precisely 
\[
    \binom{q-1}{s-1} \cdot \left(\prod_{t=0}^{s-1} \binom{\ell_i - zt}{z} \right) \cdot A(\ell_i-sz,q-s,z-1)
    \; ,
\]
where
\[A(n, q, m) := |\{\vec{x} \in \{1,\ldots, q\}^n \ :\ \forall c \in \{1,\ldots, q\}, N_c(\vec{x}) \leq m\}|
\; .
\]
Furthermore, notice that only a $1/s$ fraction of such $\vec{y}_i$ are in $Y_i$ (those for which $y_1 - a_i$ is minimized over all of the ``frequent symbols'' $a_i$ when $a_i = 0$).
Summing over all $s$, we see that
\begin{equation}
\label{eq: weight_distribution}
|W_{\ell_i-z}(Y_i)| = \sum_{s=1}^{\ell_i} \frac{1}{s} \binom{q-1} {s-1} \cdot \left( \prod_{t=0}^{s-1} \binom{\ell_i-zt}{z} \right) \cdot  A(\ell_i-sz, q-s, z-1)
\; ,
\end{equation}
where we adopt the common convention that $\binom{n}{k} = 0$ for $k > n$. 

For fixed $q$, $A(n, q, m)$ is computable in $\poly(n)$ time~\cite{1995252}, so that this gives us a way to compute $W(Y_i)$ in polynomial time in this case. Finally, by \cref{eq: product}, a simple convolution allows us to compute $W(\mathcal{F}(\vec{B}^+))$, as
$$W(\mathcal{F}(\vec{B}^+)) = W(Y_1) \ast \dots \ast W(Y_k)
\; .$$
Thus, $W(\mathcal{F}(\vec{B}^+))$ is computable in $\poly(n)$ time for fixed field size $q \leq O(1)$.

\section{Proper bases and primitivity}
\label{sec:proper}
We will primarily be interested in bases that are \emph{proper} in the sense that the epipodal vectors should all be non-zero.

\begin{definition}
    A basis is said to be \emph{proper} if all its epipodal vectors $\vec{b}_i^+$ are non-zero.
\end{definition}

\cite{DDvAlgorithmicReductionTheory2022} observed that, given an arbitrary basis $\basis \in \F_q^{k \times n}$ for a code, we can efficiently compute a proper basis $\basis'$ for the same code by systematizing $\vec{B}$. In particular, let $\vec{A} \in \F_q^{k \times k}$ be an invertible matrix formed from $k$ columns of $\basis$ (which must exist because $\basis$ is a full-rank matrix). Then, $\basis' := \vec{A}^{-1} \basis$ is a proper basis for the code generated by $\basis$. So, every code has a proper basis. From this, we derive the following simple but useful fact.

\begin{fact}
    \label{fact:getting_less_support}
    For any code $\mathcal{C}$ with dimension $k  \geq 1$, there is a subcode $\mathcal{C}' \subset \mathcal{C}$ with dimension $k-1$ with strictly smaller support, i.e., $\supp(\mathcal{C}') \subsetneq \supp(\mathcal{C})$.
\end{fact}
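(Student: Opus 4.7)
The plan is to exploit the existence of a proper basis, which the excerpt just observed can be computed by systematizing any basis. So I would start by letting $\vec{b}_1, \ldots, \vec{b}_k$ be a proper basis of $\mathcal{C}$ and taking the candidate subcode to be $\mathcal{C}' := \mathcal{C}(\vec{b}_1; \ldots; \vec{b}_{k-1})$, which clearly has dimension $k-1$.

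Next I would argue that $\supp(\mathcal{C}') \subsetneq \supp(\mathcal{C})$. The inclusion $\supp(\mathcal{C}') \subseteq \supp(\mathcal{C})$ is immediate from $\mathcal{C}' \subseteq \mathcal{C}$. For strict containment, I would use properness: by hypothesis, the epipodal vector $\vec{b}_k^+ = \pi^\perp_{\{\vec{b}_1, \ldots, \vec{b}_{k-1}\}}(\vec{b}_k)$ is non-zero. Unpacking the definition of $\pi^\perp$ (which simply zeros out the coordinates lying in the supports of the $\vec{b}_j$ for $j < k$), this means there exists a coordinate $i \in \supp(\vec{b}_k)$ with $i \notin \supp(\vec{b}_j)$ for all $j < k$. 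Since $\supp(\mathcal{C}') = \bigcup_{j<k}\supp(\vec{b}_j)$, we get $i \notin \supp(\mathcal{C}')$, while $i \in \supp(\vec{b}_k) \subseteq \supp(\mathcal{C})$, giving the desired strict containment.

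There is essentially no obstacle here beyond invoking the already-established fact that every code admits a proper basis; properness is precisely the condition engineered to make the ``peel off the last basis vector'' operation reduce support. I would keep the proof to a few lines along exactly these lines.
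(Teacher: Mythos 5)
Your proposal is correct and follows essentially the same route as the paper: take a proper basis, let $\mathcal{C}'$ be generated by the first $k-1$ basis vectors, and use the non-vanishing of $\vec{b}_k^+$ to exhibit a coordinate of $\supp(\mathcal{C})$ missing from $\supp(\mathcal{C}')$. No gaps.
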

\begin{proof}
    Let $(\vec{b}_1;\ldots; \vec{b}_k)$ be any proper basis for $\mathcal{C}$. (As we observed above, such a basis exists.) Then, take $\mathcal{C}' \subset \mathcal{C}$ to be the code with basis $(\vec{b}_1;\ldots; \vec{b}_{k-1})$. It is immediate that  the dimension of $\mathcal{C}'$ is $k-1$, and the support of $\mathcal{C}'$ is $\supp(\mathcal{C}) \setminus \supp(\vec{b}_k^+)$, which is a strict subset of $\supp(\mathcal{C})$ because $\vec{b}_k^+$ is non-zero.
\end{proof}

We now introduce the notions of primitive codewords and primitive subcodes and show that they are closely related to proper bases.

\begin{definition}
    A non-zero codeword $\vec{c} \in \C$ is \emph{non-primitive in the code $\C$} if there exists a non-zero codeword $\vec{c}' \in \C$ such that $\supp(\vec{c}') \subsetneq \supp(\vec{c})$.\footnote{For example, the vector $\vec{c}:= (1,1)$ is non-primitive in the code $\F_q^2$ because the codeword $\vec{c}' = (1,0)$ is non-zero with support strictly contained in $\supp(\vec{c})$.} Otherwise, $\vec{c}$ is \emph{primitive in $\C$}. ($\vec0$ is \emph{not} primitive.)

    A non-zero subcode $\C' \subseteq \C$ is \emph{non-primitive in the code $\C$} if there exists a subcode $\C''$ with the same dimension as $\C'$ but $\supp(\C'') \subsetneq \supp(\C')$. Otherwise, $\C'$ is \emph{primitive in $\C$}. (The trivial subcode $\{\vec0\}$ is \emph{not} primitive.)
\end{definition}

When the code $\C$ is clear from context, we often simply say that $\vec{c}$ is primitive or $\C'$ is primitive, without explicitly mentioning the code $\C$.

We first observe a basic equivalent definition of a primitive subcode. (This compares nicely with the notion of a primitive \emph{sublattice}, which is the (non-zero) intersection of the lattice with a subspace.)

\begin{fact}
    \label{fact:primitive_is_intersection}
    A non-zero subcode $\C' \subseteq \C \subseteq \F_q^n$ is a primitive subcode of $\C$ if and only if there exists a set $S \subseteq [n]$ of coordinates such that $\C' = \{\vec{c} \in \C \ : \ \forall i \in S,\ c_i = 0\}$.
\end{fact}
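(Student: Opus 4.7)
The plan is to prove the two directions separately, with the natural candidate set $S := [n] \setminus \supp(\C')$ driving the forward direction.

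For the ``if'' direction, suppose $\C' = \{\vec{c} \in \C \ :\ \forall i \in S,\ c_i = 0\}$ for some $S \subseteq [n]$. The key observation is that $\supp(\C') \cap S = \emptyset$, so any subcode $\C'' \subseteq \C$ whose support is contained in $\supp(\C')$ automatically has every codeword vanishing on the coordinates in $S$, and hence $\C'' \subseteq \C'$. If in addition $\dim(\C'') = \dim(\C')$, the inclusion forces $\C'' = \C'$ and therefore $\supp(\C'') = \supp(\C')$. This rules out the existence of a same-dimension subcode with strictly smaller support, so $\C'$ is primitive.

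For the ``only if'' direction, assume $\C'$ is primitive and define $S := [n] \setminus \supp(\C')$ and $\C'' := \{\vec{c} \in \C \ :\ \forall i \in S,\ c_i = 0\}$. Then $\C' \subseteq \C''$ since every codeword of $\C'$ already vanishes on $S$, and conversely $\supp(\C'') \subseteq [n] \setminus S = \supp(\C')$, so combined with $\supp(\C') \subseteq \supp(\C'')$ we get $\supp(\C'') = \supp(\C')$. It remains to show $\C'' = \C'$. Suppose for contradiction that $\dim(\C'') > \dim(\C')$. Applying \cref{fact:getting_less_support} iteratively to $\C''$ (each application strictly shrinks the support while dropping the dimension by one), I would obtain a subcode $\C''' \subseteq \C'' \subseteq \C$ with $\dim(\C''') = \dim(\C')$ and $\supp(\C''') \subsetneq \supp(\C'') = \supp(\C')$, contradicting the primitivity of $\C'$.

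The only mildly subtle step is the forward direction, where one must resist the temptation to argue directly about $\C'$ itself and instead work with the larger subcode $\C''$ cut out by $S$: primitivity does not say $\C'$ cannot be a proper \emph{subcode} of some $\C''$ with the same support, but once one has equal supports, \cref{fact:getting_less_support} converts any such dimension gap into a primitivity-violating subcode with strictly smaller support. Everything else is bookkeeping about supports and the definition of $S$.
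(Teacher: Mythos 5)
Your proof is correct and follows essentially the same route as the paper's: both directions use the candidate set $S = [n]\setminus\supp(\C')$, the auxiliary subcode cut out by $S$, and repeated application of \cref{fact:getting_less_support} to turn a dimension gap into a primitivity-violating subcode with strictly smaller support. The only difference is cosmetic (you phrase the dimension comparison as a contradiction, the paper shows $m=0$ directly).
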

\begin{proof}
    First, suppose $\C'$ is primitive. Then, we take $S := [n] \setminus \supp(\C')$ to be the complement of the support of $\C'$. Let $\C^\dagger := \{\vec{c} \in \C \ : \ \forall i \in S,\ c_i = 0\}$. It suffices to show that $\C^\dagger = \C'$. Indeed, clearly $\C' \subseteq \C^\dagger$, so since they are both vector spaces, it suffices to show that the dimension of $\C^\dagger$ is no larger than the dimension of $\C'$. To that end, let $m := \dim(\C^\dagger) - \dim(C')$. By applying \cref{fact:getting_less_support} $m$ times, we see that there is a subcode $\C'' \subseteq \C^\dagger$ with dimension $\dim(\C^\dagger)-m = \dim(\C)$ and $|\supp(\C'')| \leq |\supp(\C')| - m$. Since $\C'$ is primitive and $\supp(\C'') \subseteq \supp(\C')$ by definition, it follows that we must have $m = 0$. Therefore, $\C' = \C^\dagger$, as needed.

    Now, suppose that $\C' =  \{\vec{c} \in \C \ : \ \forall i \in S,\ c_i = 0\}$ for some set $S$. Then, notice that by definition any subcode $\C'' \subseteq \C$ of $\C$ with $\supp(\C'') \subseteq \supp(\C')$ must in fact be a subcode of $\C'$, i.e., $\C'' \subseteq \C'$. Since $\C'$ and $\C''$ are both vector spaces, either $\C'' = \C'$ or $\C''$ has lower dimension than $\C'$. It follows immediately that $\C'$ is primitive.
\end{proof}

We now observe some basic properties of primitivity.

\begin{fact}
    \label{fact: primitive}
    The following are all equivalent for a non-zero codeword $\vec{c} \in \C$ in a code $\C$ with dimension $k$.
    \begin{enumerate}
        \item \label{item:primitive_codeword} $\vec{c}$ is primitive in $\C$.
        \item \label{item:primitive_rank_one_subcode} The one-dimensional subcode $\C' := \{z \vec{c} \ : \ z \in \F_q\}$ is primitive in $\C$.
        \item \label{item:primitive_non-zero_projections} We have $\pi_{\vec{c}}^\perp(\vec{c}') = \vec0$ for $\vec{c}' \in \C$ if and only if $\vec{c}' = z \vec{c}$ for some $z \in \F_q$.
        \item \label{item:primitive_in_subcode}  $\vec{c}$ is primitive in some primitive subcode $\mathcal{C}'$ of $\mathcal{C}$.
        \item \label{item:primitive_shortest_in_primitive_subcode} There is some primitive subcode $\C'$ that contains $\vec{c}$ and contains no strictly shorter non-zero vectors.
        \item \label{item:primitive_full_rank_projection} $\pi_{\vec{c}}^\perp(\C)$ is a $(k-1)$-dimensional code.
        \item \label{item:primitive_contained_in_proper_basis} There exists a proper basis $(\vec{b}_1;\ldots; \vec{b}_k) \in \F_q^{k \times n}$ of $\C$ with $\vec{b}_1 = \vec{c}$.
    \end{enumerate}
\end{fact}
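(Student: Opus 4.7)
Rather than prove a single long cycle of implications, I would organize the argument around the central condition (\ref{item:primitive_codeword}) and show that each of the other six conditions is equivalent to it, grouping together the equivalences that share techniques. The workhorse equivalences are (\ref{item:primitive_codeword}) $\Leftrightarrow$ (\ref{item:primitive_non-zero_projections}) $\Leftrightarrow$ (\ref{item:primitive_full_rank_projection}), which give a ``linear-algebraic'' reformulation of primitivity via the projection map $\pi_{\vec{c}}^\perp$ restricted to $\C$; with these in hand, the rest fall out relatively cleanly.

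\textbf{The easy equivalences.} For (\ref{item:primitive_codeword}) $\Leftrightarrow$ (\ref{item:primitive_rank_one_subcode}), note that any $1$-dimensional subcode $\C''$ of $\C$ is generated by some non-zero $\vec{c}''$ with $\supp(\C'')=\supp(\vec{c}'')$, so the two non-primitivity statements are literally the same. For (\ref{item:primitive_codeword}) $\Rightarrow$ (\ref{item:primitive_non-zero_projections}), if $\vec{c}' \in \C$ satisfies $\pi_{\vec{c}}^\perp(\vec{c}')=\vec0$ then $\supp(\vec{c}')\subseteq\supp(\vec{c})$; picking any $i\in\supp(\vec{c})$ and setting $z:=c'_i/c_i$, the vector $\vec{c}'-z\vec{c} \in \C$ has strictly smaller support than $\vec{c}$, so by primitivity it must be zero. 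The reverse direction is immediate since any $\vec{c}'\in\C$ with $\supp(\vec{c}')\subsetneq\supp(\vec{c})$ would lie in the kernel of $\pi_{\vec{c}}^\perp$ but not be a scalar multiple of $\vec{c}$ (any non-zero scalar multiple has the same support). Since $\pi_{\vec{c}}^\perp$ is $\F_q$-linear and $\vec{c}$ always lies in its kernel, (\ref{item:primitive_non-zero_projections}) is equivalent by rank-nullity to the statement that $\ker(\pi_{\vec{c}}^\perp|_\C) = \F_q\vec{c}$ has dimension exactly $1$, i.e.\ to (\ref{item:primitive_full_rank_projection}).

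\textbf{The subcode reformulations.} For (\ref{item:primitive_codeword}) $\Rightarrow$ (\ref{item:primitive_shortest_in_primitive_subcode}), simply take $\C':=\F_q\vec{c}$, which is primitive by (\ref{item:primitive_rank_one_subcode}) and contains no vectors strictly shorter than $\vec{c}$. The implication (\ref{item:primitive_shortest_in_primitive_subcode}) $\Rightarrow$ (\ref{item:primitive_in_subcode}) is also essentially free: if some $\vec{c}''\in\C'$ had $\supp(\vec{c}'')\subsetneq\supp(\vec{c})$ then $|\vec{c}''|<|\vec{c}|$, contradicting (\ref{item:primitive_shortest_in_primitive_subcode}); so $\vec{c}$ is primitive in $\C'$. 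For (\ref{item:primitive_in_subcode}) $\Rightarrow$ (\ref{item:primitive_codeword}), I would invoke \cref{fact:primitive_is_intersection} with $S:=[n]\setminus\supp(\C')$ to get $\C' = \{\vec{c}^\dagger\in\C : \supp(\vec{c}^\dagger)\subseteq\supp(\C')\}$; then any $\vec{c}''\in\C$ witnessing non-primitivity of $\vec{c}$ in $\C$ automatically has $\supp(\vec{c}'')\subsetneq\supp(\vec{c})\subseteq\supp(\C')$, so it lies in $\C'$ and contradicts primitivity of $\vec{c}$ within $\C'$.

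\textbf{The construction of a proper basis.} This is the main obstacle. For (\ref{item:primitive_contained_in_proper_basis}) $\Rightarrow$ (\ref{item:primitive_codeword}), suppose $(\vec{b}_1;\ldots;\vec{b}_k)$ is proper with $\vec{b}_1=\vec{c}$, and write any non-zero $\vec{c}'\in\C$ as $\vec{c}'=\sum_i z_i\vec{b}_i$ with $j$ maximal such that $z_j\neq 0$. Applying $\pi_j$ gives $\pi_j(\vec{c}') = z_j\vec{b}_j^+\neq\vec0$ by properness; but if $j\geq 2$ then $\supp(\vec{b}_j^+)$ is disjoint from $\supp(\vec{b}_1)=\supp(\vec{c})$, so $\vec{c}'$ has a non-zero coordinate outside $\supp(\vec{c})$, ruling out $\supp(\vec{c}')\subsetneq\supp(\vec{c})$; hence $j=1$ and $\vec{c}'\in\F_q\vec{c}$. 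For the converse (\ref{item:primitive_codeword}) $\Rightarrow$ (\ref{item:primitive_contained_in_proper_basis}), I would induct on $k$. The base case $k=1$ is trivial. For the inductive step, use the already-established (\ref{item:primitive_codeword}) $\Rightarrow$ (\ref{item:primitive_full_rank_projection}) to conclude that $\C'':=\pi_{\vec{c}}^\perp(\C)$ has dimension $k-1$; by the existence of proper bases (observed at the start of \cref{sec:proper}), pick a proper basis $(\vec{b}_2^+;\ldots;\vec{b}_k^+)$ of $\C''$ and lift each $\vec{b}_i^+$ to some $\vec{b}_i\in\C$ with $\pi_{\vec{c}}^\perp(\vec{b}_i)=\vec{b}_i^+$. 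The resulting $(\vec{c};\vec{b}_2;\ldots;\vec{b}_k)$ spans $\C$ by construction; the verification that it is proper uses the observation
\[
\supp(\vec{b}_1)\cup\cdots\cup\supp(\vec{b}_{i-1}) \;=\; \supp(\vec{c})\cup\supp(\vec{b}_2^+)\cup\cdots\cup\supp(\vec{b}_{i-1}^+)
\; ,
\]
so projecting $\vec{b}_i$ orthogonal to $\vec{b}_1,\ldots,\vec{b}_{i-1}$ first recovers $\vec{b}_i^+$ and then zeros out $\supp(\vec{b}_2^+)\cup\cdots\cup\supp(\vec{b}_{i-1}^+)$, yielding a non-zero vector by properness of the chosen basis of $\C''$. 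The main subtlety here, and the one place where I expect to have to be careful, is precisely this bookkeeping of supports after lifting: one needs that lifting the $\vec{b}_i^+$ to $\vec{b}_i\in\C$ (which may add arbitrary coordinates inside $\supp(\vec{c})$) does not destroy properness, and the key point is that those added coordinates all lie in $\supp(\vec{b}_1)$ and so get projected away at every subsequent step $i\geq 2$.
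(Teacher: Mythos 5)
Your proposal is correct and follows essentially the same route as the paper's proof: the same rank–nullity argument linking \cref{item:primitive_non-zero_projections} and \cref{item:primitive_full_rank_projection}, the same appeal to \cref{fact:primitive_is_intersection} for \cref{item:primitive_in_subcode} $\Rightarrow$ \cref{item:primitive_codeword}, and the same construction of a proper basis by lifting a proper basis of $\pi_{\vec{c}}^\perp(\C)$. Your direct ``maximal nonzero coefficient'' argument for \cref{item:primitive_contained_in_proper_basis} $\Rightarrow$ \cref{item:primitive_codeword} and the explicit support bookkeeping in the lifting step are only minor presentational variants (if anything, slightly more detailed) of what the paper does.
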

\begin{proof}
    The equivalence of \cref{item:primitive_codeword,item:primitive_rank_one_subcode} follows by noting that the support of a rank-one subcode $\{z \vec{c} \ : \ z \in \F_q\}$ is precisely the support of its generator $\vec{c}$.

    To see that \cref{item:primitive_non-zero_projections} is equivalent to \cref{item:primitive_codeword}, first notice that \cref{item:primitive_non-zero_projections} immediately implies \cref{item:primitive_codeword} (since any vector with support strictly contained in $\supp(\vec{c})$ must project to zero). To show that \cref{item:primitive_codeword} implies \cref{item:primitive_non-zero_projections}, we assume that there exists some $\vec{c}' \in \C$ with $\vec{c}' \notin \{z \vec{c} \ : \ z \in \F_q\}$ such that $\pi^\perp_{\vec{c}}(\vec{c}') = \vec0$ and show that this implies that $\vec{c}$ is non-primitive. Notice that $\supp(\vec{c}') \subseteq \supp(\vec{c})$  because it projects to zero. Since $\vec{c}$ and $\vec{c}'$ are non-zero by assumption, it follows that there must be some $i \in \supp(\vec{c}) \cap \supp(\vec{c}')$. Then, $\vec{c}'' := c_i' \vec{c} - c_i \vec{c}$ is a codeword with $\supp(\vec{c}'') \subseteq \supp(\vec{c}) \setminus \{i\} \subsetneq \supp(\vec{c})$. Furthermore, because $c_i$ and $c_i'$ are non-zero and $\vec{c}$ and $\vec{c}'$ are linearly independent, $\vec{c}''$ is not zero. In other words, $\vec{c}$ is not primitive, as needed.

    It is immediate that \cref{item:primitive_rank_one_subcode} implies \cref{item:primitive_in_subcode,item:primitive_shortest_in_primitive_subcode}. Furthermore, it is clear that \cref{item:primitive_shortest_in_primitive_subcode} implies \cref{item:primitive_in_subcode} (since a vector with minimal hamming weight must be primitive). 
    
    So, to show equivalence of \cref{item:primitive_in_subcode,item:primitive_shortest_in_primitive_subcode} with \cref{item:primitive_codeword,item:primitive_rank_one_subcode,item:primitive_non-zero_projections}, it suffices to show that \cref{item:primitive_in_subcode} implies \cref{item:primitive_codeword}. So, we suppose that $\vec{c}$ is primitive in some subcode $\mathcal{C}' \subseteq \C$ but $\vec{c}$ is \emph{not} primitive in $\mathcal{C}$ and we use this to show that $\mathcal{C}'$ is not primitive. By assumption, there exists non-zero $\vec{c}' \in \mathcal{C}$ whose support is strictly contained in the support of $\vec{c}$. Since $\vec{c}$ \emph{is} primitive in $\mathcal{C}'$, it must be the case that $\vec{c}' \notin \mathcal{C}'$. But, by \cref{fact:primitive_is_intersection}, this means that $\mathcal{C}'$ cannot be primitive, since a primitive code must contain all elements in its support.
    
    To see that \cref{item:primitive_full_rank_projection} is equivalent to \cref{item:primitive_non-zero_projections} (and therefore to \cref{item:primitive_codeword,item:primitive_rank_one_subcode,item:primitive_in_subcode,item:primitive_shortest_in_primitive_subcode}), let $\pi^* : \C \to \F_q^n$ be the restriction of the map $\pi_{\vec{c}}^\perp$ to $\C$. Let $K$ be the kernel of $\pi^*$ and $S$ be the image of $\pi^*$, and recall from basic linear algebra that $\dim(K) + \dim(S) = \dim(C) = k$. In particular, $\dim(S) = k-1$ if and only if $\dim(K) = 1$. In other words, \cref{item:primitive_full_rank_projection} holds if and only if \cref{item:primitive_non-zero_projections} holds, as needed.

    Finally, to see that \cref{item:primitive_contained_in_proper_basis} is equivalent to \cref{item:primitive_full_rank_projection} (and therefore to all other items), first notice that if \cref{item:primitive_full_rank_projection} does not hold, then \cref{item:primitive_contained_in_proper_basis} \emph{must} be false, since any proper basis $(\vec{b}_1;\ldots; \vec{b}_k)$ must in particular have the property that $\pi_{\vec{b}_1}^\perp(\vec{b}_2),\ldots, \pi_{\vec{b}_1}^\perp(\vec{b}_k)$ are linearly independent. 
    It therefore remains to show that if $\pi_{\vec{c}}^\perp(\C)$ has dimension $k-1$, then there is a proper basis $(\vec{b}_1;\ldots; \vec{b}_k)$ with $\vec{b}_1 = \vec{c}$. To that end, let $\basis' := (\vec{b}_2';\ldots; \vec{b}_k') \in \F_q^{(k-1) \times n}$ be a proper basis of the projected code $\pi_{\vec{c}}^\perp(\C)$. (Such vectors must exist because $\pi_{\vec{c}}^\perp(\C)$ is a $(k-1)$-dimensional code, and all codes have a proper basis as we observed above.) Let $\vec{b}_1 := \vec{c}$ and for $i \geq 2$, let $\vec{b}_i \in \C$ be any codeword with $\pi_{\vec{c}}^\perp(\vec{b}_i) = \vec{b}_i'$. Clearly, $\vec{b}_1^+ = \vec{c}$ is non-zero, and for $i \geq 2$, $\vec{b}_i^+ = (\vec{b}_i')^+$, which is non-zero by the assumption that $\basis'$ is proper. It follows that $(\vec{b}_1;\ldots; \vec{b}_k)$ is a proper basis, as needed.
\end{proof}

Next, we show how to characterize proper bases in terms of their blocks and the primitivity of the epipodal vectors.

\begin{fact}
    \label{fact:primitive_and_proper}
    For a basis $\basis := (\vec{b}_1;\ldots; \vec{b}_k) \in \F_q^{k \times n}$ for a code $\mathcal{C}$, the following are equivalent.
    \begin{enumerate}
        \item \label{item:basis_proper} $\basis$ is proper.
        \item \label{item:some_projections_proper}  There exist indices $i_j$ with $1 = i_1 < \cdots < i_\ell = k$ such that the bases $\vec{B}_{[i_1,i_2]}, \vec{B}_{[i_2+1,i_3]},\ldots, \vec{B}_{[i_{\ell-1}+1,i_\ell]}$ are proper.
            \item \label{item:projections_proper} The bases $\vec{B}_{[i,j]}$ are proper for all $i,j$.
        \item \label{item:epipodal_primitive} $\vec{b}_i^+$ is a primitive vector in the projected code $\pi_i(\mathcal{C})$ for all $i$.
        \item \label{item:primitive_first_guy_proper_projection} $\vec{b}_1$ is a primitive vector in $\C$ and $\basis_{[2,k]}$ is a proper basis.
    \end{enumerate}
\end{fact}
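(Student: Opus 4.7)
The plan is to prove the five-way equivalence by routing all conditions through statement (1), leaning heavily on the previously established characterization of primitive codewords in \cref{fact: primitive}. First I would establish (1) $\Leftrightarrow$ (3). The key technical observation here is that the epipodal vectors of the projected block $\vec{B}_{[i,j]}$ are precisely $\vec{b}_i^+, \vec{b}_{i+1}^+, \ldots, \vec{b}_j^+$. This reduces to checking that for any $m$ with $i \leq m \leq j$, projection orthogonal to $\vec{b}_1, \ldots, \vec{b}_{i-1}$ followed by projection orthogonal to $\pi_i(\vec{b}_i), \ldots, \pi_i(\vec{b}_{m-1})$ agrees with projection orthogonal to $\vec{b}_1, \ldots, \vec{b}_{m-1}$, which is immediate from the ``zero-out-the-coordinates'' definition of $\pi^\perp$. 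Given this, (3) is simply the statement that every $\vec{b}_i^+ \neq \vec{0}$, which is the definition (1).

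Next, (3) $\Rightarrow$ (2) is trivial (take any decomposition, e.g., the trivial one), and (2) $\Rightarrow$ (1) follows because the epipodal vectors of each sub-block $\vec{B}_{[i_m+1, i_{m+1}]}$ are, by the observation just established, exactly $\vec{b}_{i_m+1}^+, \ldots, \vec{b}_{i_{m+1}}^+$. Collecting across $m = 1, \ldots, \ell-1$ exhausts all of $\vec{b}_1^+, \ldots, \vec{b}_k^+$, each of which must be non-zero since each sub-block is proper.

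For (1) $\Leftrightarrow$ (4): if $\basis$ is proper, then $(\vec{b}_i^+; \pi_i(\vec{b}_{i+1}); \ldots; \pi_i(\vec{b}_k))$ is a basis for $\pi_i(\C)$ whose epipodal vectors (by the same observation) are $\vec{b}_i^+, \ldots, \vec{b}_k^+$ and hence all non-zero; thus it is a proper basis for $\pi_i(\C)$ starting with $\vec{b}_i^+$, so \cref{fact: primitive}\ref{item:primitive_contained_in_proper_basis}$\Rightarrow$\ref{item:primitive_codeword} yields primitivity of $\vec{b}_i^+$. The converse is immediate since primitive vectors are non-zero by definition, so (4) forces $\basis$ to be proper.

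Finally, (1) $\Leftrightarrow$ (5) is a specialization of the argument for (4): applying the $i=1$ case gives that $\vec{b}_1$ is primitive in $\C$, and $\basis_{[2,k]}$ being proper follows from (3). The converse direction uses that $\vec{b}_1^+ = \vec{b}_1 \neq \vec0$ (primitive codewords are non-zero), while the remaining epipodal vectors of $\basis$ agree with those of the proper basis $\basis_{[2,k]}$. The only nontrivial ingredient in the entire proof is the observation that the epipodal vectors of $\basis_{[i,j]}$ coincide with $\vec{b}_i^+, \ldots, \vec{b}_j^+$; once this ``commutativity of projection'' lemma is in hand, each implication reduces to a one-line appeal to \cref{fact: primitive} or to the definition of properness.
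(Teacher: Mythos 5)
Your proposal is correct and follows essentially the same route as the paper: both hinge on the observation that the epipodal vectors of $\vec{B}_{[i,j]}$ are exactly $\vec{b}_i^+,\ldots,\vec{b}_j^+$ (which immediately gives the equivalence of the first three items), and both invoke the equivalence of primitivity with membership as the first vector of a proper basis from \cref{fact: primitive} to handle items four and five. The only difference is cosmetic—you spell out the projection-composition argument that the paper leaves implicit and handle the converse of item five slightly more directly—so there is nothing further to add.
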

\begin{proof}
    Notice that the epipodal vectors of $\vec{B}_{[i,j]}$ are exactly $\vec{b}_i^+, \ldots, \vec{b}_j^+$. It follows immediately that \cref{item:basis_proper,item:some_projections_proper,item:projections_proper} are equivalent.

    To see that \cref{item:projections_proper} implies \cref{item:epipodal_primitive}, notice that the equivalence of \cref{item:primitive_codeword,item:primitive_contained_in_proper_basis} of \cref{fact: primitive} means that, since $\basis_{[i,k]}$ is a proper basis, $\vec{b}_i^+$ must be primitive in the code $\pi_i(\mathcal{C})$ generated by $\basis_{[i,k]}$.
    The fact that \cref{item:epipodal_primitive} implies \cref{item:basis_proper} follows immediately from the fact that primitive vectors are non-zero. So, \cref{item:basis_proper,item:some_projections_proper,item:projections_proper,item:epipodal_primitive} are all equivalent.

    Finally, to see that \cref{item:primitive_first_guy_proper_projection} is equivalent to, say, \cref{item:basis_proper}, first notice that if $\basis$ is proper, we know that this implies that \cref{item:epipodal_primitive} holds, and in particular that $\vec{b}_1$ is primitive. And, properness of $\vec{B}$ this implies that \cref{item:projections_proper} holds, which in particular means that $\vec{B}_{[2,k]}$ is proper. So, \cref{item:basis_proper} implies \cref{item:primitive_first_guy_proper_projection}. On the other hand, if $\vec{b}_1$ is primitive and $\vec{B}_{[2,k]}$ is proper, then the properness of $\vec{B}_{[2,k]}$ implies that $\vec{b}_i^+$ is primitive in $\pi_i(\C)$ for all $i \geq 2$. Since \cref{item:epipodal_primitive} implies properness, it follows that $\vec{B}$ itself is proper.
\end{proof}

\subsection{Algorithms for manipulating proper bases}

We now give two algorithms for manipulating proper bases. The first algorithm converts an arbitrary non-zero codeword into a primitive codeword that is at least as short (in fact, a non-zero codeword whose support is a subset of the original codeword's support). The second algorithm creates a \emph{proper} basis for a code whose first vector is a given primitive codeword. Notice that together these two algorithms allow us to create a proper basis for a code whose first vector is at least as short as some given codeword.

We will need a subprocedure guaranteed by the following claim. (The claim would be trivial if we settled for time $O(n^2 k \log^2 q)$. We work a bit harder to obtain slightly better running time.)

\begin{claim}
    \label{clm:special_subcode}
    There is an algorithm that takes as input a basis $\basis \in \F_q^{k \times n}$ for a code $\C$ and a set of coordinates $S \subseteq [n]$ and outputs a basis for the subcode $\C' := \{\vec{c} \in \C \ : \ \forall i \in S, c_i = 0 \}$ in time $O(nk^2 \log^2 q)$.
\end{claim}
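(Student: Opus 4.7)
The plan is to avoid working directly with the $n$-dimensional ambient vectors and instead translate the problem into a much smaller linear algebra problem in the $k$-dimensional coefficient space of $\basis$. Every codeword $\vec{c} \in \C$ can be written uniquely as $\vec{c} = \vec{z} \basis$ for some row vector $\vec{z} \in \F_q^k$, and the condition $c_i = 0$ for all $i \in S$ is equivalent to $\vec{z} \cdot \basis|_{S} = \vec0$, where $\basis|_{S} \in \F_q^{k \times |S|}$ is the restriction of $\basis$ to the columns indexed by $S$. In particular, the codewords in $\C'$ are in bijection with the left kernel of $\basis|_{S}$ via the map $\vec{z} \mapsto \vec{z} \basis$, and this bijection sends linearly independent vectors to linearly independent vectors because $\basis$ has full row rank.

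This immediately suggests the following three-step algorithm. First, extract $\vec{M} := \basis|_{S} \in \F_q^{k \times |S|}$, which takes only $O(n k \log q)$ time. Second, run Gaussian elimination on $\vec{M}$ to produce a basis $\vec{A} \in \F_q^{k' \times k}$ for the left kernel of $\vec{M}$, where $k' := \dim(\C') = k - \mathrm{rank}(\vec{M})$. Since $\vec{M}$ has $k$ rows and $|S| \leq n$ columns and both dimensions are bounded by $n$, this takes $O(k \cdot |S| \cdot k) = O(n k^2)$ field operations, i.e., $O(n k^2 \log^2 q)$ time when each $\F_q$ operation is accounted for at cost $O(\log^2 q)$. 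Third, compute and output $\vec{A} \basis \in \F_q^{k' \times n}$ using naive matrix multiplication, which requires $O(k' \cdot k \cdot n) \leq O(n k^2)$ field operations and therefore also $O(n k^2 \log^2 q)$ time in total.

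Correctness is essentially by construction. Every row of $\vec{A} \basis$ lies in $\C'$ because $(\vec{A} \basis)|_{S} = \vec{A} (\basis|_{S}) = \vec{A} \vec{M} = \vec0$; the rows are linearly independent because they are the images under the injective map $\vec{z} \mapsto \vec{z} \basis$ of the rows of $\vec{A}$, which form a basis of $\ker \vec{M}$; and they span $\C'$ because any $\vec{c} = \vec{z} \basis \in \C'$ must have $\vec{z}$ in the left kernel of $\vec{M}$, hence in the row span of $\vec{A}$.

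There is no real obstacle here; the only point worth emphasizing is the choice to do the Gaussian elimination on the $k \times |S|$ coefficient matrix $\vec{M}$ (where both dimensions are at most $\max(k,n) = n$, but one dimension is at most $k$) rather than on a larger $|S| \times n$ or $k \times n$ object derived from $\basis$ itself, which is what would lead to the naive $O(n^2 k \log^2 q)$ bound alluded to in the footnote.
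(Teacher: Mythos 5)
Your proof is correct, and its computational content is essentially the same Gaussian-elimination argument as the paper's, just packaged differently: you compute a basis $\vec{A}$ of the left kernel of the $k \times |S|$ matrix $\basis|_S$ in coefficient space and output $\vec{A}\basis$, whereas the paper performs the row operations directly on $\basis$ after greedily selecting a maximal linearly independent set of columns and then reads off the last $k-m$ rows; both implementations cost $O(nk^2\log^2 q)$, and your remark about doing the elimination on a matrix with one dimension bounded by $k$ is exactly what the paper's greedy column-selection step is for. One genuine advantage of your formulation is that it keeps $S$ and its complement straight: the paper's proof, as literally written, selects the independent columns in $[n]\setminus S$, and the resulting last $k-m$ rows vanish on every column in the span of the selected ones, i.e., on all of $[n]\setminus S$, so that construction produces a basis of $\{\vec{c}\in\C \ : \ \supp(\vec{c})\subseteq S\}$ rather than of $\C'=\{\vec{c}\in\C \ : \ \forall i\in S,\ c_i=0\}$ as stated (the two differ by complementing $S$; the version with support contained in a given set is what \cref{alg: make_primitive} actually needs, and the fix is simply to select the columns inside the set on which codewords must vanish). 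Your kernel-of-$\basis|_S$ argument computes exactly the $\C'$ of the claim statement, with the dimension count $k-\mathrm{rank}(\basis|_S)$, the identity $(\vec{A}\basis)|_S=\vec{A}(\basis|_S)=\vec{0}$, and the injectivity of $\vec{z}\mapsto\vec{z}\basis$ handling correctness cleanly, so there is no gap in your write-up.
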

\begin{proof}
    We first find a maximal linearly independent set of columns of $\vec{B}$ in $[n] \setminus S$ (i.e., a maximal linearly independent set of columns that are \emph{not} in $S$). This can be done by building a list of linearly independent columns greedily and maintaining a basis in systematic form for the span of the columns, which allows us to check whether a new column should be added and add a new column in time $O(k^2 \log^2 q)$. We assume without loss of generality that these are the first $m$ columns of $\vec{B}$. (Since we restrict our attention to columns outside of $S$, we might not have $m = k$.) We then effectively systematize with respect to these columns, i.e., we perform elementary row operations to $\vec{B}$ to obtain 
    \[
        \vec{B}' := \begin{pmatrix} \vec{I}_m & \vec{A}_1\\ \vec{0}_{(k-m) \times m} &\vec{A}_2
        \end{pmatrix}
        \; 
    \]
    for some $\vec{A}_1, \vec{A}_2$.
    Finally, we output the last $k-m$ rows of $\vec{B}'$.

    Clearly this runs in time $O(nk^2 \log^2 q)$, as claimed. To see that the resulting output is in fact a basis for $\C'$, first notice that since $\vec{B}'$ was obtained by applying elementary row operations to $\vec{B}$, $\vec{B}'$ must also be a basis for $\C$. Then, notice that if $\vec{z} \vec{B}' \in \C'$, we clearly must have that $z_1 = \cdots = z_m = 0$. So, all elements in $\C'$ can be written as a linear combination of the last $k-m$ rows of $\vec{B}'$. On the other hand, the last $k-m$ rows of $\vec{B}'$ must be linearly independent (since $\vec{B}'$ is a basis), and because we chose a maximal linearly independent set of columns in $[n] \setminus S$, all of these rows must be zero outside of the support of $\vec{c}$ (since all of their entries outside of the support of $\vec{c}$ are linear combinations of their first $m$ entries, which are zero). In other words, these rows form a basis for $\C'$, as needed.
\end{proof}

We now present in \cref{alg: make_primitive} a simple algorithm for finding a primitive codeword ``in the support of'' a given non-zero codeword.

\RestyleAlgo{ruled}
\begin{algorithm2e}
\KwIn{A basis $\vec{B} \in \F_q^{k \times n}$ for $\C$ and a non-zero vector $\vec{c} \in \C_{\neq \vec0}$}
\KwOut{A primitive codeword $\vec{p} \in \C$ with $\supp(\vec{p}) \subseteq \supp(\vec{c})$.}
\caption{Make Primitive}
\label{alg: make_primitive}
Let $\C' := \{\vec{c}' \in \C \ : \ \supp(\vec{c}') \subseteq \supp(\vec{c}) \}$.\ 

Find a basis $\vec{B}'$ of the code $\C'$.\ 

Systematize $\vec{B}'$.\ 

\Return the first row of $\vec{B}'$.\ 
\end{algorithm2e}

\begin{claim}
    \label{clm:make_primitive}
    On input a basis $\vec{B} \in \F_q^{k \times n}$ for a code $\C$ and a non-zero codeword $\vec{c}$, \cref{alg: make_primitive} always outputs a primitive codeword $\vec{p} \in \C$ with $\supp(\vec{p}) \subseteq \supp(\vec{c})$, and it can be implemented to run in $O(nk^2 \log^2 q)$ time.
\end{claim}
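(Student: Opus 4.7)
The plan is to verify three things: the support containment, the primitivity of $\vec{p}$, and the running time.

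The support containment $\supp(\vec{p}) \subseteq \supp(\vec{c})$ is essentially immediate. Every codeword in $\C'$ has support contained in $\supp(\vec{c})$ by the definition of $\C'$, and systematization only performs elementary row operations on $\vec{B}'$, which keeps the span (and hence the union of supports of rows) inside $\supp(\C') \subseteq \supp(\vec{c})$. So the first row $\vec{p}$ of the systematized $\vec{B}'$ lies in $\C'$ and has support contained in $\supp(\vec{c})$.

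The main content is primitivity. The key leverage is that after systematizing with respect to some set $S' = \{j_1 < \cdots < j_m\} \subseteq [n]$, where $m = \dim(\C')$, the matrix $\vec{B}'$ contains the standard basis columns $\vec{e}_1,\ldots, \vec{e}_m$ in positions $j_1,\ldots, j_m$. In particular $\vec{p} = \vec{b}'_1$ has a $1$ in column $j_1$ and $0$'s in columns $j_2,\ldots, j_m$. Now suppose for contradiction that some nonzero $\vec{c}'' \in \C$ satisfies $\supp(\vec{c}'') \subsetneq \supp(\vec{p})$. Since $\supp(\vec{c}'') \subseteq \supp(\vec{p}) \subseteq \supp(\vec{c})$, the vector $\vec{c}''$ lies in $\C'$, so write $\vec{c}'' = \sum_{i=1}^m z_i \vec{b}'_i$. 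Evaluating at column $j_i$ for $i \geq 2$ gives $(\vec{c}'')_{j_i} = z_i$, but $j_i \notin \supp(\vec{p}) \supseteq \supp(\vec{c}'')$, so $z_i = 0$ for all $i \geq 2$. Thus $\vec{c}'' = z_1 \vec{p}$, and since $\vec{c}'' \neq \vec0$ we have $z_1 \neq 0$, giving $\supp(\vec{c}'') = \supp(\vec{p})$ — contradicting strict containment. Hence $\vec{p}$ is primitive (and nonzero, since the first row of a systematized basis of a nontrivial code is nonzero; $\C'$ is nontrivial because $\vec{c} \in \C'$).

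For the running time, computing a basis $\vec{B}'$ for the code $\C' = \{\vec{c}' \in \C : \forall i \in [n]\setminus \supp(\vec{c}),\ c'_i = 0\}$ is exactly the task handled by \cref{clm:special_subcode} with $S = [n] \setminus \supp(\vec{c})$, which runs in $O(nk^2 \log^2 q)$ time. Since $\dim(\C') \leq k$, systematizing $\vec{B}'$ (using the usual greedy pivot search maintaining a systematic form) also takes $O(nk^2 \log^2 q)$ time, and returning the first row is free. The overall cost is therefore $O(nk^2 \log^2 q)$, as claimed. The only even mildly delicate step is the primitivity argument, where it is essential to use both the structure of systematic form (to single out the pivot columns $j_2,\ldots, j_m$ outside $\supp(\vec{p})$) and the fact that strict support containment inside $\supp(\vec{p})$ forces membership in $\C'$.
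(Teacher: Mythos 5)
Your proof is correct, and the primitivity argument takes a genuinely different route from the paper's. The paper proves primitivity of $\vec{p}$ in two abstract steps: since the systematized $\vec{B}'$ is a proper basis of $\C'$, \cref{fact: primitive} (the equivalence between being the first vector of a proper basis and being primitive) gives that $\vec{p}$ is primitive \emph{in $\C'$}; then it notes that $\C'$ is a primitive subcode of $\C$ by \cref{fact:primitive_is_intersection} (it is cut out by coordinate constraints), so the ``primitive in a primitive subcode implies primitive in $\C$'' item of \cref{fact: primitive} finishes the job. You instead argue directly from the structure of systematic form: any nonzero $\vec{c}''$ with $\supp(\vec{c}'') \subsetneq \supp(\vec{p})$ lies in $\C'$, and evaluating its expansion in the rows of $\vec{B}'$ at the pivot columns $j_2,\ldots,j_m$ (which lie outside $\supp(\vec{p})$) kills all coefficients except the first, forcing $\vec{c}''$ to be a nonzero multiple of $\vec{p}$ and contradicting strictness. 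Your argument is self-contained and more elementary — it does not need the equivalences of \cref{fact: primitive,fact:primitive_is_intersection} at all — at the cost of redoing, in this special case, reasoning that the paper has packaged once and reuses elsewhere (e.g., in \cref{claim: insert_primitive} and the BKZ/slide analyses). The support-containment and running-time portions match the paper's proof, including the appeal to \cref{clm:special_subcode}.
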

\begin{proof}
    Since the output vector $\vec{p} := \vec{b}_1'$ is the first element in a basis $\vec{B}'$ for $\C'$, we must of course have that $\vec{p} \in \C'$, i.e., that $\vec{p} \in \C$ and $\supp(\vec{p}) \subseteq \supp(\vec{c})$. Furthermore, since $\vec{B}'$ is a proper basis, it follows from \cref{fact: primitive} that $\vec{p}$ is primitive in $\C'$. By \cref{fact: primitive} again, to prove that $\vec{p}$ is primitive \emph{in $\C$}, it suffices to show that $\C'$ is a primitive subcode of $\C$. But, this follows immediately from \cref{fact:primitive_is_intersection} and the definition of $\C'$.

    So, the algorithm is correct. To see that the running time is as claimed, recall from \cref{clm:special_subcode} that a basis for $\C'$ can be found in time $O(n k^2 \log^2 q)$. 
    Furthermore, the final systematization step also takes $O(n k^2 \log^2 q)$ time, so that the full algorithm runs in time $O(n k^2 \log^2 q)$, as needed.
\end{proof}

We now show in \cref{alg: insert_primitive} a general algorithm that, when given a basis $\vec{B}$ for some code $\C$ and a primitive codeword $\vec{p}$ of $\C$, outputs an invertible linear transformation $\vec{A}$ such that $\vec{A}\vec{B}$ is a proper basis with $\vec{p}$ as its first vector. This will give us a way to put whichever primitive codeword we like as the first vector of a basis, which will be a useful subroutine in many of our algorithms.

\RestyleAlgo{ruled}
\begin{algorithm2e}
\KwIn{A basis $\vec{B} = (\vec{b}_1; \dots; \vec{b}_k) \in \F_q^{k \times n}$ in $\C$ and a primitive vector $\vec{p}$ for $\C$}
\KwOut{An invertible matrix $\vec{A} \in \F_q^{k \times k}$ such that $\vec{A} \vec{B} = (\vec{b}'_1; \dots; \vec{b}'_k)$ is proper and $\vec{b}'_1 = \vec{p}$.}
\caption{Insert Primitive}
\label{alg: insert_primitive}

Find $a_1, \dots a_k \in \F_q$ such that $a_1 \vec{b}_1 + \dots + a_k \vec{b}_k = \vec{p}$\

$m \leftarrow \min(\{ r \in [i, k] : a_r \neq 0 \})$\

$\vec{b}_m \leftarrow {a_1 \vec{b}_{1} + \dots + a_k \vec{b}_{k}}$\
    
Swap $\vec{b}_1$ and $\vec{b}_{m}$.\

Let $\vec{T} \in \F_q^{(k-1) \times (k-1)}$ be any matrix such that $\vec{T} \vec{B}_{[2, k]}$ is in systematic form.\ 

$\vec{B} \leftarrow (\vec{I}_{1} \oplus \vec{T}) \vec{B}$\

\Return the matrix $\vec{A}$ corresponding to linear transformation done to $\vec{B}$
\end{algorithm2e}

\begin{claim}
    \label{claim: insert_primitive}
    On input a basis $\basis \in \F_q^{k \times n}$ for a code $\C$ and a primitive vector $\vec{p}$ in $\C$, \cref{alg: insert_primitive} correctly outputs a matrix $\vec{A}$ so that $\vec{A} \vec{B}$ is a proper basis of $\C$ whose first vector is $\vec{p}$. Furthermore, \cref{alg: insert_primitive} runs in time $O(n k^2 \log^2 q)$.
\end{claim}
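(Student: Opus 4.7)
The plan is to verify correctness step by step and then bound the running time of each stage. The key observation is that once steps 5 and 6 leave the first row untouched and make $\basis_{[2,k]}$ proper, \cref{fact:primitive_and_proper} (\cref{item:primitive_first_guy_proper_projection}) finishes everything: ``$\vec{b}_1$ primitive and $\basis_{[2,k]}$ proper'' implies $\basis$ is proper.

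For correctness, first observe that since $\vec{B}$ is a basis and $\vec{p} \in \C$, the coefficients $a_1,\ldots,a_k$ in step 1 exist and are unique; and since $\vec{p}$ is primitive (hence non-zero), some $a_i$ is non-zero, making $m$ well-defined. Replacing $\vec{b}_m$ with $\sum_i a_i \vec{b}_i$ is an invertible row operation (the transformation matrix differs from $\vec{I}_k$ only in its $m$-th row, whose $m$-th entry is $a_m \neq 0$), and the subsequent swap is also invertible, so after step 4 we still have a basis for $\C$ with $\vec{p}$ as its first row. Next, I would argue that the current $\vec{B}_{[2,k]}$ has full row rank so that step 5 can produce a valid $\vec{T}$: by \cref{fact: primitive} (\cref{item:primitive_non-zero_projections}), any non-trivial combination $\sum_{i \geq 2} c_i \pi_{\vec{p}}^\perp(\vec{b}_i) = \vec0$ would force $\sum_{i \geq 2} c_i \vec{b}_i$ to be a scalar multiple of $\vec{p}$, contradicting the linear independence of the basis $(\vec{p}; \vec{b}_2; \ldots; \vec{b}_k)$. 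After step 6 the first row is unchanged (the transformation is $\vec{I}_1 \oplus \vec{T}$, which is invertible) and the new $\vec{B}_{[2,k]}$ is in systematic form, hence proper (as noted in \cref{sec:prelims}). Invoking \cref{fact:primitive_and_proper} (\cref{item:primitive_first_guy_proper_projection}) and observing that the product of the row operations applied is invertible completes correctness.

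For the running time, step 1 is a linear system in $k$ unknowns whose coefficient matrix has $n$ columns; Gaussian elimination solves it in $O(nk^2 \log^2 q)$ time. Steps 2--4 cost $O(nk \log q)$. Step 5's systematization of a $(k-1) \times n$ matrix, together with applying $\vec{T}$ to $\vec{B}_{[2,k]}$ in step 6, takes $O(nk^2 \log^2 q)$ and dominates, yielding the claimed bound. The main obstacle, and the only place where the primitivity hypothesis on $\vec{p}$ is really used, is showing that $\vec{B}_{[2,k]}$ has full row rank after the swap, so that step 5 is well-defined; everything else is routine linear-algebraic bookkeeping.
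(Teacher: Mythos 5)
Your proposal is correct and follows essentially the same route as the paper's proof: correctness via the invertibility of the applied row operations, existence of $\vec{T}$ from the primitivity of $\vec{p}$ (the paper invokes \cref{item:primitive_full_rank_projection} of \cref{fact: primitive} directly, while you rederive the full rank of $\vec{B}_{[2,k]}$ from \cref{item:primitive_non-zero_projections}, which amounts to the same thing), properness of the systematized block, and then \cref{item:primitive_first_guy_proper_projection} of \cref{fact:primitive_and_proper}, with the identical running-time accounting. Your extra checks (existence of the coefficients $a_i$ and well-definedness of $m$) are fine but not a substantive departure.
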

\begin{proof}
    We first observe that the running time is as claimed. Indeed, the first step asks us to find a system of linear equations over $\F_q$, which can be done in time $O(nk^2 \log^2 q)$, finding $\vec{T}$ takes time $O(n k^2 \log^2 q)$, and multiplying by $\vec{T}$ similarly takes $O(nk^2 \log^2 q)$ time. The remaining steps take significantly less time.

    We now prove correctness. Let $\vec{B}$ correspond to the value of the input basis, and let $\vec{B}' := (\vec{b}'_1, \dots, \vec{b}'_k)$ correspond to the value of of the basis at the end of the algorithm. Since all operations on $\vec{B}$ are either elementary row operations or multiplication by an invertible matrix, \cref{alg: insert_primitive} can in fact return an invertible matrix $\vec{A} \in \F_q^{k \times k}$ such that $\vec{B}' = \vec{A}\vec{B}$. It follows that $\vec{B}'$ is a basis for $\C$ as claimed.
    
    Furthermore, clearly, $\vec{b}'_1 = \vec{p}$ as claimed. Finally, notice that by \cref{fact: primitive}, there exists a $\vec{T}$ with the desired properties. In particular, since $\vec{p}$ is primitive, the projection $\pi_{\{\vec{p}\}}^\perp(\C)$ orthogonal to $\vec{p}$ is a $k-1$ dimensional code. Finally, we note that $\vec{B}'$ is in fact proper by \cref{fact:primitive_and_proper}, since $\vec{b}_1' = \vec{p}$ is primitive and $(\vec{B}')_{[2,k]}$ is proper by assumption.
\end{proof}

\section{Redundant sets of coordinates, the last epipodal vector, and backward reduction}
\label{sec:redundant}
We are now ready to develop the theory behind backward-reduced bases. A backward-reduced basis is one in which the last epipodal vector $\vec{b}_k^+$ is as \emph{long} as possible. In the context of lattices, such bases are called dual-reduced bases and the maximal length is simply the reciprocal $1/\lambda_1(\lat^*)$ for the length of the shortest non-zero vector in the dual lattice $\lat^*$. 

For codes, the maximal length of the last epipodal vector behaves rather differently, as we will explain below. In particular, we will see how to find a backward-reduced basis quite efficiently. In contrast, finding a dual-reduced basis is equivalent to finding a shortest non-zero vector in a lattice and is therefore NP-hard.

On our way to defining backward reduction, we first define the notion of \emph{redundant} coordinates. Notice that we only consider coordinates in the support of $\C$.

\begin{definition}
    For a code $\C \subseteq \F_q^n$, we say that a set $S \subseteq [n]$ of coordinates is \emph{redundant for $\C$} if $S \subseteq \supp(\C)$ and for every $\vec{c} \in \C$ and all $i,j \in S$, $c_i = 0$ if and only if $c_j = 0$.
\end{definition}

The following simple claim explains the name ``redundant.'' In particular, for any codeword $\vec{c} \in \C$, if we know $c_i$ for some $i \in S$, then we also know $c_j$ for any $j \in S$.

\begin{claim}
    \label{claim:redundant}
    For a code $\C \subseteq \F_q^n$, a set $S \subseteq \supp(\C)$ is redundant for $\C$ if and only if for every $i,j \in S$, there exists a non-zero scalar $a \in \F_q^*$ such that for all $\vec{c} \in \C$, $c_j = a c_i$. 
    
    Furthermore, to determine whether $S$ is a set of redundant coordinates, it suffices to check whether the latter property holds for all $\vec{c} := \vec{b}_i$ in a basis $(\vec{b}_1;\ldots; \vec{b}_k)$ of $\C$.
\end{claim}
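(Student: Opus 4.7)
The plan is to prove the two directions of the biconditional separately and then handle the basis-sufficiency statement via linearity.

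For the ``if'' direction, suppose that for every pair $i, j \in S$ there is a non-zero scalar $a \in \F_q^*$ with $c_j = a c_i$ for all $\vec{c} \in \C$. Then for any such $\vec{c}$, we have $c_i = 0 \iff a c_i = 0 \iff c_j = 0$, where the first equivalence uses $a \neq 0$. Since this holds for every pair $i, j \in S$, $S$ is redundant (the condition $S \subseteq \supp(\C)$ is part of the hypothesis).

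For the ``only if'' direction, fix $i, j \in S$. Since $j \in \supp(\C)$, there exists a witness codeword $\vec{c}^* \in \C$ with $c_j^* \neq 0$; by redundancy of $S$, this forces $c_i^* \neq 0$. I then define $a := c_j^*/c_i^* \in \F_q^*$ and claim $c_j = a c_i$ for every $\vec{c} \in \C$. If $c_i = 0$ then redundancy gives $c_j = 0 = a c_i$. If $c_i \neq 0$, consider the codeword $\vec{c}' := c_i \vec{c}^* - c_i^* \vec{c} \in \C$. By construction $c_i' = 0$, so by redundancy $c_j' = 0$, i.e., $c_i c_j^* = c_i^* c_j$, which rearranges to $c_j = (c_j^*/c_i^*) c_i = a c_i$. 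I would also briefly note that the scalar $a$ is uniquely determined, since a different $a'$ with the same property would force $(a - a') c_i = 0$ for all $\vec{c} \in \C$, contradicting $i \in \supp(\C)$.

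For the final sufficiency claim, suppose the scalar condition $(\vec{b}_\ell)_j = a (\vec{b}_\ell)_i$ holds for each basis vector $\vec{b}_\ell$ with a single fixed $a \in \F_q^*$ depending only on $i, j$. Then for any codeword $\vec{c} = \sum_\ell z_\ell \vec{b}_\ell$, linearity of coordinate evaluation gives $c_j = \sum_\ell z_\ell (\vec{b}_\ell)_j = a \sum_\ell z_\ell (\vec{b}_\ell)_i = a c_i$, so the property extends from the basis to all of $\C$.

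The only real subtlety is in the forward direction, where one must actually \emph{produce} the scalar $a$ rather than simply know one exists; the witness-codeword trick handles this. Everything else is essentially bookkeeping using $a \neq 0$ and linearity of $\vec{c} \mapsto c_j$.
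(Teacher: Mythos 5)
Your proof is correct and follows essentially the same route as the paper's: the ``if'' direction by direct use of $a \neq 0$, the ``only if'' direction by picking a witness codeword in the support to define $a$ and then using a linear combination whose $i$th coordinate vanishes to force $c_j = a c_i$, and linearity for the basis-sufficiency statement. The only differences (choosing the witness via $c_j^* \neq 0$ instead of $c_i \neq 0$, and the case split on $c_i = 0$, which your linear-combination argument actually covers anyway) are cosmetic.
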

\begin{proof}
        It is clear that if for every $i,j \in S$ there exists such a non-zero scalar, then $S$ must be redundant. So, we need only prove the other direction.
    
        To that end, suppose that $S$ is redundant. Let $i,j \in S$ and let $\vec{c} \in \C$ be a codeword with $c_i \neq 0$. (Such a codeword exists because $S$ is a subset of the support.) Let $a := c_i^{-1} c_j$. We wish to show that for any $\vec{c}' \in \C$, $c_j' = a c_i'$. Indeed, notice that $c_i' c_i^{-1} \vec{c} - \vec{c}'$ is a codeword whose $i$th coordinate is zero. It follows from the redundancy of $S$ that the $j$th coordinate must also be zero, i.e., that $c_i' c_i^{-1} c_j - c_j' = 0$. Rearranging and recalling the definition of $a$, we see that $c_j' = a c_i'$, as needed.
    
        The ``furthermore'' follows from the fact that the condition $c_j = a c_i$ is clearly preserved by linear combinations.
\end{proof}

Next, we show that redundancy is closely connected with the last epipodal vector in a basis.

\begin{lemma}
    \label{lem:last_epipodal_redundant}
    For a code $\C \subseteq \F_q^n$ with dimension $k$ and $S \subseteq [n]$, there exists a basis $\basis := (\vec{b}_1;\ldots;\vec{b}_k)$ of $\C$ with $S \subseteq \supp(\vec{b}_k^+)$ if and only if $S$ is redundant.
\end{lemma}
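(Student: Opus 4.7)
The plan is to prove both directions of the equivalence by working directly with the definitions and applying \cref{claim:redundant}. The forward direction is the easier one, and the reverse direction will use the characterization of $\C_0 := \{\vec{c} \in \C : c_{i_0} = 0\}$ in terms of the redundant set $S$.

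For the forward direction, I would assume we have a basis $\basis = (\vec{b}_1;\ldots;\vec{b}_k)$ with $S \subseteq \supp(\vec{b}_k^+)$. By the definition of the epipodal vector $\vec{b}_k^+ = \pi^\perp_{\{\vec{b}_1,\ldots,\vec{b}_{k-1}\}}(\vec{b}_k)$, a coordinate $j$ lies in $\supp(\vec{b}_k^+)$ precisely when $b_{k,j} \neq 0$ and $b_{l,j} = 0$ for all $l < k$. Thus for any codeword $\vec{c} = \sum_l z_l \vec{b}_l$ and any $j \in S$, we have $c_j = z_k b_{k,j}$. For any $i,j \in S$ we then get $c_j = (b_{k,j}b_{k,i}^{-1}) c_i$ with $b_{k,j}b_{k,i}^{-1} \in \F_q^*$. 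Since $\vec{b}_k \in \C$ itself witnesses that every $j \in S$ lies in $\supp(\C)$, \cref{claim:redundant} implies that $S$ is redundant.

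For the reverse direction, assume $S$ is redundant. If $S = \emptyset$, any basis works, so assume $S \neq \emptyset$ and fix some $i_0 \in S$. Define the subcode $\C_0 := \{\vec{c} \in \C : c_{i_0} = 0\}$. By \cref{claim:redundant} together with redundancy, $\C_0$ equals $\{\vec{c} \in \C : c_j = 0 \text{ for all } j \in S\}$. Since $i_0 \in S \subseteq \supp(\C)$, the linear functional $\vec{c} \mapsto c_{i_0}$ is nonzero on $\C$, so $\C_0$ has dimension exactly $k-1$. Choose any basis $\vec{b}_1,\ldots,\vec{b}_{k-1}$ of $\C_0$ and any $\vec{b}_k \in \C$ with $b_{k,i_0} \neq 0$; since $\vec{b}_k \notin \C_0$, the resulting $(\vec{b}_1;\ldots;\vec{b}_k)$ is a basis of $\C$.

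It remains to verify $S \subseteq \supp(\vec{b}_k^+)$. By construction, $\vec{b}_l \in \C_0$ for $l < k$, so $b_{l,j} = 0$ for all $j \in S$ and all $l < k$. For $\vec{b}_k$, \cref{claim:redundant} gives a nonzero scalar $a_j \in \F_q^*$ with $b_{k,j} = a_j b_{k,i_0}$ for every $j \in S$; since $b_{k,i_0} \neq 0$, this forces $b_{k,j} \neq 0$ for all $j \in S$. Combining these two facts, every $j \in S$ lies in $\supp(\vec{b}_k^+)$, as required. I don't expect a real obstacle here; the main care is in using \cref{claim:redundant} to translate pointwise redundancy into the existence of the subcode $\C_0$ of codimension one, and ensuring that the chosen $\vec{b}_k$ picks up every coordinate in $S$, which follows automatically from redundancy once $b_{k,i_0} \neq 0$.
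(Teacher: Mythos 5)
Your proof is correct and follows essentially the same route as the paper's: the forward direction reads redundancy off the fact that $\vec{b}_k$ is the only basis vector supported on $S$, and the reverse direction builds the basis from the codimension-one subcode of codewords vanishing on $S$ plus one codeword outside it. The only (immaterial) difference is that the paper takes a \emph{proper} basis of that subcode so the constructed basis is proper, a refinement it uses in a later remark but which the lemma statement does not require.
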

\begin{proof}
        First, suppose that there exists a basis $\basis := (\vec{b}_1;\ldots;\vec{b}_k)$ of $\C$ with $S \subseteq \supp(\vec{b}_k^+)$. Then, notice that for any codeword $\vec{c} := a_1 \vec{b}_1 + \cdots + a_k \vec{b}_k$ and any $i,j \in S$, we must have that $c_i = 0$ if and only if $a_k = 0$ (since $\vec{b}_k$ is the only basis vector whose support contains $i$), which happens if and only if $c_j = 0$ (for the same reason). So, $S$ is redundant, as claimed.
    
        Now, suppose that $S$ is redundant. We also suppose $S$ is non-empty, since otherwise the statement is trivial. Then, we will construct a proper basis with the desired property. To that end, consider the code $\C' := \{\vec{c} \in \C \ : \ \forall i \in S,\ c_i = 0\}$. Notice that $\C'$ has dimension exactly $k-1$. Indeed, it has dimension at most $k-1$ because $S \subseteq \supp(\C)$, so $\C'$ must be a strict subcode of $\C$ and must therefore have dimension strictly less than $\C$. But, it has codimension at most one because by redundancy we can write $\C' = \{\vec{c} \in \C \ : \ c_i = 0\}$ for any \emph{fixed} index $i \in S$. So, $\C'$ is the intersection of a $k$-dimensional subspace with a subspace with codimension one and must therefore have dimension at least $k-1$, as needed.
        
        So, let $\vec{b}_1,\ldots, \vec{b}_{k-1}$ be a proper basis for $\C'$, and let $\vec{b}_k \in \C$ be any codeword in $\C$ that is not contained in $\C'$. It follows that $\basis := (\vec{b}_1;\ldots; \vec{b}_k)$ is a basis for $\C$. And, $\vec{b}_1^+,\ldots, \vec{b}_{k-1}^+$ are non-zero by assumption. Finally, notice that the $i$th coordinate of $\vec{b}_k$ must be non-zero for all $i \in S$, since otherwise $\vec{b}_k$ would be in $\C'$. Furthermore, by definition the $i$th coordinates of $\vec{b}_1,\ldots, \vec{b}_{k-1}$ must all be zero for $i \in S$. Therefore, $i \in \supp(\vec{b}_k^+)$, as needed.
\end{proof}

The above motivates the following definition.

\begin{definition}
    For a code $\C \subseteq \F_q^n$, the repetition number of $\C$, written $\eta(\C)$, is the maximal size of a redundant set $S \subseteq \supp(\C)$.
\end{definition}

In particular, notice that by \cref{lem:last_epipodal_redundant}, $\eta(C)$ is also the maximum of $|\vec{b}_k^+|$ over all bases $(\vec{b}_1;\ldots;\vec{b}_k)$ \emph{and} this maximum is achieved by a proper basis. The next lemma gives a lower bound on $\eta(\C)$, therefore showing that codes with sufficiently large support and sufficiently low rank must have bases whose last epipodal vector is long.

\begin{lemma}
    \label{lem:lower_bound_on_max_last_epipodal}
    For any code $\C \subseteq \F_q^n$ with dimension $k$,
    \[
        \eta(\C) \geq \left\lceil \frac{q-1}{q^k-1} \cdot |\supp(\C)|\right\rceil
        \; .
    \]
\end{lemma}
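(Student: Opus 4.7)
The plan is to use the characterization of redundant sets in \cref{claim:redundant} together with a simple pigeonhole argument on the columns of a basis matrix.

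Fix any basis $\vec{B} = (\vec{b}_1;\ldots;\vec{b}_k) \in \F_q^{k \times n}$ of $\C$, and let $\vec{v}_1,\ldots,\vec{v}_n \in \F_q^k$ denote its columns. Observe that for each $i \in \supp(\C)$, the column $\vec{v}_i$ is non-zero (otherwise the $i$th coordinate of every codeword would vanish, contradicting $i \in \supp(\C)$). Define an equivalence relation on $\supp(\C)$ by declaring $i \sim j$ if and only if $\vec{v}_j = a \vec{v}_i$ for some $a \in \F_q^*$, i.e., the columns are proportional.

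By the ``furthermore'' clause of \cref{claim:redundant}, to check redundancy of a set $S \subseteq \supp(\C)$ it suffices to verify, for each basis vector $\vec{b}_\ell$ and each pair $i,j \in S$, that $(\vec{b}_\ell)_j = a (\vec{b}_\ell)_i$ for a single scalar $a \in \F_q^*$ depending only on $i$ and $j$. This condition is precisely saying that $\vec{v}_j = a \vec{v}_i$. Consequently, every equivalence class under $\sim$ is a redundant set for $\C$.

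Finally, the number of equivalence classes is bounded by the number of proportionality classes of non-zero vectors in $\F_q^k$. Since each such class has exactly $q-1$ representatives, the total number of classes is $(q^k-1)/(q-1)$. By pigeonhole, the $|\supp(\C)|$ coordinates of $\supp(\C)$ are distributed among at most $(q^k-1)/(q-1)$ equivalence classes, so some class has size at least
\[
\left\lceil \frac{|\supp(\C)|}{(q^k-1)/(q-1)} \right\rceil = \left\lceil \frac{q-1}{q^k-1} \cdot |\supp(\C)| \right\rceil.
\]
This class is a redundant set, yielding the desired bound on $\eta(\C)$.

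There is no real obstacle here: the only subtle point is confirming that proportionality of columns across a basis implies proportionality of the corresponding coordinates across \emph{every} codeword, which is exactly what \cref{claim:redundant} packages for us.
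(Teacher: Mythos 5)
Your proof is correct and follows essentially the same route as the paper's: both arguments pigeonhole the (non-zero) columns indexed by $\supp(\C)$ into the $(q^k-1)/(q-1)$ proportionality classes of non-zero vectors in $\F_q^k$ (the paper phrases this by normalizing each column by its first non-zero entry) and then invoke \cref{claim:redundant} to conclude that a largest class is a redundant set.
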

\begin{proof}
        We assume without loss of generality that $|\supp(\C)| = n$. 
        
        Let
        \[
            t := \left\lceil \frac{q-1}{q^k-1} \cdot n \right\rceil 
            \; .
        \]
        Let $\vec{B} \in \F_q^{k \times n}$ be any basis for $\C$, and for every $i \in [n]$, let $a_i \in \F_q^*$ be the first non-zero entry in the $i$th column of $\vec{B}$. (By our assumption that $|\supp(\C)| = n$, such an element must exist.)
    
        We claim that there is some subset $S \subseteq [n]$ with $|S| = t$ such that for all $i, j \in S$, the $i$th column of $\vec{B}$ times $a_i^{-1}$ is equal to the $j$th column of $\vec{B}$ times $a_j^{-1}$. Indeed, notice that the total number of possible $k$-dimensional vectors $\vec{v} \in \F_q^k$ whose first non-zero coordinate is one is $1 + q + q^2 + \cdots + q^{k-1} = (q^k-1)/(q-1)$. Since $\vec{B}$ has $n$ columns, it follows from the pigeonhole principle that such a set $S$ must exist.
    
        Finally, we claim that $S$ is in fact a set of redundant coordinates for $\C$.  By \cref{claim:redundant}, it suffices to observe that for every $i,j \in S$, there is some $a \in \F_q^*$ such that the $j$th coordinate $B_{m,j}$ of the $m$th basis vector $\vec{b}_m$ satisfies $B_{m,j} = a B_{m, i}$ for all $m$. Indeed, we can take $a := a_j a_i^{-1}$.
\end{proof}

We present in \cref{alg: redundant} a simple algorithm that finds the largest redundant set of a code $\C$. (The algorithm itself can be viewed as a constructive version of the proof of \cref{lem:lower_bound_on_max_last_epipodal}.)

\RestyleAlgo{ruled}
\begin{algorithm2e}
\caption{Max Redundant Set}
\label{alg: redundant}
\KwIn{A basis $\vec{B} = (\vec{b}_1; \dots; \vec{b}_k) \in \F_q^{k \times n}$ for $\C$}
\KwOut{A redundant set $S$ for $\C$ with $|S| = \eta(\C)$}

\For{$j \in [n]$}{
    $a_j \gets B_{i,j}^{-1}$, where $i \in [k]$ is minimal such that $B_{i,j} \neq 0$.
}

Find $S \subseteq \supp(\C)$ with maximal size such that for all $j_1,j_2 \in S$ and all $i$, $a_{j_1} B_{i,j_{1}} = a_{j_2}B_{i,j_2}$.

\Return $S$
\end{algorithm2e}

\begin{claim}
    \label{clm:redundant_alg}
    \cref{alg: redundant} outputs a redundant set $S$ for $\C$ with $|S| = \eta(\C)$. Furthermore, \cref{alg: redundant} runs in time $O(k n \log(q) \log(qn))$ (when implemented appropriately).
\end{claim}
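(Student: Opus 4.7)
The plan is to establish correctness in two directions and then bound the running time separately.

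For the upper bound $|S|\le\eta(\C)$, I will show the output $S$ is redundant. By construction, for any $j_1,j_2\in S$ and any row $i$ we have $a_{j_1}B_{i,j_1}=a_{j_2}B_{i,j_2}$, which rearranges to $B_{i,j_2}=(a_{j_1}a_{j_2}^{-1})\,B_{i,j_1}$. By \cref{claim:redundant} the redundancy condition need only be verified on a basis, so this already gives the required scalar relation $c_{j_2}=(a_{j_1}a_{j_2}^{-1})c_{j_1}$ for every $\vec{c}\in\C$. Combined with the explicit constraint $S\subseteq\supp(\C)$ in the algorithm, this shows $S$ is redundant, so $|S|\le\eta(\C)$.

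For the matching lower bound, I would take any redundant set $S'$ with $|S'|=\eta(\C)$. By \cref{claim:redundant}, for each pair $j_1,j_2\in S'$ there is $\alpha\in\F_q^*$ with $B_{i,j_2}=\alpha B_{i,j_1}$ for all $i$. Since $S'\subseteq\supp(\C)$ both columns are nonzero; being nonzero scalar multiples, they share the same first-nonzero row $i_0$, and I compute $a_{j_2}=(\alpha B_{i_0,j_1})^{-1}=\alpha^{-1}a_{j_1}$. Hence
\[
a_{j_2}B_{\cdot,j_2}=\alpha^{-1}a_{j_1}\cdot\alpha B_{\cdot,j_1}=a_{j_1}B_{\cdot,j_1},
\]
so $S'$ satisfies the matching condition maximized by the algorithm. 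Therefore $|S|\ge|S'|=\eta(\C)$, giving equality.

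For the running time, computing $a_j$ for each column requires scanning down its $k$ entries to locate the first nonzero, costing $O(k\log q)$ per column, plus one inversion in $\F_q$ of cost $O(\log^2 q)$; this totals $O(nk\log q+n\log^2 q)$. Multiplying each column by its $a_j$ takes $O(nk\log^2 q)$. Finally, to find the largest group of columns that agree after normalization (and lie in $\supp(\C)$), I would lexicographically sort the $n$ normalized columns using $O(n\log n)$ comparisons, each costing $O(k\log q)$, and then scan the sorted list once to pick out the largest equal non-zero block. The dominant term is $O(nk\log q\,(\log q+\log n))=O(nk\log q\log(qn))$, as claimed.

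The only subtle point -- and the one I would be most careful about -- is the first-nonzero alignment used in the second paragraph: redundancy forces pairs of columns to be exact scalar multiples, which automatically makes their first-nonzero coordinates occur in the same row, and it is precisely this alignment that makes the normalization $a_j$ comparable across a redundant set. Once this observation is in hand, everything else reduces to bookkeeping.
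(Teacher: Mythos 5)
Your proof is correct and follows essentially the same route as the paper: correctness in both directions via \cref{claim:redundant} (the paper simply calls this ``immediate,'' while you spell out the normalization/first-nonzero alignment it leaves implicit), and the same sort-the-scaled-columns running-time analysis yielding $O(kn\log(q)\log(qn))$.
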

\begin{proof}
    The correctness of the algorithm follows immediately from \cref{claim:redundant}.

    We now show that the running time is as claimed. Notice that the for loop can be run in time (better than) $O(n\log q(k + \log q))$, which is significantly less than the claimed running time. 
    
    The complexity is therefore dominated by the time required to find a maximal $S$ such that for all $j_1,j_2 \in S$ and all $i$, $a_{j_1} B_{i,j_{1}} = a_{j_2}B_{i,j_2}$. To find such an $S$, we can first generate a list $a_1 \vec{B}_1,\ldots, a_n \vec{B}_n \in \F_q^k$ of scaled columns of $\vec{B}$, which can be done in $O(kn \log^2 q)$ time. We then sort this list (using any total order on $\F_q^k$), which can be done in time $O(kn \log (n) \log (q))$. We then find an element that occurs maximally often in the sorted list, which can also be done in time (better than) $O(kn \log (n) \log (q))$. Finally, we find all $j$ such that the scaled $j$th entry in our original list $a_j\vec{B}_j$ is equal to this most common element, which can be done in time $O(n k \log q)$. And, finally, we we output the resulting set. So, the total running time is $O(kn \log^2 (q) + kn \log (n) \log (q)) = O(kn \log(q) \log(qn))$, as claimed.
\end{proof}

\subsection{Backward reduction}
\label{sec:backwards_reduction}

We are now ready to present our definition of backward-reduced bases.

\begin{definition}
    Let $\vec{B} = (\vec{b}_1; \dots; \vec{b}_k) \in \F_q^{k \times n}$ be a basis of a code $\C$. We say that $\vec{B}$ is \emph{backward reduced} if it is proper and $|\vec{b}_k^+| = \eta(\C(\vec{B}))$.
\end{definition}

Finally, we give an algorithm that finds a backward-reduced basis. See \cref{alg: backward_reduce}.

\RestyleAlgo{ruled}
\begin{algorithm2e}
\caption{Backward Reduction}
\label{alg: backward_reduce}
\KwIn{A proper basis $\vec{B} = (\vec{b}_1; \dots; \vec{b}_k) \in \F_q^{k \times n}$ for $\C$}
\KwOut{An invertible matrix $\vec{A} \in \F_q^{k \times k}$ such that $\vec{A} \vec{B}$ is backward reduced.}

$\{j_1,\ldots, j_t\} \leftarrow \mathsf{MaxRedundantSet}(\basis)$\

Let $m$ be minimal such that $B_{m,j_1} \neq 0$.\

\For{$i \in [m+1,k]$}{
    $\vec{b}_i \leftarrow \vec{b}_i- B_{m,j_1}^{-1} B_{i,j_1} \vec{b}_m$
}

$(\vec{b}_1;\ldots; \vec{b}_k) \leftarrow (\vec{b}_1;\ldots;\vec{b}_{m-1}; \vec{b}_{m+1};\ldots; \vec{b}_{k}; \vec{b}_m)$\

\Return the matrix corresponding to the linear transformation done to $\vec{B}$.
\end{algorithm2e}

\begin{claim}
    \label{clm:backward_run_time}
    On input a proper basis $\vec{B}$, \cref{alg: backward_reduce} correctly outputs an invertible matrix $\vec{A}$ such that $\vec{A} \vec{B}$ is backward reduced. Furthermore, \cref{alg: backward_reduce} runs in time $O(nk \log (q) \log(qn))$.
\end{claim}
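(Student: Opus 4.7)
My approach is to track how each row operation affects the epipodal vectors and the support of $\vec{b}_m$, and then invoke the structural lemmas about redundant sets to pin down the length of the final last epipodal vector. First I would check that $m$ is well-defined: because $\{j_1, \ldots, j_t\} \subseteq \supp(\C)$, at least one basis vector is non-zero in column $j_1$. Next I would analyze step~3, which repeatedly adds a multiple of $\vec{b}_m$ to some later $\vec{b}_i$. This is exactly the setup of \cref{lem: late_epipodal_invariant} (with $m$ playing the role of the earlier index), so every iteration leaves all epipodal vectors unchanged and the basis remains proper after step~3. The effect of step~3 is that $B_{i, j_1} = 0$ for all $i \neq m$ (for $i < m$ this already held by minimality of $m$). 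By the ``furthermore'' clause of \cref{claim:redundant}, there are scalars $a_\ell \in \F_q^*$ with $B_{r, j_\ell} = a_\ell B_{r, j_1}$ for every row $r$, so in fact $B_{i, j_\ell} = 0$ for all $i \neq m$ and all $\ell \in [t]$. Since $\vec{b}_m$ is never modified, $B_{m, j_\ell} \neq 0$ for every $\ell$, so after step~3 the set $\{j_1, \ldots, j_t\}$ lies in $\supp(\vec{b}_m)$ and is disjoint from $\supp(\vec{b}_i)$ for every $i \neq m$.

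Then I would examine the permutation in step~4. Epipodal vectors at positions $i < m$ are clearly unchanged, and for each position $i \in [m, k-1]$ the new basis vector is the old $\vec{b}_{i+1}$ projected orthogonal to a prefix that differs from the old one only by the deletion of $\vec{b}_m$; hence the new epipodal support is a superset of the (non-empty) old one, so properness is preserved. For position $k$ the basis vector is $\vec{b}_m$, and since $\{j_1, \ldots, j_t\} \subseteq \supp(\vec{b}_m)$ is disjoint from the supports of the first $k-1$ new basis vectors, it survives projection, giving $|\vec{b}_k^+| \geq t = \eta(\C)$. The opposite inequality is the ``only if'' direction of \cref{lem:last_epipodal_redundant}, since $\supp(\vec{b}_k^+)$ is itself a redundant set. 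Together these give $|\vec{b}_k^+| = \eta(\C)$, as required. Invertibility of $\vec{A}$ is immediate, as step~3 is a product of elementary row operations and step~4 is a permutation.

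For the running time, \cref{clm:redundant_alg} gives $O(kn \log(q) \log(qn))$ for the \textsf{MaxRedundantSet} call, finding $m$ takes $O(k \log q)$, the loop in step~3 does at most $k$ vector additions in $\F_q^n$ for a total of $O(kn \log^2 q)$, and the permutation in step~4 costs $O(kn \log q)$; summing and using $\log q \leq \log(qn)$ yields the claimed bound. I expect the main conceptual obstacle to be verifying properness after the permutation in step~4. The key observation that unlocks this is that removing $\vec{b}_m$ from each subsequent projection prefix can only \emph{enlarge} the remaining epipodal supports, so the work done in step~3 (using redundancy to clear the columns $j_1, \ldots, j_t$ from every row other than $m$) simultaneously forces $\{j_1, \ldots, j_t\} \subseteq \supp(\vec{b}_k^+)$ and guarantees that no earlier epipodal vector becomes zero.
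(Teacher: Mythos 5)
Your proposal is correct and follows essentially the same route as the paper's proof: you use \cref{lem: late_epipodal_invariant} to show the loop leaves the epipodal vectors (hence properness) untouched, \cref{claim:redundant} to extend the zeroing of column $j_1$ to all of $j_1,\ldots,j_t$, the observation that cycling $\vec{b}_m$ to the end can only enlarge the intermediate epipodal supports, and \cref{lem:last_epipodal_redundant} for the matching upper bound $|\vec{b}_k^+|\le\eta(\C)$, with the same running-time accounting. No substantive differences from the paper's argument.
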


\begin{proof}
    We first analyze the running time. By \cref{clm:redundant_alg}, the first step of computing $\{j_1,\ldots, j_t\}$ runs in time $O(nk \log (q) \log(qn))$, and clearly finding $m$ can be done in much less than the claimed running time. The for loop can be run in time $O(nk \log^2 q)$ time. The last steps can be run in time $O(n k \log q)$. So, the total running time is as claimed. 
    
    Furthermore, it is clear that $\vec{A}$ is invertible. We now prove that $\vec{A} \vec{B}$ is backward reduced. To that end, we first observe the simple fact that the first $k-1$ epipodal vectors of $\vec{A} \vec{B}$ are non-zero. Indeed, by \cref{lem: late_epipodal_invariant}, the for loop does not affect the epipodal vectors at all and therefore the basis remains proper after the for loop. Let $\vec{B} := (\vec{b}_1;\ldots;\vec{b}_k)$ be the basis before the last step of the algorithm, which we have just argued is proper, and let $\vec{b}_1',\ldots, \vec{b}_k'$ be the basis after the last step. Notice that $\vec{b}_i' = \vec{b}_i$ for $i < m$, and it follows immediately that $(\vec{b}_i')^+ = \vec{b}_i^+ \neq \vec0$ for such $i$. For $m \leq i < k$, $\vec{b}_i' := \vec{b}_{i+1}$, and it follows that $|(\vec{b}_i')^+| \geq |\vec{b}_{i+1}^+| > 0$, as needed.

    We next show that $\ell_k' := |(\vec{b}_k')^+| = t := \eta(\C)$. In particular, notice that this implies that $\ell_k'$ is non-zero, since we have already shown that the first $k-1$ epipodal vectors are non-zero, and we trivially have $t \geq 1$. So, proving that $\ell_k' = t$ implies the result. To that end, we first notice that after the for loop, $B_{i,j_\ell} = 0$ for all $\ell \in [t]$ and all $i \neq m$. To see this, observe that by the definition of $m$, $B_{i,j_1} = 0$ for all $i < m$. And, it is clear from the for loop that for $i > m$, we have $B_{i,j_1} = 0$ for all $i > m$. But, since $\{j_1,\ldots,j_t\}$ is a redundant set of coordinates, it immediately follows that $B_{i,j_\ell} = 0$ for all $i \neq m$ and all $\ell$, as claimed.
    
    To finish the proof, first notice that $(\vec{b}_k')^+ = \pi_{T}^\perp(\vec{b}_m)$, where $T := \{\vec{b}_i \ : \ i \neq m\}$. Since we have seen that all elements in $T$ have supports disjoint from $\{j_1,\ldots, j_t\}$, and since $\vec{b}_m$ has support containing $\{j_1,\ldots, j_t\}$, it follows that $\{j_1,\ldots, j_t\} \subseteq \supp((\vec{b}_k')^+)$. So, $\ell_k' \geq t$. The fact that $\ell_k' \leq t$ follows from \cref{lem:last_epipodal_redundant}, which in particular implies that $\ell_k' \leq t$. Therefore, $\ell_k' = t$, as needed.
\end{proof}

\subsection{Full backward reduction}
\label{sec:dkz}
Since backward reduction can be done efficiently, it is natural to ask what happens when we backward reduce many prefixes $\basis_{[1,i]}$ of a basis. We could simply do this for all $i \in [k]$, but it is natural to be slightly more fine-grained and instead only do this for $i \leq \tau$ for some threshold $\tau$. In particular, since the last $k-\poly(\log n)$ epipodal vectors tend to have length one even in very good bases (\fullornot{\cref{sec: k1_heuristic}}{see the full version \citegs to understand why}), it is natural to take $\tau \leq \poly(\log n)$ to be quite small, which leads to very efficient algorithms. This suggests the following definition.

\begin{definition}
    For some threshold $\tau \leq k$, a basis $\basis \in \F_q^{k \times n}$ is \emph{fully backward reduced up to $\tau$} if it is proper and $\basis_{[1,i]}$ is backward reduced for all $1 \leq i \leq \tau$.
\end{definition}

We now show how to easily and efficiently compute a fully backward-reduced basis, using the backward-reduction algorithm (\cref{alg: backward_reduce}) that we developed above. We present the algorithm in \cref{alg: total_backwards} and then prove its correctness and efficiency. Notice in particular that the algorithm only changes each prefix $\basis_{[1,i]}$ (at most) once.

\RestyleAlgo{ruled}
\begin{algorithm2e}
\caption{Full Backward Reduction}
\label{alg: total_backwards}
\KwIn{A proper basis $\vec{B} := (\vec{b}_1; \dots; \vec{b}_k) \in \F_q^{k \times n}$ for a code $\C$ and a threshold $\tau \in [1, k]$}
\KwOut{A basis for $\C$ that is totally backward reduced up to $\tau$.}
\For{$i=\tau,\ldots,1$}{
    $\vec{A} \leftarrow \text{BackwardReduction}(\vec{B}_{[1,i]})$\

    $\vec{B} \leftarrow (\vec{A} \oplus \vec{I}_{k-i}) \vec{B}$
}
\Return $\vec{B}$
\end{algorithm2e}

\begin{theorem}
    On input a proper basis $\vec{B} := (\vec{b}_1; \dots; \vec{b}_k) \in \F_q^{k \times n}$ for a code $\C$ and a threshold $\tau \in [1, k]$, \cref{alg: total_backwards} correctly outputs a basis $\basis' \in \F_q^{k \times n}$ for $\C$ that is fully backward reduced up to $\tau$. Furthermore, the algorithm runs in time $O(\tau^2 n \log (q)\log(qn))$.
\end{theorem}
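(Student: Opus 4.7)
The plan is to prove correctness by reverse induction on the loop counter $i$, maintaining the invariant that at the start of the iteration with counter $i$ the current basis $\vec{B}$ is proper for $\C$ and $\vec{B}_{[1,i'']}$ is backward reduced for every $i'' \in [i+1,\tau]$. The base case $i = \tau$ is immediate: properness is given by assumption, and the second condition is vacuous.

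For the inductive step, suppose the invariant holds at the start of iteration $i$. Since $\vec{B}$ is proper, the sub-block $\vec{B}_{[1,i]}$ is also proper by \cref{fact:primitive_and_proper}, so \cref{clm:backward_run_time} produces an invertible $\vec{A} \in \F_q^{i \times i}$ with $\vec{A}\vec{B}_{[1,i]}$ backward reduced. Replacing $\vec{B}$ by $(\vec{A}\oplus\vec{I}_{k-i})\vec{B}$ has three effects that I would verify in turn. First, by \cref{lem: lifting_reduction} the new $\vec{B}_{[1,i]}$ equals $\vec{A}\vec{B}_{[1,i]}$, hence is backward reduced (and in particular proper). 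Second, by \cref{lem: epipodal_locality} every epipodal vector $\vec{b}_m^+$ with $m > i$ is unchanged, so together with the now-proper first $i$ epipodal vectors the whole basis $\vec{B}$ remains proper. Third, for each $i'' \in [i+1,\tau]$, invertibility of $\vec{A}$ implies that the span of the first $i''$ rows of $\vec{B}$ is preserved, so $\C(\vec{B}_{[1,i'']})$ is unchanged and hence $\eta(\C(\vec{B}_{[1,i'']}))$ is unchanged; combined with \cref{lem: epipodal_locality} giving that $\vec{b}_{i''}^+$ is unchanged, the defining equation $|\vec{b}_{i''}^+| = \eta(\C(\vec{B}_{[1,i'']}))$ persists. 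This re-establishes the invariant for counter $i-1$. After the final iteration (conceptually $i=0$), the invariant yields exactly the desired conclusion.

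For the running time, the algorithm should be implemented by applying the row operations of \textsc{BackwardReduction} directly to the first $i$ rows of $\vec{B}$, rather than forming the matrix $\vec{A}$ explicitly and then multiplying (which would cost $O(i^2 n \log^2 q)$ per iteration and break the target bound). By \cref{clm:backward_run_time}, \textsc{BackwardReduction} on an $i \times n$ proper basis runs in time $O(ni \log q \log(qn))$. Summing over $i = \tau, \tau-1, \ldots, 1$ gives $O\!\left(n \log q \log(qn) \sum_{i=1}^{\tau} i\right) = O(\tau^2 n \log q \log(qn))$, as claimed.

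The main conceptual obstacle is the third effect above: showing that the work done in later iterations (which touches only the first $i$ rows) does not undo the backward-reducedness of longer prefixes $\vec{B}_{[1,i'']}$ with $i'' > i$ that were established in earlier iterations. The resolution is the double observation that (a) $\C(\vec{B}_{[1,i'']})$, and therefore $\eta(\C(\vec{B}_{[1,i'']}))$, is a code invariant that depends only on the span of the first $i''$ rows, and (b) by \cref{lem: epipodal_locality} the epipodal vector $\vec{b}_{i''}^+$ itself is unaffected by changes confined to rows $1,\ldots,i$. Everything else is bookkeeping.
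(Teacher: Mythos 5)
Your proposal is correct and follows essentially the same route as the paper's proof: the key points—that $\C(\basis_{[1,j]})$ and hence $\eta(\C(\basis_{[1,j]}))$ are unchanged for $j \geq i$ because $\vec{A}$ is invertible, that \cref{lem: epipodal_locality} preserves $\vec{b}_j^+$ for $j > i$, and that the multiplication by $\vec{A} \oplus \vec{I}_{k-i}$ must be implemented as (at most $i$) elementary row operations to stay within the time bound—are exactly the ones the paper uses, with your explicit reverse induction merely making the paper's invariant more formal.
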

\begin{proof}
    We first prove that the algorithm runs in the claimed time. Indeed, the running time is dominated by $\tau$ calls of the Backward Reduction algorithm (\cref{alg: backward_reduce}). By \cref{clm:backward_run_time}, each of these calls runs in time $O(\tau n \log(q)\log(qn)))$.\footnote{Our notation might be slightly misleading here. Notice that the matrix $\vec{A}$ corresponds to the linear transformation applied by Backward Reduction to the block $\vec{B}_{[1,i]}$, and in particular the operation $\vec{B} \leftarrow (\vec{A} \oplus \vec{I}_{k-i})\vec{B}$ certainly takes no longer to apply than the time  that it takes to run Backward Reduction. Indeed, multiplication by $\vec{A}$ applies at most $i$ elementary row operations.} So, the running time is clearly as claimed.

    We now show that the algorithm correctly outputs a fully backward-reduced basis. First, notice that in iteration $i$, the code $\C_j := \C(\basis_{[1,j]})$ remains unchanged for any $j \geq i$ (since $\vec{A}$ is invertible). In particular, $\eta(\C_j)$ remains unchanged. And, \cref{lem: epipodal_locality} tells us that $|\vec{b}_j^+|$ remains unchanged as well for $j > i$. By the correctness of \cref{alg: backward_reduce}, it follows that after iteration $i$, we must have that $|\vec{b}_j^+| = \eta(\C_j)$ for all $i \leq j \leq \tau$. Furthermore, all operations preserve properness. The result follows.
\end{proof}

We next bound $|\vec{b}_1|$ of a fully backward-reduced basis. In fact, when $\tau \geq \ceil{\log_q n}$, this bound matches the Griesmer bound. It is actually not hard to see that with $\tau = k$, a fully backward-reduced basis is in fact LLL-reduced as well. But, the below theorem shows that we do not need to go all the way to $\tau = k$ to achieve the Griesmer bound. This is because in the worst case, $|\vec{b}_i^+| = 1$ for all $i > \ceil{\log_q n}$ anyway.

\begin{theorem}
    \label{thm:griesmer_backwards}
    For any positive integers $k$, $n \geq k$, and $\tau \leq k$, a basis $\basis \in \F_q^{k \times n}$ of a code $\C$ that is fully backward reduced up to $\tau$ satisfies 
    \[
        \sum_{i=1}^\tau \left\lceil \frac{|\vec{b}_1|}{q^{i-1}} \right \rceil \leq n-k+\tau
        \; .
    \]
\end{theorem}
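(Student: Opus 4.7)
The plan is to establish the lower bound $|\vec{b}_i^+| \geq \lceil |\vec{b}_1|/q^{i-1}\rceil$ for every $i \in [1,\tau]$ and then combine this with the support-size constraint. Writing $\ell_j := |\vec{b}_j^+|$ throughout, this is the code-theoretic analogue of the LLL-to-Griesmer argument of \cite{DDvAlgorithmicReductionTheory2022}, except that the ``$\ell_i$ controls $\ell_{i+1}$'' type inequality is replaced by a direct pigeonhole-style bound that ties $\ell_i$ to \emph{all} of $\ell_1, \dots, \ell_i$.

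First, I would extract the key recurrence. For each $i \leq \tau$, the block $\basis_{[1,i]}$ is backward reduced, so $\ell_i = \eta(\C(\basis_{[1,i]}))$. The code $\C(\basis_{[1,i]})$ has dimension $i$ and, by \cref{fact: lengt_invariance}, support size $\ell_1 + \cdots + \ell_i$. Applying \cref{lem:lower_bound_on_max_last_epipodal} therefore gives
\[
    (q^i-1)\,\ell_i \;\geq\; (q-1)\sum_{j=1}^{i} \ell_j,
\]
which (for $i \geq 2$) rearranges to
\[
    \ell_i \cdot \frac{q(q^{i-1}-1)}{q-1} \;\geq\; \sum_{j=1}^{i-1} \ell_j.
\]

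Second, I would prove $\ell_i \geq \lceil \ell_1/q^{i-1}\rceil$ by strong induction on $i \in [1,\tau]$. The base case $i=1$ is trivial. For the inductive step, the hypothesis yields
\[
    \sum_{j=1}^{i-1} \ell_j \;\geq\; \ell_1 \sum_{j=0}^{i-2} q^{-j} \;=\; \ell_1 \cdot \frac{q^{i-1}-1}{q^{i-2}(q-1)}.
\]
Combining with the displayed recurrence and cancelling the common factor $(q^{i-1}-1)/(q-1)$ gives $\ell_i \cdot q \geq \ell_1 / q^{i-2}$, i.e., $\ell_i \geq \ell_1 / q^{i-1}$. Since $\ell_i$ is an integer, we may round up to conclude $\ell_i \geq \lceil \ell_1/q^{i-1}\rceil$.

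Finally, I would close the argument using properness. Because $\basis$ is proper, $\ell_j \geq 1$ for every $j \in [1,k]$, and by \cref{fact: lengt_invariance}, $\sum_{j=1}^k \ell_j = |\supp(\C)| \leq n$. Hence $\sum_{i=1}^\tau \ell_i \leq n - (k-\tau) = n-k+\tau$, and chaining with the inductive lower bound yields
\[
    \sum_{i=1}^{\tau} \left\lceil \frac{|\vec{b}_1|}{q^{i-1}} \right\rceil \;\leq\; \sum_{i=1}^{\tau} \ell_i \;\leq\; n-k+\tau,
\]
as required. The only genuinely delicate step is the algebra in the inductive step: one has to verify that the backward-reduction inequality combined with the induction hypothesis produces exactly the clean bound $\ell_i \geq \ell_1/q^{i-1}$ (not a weaker one), and this relies on the factor $q^{i-1}-1$ in the numerator of the recurrence matching the $q^{i-1}-1$ arising from the geometric sum $\sum_{j=0}^{i-2} q^{-j}$. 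Once that cancellation is observed, the proof is essentially a two-line computation.
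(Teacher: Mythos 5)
Your proposal is correct and follows essentially the same route as the paper's proof: both rest on the identity $|\vec{b}_i^+| = \eta(\C(\basis_{[1,i]}))$, the pigeonhole bound of \cref{lem:lower_bound_on_max_last_epipodal} applied with support size $\ell_1+\cdots+\ell_i$ (via \cref{fact: lengt_invariance}), integrality to round up to $\lceil |\vec{b}_1|/q^{i-1}\rceil$, and properness of the tail to bound $\sum_{i\le\tau}\ell_i$ by $n-k+\tau$. The only difference is organizational — you run a strong induction directly on the $\ell_i$ using the geometric sum, whereas the paper telescopes an equivalent recurrence on the partial supports $s_i$ — and your algebra checks out.
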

\begin{proof}
    Let $\C_i := \C(\basis_{[1,i]})$, and let 
    $
        s_i  := \supp(\C_i) = |\vec{b}_1^+| + \cdots + |\vec{b}_i^+|$.
    By the definition of a fully backward-reduced basis, for all $1 \leq i \leq \tau$ we must have
        $|\vec{b}_i^+| = \eta(C_i)
    $.
    Applying \cref{lem:lower_bound_on_max_last_epipodal}, we see that
    \begin{equation}
    \label{eq:griesmer_again_and_again}
        |\vec{b}_i^+| \geq \left\lceil \frac{q-1}{q^i-1} \cdot s_i \right\rceil
        \; ,
    \end{equation}
    for all $1 \leq i \leq \tau$. Noting that $s_{i} = s_{i-1} + |\vec{b}_i^+|$, after a bit of algebra we see that
    \[
        s_i \geq \frac{q^i-1}{q(q^{i-1}-1)} \cdot s_{i-1} \; ,
    \]
    for all $1 \leq i \leq \tau$. Applying this repeatedly, we see that for all such $i$,
    \[
        s_i \geq s_1 \cdot  \prod_{j=2}^i \frac{q^j-1}{q(q^{j-1}-1)} =  q^{-(i-1)} \cdot \frac{q^i-1}{q-1} \cdot  s_1 
    \]
    We then recall that $s_1 = |\vec{b}_1|$ and plug this back into \cref{eq:griesmer_again_and_again} to see that $|\vec{b}_i^+| \geq \ceil{|\vec{b}_1|/q^{i-1}}$.
    It then follows from the definition of $s_\tau$ that
    \[
        s_\tau \geq \sum_{i=1}^\tau \left\lceil \frac{|\vec{b}_1^+|}{q^{i-1}} \right \rceil
        \; .
    \]

    Finally, since the basis is proper, we have that 
    \[
        s_\tau = s_k - |\vec{b}_{\tau+1}^+| - \cdots - |\vec{b}_k^+| \leq s_k - (k-\tau) \leq n-k+\tau
        \; ,
    \]
    and the result follows.
\end{proof}

\subsection{Heuristic analysis suggesting better performance in practice}

Recall that our analysis of backward-reduced bases in \cref{sec:redundant} relied crucially on the repitition number $\eta(\C)$, which is the maximum over all proper bases of $\C$ of the length of the last epipodal vector. We showed that $\eta(\C)$ can be equivalently thought of as the maximal set of redundant coordinates. E.g., when $q = 2$, $\eta(\C)$ is precisely the number of repeated columns in the basis $\vec{B}$ for $\C$.

Our analysis of fully backward-reduced bases then relies on the lower bound on $\eta(\C)$ in \cref{lem:lower_bound_on_max_last_epipodal}. The proof of \cref{lem:lower_bound_on_max_last_epipodal} simply applies the pigeonhole principle to the (normalized, non-zero) columns of a basis $\vec{B}$ for $\C$ to argue that, if there are enough columns, then one of them must be repeated many times. Of course, the pigeonhole principle is tight in general and it is therefore easy to see that this argument is tight in the worst case.

However, in the average case, this argument is not tight. For example, if the number $n$ of (non-zero) columns in our basis $\vec{B} \in \F_2^{k \times n}$ is smaller than the number of possible (non-zero) columns $2^k-1$, then it is certainly possible that no two columns will be identical. But, the birthday paradox tells us that even with just $n \approx 2^{k/2}$, a random matrix $\vec{B} \in \F_2^{k \times n}$ will typically have a repeated column. 
More generally, if a code $\C$ is generated by a random basis $\vec{B} \in \F_q^{k \times n}$, then we expect to have
$
    \eta(\C) > 1
$
with probability at least $1-1/\poly(n)$, provided that, say, $n \geq 10   q^{k/2}\log n$, or equivalently, provided that 
\[
    k \leq 2(\log_q n - \log_q (10\log n ))
    \; .
\]

We could now make a heuristic assumption that amounts to saying that the prefixes $\basis_{[1,i]}$ behave like random matrices with suitable parameters (in terms of the presence of repeated non-zero columns). We could then use such a heuristic to show that we expect the output of \cref{alg: total_backwards} to achieve
\[
    k_1 > (2-o(1)) \log_q n
    \; .
\]

We choose instead to present in \fullornot{\cref{sec:selective}}{the full version \citegs} a variant of \cref{alg: total_backwards} that provably achieves the above. This variant is identical to \cref{alg: total_backwards} except that instead of looking at all of $\basis_{[1,i]}$ and choosing the largest set of redundant coordinates in order to properly backward reduce $\basis_{[1,i]}$, the modified algorithm chooses the largest set of redundant coordinates \emph{from some small subset} of all of the coordinates. In other words, the modified algorithm ignores information. Because the algorithm ignores this information, we are able to rigorously prove that the algorithm achieves $k_1 \gtrsim 2\log_q n$ when its input is a random matrix (by arguing that at each step the algorithm has sufficiently many fresh independent random coordinates to work with). 

We think it is quite likely that \cref{alg: total_backwards} performs better (and certainly not much worse) than this information-ignoring variant. We therefore view this as strong heuristic evidence that \cref{alg: total_backwards} itself achieves $k_1 \gtrsim 2 \log_q n$. (This heuristic is also confirmed by experiment. See \fullornot{\cref{sec: experiments}}{the full version \citegs}.)

\subsubsection{Backward reducing without all of the columns}
\label{sec:selective}

\newcommand{\submatrix}[3]{\vec{#1}^{(#2,#3)}}

To describe this algorithm, it helps to define some notation. We write $\submatrix{B}{i}{j}$ for the submatrix of $\vec{B}$ consisting of the first $i$ rows and the first $j$ columns. E.g., if $\vec{B} \in \F_q^{k \times n}$ has $n$ columns, then $\submatrix{B}{i}{n} $ is simply $\basis_{[1,i]}$. (Of course, this notation is useful because we will not always take $j = n$.)

We present the algorithm as \cref{alg: selective}. The algorithm works by backward reducing $\submatrix{B}{i}{j}$ for decreasing $i$ and increasing $j$. In other words, it behaves like \cref{alg: total_backwards} except that it only considers the first $j$ columns of $\basis$, for some $j$ that increases as the algorithm progresses.

\RestyleAlgo{ruled}
\begin{algorithm2e}
\caption{Selective Backward Reduction}
\label{alg: selective}
\KwIn{A matrix $\vec{B} = (\vec{b}_1; \dots; \vec{b}_k) \in \F_q^{k \times n}$ and $\beta \geq 2$ dividing $n-k$}
\KwOut{A proper basis of $\C(\basis)$}

Let $S \subseteq [n]$ be the indices of the first $k$ linearly independent columns of $\vec{B}$.\ 

Fail if $S \not\subseteq [k+\beta]$ (or if no such set exists).\

Systematize $\vec{B}$ with respect to $S$.\

$j \leftarrow k+2\beta$ \ 

\For{$i = (n-k)/\beta-1,\ldots, 1$}{
    $\vec{A} \leftarrow \text{BackwardReduction}(\submatrix{B}{i}{j})$\

    $\vec{B} \leftarrow (\vec{A} \oplus \vec{I}_{k-i}) \vec{B}$\ 
    
    $j \leftarrow j+\beta$\
}

\Return $\vec{B}$
\end{algorithm2e}

We now prove that the algorithm achieves the desired result. In particular, notice that this next theorem implies that the output basis will have 
\[
    k_1 \geq 2\log_q(n-k) - O(\log_q \log_q(n-k))
\]
with high probability, if we take $\beta = (n-k)/\poly(\log_q(n-k))$. 

\begin{theorem}
    \label{thm:provable_k1}
    If \cref{alg: selective} is run on input a uniformly random matrix and 
    \[
        7 \leq \beta \leq \frac{n-k}{2 \log_q(n-k)+1}
    \; ,
    \]then for any $\eps \in (0,1)$, the resulting output matrix $\vec{B} := (\vec{b}_1;\ldots; \vec{b}_k) \in \F_q^{k \times n}$ will satisfy $|\vec{b}_i^+| \geq 2$ for all 
    \[
    i \leq 2\log_q \beta - \log_q (20\log(2/\eps))
    \]
    with probability at least $1 - \log_q(\beta) 2^{1-\beta/20} - \eps$.
\end{theorem}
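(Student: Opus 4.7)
The plan is to show that at the iteration when the loop variable equals $i$, the backward-reduction call forces $|\vec{b}_i^+| \geq 2$ with high probability, and then to union bound over all $i \leq i_0 := 2\log_q \beta - \log_q(20 \log(2/\eps))$. At such an iteration, the value of $j$ equals $j(i) := n - (i-1)\beta$ and the algorithm inspects only $\submatrix{B}{i}{j(i)}$; in particular, the last $\beta$ columns of this submatrix (positions $j(i)-\beta+1, \ldots, j(i)$) have not been inspected by any previous iteration, whose $j$-values were strictly smaller.

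Write the current matrix as $\vec{R}\vec{B}_{\mathrm{sys}}$, where $\vec{B}_{\mathrm{sys}}$ is the systematized input and $\vec{R} \in \F_q^{k \times k}$ is the invertible product of the transformations $\vec{A}'\oplus\vec{I}_{k-i'}$ applied in previous iterations. A simple induction on iterations shows that $\vec{R}$ is a deterministic function of only the first $j(i)-\beta$ columns of $\vec{B}_{\mathrm{sys}}$. Furthermore, since systematization performs an invertible row transformation depending only on the first $k+\beta$ columns of the original uniformly random input, the columns of $\vec{B}_{\mathrm{sys}}$ at positions greater than $k+\beta$ remain i.i.d.\ uniform on $\F_q^k$. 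Hence, conditional on $\vec{R}$, the $\beta$ fresh columns of the current matrix are i.i.d.\ uniform on $\F_q^k$, and restricting to their first $i$ entries yields $\beta$ i.i.d.\ uniform vectors in $\F_q^i$. Applying \cref{lem:repeated_column}, with probability at least $1 - 2^{-\beta/20} - 2^{-\beta^2/(20 q^i)}$ two of these vectors are equal and non-zero, yielding a redundant set of size two and hence $\eta(\C(\submatrix{B}{i}{j(i)})) \geq 2$. By \cref{clm:backward_run_time}, the backward-reduction call then produces $|\vec{b}_i^+| \geq 2$, and this is preserved by all subsequent iterations: each acts as $\vec{A}'\oplus\vec{I}_{k-i'}$ with $i'<i$, fixing row $i$ and preserving the span of rows $1, \ldots, i-1$ (since $\vec{A}'$ is invertible and rows $i'+1, \ldots, i-1$ are untouched), and so leaves $\vec{b}_i^+$ itself unchanged.

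For the union bound, the systematization step fails only when the first $k+\beta$ columns of the random input do not contain $k$ linearly independent columns, which by \cref{fact:random_full_rank} occurs with probability at most $q^{-\beta} \leq 2^{-\beta/20}$. Summing over $i = 1, \ldots, i_0 \leq 2\log_q \beta$, the $2^{-\beta/20}$ contributions total at most $\log_q(\beta) \cdot 2^{1-\beta/20}$, absorbing the systematization failure. For the $2^{-\beta^2/(20q^i)}$ terms, the choice of $i_0$ gives $\beta^2/(20q^{i_0}) = \log(2/\eps)$, so the $i=i_0$ term equals $\eps/2$, and each smaller $i$ multiplies the exponent by $q^{i_0-i} \geq 2^{i_0-i}$, yielding terms of the form $(\eps/2)^{2^{i_0-i}}$ that form a rapidly shrinking series summing to at most $\eps$ for any $\eps \in (0,1)$. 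The main obstacle in the argument is precisely this independence claim in the middle step: carefully tracking which columns of $\vec{B}_{\mathrm{sys}}$ each previously applied $\vec{A}'$ depends on, in order to legitimately condition on $\vec{R}$ while keeping the fresh columns uniformly random so that \cref{lem:repeated_column} applies.
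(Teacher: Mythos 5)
Your proposal is correct and takes essentially the same route as the paper's proof: the same freshness/independence argument (the paper phrases your $\vec{R}\vec{B}_{\mathrm{sys}}$ bookkeeping as multiplying by an invertible matrix chosen independently of the last $n-j$ columns), the same use of \cref{fact:random_full_rank} for systematization and \cref{lem:repeated_column} for the repeated-column event, the same preservation of $\vec{b}_i^+$ under later iterations, and the same union bound with the doubly-exponentially decaying series bounded by $\eps$.
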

\begin{proof}
    Since the $i$th iteration does not affect $\vec{b}_a^+$ for $a > i$ (\cref{lem: epipodal_locality}), it suffices to show that with the appropriate probability we have that $|\vec{b}_i^+| \geq 2$ at the end of the $i$th iteration, for all $i \leq 2\log_q \beta - \log_q (20\log(2/\eps))$.
    
    First, by \cref{fact:random_full_rank}, we see that the first $k+\beta$ columns do not contain a linearly independent set except with probability at most
    \[
        q^{-k-\beta} \cdot \frac{q^{k} - 1}{q-1} \leq q^{-\beta}
        \; .
    \]
    Therefore, the failure probability is bounded by $q^{-\beta}$.

    And, notice that the systemization step multiplies $\basis$ by an invertible matrix that was chosen independently of the last $n-k-\beta$ columns of $\basis$. So, after the systemization step, the last $n-k-\beta$ columns of $\basis$ are still uniformly random and independent of all other entries in the matrix.
    
    Now, similarly, notice that in every iteration of the for loop, we multiply the matrix $\vec{B}$ by an invertible matrix that was chosen independently of the last $n-j$ columns of the matrix. So, by a similar argument, we see that going into the $i$th iteration, the matrix $\vec{B}^{(i,j)} \in \F_q^{i \times j}$ has $\beta$ columns that are uniformly random and independent of all other entries in the matrix. Therefore by \cref{lem:repeated_column}, there will be at least two repeated non-zero columns in $\vec{B}^{(i,j)} \in \F_q^{i \times j}$ except with probability at most
    \[
        2^{-\beta/20} + 2^{-\beta^2/(20 q^i)}
        \; .
    \]
    Notice from our characterization of backward-reduced bases in \cref{sec:redundant}, this immediately implies that $|\vec{b}_i^+| \geq 2$ except with probability at most $2^{-\beta/20} + 2^{-\beta^2/(20 q^i)}$ for any $i \leq n/\beta - 1$. (Here, we are using the trivial fact that $|\vec{b}_i^+| \geq \eta(\C(\vec{B}^{(i,j)}))$.)
    
    Putting everything together, we see that by the union bound, the probability that this fails for at least one index $i \leq \tau := 2\log_q \beta - \log_q (20\log(2/\eps))$ (which in particular implies $i \leq n/\beta - 1$) is at most
    \[
        q^{-\beta} + \tau \cdot  2^{-\beta/20} + \sum_{i=1}^{\tau} 2^{-\beta^2/(20 q^i)} \leq 2 \log_q(\beta) \cdot 2^{-\beta/20} + \sum_{i=1}^{\tau} 2^{-\beta^2/(20 q^i)}
        \; .
        \]
    Finally, we bound the sum by $\eps$ by noting that, e.g.,
    \[
        \sum_{i=1}^{\tau} 2^{-\beta^2/(20 q^i)} = \sum_{i=0}^{\tau-1} 2^{- q^i \cdot \beta^2/(20 q^\tau)} = \sum_{i=0}^{\tau-1} (\eps/2)^{q^i} \leq \sum_{i=1}^{\infty} (\eps/2)^{i} \leq \eps
        \; ,
    \]
    as needed.
\end{proof}

We conclude this section with two remarks. First, by using suitable generalizations of \cref{lem:repeated_column} to multicollisions, we could prove a generalization of \cref{thm:provable_k1}, which would allow us to argue that we get $|\vec{b}_i^+| \geq t$ for all $i \lesssim t/(t-1) \cdot \log_q (n-k)$. Second, we note that the LLL algorithm does not reduce the parameter $k_1$, so we note in passing that one can combine \cref{alg: total_backwards} or \cref{alg: selective} with LLL without lowering $k_1$ (and without increasing $|\vec{b}_1|$). The same thing is not necessarily true for other basis-reduction algorithms (such as slide reduction and BKZ, as defined in \cref{sec: bkz,sec:slide}), but one can still imagine that composing \cref{alg: total_backwards} or \cref{alg: selective} with other reasonable basis-reduction algorithms would be unlikely to make the basis worse in practice, and perhaps would be fruitful.

    \section{BKZ and slide reduction for codes}

    \subsection{The BKZ algorithm for codes}
    \label{sec: bkz}
In this section, we introduce the notion of a BKZ-reduced basis, give an algorithm for finding such a basis, and analyze the properties of a BKZ-reduced basis. (Recall that for convenience if $\basis \in \F_q^{k \times n}$, we define $\basis_{[i,j]}$ to be $\basis_{[i,k]}$ for $j > k$.)

\begin{definition}
    \label{def: forward}
    For a basis $\vec{B} = (\vec{b}_1; \dots; \vec{b}_k) \in \F_q^{k \times n}$, we say that $\vec{B}$ is forward reduced if $\vec{b}_1$ is a shortest non-zero codeword in $\C(\vec{B})$.
\end{definition}
\begin{definition}
    A basis $\vec{B} = (\vec{b}_1; \dots; \vec{b}_k)$ is said to be $\beta$-BKZ reduced if $\vec{B}_{[i, i+\beta-1]}$ is forward reduced for all $i \in [1, k]$.
\end{definition}
Notice that BKZ-reduced bases are proper by definition, since by definition $\vec{b}_i^+$ must be a \emph{non-zero} codeword in the code generated by $\vec{B}_{[i, i+\beta-1]}$. Furthermore, a BKZ-reduced basis always exists. In particular, if we simply take $\vec{b}_1$ to be a shortest non-zero codeword in $\C$ (which guarantees that it is also a shortest codeword in $\C(\vec{B}_{[1,\beta]})$ for any basis $\basis$ with first row equal to $\vec{b}_1$), then $\pi_1(\C)$ has rank $k-1$ (by \cref{fact: primitive}). And, by induction on $k$, we may assume that $\pi_1(\C)$ has a BKZ-reduced basis.

With that, we can present the BKZ algorithm. See \cref{alg:bkz}. Notice that the algorithm relies on a subprocedure for finding shortest codewords in codes with dimension (at most) $\beta$. (So, one might call this a reduction, but we adopt the convention used in the lattice literature and call it an algorithm.)

\begin{algorithm2e}
\caption{BKZ}
\label{alg:bkz}

\KwIn{A proper basis $\vec{B} = (\vec{b}_1; \dots; \vec{b}_k) \in \F_q^{k \times n}$ for $\C$ and $\beta \in [2, k]$}
\KwOut{A $\beta$-BKZ reduced basis $\vec{B} = (\vec{b}_1; \dots; \vec{b}_k) \in \F_q^{k \times n}$ for $\C$}

$i \gets 1$\

\While{$i < k$}{
    $j \leftarrow \min(i+\beta-1,k)$ \ 
    
    \If{$\vec{B}_{[i, j]}$ is not forward reduced}{
    
        $\vec{p} \leftarrow $ shortest non-zero codeword in $\C(\vec{B}_{[i, j]})$\
        
        $\vec{A} \leftarrow \text{InsertPrimitive}(\vec{B}_{[i, j]}, \vec{p})$\
        
        $\vec{B} \leftarrow (\vec{I}_{i-1} \oplus \vec{A} \oplus \vec{I}_{k-j)}) \vec{B}$\
        
        $i \gets \max(1, i-\beta+1)$\
    }
    \Else{
        $i \gets i+1$\
    }
}
\textbf{output} $\vec{B}$
\end{algorithm2e}

We now show that the algorithm does in fact output a $\beta$-BKZ-reduced basis (when it terminates).

\begin{claim}
    If on inputs a basis $\vec{B}$ and block size $\beta$ the BKZ algorithm terminates, then it outputs a $\beta$-BKZ reduced basis for $\C(\vec{B})$.
\end{claim}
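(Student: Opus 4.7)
The plan is to establish a loop invariant and show that (1) it is initially true, (2) it is preserved by each iteration, and (3) upon termination it implies that $\vec{B}$ is $\beta$-BKZ reduced. The invariant I would maintain is: at the top of each iteration of the while loop, $\vec{B}$ is a proper basis for $\C(\vec{B}_{\text{input}})$, and for every $i' < i$, the block $\vec{B}_{[i', \min(i' + \beta - 1, k)]}$ is forward reduced. Initially $i = 1$, so the invariant reduces to the properness of the input (given). Upon termination the while loop exits with $i = k$ (since $i$ is only ever incremented by one, and this happens at most until $i$ reaches $k$), so the invariant yields that every block $\vec{B}_{[i', \min(i' + \beta - 1, k)]}$ for $i' < k$ is forward reduced. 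For $i' = k$, the block $\vec{B}_{[k, k]} = (\vec{b}_k^+)$ is a one-dimensional code whose non-zero codewords are all scalar multiples of $\vec{b}_k^+$, hence all have the same Hamming weight, so it is trivially forward reduced. Thus the output is $\beta$-BKZ reduced.

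To preserve the invariant, I would consider the two cases separately. In the else branch, the basis is unchanged and $i$ is incremented, while the preceding check guarantees $\vec{B}_{[i, \min(i+\beta-1,k)]}$ is forward reduced, which extends the invariant to the new value of $i$. In the if branch, the update $\vec{B} \leftarrow (\vec{I}_{i-1} \oplus \vec{A} \oplus \vec{I}_{k-j}) \vec{B}$ uses an invertible matrix, so $\C(\vec{B})$ is fixed. By \cref{lem: lifting_reduction}, the new block $\vec{B}_{[i,j]}$ equals $\vec{A}\, \vec{B}_{[i,j]}$, which is proper with first row $\vec{p}$ by the guarantee of $\mathsf{InsertPrimitive}$ (\cref{claim: insert_primitive}). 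By \cref{lem: epipodal_locality}, $\vec{b}_m^+$ is unchanged for $m \notin [i, j]$ and hence remains non-zero by the inductive hypothesis; a short computation comparing the epipodal vectors of the full basis at positions $m \in [i,j]$ with those of the block $\vec{A}\,\vec{B}_{[i,j]}$ confirms they coincide, so they too are non-zero. Hence the updated $\vec{B}$ is still a proper basis for $\C$.

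For the forward-reduction part of the invariant under the if branch, note that we reset $i \leftarrow i^* := \max(1, i - \beta + 1)$. For any $i' < i^*$, the block $\vec{B}_{[i', \min(i' + \beta - 1, k)]}$ uses only rows $\vec{b}_{i'}, \ldots, \vec{b}_{i'+\beta-1}$, all with indices strictly less than $i$, and its projections are orthogonal to $\vec{b}_1, \ldots, \vec{b}_{i'-1}$, also untouched. Since the update only modifies rows in $[i, j]$, all such blocks are literally unchanged and therefore remain forward reduced by the inductive hypothesis. This justifies the reset.

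The only subtle step is the identification of the epipodal vectors of the full basis at positions in $[i, j]$ with those of the block after applying $\vec{A}$. This follows because $\pi^\perp_{\vec{b}_1',\ldots,\vec{b}_{m-1}'}$ zeros out exactly the coordinates in $\supp(\vec{b}_1, \ldots, \vec{b}_{i-1}) \cup \supp(\vec{b}'_i, \ldots, \vec{b}'_{m-1})$, which matches the two-stage projection (orthogonal to $\vec{b}_1,\ldots,\vec{b}_{i-1}$, then orthogonal to the rows of $\vec{A}\,\vec{B}_{[i,j]}$) defining the epipodal vector inside the block. With this observation, combining the invariant argument with \cref{lem: lifting_reduction}, \cref{lem: epipodal_locality}, and \cref{claim: insert_primitive} completes the proof.
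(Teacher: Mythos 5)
Your proof is correct and follows essentially the same argument as the paper: the identical loop invariant (all blocks starting before the current index $i$ are forward reduced), the same two-case preservation analysis using \cref{lem: lifting_reduction}, \cref{lem: epipodal_locality}, and \cref{claim: insert_primitive}, with the added benefit that you make the properness bookkeeping and the trivial final block $\vec{B}_{[k,k]}$ explicit. The only detail worth stating explicitly is that the shortest non-zero codeword $\vec{p}$ of $\C(\vec{B}_{[i,j]})$ is primitive in that code by \cref{fact: primitive}, which is what makes it a valid input to Insert Primitive.
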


\begin{proof}
    Notice that $\vec{B}$ is only ever multiplied by invertible matrices. Thus, $\C(\vec{B})$ remains unchanged throughout the algorithm, so the resulting output is in fact a basis for the correct code. 
    
    Furthermore, notice that a shortest non-zero codeword $\vec{p} \in \C(\vec{B}_{[i, j]})$ is primitive in the code $\C(\vec{B}_{[i, j]})$ by \cref{fact: primitive}. Therefore, $(\vec{B}_{[i, j]}, \vec{p})$ is valid input to \cref{alg: insert_primitive}.

    We now show that, if \cref{alg:bkz} terminates, then $\vec{B}$ is $\beta$-BKZ reduced. In particular, we show that at every step of the algorithm, the blocks  $\vec{B}_{[1, \beta]}, \dots, \vec{B}_{[i-1, i+\beta-2]}$ are forward reduced. 
    
    The statement is vacuously true at the start of the algorithm because in this case $i = 1$. We prove that the algorithm maintains this invariant by induction, i.e., we assume that 
    $\vec{B}_{[1, \beta]}, \dots, \vec{B}_{[i-1, i+\beta-2]}$ are all forward reduced at the start of a loop and prove that this is maintained at the end of the loop. 
    If $\vec{B}_{[i,i+\beta-1]}$ also happens to be forward reduced at the start of the loop, then the algorithm will increment $i$ by one, and clearly the invariant is preserved. If not, then the algorithm performs changes to $\basis$ that have no effect on $\basis_{[1,i-1]}$. Therefore, blocks $\basis_{[1,\beta]},\ldots, \basis_{[i'-1,i'+\beta-2]}$ remain forward reduced where $i' := \max(1,i-\beta+1)$. The algorithm then updates $i$ to $i'$, and clearly the invariant is preserved.
\end{proof}

Unfortunately, just like for lattices, we are not able to show that the BKZ algorithm terminates efficiently. But, we do show in \cref{sec: experiments} experiments in which it runs rather efficiently in practice. For completeness, we include in \cref{app:BKZ_running_time} a proof that it terminates in time roughly $n^k$. (Of course, this is not an interesting running time. In particular, it is slower than brute-force searching over the entire code for $q \leq n$.)

\subsubsection{LLL revisited}
\label{sec:LLL}

We now take a brief detour to revisit the LLL algorithm, which is simply \cref{alg:bkz} in the special case when $\beta = 2$.

\begin{definition}
    An LLL-reduced basis is a $2$-BKZ reduced basis.
\end{definition}

\cite{DDvAlgorithmicReductionTheory2022} showed an upper bound of $O(k n^3)$ on the performance of the LLL algorithm for $q = 2$. Here, we show that the algorithm works for $\F_q$ for arbitrary $q$ and, with slightly more careful analysis, we bound the running time by $O(k n^2 \log^2 q)$.

\begin{claim}
    For $\beta = 2$, \cref{alg:bkz} runs in time $O(k n^2 \log^2(q))$.
\end{claim}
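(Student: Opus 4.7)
The plan is to bound the number of iterations of the outer while loop by $O(kn)$ and the cost of each iteration by $O(n \log^2 q)$, yielding the claim.

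\textit{Counting iterations.} I would use the potential function
\[
\Phi(\basis) := \sum_{i=1}^{k} i \cdot |\vec{b}_i^+|
\; .
\]
A ``modifying'' iteration (one in which \cref{alg:bkz} actually updates $\basis$) replaces $\vec{b}_i^+$ with a strictly shorter codeword of $\C(\basis_{[i,i+1]})$, and therefore decreases $|\vec{b}_i^+|$ by at least $1$. By \cref{fact: lengt_invariance}, $|\vec{b}_i^+| + |\vec{b}_{i+1}^+| = |\supp(\C(\basis_{[i,i+1]}))|$ is invariant under this linear change of block, so $|\vec{b}_{i+1}^+|$ increases by the same amount. By \cref{lem: epipodal_locality}, every other $|\vec{b}_m^+|$ is unchanged. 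Hence $\Phi$ strictly increases by at least $1$ per modifying iteration, and since $\Phi \leq k \sum_i |\vec{b}_i^+| \leq kn$, there are at most $O(kn)$ modifying iterations. Because $i$ advances by $1$ on non-modifying iterations and retreats by at most $1$ on modifying iterations (staying in $[1,k]$), the total number of iterations is also $O(kn)$.

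\textit{Cost per iteration.} With the precomputed support matrix $\vec{S}(\basis)$, fetching $\basis_{[i,i+1]}$ and refreshing $\vec{S}$ after any local update each cost $O(n \log q)$. The main subtask is finding a shortest non-zero codeword in the dimension-two code $\C(\basis_{[i,i+1]})$ generated by $\vec{u} := \vec{b}_i^+$ and $\vec{v} := \pi_i(\vec{b}_{i+1})$. Up to non-zero scalars the non-zero codewords are $\vec{u}$ together with $\{z\vec{u} + \vec{v} : z \in \F_q\}$, and a direct calculation gives
\[
|z\vec{u} + \vec{v}| \;=\; |\{j \notin \supp(\vec{u}) : v_j \neq 0\}| \;+\; |\supp(\vec{u})| \;-\; |\{j \in \supp(\vec{u}) : z = -v_j u_j^{-1}\}|
\; .
\]
Minimizing the right-hand side over $z \in \F_q$ thus reduces to finding the mode of the length-at-most-$n$ list $\{-v_j u_j^{-1} : j \in \supp(\vec{u})\}$, which can be done in $O(n \log^2 q)$ time (e.g., by sorting using $O(\log q)$-bit comparisons). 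Comparing the resulting minimum with $|\vec{u}|$ identifies a shortest non-zero codeword $\vec{p}$, which is automatically primitive; the subsequent InsertPrimitive call on this $2 \times 2$ block costs another $O(n \log^2 q)$ by \cref{claim: insert_primitive}.

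Multiplying gives total running time $O(kn) \cdot O(n \log^2 q) = O(k n^2 \log^2 q)$. The main obstacle to the claimed bound is avoiding a spurious factor of $q$: naively enumerating the $q$ scalars $z$ and recomputing $|z\vec{u}+\vec{v}|$ from scratch each time would cost $\Theta(qn)$ per iteration, giving $O(k q n^2)$ overall; the mode-finding reformulation above is precisely what removes this factor.
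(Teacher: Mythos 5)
Your proposal is correct and follows essentially the same route as the paper: a potential-function argument showing at most $kn$ basis updates (your $\sum_i i\,\ell_i$ increasing by at least $1$ per update is just the mirror image of the paper's decreasing potential $\sum_j (k-j+1)\ell_j$), the same bookkeeping to bound total loop iterations by $O(kn)$, and the same $O(n\log^2 q)$ per-iteration cost via the two-dimensional shortest-codeword subprocedure (your mode-finding formula is exactly the paper's appendix subroutine) plus InsertPrimitive on a dimension-two block.
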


\begin{proof}
    Let $\ell_i := |\vec{b}_i^+|$.
    Consider the following potential function
    $$\Phi(\ell_i, \dots, \ell_k) = \sum_{j=1}^k (k-j+1) \ell_j$$
    Notice that since $\ell_i \geq 1$ and $\sum \ell_i = n$, it follows that $\Phi$ is upper bounded by $kn$. Furthermore, $\Phi$ is lower bounded by $0$ since $\ell_i \geq 0$.
    
    Next we show any time $\vec{B}$ is updated, $\Phi$ decreases by at least one. Say \cref{alg:bkz} updates block $i$ (ensures $\vec{B}_{[i, i+1]}$ is forward reduced). Let $\ell_j$ be the length of the $j$th epipodal vector before $\vec{B}$ is updated, and $\ell_j'$ be the length after. Consider the change to $\Phi$ when block $i$ is updated. We note three key facts. The first is that if $j \neq i$ and $j \neq i+1$, then $\ell_j = \ell_j'$ (\cref{lem: epipodal_locality}). The second is that $\ell_i' < \ell_i$, which follows from the fact that block $i$ was updated. The third is that $\ell_{i}+\ell_{i+1} = \ell'_{i}+\ell'_{i+1}$ (\cref{lem: constant_epipodal_sum}). These three facts are enough to show that $\Phi$ decreases by at least one each time that $\vec{B}$ is updated.  In particular,
    \begin{align*}
        \Phi(\ell_1, \dots, \ell_k) - \Phi(\ell_1', \dots, \ell_k')
        &= \sum_{j = 1}^k (n-j+1) \ell_j - \sum_{j = 1}^k (n-j+1) \ell_j'\\
        &= (n-i+1) \ell_i + (n-i) \ell_{i+1} - (n-i+1) \ell_i' - (n-i) \ell_{i+1}'\\
        &= (n-i+1) (\ell_i - \ell_i') + (n-i)(\ell_{i+1} - \ell_{i+1}')\\
        &=(n-i+1) (\ell_i - \ell_i') - (n-i)(\ell_i - \ell_i')\\
        &= \ell_i - \ell_i' \geq 1
        \; .
    \end{align*}
    
    Finally, we show that the loop in \cref{alg:bkz} terminates in $2 nk + k$ iterations. Let $u$ (for up) denote the number of times the algorithm increments $i$ and $d$ (for down) denote the number of times the algorithm does not increment $i$ (typically decrementing $i$, but possibly leaving it unchanged, as it does when $i = 1$). Notice that the algorithm runs for $u+d$ iterations since each iteration either results in $i$ being incremented or not being incremented.
     We must have that $u-d \leq k$, since the algorithm terminates as soon as $i = k$ but the final index $i$ is clearly at least $u-d$ (since $i$ is incremented $u$ times and decreases by at most $1$ a total of $d$ times). And, we must have $d \leq nk$ since $i$ is only not incremented when $\vec{B}$ is updated, which we showed happens at most $nk$ times. It follows that
     \[
        u+d = 2 d + u - d \leq  2 d + k \leq 2 nk + k
        \; .
     \]
     So, \cref{alg:bkz} terminates after at most $2 nk + k$ iterations, as claimed.
    
    Since each iteration of the while loop takes $O(n \log^2 q)$ time (see \cref{subsec: log_q} for the running time needed to find a shortest non-zero codeword in $\C(\basis_{[i,i+1]})$ and \cref{claim: insert_primitive} for the running time of the Insert Primitive algorithm), the algorithm runs in time $O(k n^2 \log^2q)$, as claimed.
\end{proof}

We now generalize the fact that an LLL-reduced basis meets the Griesmer bound from $\F_2$ to $\F_q$ for any prime power $q$.
\begin{lemma}
    \label{lem: epipodal_decay}
    If $\vec{B} = (\vec{b}_1; \dots; \vec{b}_k) \in \F_q^{k \times n}$ be an LLL-reduced basis, then for all $1 \leq i \leq k-1$
    $$|\vec{b}_{i+1}^+| \geq \left\lceil \frac{|\vec{b}_i^+|}{q} \right\rceil 
    \; .$$
\end{lemma}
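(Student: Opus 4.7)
The plan is to reduce the statement to the two-dimensional block $\vec{B}_{[i,i+1]}$ and exploit the defining property of LLL-reduction there, together with a pigeonhole argument that mirrors the proof of the $\F_2$ case in \cite{DDvAlgorithmicReductionTheory2022}. Let $\vec{u} := \vec{b}_i^+$ and $\vec{v} := \pi_i(\vec{b}_{i+1})$, so that the two rows of $\vec{B}_{[i,i+1]}$ are $\vec{u}$ and $\vec{v}$. Decompose $\vec{v} = \vec{v}_1 + \vec{v}_2$ with $\supp(\vec{v}_1) \subseteq \supp(\vec{u})$ and $\supp(\vec{v}_2) \cap \supp(\vec{u}) = \emptyset$. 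Since $\pi^\perp_{\vec{u}}(\vec{v}) = \vec{v}_2$, we have $\vec{b}_{i+1}^+ = \vec{v}_2$, so writing $\ell_i := |\vec{u}|$ and $\ell_{i+1} := |\vec{v}_2|$ is consistent with the epipodal-length notation.

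Next, I would write an arbitrary nonzero codeword in $\C(\vec{B}_{[i,i+1]})$ as $a\vec{u} + b\vec{v}$ with $(a,b) \neq (0,0)$ and compute its weight using the disjointness of the supports of $\vec{u}$ (which contains $\supp(\vec{v}_1)$) and of $\vec{v}_2$:
\[
|a\vec{u} + b\vec{v}| \;=\; |a\vec{u} + b\vec{v}_1| \;+\; |b\vec{v}_2|.
\]
For $b = 0$ this gives weight $\ell_i$ (when $a \neq 0$); for $b \neq 0$, it gives $|a\vec{u} + b\vec{v}_1| + \ell_{i+1}$. Since $\vec{B}$ is LLL-reduced, $\vec{u}$ is a shortest non-zero codeword of $\C(\vec{B}_{[i,i+1]})$, which forces
\[
\ell_i - \ell_{i+1} \;\leq\; \min_{a \in \F_q,\, b \in \F_q^*} |a\vec{u} + b\vec{v}_1|.
\]

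The main step (and the only nontrivial ingredient) is to upper-bound that minimum via pigeonhole. Fix any $b \in \F_q^*$. For each $j \in \supp(\vec{u})$ the coefficient $u_j$ is nonzero, so $(a\vec{u} + b\vec{v}_1)_j$ vanishes exactly for the unique scalar $a = -b(v_1)_j u_j^{-1}$. As $j$ ranges over $\supp(\vec{u})$ this assigns one element of $\F_q$ to each of $\ell_i$ coordinates, so by pigeonhole some $a^* \in \F_q$ is assigned at least $\lceil \ell_i / q\rceil$ times. For that $a^*$, $a^*\vec{u} + b\vec{v}_1$ vanishes on at least $\lceil \ell_i/q\rceil$ coordinates of $\supp(\vec{u})$, and it vanishes outside $\supp(\vec{u})$ automatically, so $|a^*\vec{u} + b\vec{v}_1| \leq \ell_i - \lceil \ell_i/q\rceil$.

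Plugging this into the displayed inequality yields $\ell_i - \ell_{i+1} \leq \ell_i - \lceil \ell_i/q\rceil$, i.e.\ $\ell_{i+1} \geq \lceil \ell_i/q\rceil$, which is the claim. The only subtle point is ensuring that the codeword $a^*\vec{u} + b\vec{v}$ used in the comparison is genuinely nonzero (so that the LLL shortness comparison applies), but this is automatic since $b \neq 0$ and $\vec{v}_2 \neq \vec0$ implies $b\vec{v}_2 \neq \vec0$, hence $a^*\vec{u} + b\vec{v} \neq \vec0$.
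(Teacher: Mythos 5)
Your proof is correct and follows essentially the same route as the paper's: the paper also fixes the block $\basis_{[i,i+1]}$, observes that each coordinate of $\supp(\vec{b}_i^+)$ is zeroed by exactly one choice of the coefficient $a$, and deduces by averaging (your pigeonhole) that some $\vec{b}_{i+1}+a\vec{b}_i$ has projected weight at most $|\vec{b}_{i+1}^+|+(1-1/q)|\vec{b}_i^+|$, which combined with the forward-reduced property of the block gives the bound. The only cosmetic differences are that you carry an extra scalar $b\in\F_q^*$ (the paper takes $b=1$) and that your pigeonhole yields the ceiling directly, whereas the paper rounds at the end using integrality of $|\vec{b}_{i+1}^+|$.
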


\begin{proof}
    For $a \in \F_q$, let $s_a := | \pi_{\vec{b}_i^+}(\vec{b}_{i+1} + a\vec{b}_i)|$. Notice that for every coordinate $j \in \supp(\vec{b}_i)$, there is precisely one 
    value of $a \in \F_q$ such the $j$th coordinate of $\vec{b}_{i+1} + a\vec{b}_i$ is zero. It follows that 
    \[
        \sum_{a \in \F_q} s_a = (q-1) \cdot |\vec{b}_i^+|
        \; .
    \]
    Therefore, there must exist some $a \in \F_q$ such that $s_a \leq (1-1/q)|\vec{b}_i^+|$. For this choice of $a$, we have
    \[
        |\pi_i(\vec{b}_{i+1} + a \vec{b}_i)| = |\vec{b}_{i+1}^+| + s_a \leq |\vec{b}_{i+1}^+| + (1-1/q)|\vec{b}_i^+|
        \; .
    \]
    But, since $\vec{b}_i^+$ is a shortest non-zero codeword in the code generated by $\basis_{[i,i+1]}$, it follows that 
    \[
        |\vec{b}_i^+| \leq |\vec{b}_{i+1}^+| + (1-1/q)|\vec{b}_i^+|
        \; .
    \]
    Rearranging, we see that $|\vec{b}_{i+1}^+| \geq |\vec{b}_i^+|/q$, and since $|\vec{b}_{i+1}^+|$ is an integer, we may take the ceiling, as needed.
\end{proof}

\begin{corollary}
    Let $\basis := (\vec{b}_1; \dots; \vec{b}_k) \in \F_q^{k \times n}$ be a LLL-reduced basis. Then,
    $$n \geq \sum_{i=0}^{k-1} \left\lceil \frac{|\vec{b}_1|}{q^i} \right\rceil$$
\end{corollary}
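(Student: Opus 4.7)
The plan is to chain together Lemma~\ref{lem: epipodal_decay} with the identity $\sum_{i=1}^k |\vec{b}_i^+| = |\supp(\C)| \leq n$ from \cref{fact: lengt_invariance}. The only real work is propagating the epipodal decay bound through a telescoping iteration, which requires a small lemma about nested ceilings.

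First I would establish by induction on $i$ that $|\vec{b}_{i+1}^+| \geq \lceil |\vec{b}_1|/q^i \rceil$ for all $0 \leq i \leq k-1$. The base case $i=0$ is just the identity $\vec{b}_1^+ = \vec{b}_1$. For the inductive step, Lemma~\ref{lem: epipodal_decay} gives $|\vec{b}_{i+2}^+| \geq \lceil |\vec{b}_{i+1}^+|/q \rceil \geq \lceil \lceil |\vec{b}_1|/q^i \rceil / q \rceil$, using the inductive hypothesis and the fact that $\lceil \cdot \rceil$ is monotone. The remaining step is the standard identity $\lceil \lceil x/a \rceil / b \rceil = \lceil x/(ab) \rceil$ for positive integers $a, b$ and any $x > 0$, which I would invoke with $a = q^i$ and $b = q$ to conclude $|\vec{b}_{i+2}^+| \geq \lceil |\vec{b}_1|/q^{i+1} \rceil$.

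Having the bound $|\vec{b}_{i+1}^+| \geq \lceil |\vec{b}_1|/q^i \rceil$ for each $i \in \{0, 1, \dots, k-1\}$, I would then sum over $i$ and apply \cref{fact: lengt_invariance} together with $|\supp(\C)| \leq n$:
\[
n \;\geq\; |\supp(\C(\basis))| \;=\; \sum_{i=1}^{k} |\vec{b}_i^+| \;\geq\; \sum_{i=0}^{k-1} \left\lceil \frac{|\vec{b}_1|}{q^i} \right\rceil,
\]
which is exactly the claimed inequality.

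The only potential obstacle is the nested-ceiling identity, and even this is routine: writing $x = q^i m + r$ with $0 \leq r < q^i$, one checks that both sides equal $m + \lceil r/q^i \cdot 1/q \rceil$-style adjustments that collapse to $\lceil x/q^{i+1} \rceil$. I would either include this as a one-line observation or cite it as a well-known fact. Everything else is immediate from results already established in the excerpt.
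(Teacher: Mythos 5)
Your proposal is correct and matches the paper's argument: the paper derives the corollary directly from \cref{fact: lengt_invariance} and \cref{lem: epipodal_decay}, and your induction with the standard nested-ceiling identity $\lceil\lceil x/a\rceil/b\rceil = \lceil x/(ab)\rceil$ simply spells out the details that the paper leaves implicit.
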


\begin{proof}
    This follows immediately from \cref{fact: lengt_invariance} and \cref{lem: epipodal_decay}.
\end{proof}     \subsubsection{Bounding \texorpdfstring{$|\vec{b}_1|$}{|b1|}}
    \label{subsec: bkz_bound_l1}
We now study the length of $\vec{b}_1$ achieved by BKZ-reduced bases. To that end, we will need to consider the following function, which is extremely well studied in coding theory.

\begin{definition}
    \label{def: len_bounding_function}
    Let $s_q(d, k): \mathbb{N} \times \mathbb{N} \to \mathbb{N}$ be the minimal value such that there exists a $[s_q(d, k), k, d]_q$ code. 
\end{definition}

Unfortunately, $s_q(d,k)$ does not seem to have a simple closed form in general. Instead, there are many different bounds on $s_q(d,k)$ that are tight in different regimes.

We first observe that the lengths $\ell_1,\ldots, \ell_n$ of the epipodal vectors of a BKZ-reduced basis must satisfy many constraints in terms of $s_q(d,k)$. However, because of the fact that $s_q(d,k)$ does not have a simple form, we do not know whether the problem of determining the maximal value of $|\vec{b}_1|$ that satisfies these constraints is efficiently computable. (Notice that we do not only include constraints in which $j = \max(k,i+\beta-1)$ but also include smaller values of $j$. It is not hard to find examples where these additional constraints with smaller $j$ significantly change the feasible set of this program.)

\begin{lemma}
    If $\basis := (\vec{b}_1;\ldots; \vec{b}_k) \in \F_q^{k \times n}$ is a $\beta$-BKZ-reduced basis with $\ell_i := |\vec{b}_i^+|$, then the $\ell_i$ are integers satisfying the following constraints. For all $i$, $\ell_i \geq 1$;
    \[
    \ell_1 + \cdots + \ell_k \leq n
    \; ;
    \]
    and for any integers $1 \leq i \leq j \leq \max(k,i+\beta-1)$,
    \[
        s_q(\ell_i,j-i+1) \leq \ell_i + \cdots + \ell_j
        \; .
    \]
\end{lemma}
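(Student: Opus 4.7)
The plan is to verify the three stated families of inequalities in turn. The first two are essentially one-line consequences of earlier results. The bound $\ell_i \geq 1$ is immediate because a $\beta$-BKZ-reduced basis is proper by definition (each $\vec{b}_i^+$ is required to be a \emph{non-zero} codeword of $\C(\basis_{[i,i+\beta-1]})$), so every $\ell_i$ is a positive integer. The sum bound $\ell_1 + \cdots + \ell_k \leq n$ is then a direct consequence of \cref{fact: lengt_invariance}, which identifies the sum of epipodal lengths with $|\supp(\C)|$, itself at most $n$.

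The third family is the substantive part, and the plan is to produce, for each admissible pair $(i,j)$, an explicit witness code realizing the bound $s_q(\ell_i, j-i+1) \leq \ell_i + \cdots + \ell_j$. The natural candidate is $\C(\basis_{[i,j]})$, viewed as a code on its own support. Three properties need to be checked. First, $\basis_{[i,j]}$ is itself proper (by the characterization in \cref{fact:primitive_and_proper}), so $\C(\basis_{[i,j]})$ has dimension exactly $j-i+1$. Second, the epipodal vectors of $\basis_{[i,j]}$ are precisely $\vec{b}_i^+, \ldots, \vec{b}_j^+$, so a second application of \cref{fact: lengt_invariance} gives that this code has support size $\ell_i + \cdots + \ell_j$. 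Third, the minimum distance of $\C(\basis_{[i,j]})$ is at least $\ell_i$; this is where BKZ-reducedness is used.

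For the last property, the plan is to observe that $\C(\basis_{[i,j]}) \subseteq \C(\basis_{[i,i+\beta-1]})$, with the convention $\basis_{[i,j']} = \basis_{[i,k]}$ for $j' > k$ handling the boundary case when $i+\beta-1 > k$. Since $\basis_{[i,i+\beta-1]}$ is forward reduced by hypothesis, its shortest non-zero codeword has weight $\ell_i$, so every non-zero codeword of the subcode has weight at least $\ell_i$ as well. Combining the three properties, the support-restriction of $\C(\basis_{[i,j]})$ realizes an $[\,\ell_i + \cdots + \ell_j,\,j-i+1,\,d\,]_q$ code with $d \geq \ell_i$, and monotonicity of $s_q(\cdot, k)$ in its first argument then yields $s_q(\ell_i, j-i+1) \leq \ell_i + \cdots + \ell_j$. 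I do not expect any serious obstacle; the argument is essentially bookkeeping, and the only real subtlety is the case analysis around the $j > k$ convention that justifies writing the upper bound on $j$ as $\max(k, i+\beta-1)$ rather than $\min$.
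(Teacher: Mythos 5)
Your proposal is correct and follows essentially the same route as the paper's proof: properness gives $\ell_i \geq 1$, the support identity gives the sum bound, and the key inequality comes from viewing $\C(\basis_{[i,j]})$ as a subcode of the forward-reduced block $\C(\basis_{[i,i+\beta-1]})$, so that it is a $(j-i+1)$-dimensional code of support size $\ell_i + \cdots + \ell_j$ and minimum distance $\ell_i$, whence the bound by definition of $s_q$. Your extra appeal to monotonicity of $s_q$ is harmless but unnecessary, since $\vec{b}_i^+$ itself lies in $\C(\basis_{[i,j]})$, making the minimum distance exactly $\ell_i$.
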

\begin{proof}
    The fact that the $\ell_i$ are positive integers follows from the fact that $\vec{B}$ is a proper basis. The fact that they sum to at most $n$ follows from the fact that their sum is precisely $|\supp(\C)| \leq n$ (\cref{fact: lengt_invariance}). 
    
    Finally, we fix $1 \leq i \leq j \leq \max(k,i+\beta-1)$ and show that $s_q(\ell_i,j-i+1) \leq \ell_i + \cdots + \ell_j$. 
    Since by definition $\ell_i$ is a shortest non-zero codeword in $\C(\basis_{[i,i+\beta-1]})$, it is also a shortest non-zero codeword in $\C^\dagger := \C(\basis_{[i,j]})$ (since $\C^\dagger$ is a subcode of $\C(\basis_{[i,i+\beta-1]})$). In other words, $C^\dagger$ is an $[n^\dagger,j-i+1,\ell_i]$ code. Therefore, $n^\dagger \geq s_q(\ell_i,j-i+1)$. But, (again by \cref{fact: lengt_invariance}) its support size is $n^\dagger = \ell_i + \cdots + \ell_j$, as needed.
\end{proof}

We now derive an efficiently computable bound on $|\vec{b}_1|$ for a BKZ-reduced basis. We start with the following lemmas, the first of which is simply a puncturing argument.

\begin{lemma}
    \label{lem: shortening}
    For any prime power $q$, positive integers $d$ and $\beta$, and positive integer $x < d$, $s_q(d, \beta) \geq s_q(d-x, \beta)+x$.
\end{lemma}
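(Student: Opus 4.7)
The plan is to prove this via a standard puncturing argument. Recall that $s_q(d,\beta)$ is the minimal length $n$ such that there exists an $[n,\beta,d]_q$ code. So the natural strategy is to start from an optimal $[n,\beta,d]_q$ code with $n = s_q(d,\beta)$, delete $x$ coordinates to obtain a code of length $n-x$, and verify that the resulting code has dimension $\beta$ and minimum distance at least $d-x$. This would yield an $[n-x,\beta,d-x]_q$ code (or better), which immediately gives $n - x \geq s_q(d-x,\beta)$, i.e., $s_q(d,\beta) \geq s_q(d-x,\beta) + x$, as desired.

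More concretely, I would let $\mathcal{C} \subseteq \F_q^n$ be any $[n,\beta,d]_q$ code with $n = s_q(d,\beta)$, pick an arbitrary subset $T \subseteq [n]$ with $|T| = x$, and define $\mathcal{C}' := \{\vec{c}|_{[n] \setminus T} \; : \; \vec{c} \in \mathcal{C}\} \subseteq \F_q^{n-x}$, where $\vec{c}|_{[n] \setminus T}$ denotes the projection of $\vec{c}$ onto the coordinates in $[n] \setminus T$ (using the notation introduced in the preliminaries). Then I need to verify two things about $\mathcal{C}'$: the dimension is still $\beta$, and the minimum distance is still at least $d-x$.

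For the dimension, the puncturing map $\vec{c} \mapsto \vec{c}|_{[n] \setminus T}$ is $\F_q$-linear, so it suffices to show it has trivial kernel on $\mathcal{C}$. But a codeword in the kernel would have support contained in $T$, hence Hamming weight at most $|T| = x < d$; by the definition of minimum distance, the only such codeword is $\vec{0}$. So the map is injective and $\dim(\mathcal{C}') = \beta$. For the minimum distance, any nonzero $\vec{c}' \in \mathcal{C}'$ is the image of a nonzero $\vec{c} \in \mathcal{C}$ with $|\vec{c}| \geq d$, and deleting $x$ coordinates reduces the weight by at most $x$, so $|\vec{c}'| \geq d - x$. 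Note $d-x \geq 1$ by the hypothesis $x < d$, so this is a nontrivial minimum-distance guarantee.

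There is no real obstacle here — the argument is essentially the classical puncturing bound and all ingredients fit together cleanly. The only point to be slightly careful about is ensuring $x < d$ is used in exactly the right place (for injectivity of the puncturing map, where we need no nonzero codeword to have support contained in $T$). Putting it all together, $\mathcal{C}'$ is an $[n-x, \beta, d']_q$ code with $d' \geq d-x$, so in particular $n - x \geq s_q(d', \beta) \geq s_q(d-x, \beta)$ (the latter inequality being monotonicity in the distance parameter, which is immediate from the definition), completing the proof.
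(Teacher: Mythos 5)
Your proposal is correct and follows essentially the same puncturing argument as the paper: take an optimal $[s_q(d,\beta),\beta,d]_q$ code, delete $x$ coordinates, and check that the dimension is preserved (no nonzero codeword has weight $\leq x < d$) and the distance drops by at most $x$. Your explicit injectivity check and the closing monotonicity remark are just slightly more detailed versions of steps the paper treats implicitly.
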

\begin{proof}
    Take an $[s_q(d, \beta), \beta, d]_q$ code $\C$ generated by $\vec{B} \in \F_q^{\beta \times s_q(d, \beta)}$ and puncture $x$ coordinates of $\vec{B}$ to obtain $\vec{B}' \in \F_q^{\beta \times (s_q(d, \beta)-x)}$. Let $\C' = \C(\vec{B}')$. Since there does not exist nonzero $\vec{x} \in \F_q^\beta$ such that $|\vec{x} \vec{B}| < d$, there does not exist nonzero $\vec{x} \in \F_q^\beta$ such that $|\vec{x} \vec{B}'| < d-x$. So $\vec{B}'$ is full rank and $\C'$ has distance at least $d-x$. In other words, $\C'$ is a $[s_q(d, \beta)-x, \beta, d-x]_q$ code. Since a code with such parameters exists, $s_q(d-x, \beta) \leq s_q(d, \beta)-x$.
\end{proof}

\begin{lemma}
    \label{lem: last_vec_bound}
    Let $\vec{B} = (\vec{b}_1; \dots; \vec{b}_k) \in \F_q^{k \times n}$ be an LLL-reduced basis with $\ell_i := |\vec{b}_i^+|$. If $\textbf{b}_i^+$ is the shortest nonzero vector in $\pi_i(\mathcal{C}(\textbf{b}_i, \textbf{b}_{i+1}, \dots, \textbf{b}_{ i+\beta-1}))$, then 
    $$\ell_{i+\beta-1} \geq \Bigl\lceil \frac{(q-1) s_q(\ell_i, \beta)}{(q^{\beta}-1)} \Bigr\rceil$$.
\end{lemma}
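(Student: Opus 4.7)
The plan is to view the block $\vec{B}_{[i, i+\beta-1]}$ as a proper basis of the code $\C^\dagger := \C(\vec{B}_{[i, i+\beta-1]})$ whose epipodal vectors are precisely $\vec{b}_i^+, \ldots, \vec{b}_{i+\beta-1}^+$, and then apply \cref{lem:lower_bound_on_max_last_epipodal} to $\C^\dagger$. First I would observe that $\C^\dagger$ has dimension $\beta$, and by \cref{fact: lengt_invariance} its support size is $s := \sum_{j=i}^{i+\beta-1} \ell_j$. The forward-reduced hypothesis says that $\vec{b}_i^+$ is a shortest non-zero codeword in $\C^\dagger$, so $\C^\dagger$ is an $[s,\beta,\ell_i]_q$ code, and hence by the definition of $s_q$ we get $s \geq s_q(\ell_i, \beta)$.

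Next, applying \cref{lem:lower_bound_on_max_last_epipodal} to $\C^\dagger$ gives
\[
\eta(\C^\dagger) \;\geq\; \left\lceil \frac{(q-1) s}{q^\beta - 1} \right\rceil \;\geq\; \left\lceil \frac{(q-1)\, s_q(\ell_i, \beta)}{q^\beta - 1} \right\rceil \; .
\]
So it remains to identify $\ell_{i+\beta-1}$ with $\eta(\C^\dagger)$. By \cref{lem:last_epipodal_redundant}, $\eta(\C^\dagger)$ is precisely the maximum, over all proper bases of $\C^\dagger$, of the length of the last epipodal vector, and this maximum is realized by a backward-reduced block. The forward-reduced hypothesis only constrains the code-theoretic property that $\vec{b}_i^+$ is a shortest non-zero codeword in $\C^\dagger$, which depends only on $\C^\dagger$ itself. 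So we may assume WLOG that the block has additionally been backward reduced via \cref{alg: backward_reduce}: this is a basis change on $(\vec{b}_i, \ldots, \vec{b}_{i+\beta-1})$ that preserves $\C^\dagger$ (hence the forward-reduced hypothesis), leaves $\vec{b}_i^+$ and the epipodal vectors outside the block unchanged (by \cref{lem: epipodal_locality}), and yields $\ell_{i+\beta-1} = \eta(\C^\dagger)$, giving the claimed bound.

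The main obstacle is this last step: the pigeonhole bound from \cref{lem:lower_bound_on_max_last_epipodal} is naturally a statement about the repetition number $\eta(\C^\dagger)$, i.e.\ about the \emph{maximum} last-epipodal length over all bases, whereas the lemma is phrased in terms of a specific basis. Resolving this requires observing that the hypothesis depends only on the code $\C^\dagger$ (and the fixed prefix $\vec{b}_1,\ldots,\vec{b}_{i-1}$), and so is invariant under the intra-block basis change that achieves $\ell_{i+\beta-1} = \eta(\C^\dagger)$.
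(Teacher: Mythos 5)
There is a genuine gap, and it is exactly at the step you flagged as the ``main obstacle.'' The quantity you must bound, $\ell_{i+\beta-1}$, is a property of the \emph{given} basis, not of the code $\C^\dagger := \C(\basis_{[i,i+\beta-1]})$; it is not invariant under intra-block basis changes. Backward-reducing the block via \cref{alg: backward_reduce} produces a \emph{different} basis whose last epipodal length equals $\eta(\C^\dagger)$, so your argument only establishes $\eta(\C^\dagger) \geq \lceil (q-1)s_q(\ell_i,\beta)/(q^\beta-1)\rceil$. Since by \cref{lem:last_epipodal_redundant} one always has $\ell_{i+\beta-1} \leq \eta(\C^\dagger)$, this inequality says nothing about the original $\ell_{i+\beta-1}$; the ``WLOG'' would be legitimate only if the conclusion were invariant under the basis change, and it is not. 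The tell is that your proof never uses the hypothesis that $\basis$ is LLL-reduced, and that hypothesis is essential: over $\F_2$ with $\beta = 3$, the rows $(1,1,1,1,1,0,0,0,0,0,0)$, $(1,1,0,0,0,1,1,1,1,1,0)$, $(1,0,1,0,1,0,1,0,1,0,1)$ generate an $[11,3,5]_2$ code in which the first row is a shortest non-zero codeword (so the block is forward reduced), yet the epipodal profile is $(5,5,1)$, while $s_2(5,3) \geq 10$ by the Griesmer bound forces $\lceil s_2(5,3)/7\rceil \geq 2$. So no argument that depends only on $\C^\dagger$ and the forward-reduced property can yield the stated lower bound on $\ell_{i+\beta-1}$ for the basis at hand (this block is of course not part of an LLL-reduced basis, which is the point).

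The paper's proof uses the LLL hypothesis to replace the role you wanted $\eta(\C^\dagger)$ to play: by \cref{lem: epipodal_decay}, $\ell_j \leq q\,\ell_{j+1}$ within an LLL-reduced basis, hence $\ell_j \leq q^{\,i+\beta-1-j}\,\ell_{i+\beta-1}$ for each $j$ in the block, and summing gives
\[
    s_q(\ell_i,\beta) \;\leq\; \sum_{j=i}^{i+\beta-1} \ell_j \;\leq\; \frac{q^\beta-1}{q-1}\,\ell_{i+\beta-1}\; ,
\]
after which one rearranges and takes the ceiling. Your first step ($s_q(\ell_i,\beta) \leq \sum_j \ell_j$ from the forward-reduced hypothesis) coincides with the paper's; what is missing is the chain of LLL inequalities that transfers that support lower bound onto the \emph{specific} last epipodal vector of the given basis. (Your intended route is essentially the one the paper does use for slide reduction in \cref{lem:twin_reduced}---but there the block $\basis_{[i+1,i+\beta]}$ is backward reduced \emph{by hypothesis}, which is precisely what is unavailable here.)
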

\begin{proof}
    Applying the definition of $s_q$ and the LLL property that $\ell_i \leq q \ell_{i+1}$ (\cref{lem: epipodal_decay}) respectively, we derive
    $$s_q(\ell_i, \beta) \leq \sum_{j=i}^{i+\beta-1} \ell_j 
    \leq \sum_{j=i}^{i+\beta-1} q^{i+\beta-1-j} \ell_{i+\beta-1}
    = \frac{q^{\beta}-1}{q-1} \ell_{i+\beta-1}$$
    Rearranging and noting that $\ell_{i+\beta-1}$ is an integer yields the desired result.
\end{proof}

\begin{lemma}
    \label{lem: bound_support}
    Let $\vec{B} = (\vec{b}_1; \dots; \vec{b}_k) \in \F_q^{k \times n}$ be a $\beta$-BKZ reduced basis for a code $\C$. Then for any $i \geq 0$ such that $(\beta - 1) (i+1)+1 \leq k$, if $\ell_{(\beta-1)i + 1} \geq c$, then
    $$\ell_{(\beta-1)i + 2} + \dots + \ell_{(\beta - 1) (i+1)+1} \geq -c + s_q(c, \beta)
    \; ,$$
    where $|\vec{b}_j^+| := \ell_j$
\end{lemma}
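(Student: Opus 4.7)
The plan is to isolate the block $\basis_{[(\beta-1)i+1,\, (\beta-1)(i+1)+1]}$, which consists of exactly $\beta$ vectors and is forward reduced by the assumption that $\basis$ is $\beta$-BKZ reduced. Because this block is forward reduced, its first epipodal vector $\vec{b}_{(\beta-1)i+1}^+$ is a shortest non-zero codeword of the code $\C^\dagger$ generated by the block, so $\C^\dagger$ is a code of dimension $\beta$ with minimum distance exactly $\ell_{(\beta-1)i+1}$. By the definition of $s_q$ (\cref{def: len_bounding_function}), the support size of $\C^\dagger$ must be at least $s_q(\ell_{(\beta-1)i+1}, \beta)$. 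By \cref{fact: lengt_invariance}, this support size is $\ell_{(\beta-1)i+1}+\ell_{(\beta-1)i+2}+\cdots+\ell_{(\beta-1)(i+1)+1}$, so rearranging gives
\[
    \ell_{(\beta-1)i+2}+\cdots+\ell_{(\beta-1)(i+1)+1} \;\geq\; s_q(\ell_{(\beta-1)i+1}, \beta) - \ell_{(\beta-1)i+1}.
\]

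To reach the stated bound, I then need the monotonicity fact that $s_q(d,\beta) - d$ is non-decreasing in $d$. This is exactly what \cref{lem: shortening} gives: rewriting its conclusion $s_q(d,\beta) \geq s_q(d-x,\beta) + x$ as $s_q(d,\beta) - d \geq s_q(d-x,\beta) - (d-x)$ shows that decreasing the distance argument by $x$ cannot increase the quantity $s_q(\cdot,\beta) - \cdot$. Applying this with $d = \ell_{(\beta-1)i+1}$ and $d-x = c$ (possible since $\ell_{(\beta-1)i+1} \geq c$ by hypothesis), I obtain $s_q(\ell_{(\beta-1)i+1}, \beta) - \ell_{(\beta-1)i+1} \geq s_q(c,\beta) - c$.

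Combining the two displayed inequalities yields
\[
    \ell_{(\beta-1)i+2}+\cdots+\ell_{(\beta-1)(i+1)+1} \;\geq\; s_q(c,\beta) - c,
\]
which is exactly the claim. There is no real obstacle here: the only subtle point is that the natural lower bound one reads off from the block involves $\ell_{(\beta-1)i+1}$ rather than the smaller constant $c$, and \cref{lem: shortening} is precisely the tool needed to replace the former with the latter without losing the inequality. The hypothesis $(\beta-1)(i+1)+1 \leq k$ is used only to guarantee that the block in question actually consists of $\beta$ valid basis vectors inside $\basis$.
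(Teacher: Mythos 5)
Your proposal is correct and follows essentially the same route as the paper's proof: lower bound the support of the forward-reduced $\beta$-dimensional block by $s_q(\ell_{(\beta-1)i+1},\beta)$ and then invoke \cref{lem: shortening} (i.e., monotonicity of $s_q(d,\beta)-d$ in $d$) to pass from $\ell_{(\beta-1)i+1}$ down to $c$. The paper just performs the same substitution algebraically with $x := \ell_{(\beta-1)i+1}-c$ rather than phrasing it as monotonicity.
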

\begin{proof}
    Let $x = \ell_{(\beta-1)i+1}-c$. Then, we have
    \begin{align*}
        &\ell_{(\beta-1)i+2} + \dots + \ell_{(\beta-1)(i+1)+1}\\
        =& -\ell_{(\beta-1)i+1} + \ell_{(\beta-1)i+1} + \ell_{(\beta-1)i+2} + \dots + \ell_{(\beta-1)(i+1)+1}\\
        \geq& -\ell_{(\beta-1)i+1} + s(\ell_{(\beta-1)i+1}, \beta)\\
        \geq& -\ell_{(\beta-1)i+1} + s(\ell_{(\beta-1)i+1}-x, \beta)+x &\text{\cref{lem: shortening}}\\
        \geq& -c+s(c, \beta)
        \; ,
    \end{align*}
    as needed.
\end{proof}

The following theorem allows us to derive a bound on $|\vec{b}_1|$ via successive applications of \cref{lem: bound_support} and \cref{lem: last_vec_bound}. We restrict attention to the case when $\beta-1$ divides $k-1$ for simplicity.
\begin{theorem}
    \label{thm: l1}
    Let $\vec{B} = (\vec{b}_1; \dots; \vec{b}_k) \in \F_q^{k \times n}$ be a $\beta$-BKZ reduced basis such that $\beta-1 $ divides $k-1$, and define
    $w_{1} := s_q(\ell_1, \beta)$,
    $$c_i := \Bigl\lceil \frac{(q-1) w_i}{(q^{\beta}-1)} \Bigr\rceil
    \; ,$$
    and 
    $w_{i+1} := s_q(c_i, \beta)$ for all $i \geq 1$.
    Then,
    $$n \geq w_1-c_1 + w_2-c_2 + \dots + w_{(k-1)/(\beta-1)}
    \; .$$
\end{theorem}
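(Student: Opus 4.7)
The plan is to partition the epipodal-length sequence $\ell_1, \ldots, \ell_k$ into the singleton $\{\ell_1\}$ followed by $K := (k-1)/(\beta-1)$ consecutive groups of $\beta - 1$ lengths each (the divisibility hypothesis is exactly what makes the last group end at index $k$), apply \cref{lem: bound_support} to each group to get a lower bound on its partial sum, and finally combine with $\ell_1 + \cdots + \ell_k \leq n$ from \cref{fact: lengt_invariance}. The values of $c$ that I will feed into \cref{lem: bound_support} are precisely $\ell_1$ (for the very first group) and the constants $c_1, \ldots, c_{K-1}$ (for the subsequent groups); the job of \cref{lem: last_vec_bound} is to certify, via a short induction, that these constants are indeed valid lower bounds on the corresponding epipodal lengths.

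First I would establish, by induction on $j$, that $\ell_{(\beta-1)j+1} \geq c_j$ for each $j = 1, \ldots, K$. The base case is \cref{lem: last_vec_bound} applied at $i = 1$, which gives $\ell_\beta \geq c_1$ directly. For the inductive step, assuming $\ell_{(\beta-1)j+1} \geq c_j$, I would invoke \cref{lem: last_vec_bound} at index $i = (\beta-1)j+1$ and use the monotonicity of $s_q(\cdot,\beta)$ in its first argument (immediate, since a code with minimum distance $\geq d$ is in particular a code with minimum distance $\geq d'$ for any $d' \leq d$) to obtain $\ell_{(\beta-1)(j+1)+1} \geq \lceil (q-1) w_{j+1}/(q^\beta - 1) \rceil = c_{j+1}$. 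Note that the range of $j$ needed for this induction matches exactly the hypothesis $(\beta-1) \mid (k-1)$.

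With these anchor bounds in hand, I would apply \cref{lem: bound_support} once per group: for $j = 0$ with $c = \ell_1$, giving $\ell_2 + \cdots + \ell_\beta \geq w_1 - \ell_1$; and for each $1 \leq j \leq K-1$ with $c = c_j$, giving $\ell_{(\beta-1)j+2} + \cdots + \ell_{(\beta-1)(j+1)+1} \geq w_{j+1} - c_j$. Summing all $K$ inequalities and adding $\ell_1$ to the left-hand side telescopes into
\[
\ell_1 + \cdots + \ell_k \;\geq\; w_1 + \sum_{j=1}^{K-1}(w_{j+1} - c_j) \;=\; (w_1 - c_1) + (w_2 - c_2) + \cdots + (w_{K-1} - c_{K-1}) + w_K,
\]
and combining with $\ell_1 + \cdots + \ell_k \leq n$ yields the theorem.

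I do not anticipate any real obstacle; the argument is essentially bookkeeping. The point requiring the most care is lining up the indices so that each application of \cref{lem: bound_support} and \cref{lem: last_vec_bound} falls within its permitted range (which is precisely where the divisibility hypothesis is used, to ensure the last group ends at index $k$ rather than overshooting or undershooting), together with the one-line monotonicity observation about $s_q$ that powers the induction.
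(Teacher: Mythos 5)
Your proposal is correct and follows essentially the same route as the paper's proof: the same partition into $\ell_1$ plus $(k-1)/(\beta-1)$ blocks of length $\beta-1$, the same anchor bounds $\ell_{(\beta-1)j+1} \geq c_j$ obtained from \cref{lem: last_vec_bound} together with monotonicity of $s_q(\cdot,\beta)$, the same per-block lower bounds from \cref{lem: bound_support}, and the final comparison with $\sum_i \ell_i \leq n$ via \cref{fact: lengt_invariance}. The only (immaterial) difference is organizational: the paper proves the anchor bounds and the cumulative support bounds in a single simultaneous induction, whereas you first run the induction for the anchors and then sum the block inequalities.
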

\begin{proof}
    We will prove the stronger statement that for all $i \in [1, (k-1)/(\beta-1)]$, $\ell_1 + \dots + \ell_{(\beta-1) i + 1} \geq w_1 - c_1 + w_2 - c_2 + \dots + w_i$ and $\ell_{(\beta-1) i + 1} \geq c_i$. The lemma will then follow from \cref{fact: lengt_invariance}. 
    
    We first show the statement holds for $i = 1$. The code $\C(\vec{b}_1, \dots, \vec{b}_{\beta})$ has support size $\ell_1 + \dots + \ell_\beta$, which allows us to apply the definition of $s_q$ to show $\ell_1 + \dots + \ell_\beta \geq s(\ell_1, \beta) = w_1$. Furthermore, $\ell_{\beta} \geq c_1$ by \cref{lem: last_vec_bound}. 
    
    For the inductive step, suppose $\ell_1 + \dots + \ell_{(\beta-1) i + 1} \geq w_1 - c_1 + w_2 - c_2 + \dots + w_i$ and $\ell_{(\beta-1) i + 1} \geq c_i$. By the inductive hypothesis, it suffices to show $\ell_{(\beta-1)(i+1)+1} \geq c_{i+1}$ and $\ell_{(\beta-1)i + 2} + \dots + \ell_{(\beta-1) (i+1) + 1} \geq -c_{i} + b_{i+1}$. Since $\ell_{(\beta-1) i + 1} \geq c_i$, it follows from  \cref{lem: last_vec_bound} that $\ell_{(\beta-1) (i+1) + 1} \geq c_{i+1}$. Furthermore, by 
 \cref{lem: bound_support} $\ell_{(\beta-1)i + 2} + \dots + \ell_{(\beta-1) (i+1) + 1} \geq -c_{i} + b_{i+1}$.
\end{proof}
We note that the bound in \cref{thm: l1} is monotonic in $s_q$. Formally, if we were to take $t_q$ such that for all $d, k \in \mathbb{N}$, $t_q(d, k) \leq s_q(d, k)$, then \cref{thm: l1} would still be true if we were to substitute $s_q$ with $t_q$.

This technique for bounding $\ell_1$ gives us a way to lift bounds on the distance of codes with dimension $\beta$ to bounds on the distance of codes with higher dimension. This can be seen as a generalization of the Griesmer bound which lifts a bound on the distance of dimension 2 codes to a bound on the distance higher dimension codes. 
     \subsubsection{An approximation factor for BKZ}
    \label{subsec: bkz_approx_factor}
In this section, we show that a BKZ-reduced basis satisfies $|\vec{b}_1| \leq q^{k-\beta} \cdot \mindist(\C(\basis))$. This generalizes the special case when $\beta = 2$ and $q = 2$, which was proven in \cite{DDvAlgorithmicReductionTheory2022}.

\begin{theorem}
    If $\basis = (\vec{b}_1;\ldots;\vec{b}_k) \in \F_q^{k \times n}$ is a $\beta$-BKZ-reduced basis, then $|\vec{b}_1| \leq q^{k-\beta} \mindist(\mathcal{C}(\vec{B}))$.
\end{theorem}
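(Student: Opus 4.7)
The plan is to induct on $k \geq \beta$. The base case $k = \beta$ is immediate: the single block $\basis_{[1,\beta]}$ equals the entire basis $\basis$, so the BKZ property says $\vec{b}_1 = \vec{b}_1^+$ is a shortest nonzero codeword of $\C(\basis) = \C$, giving $|\vec{b}_1| = \mindist(\C) = q^{k-\beta}\mindist(\C)$.

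For the inductive step $k > \beta$, I would pass to the sub-basis $\basis_{[2,k]}$, which generates the projected code $\pi_2(\C) = \C(\basis_{[2,k]})$ of dimension $k-1$. I would first verify that $\basis_{[2,k]}$ is itself a proper $\beta$-BKZ-reduced basis: its epipodal vectors are exactly $\vec{b}_2^+,\ldots,\vec{b}_k^+$ (nonzero by properness of $\basis$), and the identity $(\basis_{[2,k]})_{[i,i+\beta-1]} = \basis_{[i+1,i+\beta]}$ (which follows from $\pi^\perp_{\{\pi_2(\vec{b}_2),\ldots,\pi_2(\vec{b}_i)\}} \circ \pi_2 = \pi_{i+1}$, since the zeroed-out coordinates match $\bigcup_{j \leq i}\supp(\vec{b}_j)$ on both sides) transfers the BKZ conditions of $\basis$ directly. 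Next, I would establish the one-step inequality $|\vec{b}_1| \leq q|\vec{b}_2^+|$ by reusing the pigeonhole argument of \cref{lem: epipodal_decay}: some $a \in \F_q$ satisfies $|\pi_{\vec{b}_1}(\vec{b}_2 + a\vec{b}_1)| \leq (1-1/q)|\vec{b}_1|$, and combining this with $\pi_2(\vec{b}_2 + a\vec{b}_1) = \vec{b}_2^+$ and the fact that $\vec{b}_1$ is shortest in $\C(\basis_{[1,\beta]}) \supseteq \C(\basis_{[1,2]})$ yields the inequality after rearranging. Applying the induction hypothesis then gives $|\vec{b}_1| \leq q|\vec{b}_2^+| \leq q^{k-\beta}\mindist(\C(\basis_{[2,k]}))$.

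The final step, and the main subtle point, is to show $\mindist(\C(\basis_{[2,k]})) \leq \mindist(\C)$: projecting orthogonal to $\vec{b}_1$ could in principle annihilate every shortest codeword and raise the minimum distance. I would fix any shortest nonzero $\vec{c} \in \C$ and split on whether $\supp(\vec{c}) \subseteq \supp(\vec{b}_1)$. If $\supp(\vec{c}) \not\subseteq \supp(\vec{b}_1)$, then $\pi_2(\vec{c})$ is a nonzero codeword in $\C(\basis_{[2,k]})$ of weight at most $|\vec{c}| = \mindist(\C)$, giving the desired comparison. Otherwise, since $\basis$ is proper, $\vec{b}_1$ is primitive in $\C$ by \cref{fact:primitive_and_proper}, and \cref{fact:primitive_is_intersection} forces every nonzero codeword with support contained in $\supp(\vec{b}_1)$ to be a scalar multiple of $\vec{b}_1$, so $|\vec{b}_1| = |\vec{c}| = \mindist(\C)$ and the bound $|\vec{b}_1| \leq q^{k-\beta}\mindist(\C)$ holds directly (since $q^{k-\beta} \geq 1$).
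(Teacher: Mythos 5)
Your proposal is correct and follows essentially the same route as the paper's proof: induction on $k$ with base case $k=\beta$, the one-step bound $|\vec{b}_1| \leq q|\vec{b}_2^+|$ via \cref{lem: epipodal_decay} (since $\beta$-BKZ-reduced implies LLL-reduced), and the case split on whether $\supp(\vec{c}) \subseteq \supp(\vec{b}_1)$, using primitivity of $\vec{b}_1$ (from properness) in the contained case and $\pi_{\vec{b}_1}^\perp(\vec{c}) \neq \vec0$ in the other. The only difference is that you spell out why $\basis_{[2,k]}$ is itself $\beta$-BKZ-reduced, a step the paper leaves implicit.
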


\begin{proof}
    We will show this by induction on $k$. In the base case $\beta = k$, immediately implying $|\vec{b}_1| = d_{\min}(\mathcal{C}(\vec{B}))$.
    We now let $k > \beta$ and turn our attention to the inductive step. Let $\vec{c}$ be a codeword in $\mathcal{C}(\vec{B})$ with length $d_{\min}(\mathcal{C}(\vec{B}))$. We consider two cases.

    First, suppose that $\supp(\vec{c}) \subseteq \supp(\vec{b}_1)$. If $\supp(\vec{c}) \subsetneq \supp(\vec{b}_1)$, then $\vec{b}_1$ is not primitive. But, this contradicts the fact that a BKZ-reduced basis must be proper, by \cref{fact:primitive_and_proper}. Therefore, in this case, we must have that $\supp(\vec{c}) = \supp(\vec{b}_1)$, and in particular that $|\vec{b}_1| = |\vec{c}| = \mindist(\C(\basis))$, which is significantly stronger than what we wanted to prove.

    Now, suppose, that $\supp(\vec{c}) \not \subseteq \supp(\vec{b}_1)$. Then, by definition $\pi_{\vec{b}_1}^\perp(\vec{c})$ is a \emph{non-zero} codeword in $\C(\basis_{[2,k]})$, so in particular $\mindist(\C(\basis_{[2,k]})) \leq |\pi_{\vec{b}_1}^\perp(\vec{c})| \leq |\vec{c}| = \mindist(\C(\basis))$. 
    And, by the induction hypothesis, 
    \[
        |\vec{b}_2^+| \leq q^{k-1-\beta} \mindist(\C(\basis_{[2,k]})) \leq q^{k-1-\beta} \mindist(\C(\basis))
        \; .
    \]
    Finally, recalling that any BKZ-reduced basis is also LLL-reduced and using \cref{lem: epipodal_decay}, we see that 
    \[
        |\vec{b}_1| \leq q |\vec{b}_2^+| \leq q^{k-\beta} \mindist(\C(\basis))
        \; ,
    \]
    as claimed.
\end{proof}     
    \subsection{The slide-reduction algorithm for codes}
    \label{sec:slide}

We can now define a slide-reduced basis and present our algorithm for computing a slide-reduced basis. The definition is nearly identical to Gama and Nguyen's definition of slide-reduced bases for \emph{lattices}~\cite{gamaFindingShortLattice2008} with dual-reduced bases replaced with backward-reduced bases (and, of course, with a basis for a lattice replaced by a basis for a code).

\begin{definition}
    \label{def:slide}
    A basis $\vec{B} \in \F_q^{k \times n}$ with $k = p\beta$ for some integer $p \geq 1$ is $\beta$-slide reduced if it satisfies the following three properties. 
    \begin{enumerate}
    \item $\vec{B}$ is proper.
    \item For all $i \in [0, k/\beta-1]$, $\vec{B}_{[i\beta+1, (i+1)\beta]}$ is forward reduced. \label{item:slide_forward}
    \item For all $i \in [0, k/\beta-2]$, $\vec{B}_{[i\beta+2, (i+1)\beta+1]}$ is backward reduced. \label{item:slide_backward}
    \end{enumerate}
\end{definition}

We present in \cref{alg:slide} our algorithm that produces a slide-reduced basis.

\RestyleAlgo{ruled}
\begin{algorithm2e}
\caption{Slide Reduction}
\label{alg:slide}

\KwIn{A proper basis $\vec{B} = (\vec{b}_1; \dots; \vec{b}_k) \in \F_q^{k \times n}$ for a code $\C$ and block size $\beta \in [2, k]$ that divides $k$.}
\KwOut{A $\beta$-Slide reduced basis $\vec{B} = (\vec{b}_1; \dots; \vec{b}_k) \in \F_q^{k \times n}$ for $\C$.}

$i \gets 0$\\
\While{$i < k/\beta$}{

    \uIf{$\vec{B}_{[i\beta+1, (i+1)\beta]}$ is not forward reduced}{
    
        $\vec{c} \leftarrow $ shortest codeword in $\C(\vec{B}_{[i\beta+1, (i+1)\beta]})$\ 

        $\vec{A} \leftarrow \text{InsertPrimitive}(\vec{B}_{[i\beta+1, (i+1)\beta]}, \vec{c})$\
        
        $\vec{B} \leftarrow (\vec{I}_{i \beta} \oplus \vec{A} \oplus \vec{I}_{k-(i+1)\beta}) \vec{B}$\ 

        $i \gets \max(0, i-1)$\ 
    }
    
    \uElseIf{$i \leq k/\beta-2$ and $\vec{B}_{[i\beta+2, (i+1)\beta+1]}$ is not backward reduced}{
    
        $\vec{A} \leftarrow \text{BackwardReduction}(\vec{B}_{[i\beta+2, (i+1)\beta+1]})$\
        
        $\vec{B} \leftarrow (\vec{I}_{i \beta + 1} \oplus \vec{A} \oplus \vec{I}_{k-(i+1)\beta-1}) \vec{B}$\
        
        $i \gets \max(0, i-1)$\
    }
    \Else{
        $i \gets i+1$\
    }

}

\Return $\vec{B}$
\end{algorithm2e}

We first prove that, if the algorithm ever terminates, then it must output a slide-reduced basis. (We will then prove that it terminates efficiently.)

\begin{claim}
    If \cref{alg:slide} terminates on input a proper basis $\basis \in \F_q^{k \times n}$ for a code $\C$ and a block size $\beta \in [2,k]$ that divides $k$, then it outputs a $\beta$-slide-reduced basis for $\C$.
\end{claim}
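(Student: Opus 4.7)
The plan is to identify a loop invariant and show it is preserved across iterations, then verify that the termination condition $i \geq k/\beta$ forces $\basis$ to satisfy all three clauses of \cref{def:slide}. Specifically, I would prove that at the start of each iteration with loop index $i$: (a) $\basis$ is proper and generates $\C$; (b) for every $j \in [0, i-1]$, the block $\vec{B}_{[j\beta+1, (j+1)\beta]}$ is forward reduced; and (c) for every $j \in [0, \min(i, k/\beta-1) - 1]$, the block $\vec{B}_{[j\beta+2, (j+1)\beta+1]}$ is backward reduced. Initialization at $i = 0$ is immediate, since clauses (b) and (c) are vacuous and properness holds by hypothesis.

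The key technical tool for the inductive step is \cref{lem: epipodal_locality}: a linear transformation confined to rows $[a,b]$ leaves every block $\vec{B}_{[m_1, m_2]}$ with $[m_1, m_2] \cap [a,b] = \emptyset$ entirely unchanged. For the forward-reduction branch, the modification acts on rows $[i\beta+1, (i+1)\beta]$; the blocks referenced by the invariant for $j < i-1$ lie in row ranges disjoint from this interval, hence survive intact, which is exactly what the new index $i' = \max(0, i-1)$ requires. Properness is maintained because Insert Primitive produces a proper block (\cref{claim: insert_primitive}), and a proper block implies properness of the entire basis: the epipodal vectors of the block, viewed as an independent basis, coincide with $\vec{b}_{i\beta+1}^+, \ldots, \vec{b}_{(i+1)\beta}^+$ in the ambient basis (since $\pi_{i\beta+1}$ already zeroes out the support of $\vec{b}_1, \ldots, \vec{b}_{i\beta}$), while rows outside the block are unchanged. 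The backward-reduction branch is entirely analogous, using \cref{clm:backward_run_time} in place of \cref{claim: insert_primitive}. In both cases, the transformation is by an invertible matrix, so $\C(\basis)$ is preserved.

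For the increment branch, no modification occurs, and both $F_i$ and (when $i \leq k/\beta - 2$) $B_i$ were just checked in the guards, so the invariant extends from $i$ to $i+1$. The upper bound $\min(i+1, k/\beta-1)$ in clause (c) correctly matches the algorithm, which simply skips the backward-reducedness check when $i = k/\beta - 1$. Putting this together, termination with $i = k/\beta$ yields clause (b) for all $j \in [0, k/\beta-1]$ and clause (c) for all $j \in [0, k/\beta-2]$, which are exactly properties \ref{item:slide_forward} and \ref{item:slide_backward} of \cref{def:slide}, while property (1) is the properness half of the invariant.

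The main obstacle I anticipate is the bookkeeping around which blocks \emph{might} be affected by an update and whether the invariant required after decrementing $i$ is still available. The overlap between $\vec{B}_{[i\beta+1,(i+1)\beta]}$ and $\vec{B}_{[(i-1)\beta+2, i\beta+1]}$ at row $i\beta+1$ means that forward-reducing block $i$ can break backward-reducedness at index $i-1$, and symmetrically backward-reducing at index $i$ can break forward-reducedness of block $i$; the reason this is not fatal is precisely that the algorithm decrements $i$ after every modification, so the invariant at the weaker new index $i' = i-1$ only demands properties for $j < i-1$, and those are exactly the ones left untouched by \cref{lem: epipodal_locality}. Making this alignment precise is the heart of the argument.
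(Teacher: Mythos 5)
Your proposal is correct and follows essentially the same route as the paper: a loop-invariant induction showing that all blocks $j$ below the current index are forward and backward reduced (the paper handles the last block via a convention rather than your $\min(i,k/\beta-1)$ cap, and your accounting of exactly which overlapping blocks an update can break, via \cref{lem: epipodal_locality}, is the same alignment the paper relies on). The only detail worth adding is the paper's explicit observation that a shortest non-zero codeword of the block is primitive (\cref{fact: primitive}), which is needed to justify invoking \cref{claim: insert_primitive} at all; otherwise your treatment of properness and of the termination condition matches the paper's argument.
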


\begin{proof}
First, notice that by \cref{fact: primitive}, the vector $\vec{c}$ found by the forward reduction subprocedure is always primitive.
    So, by the correctness of the Insert Primitive (\cref{claim: insert_primitive}) and Backward Reduction (\cref{clm:backward_run_time}) subroutines, the basis remains a proper basis for $\C$ throughout the algorithm. So, we need only show that the output basis satisfies \cref{item:slide_forward,item:slide_backward} of \cref{def:slide}.

    We say block $j$ is forward reduced if $\vec{B}_{[j\beta+1, (j+1)\beta]}$ is forward reduced and we say block $j$ is backward reduced if $\vec{B}_{[j\beta+2, (j+1)\beta+1]}$ is backward reduced. (By convention, we say that block $k/\beta-1$ is always backward reduced.) 
    
    We will show that the loop invariant that block $j$ is forward and backward reduced for all $j < i$ always holds at the beginning and end of each loop iteration of \cref{alg:slide}. Notice that this holds vacuously before entering the loop, since $i$ starts at zero. Now we show that if this holds at the beginning of the loop, it also holds at the end of the loop. By assumption, we know that blocks $1, \dots, i-1$ are both forward and backward reduced at the start of the loop. There are two cases to consider.

    First, suppose no change at all happened to $\vec{B}$ during the loop. Notice that when this occurs, it must be the case that block $i$ was already both forward and backward reduced, and for all $j < i$, block $j$ was forward reduced by the induction hypothesis. At the end of the loop, $i$ is then replaced by $i' := i+1$, and we will in fact have that for all $j < i'$, block $j$ is both forward and backward reduced, as needed.

    On the other hand, if some change does occur during the loop, notice that any such change only affects blocks $i-1$ and $i$. And when this happens, $i$ is updated to $i' := \max(0,i-1)$. Of course, if $i' = 0$, then it is vacuously true that for all $j < i'$ block $j$ is both forward and backward reduced. If $i' = i-1$, then by assumption we know that for all $j < i'$, block $j$ was both forward and backward reduced at the beginning of the loop. Since block $j$ remains unchanged for $j < i'$, we see that this remains true at the end of the loop, as needed.
    
    Finally, notice that if \cref{alg:slide} terminates, then $i = k/\beta$, which means that block $j$ is both forward and backward reduced for all $j < k/\beta$, implying $\vec{B}$ is $\beta$-slide reduced.
\end{proof}

We now prove that our algorithm terminates efficiently, thus showing that there is an efficient algorithm for computing a slide-reduced basis. The proof itself is an adaptation of the beautiful potential-based proof of Gama and Nguyen in the case of lattices~\cite{gamaFindingShortLattice2008} (which itself adapts the celebrated proof of Lenstra, Lenstra, and Lov{\'a}sz for the LLL algorithm~\cite{lll82}). Notice that the final running time is 
\[
O(kn T/\beta) + kn^2 \beta \poly(\log n, \log q))
\; ,
\]where $T$ is the running time of the procedure used to find a shortest non-zero codeword in a $\beta$-dimensional block. In particular, if $\beta \gg \log_q n$, then we expect the first term to dominate and the running time to therefore be $O(kn T/\beta)$.

\begin{theorem}
    The while loop in \cref{alg:slide} runs at most $4kn/\beta $ times, and the total running time of the algorithm is $O(kn(T/\beta + n \beta \log^2 q + n \log(q) \log(qn)))$, where $T$ is the running time of the procedure used to find a shortest non-zero codeword in a $\beta$-dimensional block.
\end{theorem}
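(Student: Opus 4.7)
The plan is to adapt Gama and Nguyen's potential-based analysis of slide reduction to the code setting, using the invariants from \cref{lem: constant_epipodal_sum,lem: epipodal_locality}. Let $p := k/\beta$ and $\ell_m := |\vec{b}_m^+|$, and define
\[
    P_j := \sum_{m=1}^{j\beta+1} \ell_m \quad (0 \leq j \leq p-1),
    \qquad
    Q_j := \sum_{m=1}^{j\beta} \ell_m \quad (1 \leq j \leq p-1).
\]
Each $P_j$ and $Q_j$ is a partial sum of epipodal lengths and is therefore bounded above by $\sum_m \ell_m = |\supp(\C(\basis))| \leq n$. I would then take as the potential
\[
    \Phi := \sum_{j=0}^{p-1} P_j + \sum_{j=1}^{p-1} Q_j \leq (2p-1)n \leq 2kn/\beta.
\]

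Next, I would show that each successful update strictly decreases $\Phi$ by at least $1$. For a forward-reducing update of block $\basis_{[i\beta+1,(i+1)\beta]}$, \cref{lem: constant_epipodal_sum} implies that $\sum_{m=i\beta+1}^{(i+1)\beta} \ell_m$ is preserved while $\ell_{i\beta+1}$ strictly decreases (since the block was not already forward reduced, a strictly shorter primitive codeword was inserted), and \cref{lem: epipodal_locality} implies that $\ell_m$ is unchanged for $m \notin [i\beta+1,(i+1)\beta]$. A short case check on the definitions of $P_j$ and $Q_j$ then shows that every $P_j$ with $j \neq i$ and every $Q_j$ is unaffected (the block either lies entirely inside or entirely outside the index window), so only $P_i$ moves, and it decreases by at least $1$. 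The symmetric analysis for a backward-reducing update of $\basis_{[i\beta+2,(i+1)\beta+1]}$ (where $\ell_{(i+1)\beta+1}$ strictly increases while its block sum is preserved and everything outside the block is unchanged) shows that only $Q_{i+1}$ moves, and it decreases by at least $1$. Combined with the upper bound on $\Phi$, the total number of updates is at most $2kn/\beta$.

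For the loop-iteration count, I would use the standard LLL-style bookkeeping: the index $i$ starts at $0$ and reaches $p$ at termination, updates decrease $i$ by at most $1$, and non-updates increment $i$ by exactly $1$. Thus if $u$ is the number of updates and $d$ the number of non-updates, then $d \leq p + u$, giving at most $2u + p \leq 4kn/\beta + k/\beta$ iterations (the lower-order $k/\beta$ term can be absorbed by a slightly tighter lower bound on $\Phi$). For the per-iteration cost, each iteration performs at most one shortest-codeword call on a $\beta$-dimensional block (cost $T$), one redundancy check via \cref{alg: redundant} (cost $O(n\beta\log(q)\log(qn))$ by \cref{clm:redundant_alg}), and at most one update: either an Insert Primitive call (cost $O(n\beta^2\log^2 q)$ by \cref{claim: insert_primitive}) or a Backward Reduction call (cost $O(n\beta\log(q)\log(qn))$ by \cref{clm:backward_run_time}), followed by applying the resulting $\beta\times\beta$ linear transformation to a block of rows of $\basis$ at cost $O(n\beta^2\log^2 q)$. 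Multiplying the iteration bound by the per-iteration cost yields the claimed total running time.

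The main subtlety is designing the potential correctly. Neither $\sum_j P_j$ alone nor $\sum_j Q_j$ alone works, because forward reductions leave all $Q_j$ invariant while backward reductions leave all $P_j$ invariant; the key trick is the offset between the upper summation limits $j\beta+1$ and $j\beta$, which arranges for exactly one of the two families to detect each type of update, so that their sum gives a single potential that drops by at least $1$ on every successful step.
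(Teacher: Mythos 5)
Your proof is correct, and its core counting argument takes a genuinely different route from the paper's. The paper uses the potential $\sum_{j=0}^{k/\beta-1}\sum_{a=1}^{(j+1)\beta}\ell_a$ (in your notation, essentially $\sum_j Q_j$), which is \emph{invariant} under forward-reduction steps and drops by at least one only under backward-reduction steps; this bounds the number $B$ of backward steps by $kn/\beta$, and the forward steps are then controlled by a separate combinatorial argument ($F \leq kn/\beta + 2B$, proved by tracking how many of the $k/\beta$ blocks are currently forward reduced: each forward step creates one such block and each backward step destroys at most two). Your single potential $\sum_j P_j + \sum_j Q_j$, with the two families of windows offset by one index, strictly decreases on \emph{both} kinds of update, so you bound the total number of updates by $2kn/\beta$ in one stroke and dispense with the second counting argument entirely; your case analysis (each updated block lies either wholly inside or wholly outside every window except the one that moves, by the invariance of block sums from \cref{lem: constant_epipodal_sum} and the locality from \cref{lem: epipodal_locality}) is exactly right, and the slack in $\Phi \leq (2p-1)n$ together with the standard up/down bookkeeping does absorb the extra $k/\beta$ term, so the stated $4kn/\beta$ iteration bound follows cleanly — arguably more cleanly than in the paper, whose $4kn/\beta$ accounting focuses on the update iterations, while yours explicitly handles the increment-only iterations as well. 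The per-iteration cost analysis (one shortest-codeword call of cost $T$, a redundancy check, at most one Insert Primitive or Backward Reduction call, and applying a $\beta\times\beta$ transformation to the relevant rows) matches the paper's and yields the same total running time. What the paper's split buys is a direct translation of Gama–Nguyen's lattice argument (where the forward/backward asymmetry is intrinsic); what your unified potential buys is a shorter proof and a slightly sharper bound on the number of updates.
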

\begin{proof}
    First, we bound the running time of a single iteration of the while loop by $T+O(n \beta^2 \log^2 q + n \beta \log q \log n)$. Indeed, it takes time $T$ to find a shortest codeword in $\C(\basis_{[i\beta + 1, (i+1)\beta]})$ by assumption. (And, when this vector is found, it can be used to check if the block is forward reduced.) By \cref{claim: insert_primitive}, the Insert Primitive procedure on a block of size $\beta$ runs in time $O(n \beta^2 \log^2 q)$. And, the final matrix multiplication in the forward reduction part of the loop can be done in time $O(n\beta^2  \log^2 q)$. In the backward reduction part of the loop, the backward reduction algorithm itself runs in time $O(n \beta \log (q)\log(qn))$ on a $\beta$-dimensional block, by \cref{clm:backward_run_time}. The matrix multiplication step then takes time $O(n \beta^2 \log^2 q)$. So, the total running time of a single iteration of the while loop is $T+O(n \beta^2 \log^2 q + n \beta \log(q) \log(qn))$, as claimed.

    Next, we show that the total number of loop iterations is bounded by $4kn/\beta$. To that end, we first observe that it suffices to show that the backward reduction subprocedure runs at most $kn/\beta$ times. Indeed, we claim that the number $F$ of times the forward reduction subprocedure runs is at most $kn/\beta + 2B$, where $B$ is the number of times the backward reduction subprocedure runs. To see this, consider the number $N$ of blocks $\basis_{[j\beta + 1, (j+1)\beta]}$ that are forward reduced at a given step of the algorithm. Notice that $N$ increases by one each time the forward reduction subprocedure runs, since by definition this causes the block $\basis_{[i\beta + 1, (i+1)\beta]}$ to go from not being forward reduced to being forward reduced (and blocks $\basis_{[j\beta+1,(j+1)\beta]}$ for $i \neq j$ are unchanged, by \cref{lem: epipodal_locality}). When the backward reduction subprocedure runs, $N$ decreases by at most $2$, since the ``backward block'' $\basis_{[i\beta + 2, (i+1)\beta + 1]}$ overlaps with two blocks of the form $\basis_{[j\beta+1,(j+1)\beta]}$ (specifically, blocks $j = i$ and $j = i+1$) and all other blocks are unchanged (\cref{lem: epipodal_locality}). Since $N$ is non-negative and ends at $kn/\beta$, it follows that $F - 2B \leq kn/\beta$, as needed.
    
    It remains to bound the number of times that the backward reduction subprocedure runs. To that end, let $\ell_i := |\vec{b}_i^+|$ and consider the potential function\footnote{To get some intuition for $\Phi$, notice that $\Phi$ can be thought of as a weighted sum of the lengths of the epipodal vectors, where earlier vectors get more weight. In particular, $\Phi$ tends to be smaller when the earlier epipodal vectors are shorter relatively to the later epipodal vectors. Since the algorithm works by making earlier epipodal vectors shorter than later epipodal vectors later, it makes some intuitive sense that this potential function should decrease as the algorithm runs.}
    \[
        \Phi(\ell_1,\ldots, \ell_k) :=  \sum_{j=0}^{k/\beta - 1} \sum_{a=1}^{(j+1) \beta} \ell_a 
        \; .
    \]
    First, notice that for any basis, we must have $\Phi \geq 0$, since $\ell_a \geq 0$. And, similarly, we must have $\Phi \leq kn/\beta$, since for each $j$ the inner sum is at most $|\supp(\C)| \leq n$.\footnote{In fact, for a proper basis, we must have 
    \[\Phi \geq \beta + 2\beta + \cdots + (k/\beta-1) \beta + |\supp(\C)| = k(k/\beta -1)/2 + |\supp(\C)| 
    \; .
    \]
    And, proper bases must also satisfy 
    \[\Phi \leq (|\supp(\C)| - \beta(k/\beta-1)) + (|\supp(\C)| - \beta(k/\beta-2)) + \cdots + (|\supp(\C)| - \beta) + |\supp(\C)| = k|\supp(\C)|/\beta  -  k(k/\beta -1)/2
    \; .
    \]
    But, such optimizations would only affect the lower-order terms in our running time, so we do not bother with them.}
    Furthermore, notice that $\Phi$ is left unchanged by forward reduction steps, since forward reduction steps do not change $\sum_{a=j\beta + 1}^{(j+1)\beta} \ell_a$ for any $j$ (\cref{lem: epipodal_locality,lem: constant_epipodal_sum}).
    
    So, in order to show that the number of backward reduction steps is at most $kn/\beta$, it suffices to show that $\Phi$ decreases by at least one every time that the backward reduction subprocedure is called. Let $\ell_j$ be $|\vec{b}_j^+|$ immediately before a backward reduction step, and let $\ell_j'$ be $|\vec{b}_j^+|$ immediately after a backward reduction step performed on the block $\basis_{[i\beta + 2,(i+1)\beta + 1]}$. Notice that $\ell_j' = \ell_j$ for all $j \leq i\beta +1$ and all $j \geq (i+1)\beta + 2$ (\cref{lem: epipodal_locality}). Furthermore, by \cref{lem: constant_epipodal_sum}, we must have
    \[
        \ell_{i\beta + 2}' + \cdots + \ell_{(i+1)\beta + 1}' = \ell_{i\beta + 2} + \cdots + \ell_{(i+1)\beta + 1}
        \; .
    \]
    It follows that 
    \begin{align*}
        \Phi(\ell_1,\ldots,\ell_k) - \Phi(\ell_1',\ldots,\ell_k') 
            &=\sum_{a=i\beta + 2}^{(i+1)\beta} \ell_a  - \sum_{a=i\beta + 2}^{(i+1)\beta} \ell_a' \\
            &= \ell_{(i+1)\beta + 1}' - \ell_{(i+1)\beta + 1}  + \sum_{a=i\beta + 2}^{(i+1)\beta+1} \ell_a -\sum_{a=i\beta + 2}^{(i+1)\beta+1} \ell_a' \\
            &= \ell_{(i+1)\beta + 1}' - \ell_{(i+1)\beta + 1}
        \; .
    \end{align*}
    Finally, recalling that the backward reduction subprocedure always increases the length $\ell_{(i+1)\beta + 1}$ of the last epipodal vector in the block (since it starts out with a basis that is \emph{not} backward reduces and backward reduces it), we see that $\Phi(\ell_1,\ldots,\ell_k) - \Phi(\ell_1',\ldots,\ell_k') \geq 1$ as needed.
\end{proof}

\subsubsection{Bounding \texorpdfstring{$|\vec{b}_1|$}{|b1|}}

Finally, we prove a bound on the length of the first basis vector in a slide-reduced basis.
To that end, it is convenient to first define the notion of \emph{twin} reduction. (The analogous notion for lattices is implicit in \cite{gamaFindingShortLattice2008} and formally defined in \cite{aggarwalSlideReductionRevisited2020}.)

\begin{definition}
    A basis $\basis \in \F_q^{(\beta+1) \times n}$ is \emph{twin reduced} if $\basis_{[1,\beta]}$ is forward reduced and $\basis_{[2,\beta+1]}$ is backward reduced.
\end{definition}

As the next lemma shows, twin-reduced bases are convenient because they give us a way to relate the length of the first epipodal vector to the length of the last epipodal vector. (And, of course, twin-reduced bases are closely related to slide-reduced bases.)

\begin{lemma}
    \label{lem:twin_reduced}
    If $\basis := (\vec{b}_1;\ldots;\vec{b}_{\beta + 1}) \in \F_q^{(\beta+1) \times n}$ is a twin-reduced basis for a code $\C$, then 
    \[
        \ell_{\beta + 1} \geq \Big\lceil \frac{q-1}{q^\beta-q} \cdot (s_q(\ell_1,\beta)-\ell_1) \Big\rceil
        \; ,
    \]
    where $\ell_i := |\vec{b}_i^+|$.
\end{lemma}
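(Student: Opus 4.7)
The plan is to combine two facts: a lower bound on $\ell_2 + \cdots + \ell_\beta$ coming from forward reducedness of $\basis_{[1,\beta]}$, and a lower bound on $\ell_{\beta+1}$ coming from backward reducedness of $\basis_{[2,\beta+1]}$, then to do a small algebraic rearrangement.

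First, I would use forward reducedness of $\basis_{[1,\beta]}$. Since $\vec{b}_1$ is a shortest non-zero codeword in $\C(\basis_{[1,\beta]})$, this code is an $[\ell_1 + \cdots + \ell_\beta, \beta, \ell_1]_q$ code (its support size is $\sum_{i=1}^\beta \ell_i$ by \cref{fact: lengt_invariance}, and its dimension is $\beta$ since $\basis$ is proper). By the definition of $s_q$, this gives
\[
    \ell_2 + \cdots + \ell_\beta \geq s_q(\ell_1,\beta) - \ell_1
    \; .
\]

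Next, I would use backward reducedness of $\basis_{[2,\beta+1]}$. By definition, $\ell_{\beta+1} = \eta(\C(\basis_{[2,\beta+1]}))$, and $\C(\basis_{[2,\beta+1]})$ is a code of dimension $\beta$ and support size $\ell_2 + \cdots + \ell_{\beta+1}$. Applying \cref{lem:lower_bound_on_max_last_epipodal} to this code gives
\[
    \ell_{\beta+1} \geq \left\lceil \frac{q-1}{q^\beta - 1} \cdot (\ell_2 + \cdots + \ell_{\beta+1}) \right\rceil \geq \frac{q-1}{q^\beta-1} \cdot (\ell_2 + \cdots + \ell_{\beta+1})
    \; .
\]

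Finally, I would solve for $\ell_{\beta+1}$ by isolating it. Multiplying out and subtracting $(q-1)\ell_{\beta+1}$ from both sides yields $(q^\beta - q)\ell_{\beta+1} \geq (q-1)(\ell_2 + \cdots + \ell_\beta)$, and then plugging in the first inequality gives
\[
    (q^\beta - q) \ell_{\beta+1} \geq (q-1)(s_q(\ell_1,\beta) - \ell_1)
    \; .
\]
Dividing by $q^\beta - q$ and taking the ceiling (since $\ell_{\beta+1}$ is a non-negative integer) gives exactly the claimed bound. No step looks like a serious obstacle; the whole argument is essentially just combining the forward-reduction support bound with the backward-reduction redundancy bound and doing linear algebra to isolate $\ell_{\beta+1}$.
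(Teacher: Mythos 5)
Your proposal is correct and follows essentially the same route as the paper's proof: forward reducedness of $\basis_{[1,\beta]}$ gives $\ell_1 + \cdots + \ell_\beta \geq s_q(\ell_1,\beta)$, backward reducedness of $\basis_{[2,\beta+1]}$ combined with \cref{lem:lower_bound_on_max_last_epipodal} gives $\ell_{\beta+1} \geq \frac{q-1}{q^\beta-1}(\ell_2 + \cdots + \ell_{\beta+1})$, and the same rearrangement isolating $\ell_{\beta+1}$ (plus integrality for the ceiling) yields the bound. The only difference is the order in which the two inequalities are combined, which is immaterial.
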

\begin{proof}
    Let $\C_1 := \C(\basis_{[1,\beta]})$ and $\C_2 := \C(\basis_{[2,\beta+1]})$, and let 
    \[
    s := \supp(\C_1) = \ell_1 + \cdots + \ell_\beta
    \; .
    \]
    Because $\basis_{[2,\beta]}$ is backward reduced, we must have
    \[
        \ell_{\beta + 1} = \eta(\C_2)
        \; .
    \]
    But, by \cref{lem:lower_bound_on_max_last_epipodal}, we have that 
    \[
        \ell_{\beta + 1} = \eta(\C_2) \geq \frac{q-1}{q^\beta - 1} \cdot |\supp(\C_2)| = \frac{q-1}{q^\beta - 1} \cdot (s-\ell_1 + \ell_{\beta + 1})
        \;.
    \]
    Rearranging, we see that
    \[
        \ell_{\beta + 1} \geq \frac{q-1}{q^\beta - q} \cdot (s-\ell_1)
        \; .
    \]
    But, since $\basis_{[1,\beta]}$ is forward reduced, we have that $\ell_1 = \mindist(\C_1)$, so that by the definition of $s_q$, 
    \[
    s = |\supp(\C_1)| \geq s_q(\mindist(\C_1),\beta) = s_q(\ell_1,\beta)
    \; .
    \]
    Therefore,
    \[
        \ell_{\beta + 1} \geq \frac{q-1}{q^\beta - q} \cdot (s_q(\ell_1,\beta)-\ell_1)
    \]
    The result then follows by recalling that $\ell_{\beta + 1}$ is an integer, allowing us to take the ceiling of the right-hand side above.
\end{proof}

\begin{theorem}
    \label{thm: slide_bound_l1_prime}
    Let $\vec{B} = (\vec{b}_1;\ldots;\vec{b}_k) \in \F_q^{p\beta \times n}$ be a $\beta$-slide reduced basis, and let $c_1 := |\vec{b}_1|$, $w_{i} := s_q(c_i,\beta)$, and 
    \[
        c_{i+1} := \Big\lceil \frac{(q-1) (w_i-c_{i})}{q^{\beta}-q} \Big\rceil\; .
    \]
    Then,
    \[
        n \geq w_1 + w_2 + \dots + w_{p}
        \; .
    \]
\end{theorem}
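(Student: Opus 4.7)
The plan is to show by induction that $\ell_{j\beta+1} \geq c_{j+1}$ for $j = 0, 1, \ldots, p-1$ (where $\ell_i := |\vec{b}_i^+|$), and then to use the forward-reduced property of each block to convert this into a lower bound on the support size of each block, which sums to at most $n$.

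First, I would set up notation by letting $T_j := \ell_{j\beta+1} + \ell_{j\beta+2} + \cdots + \ell_{(j+1)\beta}$ denote the support size of $\C(\vec{B}_{[j\beta+1,(j+1)\beta]})$ (via \cref{fact: lengt_invariance}). Since block $j$ is forward reduced, $\ell_{j\beta+1} = \mindist(\C(\vec{B}_{[j\beta+1,(j+1)\beta]}))$, and therefore by the definition of $s_q$ we have $T_j \geq s_q(\ell_{j\beta+1}, \beta)$. Summing over $j$ and noting $n \geq \sum_j T_j$ by \cref{fact: lengt_invariance}, the theorem will follow provided I show $\ell_{j\beta+1} \geq c_{j+1}$ for each $j$.

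The base case $\ell_1 \geq c_1 = |\vec{b}_1|$ is immediate since $\vec{b}_1^+ = \vec{b}_1$. For the inductive step, I observe that when $0 \leq j \leq p-2$, the projected block $\vec{B}_{[j\beta+1,(j+1)\beta+1]}$ (of dimension $\beta+1$) is exactly a twin-reduced basis: its first $\beta$ vectors form a forward-reduced block (block $j$ of the slide-reduced basis), and its last $\beta$ vectors form a backward-reduced block (by property \ref{item:slide_backward} of \cref{def:slide}). Applying \cref{lem:twin_reduced} to this twin-reduced basis yields
\[
\ell_{(j+1)\beta+1} \geq \left\lceil \frac{q-1}{q^\beta-q} \bigl(s_q(\ell_{j\beta+1}, \beta) - \ell_{j\beta+1}\bigr) \right\rceil.
\]

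The key step is then to combine this with the inductive hypothesis $\ell_{j\beta+1} \geq c_{j+1}$ to conclude $\ell_{(j+1)\beta+1} \geq c_{j+2}$. For this I need monotonicity of the map $d \mapsto s_q(d,\beta) - d$. This follows from \cref{lem: shortening}: setting $x := \ell_{j\beta+1} - c_{j+1} \geq 0$ (assuming this is positive; the case of equality is trivial), the lemma gives $s_q(\ell_{j\beta+1},\beta) \geq s_q(c_{j+1},\beta) + x$, i.e.\ $s_q(\ell_{j\beta+1},\beta) - \ell_{j\beta+1} \geq s_q(c_{j+1},\beta) - c_{j+1} = w_{j+1} - c_{j+1}$. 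Plugging back in and invoking monotonicity of $\lceil \cdot \rceil$ yields $\ell_{(j+1)\beta+1} \geq c_{j+2}$, completing the induction.

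Assembling the pieces, $T_j \geq s_q(\ell_{j\beta+1},\beta) \geq s_q(c_{j+1},\beta) = w_{j+1}$ (using monotonicity of $s_q$ in its first argument, which again follows from \cref{lem: shortening}), and so
\[
n \;\geq\; \sum_{j=0}^{p-1} T_j \;\geq\; \sum_{j=0}^{p-1} w_{j+1} \;=\; w_1 + w_2 + \cdots + w_p,
\]
as desired. The only moderately subtle step here is justifying the monotonicity of $s_q(d,\beta) - d$ from \cref{lem: shortening}; everything else is a direct application of the definitions and of \cref{lem:twin_reduced}.
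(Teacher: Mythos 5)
Your proposal is correct and follows essentially the same route as the paper's proof: identify each overlapping block $\vec{B}_{[j\beta+1,(j+1)\beta+1]}$ as twin reduced, apply \cref{lem:twin_reduced}, run the induction using the monotonicity of $d \mapsto s_q(d,\beta)-d$ derived from \cref{lem: shortening}, and then use forward-reducedness to lower bound each block's support by $w_{j+1}$ before summing. No gaps; only the block indexing convention differs.
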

\begin{proof}
   For $1 \leq i \leq p$, let
    \[
        s_i := |\vec{b}_{(i-1)\beta +1}^+| + \cdots + |\vec{b}_{i\beta}^+|
    \]
    be the support of the code generated by $\basis_{[(i-1)\beta + 1,i\beta]}$, and 
    let
    $
        a_i := |\vec{b}_{(i-1)\beta +1}^+|
    $ be the length of the first vector in this block.
    
    Notice that for all $1 \leq i \leq p-1$, the basis $\basis_{[(i-1)\beta + 1,i\beta + 1]}$ is twin reduced. 
    By \cref{lem:twin_reduced}, we therefore have that
    \begin{equation}
        \label{eq:ell_to_s}
        a_{i+1} \geq \Big\lceil \frac{q-1}{q^\beta-q} \cdot (s_q(a_i,\beta)-a_{i})\Big\rceil
        \; .
    \end{equation}
    (Notice that this inequality is of course quite closely related to the recurrence defining $c_i$.)

    We first prove by induction that for any $1 \leq i \leq p$, we have $a_i \geq c_{i}$. Indeed, the base case $i=1$ is trivial by the definition of $c_1 := a_1$. Then, for some $1 \leq i < p$, we assume that $a_i \geq c_{i}$, and we use this to prove that $a_{i+1} \geq c_{i+1}$. Indeed, by \cref{eq:ell_to_s}, we have that
    \[
        a_{i+1} \geq \Big\lceil \frac{q-1}{q^\beta-q} \cdot (s_q(a_i,\beta)-a_i)\Big\rceil
        \; .
    \]
    By \cref{lem: shortening}, $s_q(a,\beta)-a$ is a non-decreasing of $a$. Together with the induction hypothesis, this gives
    \[
        a_{i+1} \geq \Big\lceil \frac{q-1}{q^\beta-q} \cdot (s_q(a_i,\beta)-a_i)\Big\rceil \geq \Big\lceil \frac{q-1}{q^\beta-q} \cdot (s_q(c_i,\beta)-c_i)\Big\rceil = c_{i+1}
        \; ,
    \]
    as needed.

    To finish the proof, we notice that for all $1 \leq i \leq p$, since $\basis_{[(i-1)\beta + 1,i\beta]}$ is forward reduced, by definition $a_i$ is the minimum distance of the code generated by this basis. We therefore must have that the support $s_i$ of the code generated by this basis satisfies
    \[
        s_i \geq s_q(a_i,\beta)
        \; .
    \]
    Now, using the (trivial) fact that $s_q(a,\beta)$ is a non-decreasing function of $a$ together with the (now proven) inequality $a_i \geq c_i$, we see that
    \[
        s_i \geq s_q(c_i,\beta) = w_i
        \; .
    \]
    Finally, we have
    \[
        n \geq \supp(\C) = s_1 + \cdots + s_p \geq w_1 + \cdots + w_{p}
        \; ,
    \]
    as needed.
\end{proof}

    \section{How good can a basis be?}
        \label{sec: k1_heuristic}

To understand the promise and limitations of block reduction for codes, we now try to answer the question of how good a basis we can possibly hope for.

Of course, there are many different ways that one can imagine defining the quality of a basis $\basis$. \cite{DDvAlgorithmicReductionTheory2022} propose the very nice quantity
$$k_1  := |\{ i : |\vec{b}_i^+| > 1\}|
\; .$$
In \cite{DDvAlgorithmicReductionTheory2022}, the authors argue that bases with larger $k_1$ are in some sense ``better.'' In particular, they give an algorithm (Lee-Brickell-Babai) for finding short codewords and argue heuristically that the running time of this algorithm decreases as $k_1$ gets larger. 

We do not delve into the details of this here. Instead, we note that, intuitively, $k_1$ is a measure of how balanced a profile is, and thus can be thought of as a measure of the quality of a basis for the purpose of decoding, with larger $k_1$ suggesting a more balanced profile. We also note that when basis reduction is run in practice, the vast majority of epipodal vectors tend to have length one. Typically the $k_1$ non-trivial epipodal vectors are simply the first epipodal vectors $\vec{b}_1^+,\ldots, \vec{b}_{k_1}^+$, in which case basis-reduction algorithms with block size larger than $k_1$ perform no better than one-block reduction (\cref{sec: one_block}) with the same block size. So, one can think of upper bounds on $k_1$ as suggesting limitations of block reduction.

In this section, we will prove two results. First, we will argue (using unproven but mild heuristics) that any basis for a random code over $\F_2$ has $k_1 \leq O(\log (k) \log(n-k))$ (for a wide range of parameters $n$ and $k$). And, we will argue that all ``non-degenerate'' codes over $\F_2$ have a basis with $k_1 \geq \Omega(\log^2 k)$ (though we certainly do not provide an efficient algorithm for computing this basis). More specifically, we prove that any code over $\F_2$ has a basis $\basis$ such that $\sum_{i \geq \Omega(\log^2 k)} (|\vec{b}_i^+|-1)$ is large. This implies that $k_1$ is large unless the code has a remarkably low-support subcode.

\subsection{A heuristic upper bound on \texorpdfstring{$k_1$}{k1} for random codes}
\label{sec:k1_upper}

We first present our (unproven) heuristic assumption about the bases of random codes. The heuristic is rather technical, but it should be interpreted simply as a specific and rather weak version of the more general idea that ``projections of random codes look random.'' In particular, we note that our result is not very sensitive to the specific parameters chosen in \cref{heuristic:projections_random}, and that these parameters are chosen to be quite loose (thus making the heuristic a weaker assumption). Note that we state the heuristic as a property of all bases $\vec{B}$ of random codes for simplicity, but we of course only need the heuristic to hold for a basis maximizing $k_1$.

\begin{heuristic}
    \label{heuristic:projections_random}
    For any positive integers $k$, $k \leq n \leq k^{5}-1$, and $i < k$, if $\basis \in \F_2^{k \times n}$ is a proper basis of a random dimension-$k$ code $\C \subseteq \F_2^n$, then 
    \[
        \mindist(\C(\basis_{[i,k]})) \geq \frac{n'-k'}{10 \log k'}
        \; ,
    \]
    where $n' := |\supp(\basis_{[i,k]})|$ and $k' := k-i+1$.
\end{heuristic}

In fact, a random $[n',k']_2$ code typically will have such a minimum distance with very high probability. The following lemma justifies the heuristic by showing this formally. (The lemma is not actually necessary for the rest of the section, and is simply meant to provide evidence that the heuristic is justified. The reader may skip it if she is willing to take \cref{heuristic:projections_random} on faith.) In particular, notice that the heuristic trivially holds if $n'-k' \leq 10 \log k'$. The lemma below shows that \cref{heuristic:projections_random} holds with probability at least $1-1/(k')^5$ for $n'-k' \geq 10 \log k'$ under the assumption that the projection $\basis_{[i,k]}$ behaves like a random code.

\begin{lemma}
    \label{lem: loose_gv}
    For any positive integers  $n$ and $k \geq 2$ satisfying $k \leq n \leq k^{5}-1$, if $\vec{B} \sim \F_2^{k \times n}$ is sampled uniformly at random, then
    \[d_{\min}(\C(\vec{B})) > \frac{n-k}{10 \log k}
    \]
    with probability at least $1-2^{-(n-k)/2}$.
\end{lemma}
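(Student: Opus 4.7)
The plan is a standard Gilbert-Varshamov-style union bound. Set $m := n-k$ and $w := \lfloor m/(10 \log k) \rfloor$; the goal is to show that with probability at least $1 - 2^{-m/2}$, no nonzero codeword in $\C(\vec{B})$ has Hamming weight at most $w$. First I would dispose of the trivial case $w = 0$: there the conclusion reduces to $\C(\vec{B}) \neq \{\vec{0}\}$, which fails only when $\vec{B} = \vec{0}$, an event of probability $2^{-nk} \ll 2^{-m/2}$. So assume $w \geq 1$, which in particular forces $m \geq 10 \log k$.

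For the main case, the key observation is that for any fixed nonzero $\vec{x} \in \F_2^k$, the product $\vec{x}\vec{B}$ is uniformly distributed in $\F_2^n$, so that $\Pr[1 \leq |\vec{x}\vec{B}| \leq w] = 2^{-n} \sum_{i=1}^w \binom{n}{i}$. A union bound over the $2^k - 1$ nonzero $\vec{x}$ then yields
\[
\Pr[d_{\min}(\C(\vec{B})) \leq w] \leq 2^{k-n} \sum_{i=1}^w \binom{n}{i},
\]
and it remains to bound the binomial sum by $2^{m/2}$. To do so I would apply the standard estimate $\sum_{i=0}^w \binom{n}{i} \leq 2^{n H(w/n)}$ (valid for $w \leq n/2$), combined with the entropy inequality $H(p) \leq p \log(e/p)$, which gives $n H(w/n) \leq w \log(en/w)$. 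The hypothesis $n \leq k^5 - 1$ yields $\log n < 5 \log k$, and together with $w \leq m/(10 \log k)$ this implies $w \log(en/w) \leq w \cdot 5\log k + O(w) \leq m/2 + O(m/\log k)$.

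The only real obstacle is tightening the constants so that the bound yields $\Pr[d_{\min}(\C(\vec{B})) \leq w] \leq 2^{-m/2}$ rather than a slightly weaker bound: the naive estimates above carry an $O(m/\log k)$ slack in the exponent. To close this gap, I would treat small and large $w$ separately. At the boundary $w = 1$, the direct computation $\Pr[\text{fail}] \leq 2^k \cdot n \cdot 2^{-n} = 2^{k+\log n - n} \leq 2^{-m/2}$ already suffices, using $\log n \leq 5 \log k \leq m/2$ (since $w \geq 1$ forces $m \geq 10 \log k$). For $w \geq 2$, the $\log w$ savings in $\binom{n}{w}$ coming from the Stirling refinement $\binom{n}{w} \lesssim 2^{n H(w/n)}/\sqrt{w}$ absorbs the $O(m/\log k)$ overhead, completing the argument.
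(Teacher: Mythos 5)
Your proposal is correct and follows the same skeleton as the paper's proof: for each fixed nonzero $\vec{x} \in \F_2^k$ the word $\vec{x}\vec{B}$ is uniform in $\F_2^n$, and a union bound over the $2^k-1$ nonzero messages reduces everything to showing that the binomial tail $\sum_{i \leq w} \binom{n}{i}$ is at most $2^{(n-k)/2}$. The only divergence is how that sum is estimated. The paper uses the cruder but slack-free bound $\sum_{i=0}^{d} \binom{n}{i} \leq (n+1)^{d} \leq k^{5d} = 2^{(n-k)/2}$ with $d := (n-k)/(10\log k)$ (using $n+1 \leq k^5$), which finishes the argument in one line with no case analysis. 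Your entropy estimate $2^{nH(w/n)} \leq (en/w)^w$ carries the extra factor $(e/w)^w$, which is exactly why you need the $w=1$ and $w \geq 2$ repairs. Be aware that your stated reason for the $w \geq 2$ case---that a $\frac{1}{2}\log w$ Stirling savings absorbs an $O(w)$ overhead---is imprecise as written (a logarithmic term cannot beat a linearly growing one in general); it is salvageable only because the overhead $w\log(e/w)$ is positive solely for $w \leq 2$, so plain entropy already suffices for $w \geq 3$, and at $w=2$ either the Stirling constant $\frac{1}{2}\log(2\pi w) \approx 1.8$ or the direct computation $\sum_{i \leq 2}\binom{n}{i} \leq (n+1)^2 \leq k^{10} \leq 2^{(n-k)/2}$ closes the remaining factor. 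So your route does work once those small-$w$ constants are checked explicitly, but the paper's choice of the elementary bound $(n+1)^d$ makes the entire case analysis unnecessary.
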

\begin{proof}
    Consider a fixed non-zero vector $\vec{z} \in \F_2^{k}$. Then, the random variable $\vec{B}\vec{z}$ is a uniformly random element in $\F_2^n$. Therefore,
    \begin{align*}
        \Pr[|\vec{B} \vec{z}| \leq d] 
        \leq \frac{\sum_{i=0}^d {n \choose i}}{2^n}
        \leq \frac{(n+1)^d}{2^n}
        \; .
    \end{align*}
    By the union bound, the probability that there exists a non-zero $\vec{z} \in \F_2^k$ such that $|\vec{B}\vec{z}| \leq d$ for $d := (n-k)/(10 \log k)$ is at most
    \[
        2^k \cdot \frac{n^d}{2^n} = 2^{k-n} n^{(n-k)/(10 \log k)} \leq 2^{k-n} k^{(n-k)/(2\log k)} = 2^{-(n-k)/2}
        \; ,
    \]
    as claimed.
\end{proof}

We now prove the main result of this section, which shows that under \cref{heuristic:projections_random}, the support of projected codes $\C(\basis_{[i,k]})$ decays rather quickly.

\begin{theorem}
    \label{thm: k1_upper_bound}
    For positive integers $k \geq 2$ and $n$ satisfying $k \leq n \leq k^5-1$, and a proper basis $\vec{B} \in \F_2^{k \times n}$ satisfying \cref{heuristic:projections_random}, we have 
    \[
        |\supp(\C(\basis_{[i,k]}))| \leq k + e^{-(i-1)/(10 \log k)}\cdot (n-k) 
    \]
    for all $i \in [k]$.
\end{theorem}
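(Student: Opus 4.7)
The plan is to prove the bound by induction on $i$, tracking the excess support $s_i - k$ where $s_i := |\supp(\C(\basis_{[i,k]}))|$, and showing it contracts by a factor of $1 - 1/(10\log k)$ at each step.

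First, I would observe the key identity: since $\basis_{[i+1,k]}$ is obtained from $\basis_{[i,k]}$ by projecting orthogonal to $\vec{b}_i$, which zeros out exactly the coordinates in $\supp(\vec{b}_i^+)$, we have
\[
s_{i+1} = s_i - |\vec{b}_i^+|.
\]
Writing $k_i := k-i+1$ for the dimension of $\C(\basis_{[i,k]})$, note that $\vec{b}_i^+$ is a nonzero codeword of $\C(\basis_{[i,k]})$ (by properness of $\basis$), so by \cref{heuristic:projections_random} applied to the proper basis $\basis_{[i,k]}$ of this subcode,
\[
|\vec{b}_i^+| \;\geq\; \mindist\bigl(\C(\basis_{[i,k]})\bigr) \;\geq\; \frac{s_i - k_i}{10 \log k_i}.
\]

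Next, I would perform the induction. The base case $i=1$ holds since $s_1 \leq n = k + (n-k)$. For the inductive step, assume $s_i - k \leq e^{-(i-1)/(10\log k)}(n-k)$. If $s_i \leq k$, then $s_{i+1} \leq s_i \leq k$ and the bound is trivial. Otherwise $s_i > k \geq k_i$, and since $x \mapsto (s_i - x)/\log x$ is increasing in $x$ on the relevant range (numerator grows and denominator shrinks as $x$ decreases), we get
\[
|\vec{b}_i^+| \;\geq\; \frac{s_i - k_i}{10 \log k_i} \;\geq\; \frac{s_i - k}{10 \log k}.
\]
Combining with $s_{i+1} = s_i - |\vec{b}_i^+|$ and subtracting $k$ from both sides gives
\[
s_{i+1} - k \;\leq\; (s_i - k)\left(1 - \frac{1}{10\log k}\right) \;\leq\; (s_i - k)\, e^{-1/(10\log k)},
\]
and applying the inductive hypothesis finishes the step.

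I expect this to be essentially routine given the heuristic; the only subtle point is that the heuristic is phrased in terms of $n'-k'$ and $\log k'$ with $k' = k_i$, while the target bound uses the larger quantity $10 \log k$. The step $(s_i-k_i)/\log k_i \geq (s_i-k)/\log k$ handles this slack, at the cost of needing to separate the case $s_i \leq k$ (where the bound is vacuous and the heuristic might return a non-positive value). No other obstacles arise, since $s_i$ is non-increasing in $i$, so once the trivial regime $s_i \leq k$ is reached the bound continues to hold for all larger $i$.
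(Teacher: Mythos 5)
Your proof is correct and follows essentially the same route as the paper: induct on the excess support $|\supp(\C(\basis_{[i,k]}))|-k$, use $s_{i+1}=s_i-|\vec{b}_i^+|$ together with the heuristic and $k_i\leq k$ to get a contraction factor $1-\tfrac{1}{10\log k}\leq e^{-1/(10\log k)}$. The paper even defines its $s_i$ as support minus $k$ (rather than $k_i$) precisely to streamline this step, and your explicit case split at $s_i\leq k$ is just a slightly more careful handling of a situation the paper's single inductive chain also covers.
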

\begin{proof}
    Let $\C_i := \C(\basis_{[i,k]})$ be the $i$th projected code, $n_i := |\supp(\C_i)|$ its support, $k_i := k-i+1$ its dimension, and $s_i := n_i - k$. (Note that the fact that we subtract $k$ here and not $k_i$ is intentional.) Set 
    \[
        \delta := 1 - \frac{1}{10 \log k}
    \; .
    \]
    We prove by induction that $s_i \leq \delta^{i-1} (n-k)$. The result then follows by noting that $\delta < e^{-1/(10 \log k)}$.

    The base case $i = 1$ is trivial, since clearly $s_1 := n_1 - k \leq n - k$. For the induction step, we may assume that $s_i \leq \delta^{i-1} (n-k)$ for some $1 \leq i < n$ and use this to prove that $s_{i+1} \leq \delta^i (n-k)$.
    
    Indeed, by the heuristic, we must have 
    \[
        \mindist(\C_i) \geq \frac{n_i-k_i}{10 \log k_i} \geq  \frac{s_i}{10 \log k}
        \; .
    \]
    Since $\vec{b}_i^+ \in \C_i$ is non-zero (because the basis is proper), we must have $|\vec{b}_i^+| \geq s_i/(10 \log k_i)$. The result then follows by noting that 
    \[
        s_{i+1} = s_i - |\vec{b}_i^+| \leq \delta s_i \leq \delta^{i} (n-k)
        \; ,
    \]
    as needed.
\end{proof}

Finally, the following corollary shows how the above relates to the number $k_1$ of epipodal vectors with length greater than one. Notice that this holds even if $n = k^5$ is quite large relative to $k$.

\begin{corollary}
    For positive integers $k \geq 2$ and $n$ satisfying $k \leq n \leq k^5-1$, a proper basis $\vec{B} \in \F_2^{k \times n}$ satisfying \cref{heuristic:projections_random} has at most $20 \log (k) \log(n-k)+1 \leq O(\log^2 n)$ epipodal vectors $\vec{b}_i^+$ with length at least two.
\end{corollary}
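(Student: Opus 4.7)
The plan is to apply \cref{thm: k1_upper_bound} at a carefully chosen threshold $i^\ast = O(\log k \cdot \log(n-k))$, and then bound the number of "long" epipodal vectors (those with $|\vec{b}_j^+| \ge 2$) separately in the range $j < i^\ast$ and in the range $j \ge i^\ast$. The key observation is that for $j \ge i^\ast$, the theorem already forces $\sum_{j \ge i^\ast}|\vec{b}_j^+|$ to be very close to the minimum possible value of $k - i^\ast + 1$, so there is very little "slack" available for any $|\vec{b}_j^+|$ to exceed $1$.

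In more detail, first I would invoke \cref{thm: k1_upper_bound} with $i = i^\ast$ and use the identity $|\supp(\C(\basis_{[i^\ast,k]}))| = \sum_{j=i^\ast}^{k}|\vec{b}_j^+|$ (which follows from the supports of the epipodal vectors being disjoint). Since the basis is proper, each $|\vec{b}_j^+| \ge 1$, and each $j \ge i^\ast$ with $|\vec{b}_j^+| \ge 2$ contributes at least one extra unit to this sum beyond the baseline $(k - i^\ast + 1)$. Hence
\[
\bigl|\{j \ge i^\ast : |\vec{b}_j^+| \ge 2\}\bigr| \;\le\; \Bigl(\sum_{j=i^\ast}^k |\vec{b}_j^+|\Bigr) - (k - i^\ast + 1) \;\le\; (i^\ast - 1) + e^{-(i^\ast - 1)/(10 \log k)}(n-k).
\]
Combined with the trivial bound $|\{j < i^\ast : |\vec{b}_j^+| \ge 2\}| \le i^\ast - 1$, this yields
\[
k_1 \;\le\; 2(i^\ast - 1) + e^{-(i^\ast - 1)/(10 \log k)}(n-k).
\]

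Next I would choose $i^\ast$ to make the exponential term negligible. Setting $i^\ast$ to be the smallest integer with $(i^\ast - 1)/(10 \log k) \ge \ln(n-k)$ — equivalently, $i^\ast \ge 10 \log(k)\ln(n-k) + 1$ — forces $e^{-(i^\ast - 1)/(10 \log k)}(n-k) \le 1$. Using $\ln(n-k) \le \log_2(n-k)$ (recall that the paper's $\log$ is base two), we may take $i^\ast = \lceil 10 \log(k)\log(n-k)\rceil + 1 \le 10 \log(k)\log(n-k) + 2$. Plugging back in, this gives $k_1 \le 2(i^\ast - 1) + 1 \le 20 \log(k)\log(n-k) + 1$, matching the claim. (For the edge case $n - k \le 1$ the statement is essentially vacuous since $\sum_j (|\vec{b}_j^+| - 1) \le n - k$ already bounds $k_1$ trivially, so we may assume $n - k \ge 2$.)

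The approach is entirely straightforward given \cref{thm: k1_upper_bound}, so there is no real obstacle; the only point requiring a little care is picking $i^\ast$ just large enough to kill the exponential residual $e^{-(i^\ast-1)/(10 \log k)}(n-k)$ without paying more than necessary in the additive term $2(i^\ast - 1)$, and handling the base change between $\ln$ (from the exponent) and $\log_2$ (as used in the stated bound).
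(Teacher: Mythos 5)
Your proposal follows essentially the same route as the paper's proof: apply \cref{thm: k1_upper_bound} at a threshold $i \approx 10\log(k)\log(n-k)+1$, bound the number of indices $j<i$ trivially by $i-1$, and bound the number of indices $j\ge i$ with $|\vec{b}_j^+|\ge 2$ by the excess of $\sum_{j\ge i}|\vec{b}_j^+| = |\supp(\C(\basis_{[i,k]}))|$ over the properness baseline $k-i+1$. The substance is correct; the only issue is quantitative. Because you insist on $(i^\ast-1)/(10\log k)\ge \ln(n-k)$ and then round up to $i^\ast=\lceil 10\log(k)\log(n-k)\rceil+1$, the bound you actually derive is $2(i^\ast-1)+1\le 20\log(k)\log(n-k)+3$, not $+1$ as you assert and as the corollary states. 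This is harmless for the $O(\log^2 n)$ claim and easy to repair: the $\ln$-versus-$\log_2$ conversion is not needed, since already with $i-1=10\log(k)\log(n-k)$ one has $e^{-\log_2(n-k)}(n-k)=(n-k)^{1-\log_2 e}\le 1$ (the exponential is base $e$ while the exponent is a base-$2$ logarithm), which is exactly how the paper gets $|\supp(\C(\basis_{[i,k]}))|\le k+1$ and hence the stated count $2i-1=20\log(k)\log(n-k)+1$. (The paper is itself informal about the integrality of $i$; if you want an integer threshold, taking $i^\ast-1=\lfloor 10\log(k)\log(n-k)\rfloor$ still works for $n-k\ge 2$ because $e^{1/(10\log k)}(n-k)^{1-\log_2 e}<1$, so your extra care there need not cost the additive constant.)
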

\begin{proof}
    Let 
    \[
    i := 10 \log (k) \log(n-k)+1
    \; ,
    \]
    and let $\C' := \C(\basis_{[i,k]})$. By \cref{thm: k1_upper_bound},
        \[
        |\supp(\C')| \leq k + 1
        \; .
    \]
     Notice that $|\vec{b}_i^+| + \cdots + |\vec{b}_k^+| = |\supp(\C')|$, and since the basis is proper, we must have $|\vec{b}_j^+| \geq 1$ for all $j$. It follows that there are at most $i$ values of $j \geq i$ such that $|\vec{b}_j^+| \geq 2$. Therefore, the total number of epipodal vectors with length at least two is at most $i-1 + i = 2i - 1$, as claimed.
\end{proof}

\subsection{Griesmer-reduced bases have good profiles}
\label{sec:k1_lower}

We now turn to proving something like the fact that $k_1 \geq \Omega(\log^2 k)$ for Griesmer-reduced bases with suitable parameters. We will not be able to literally argue this in the worst case because of simple counterexamples. For example, some codes do not have any proper bases with $k_1 > 1$. So, we instead prove that the maximal index $k_1^*$ of an epipodal vector with length greater than one is large, which is the best that we can hope for.

 We begin by deriving a loose version of the Hamming bound, which will be convenient for our purposes.

\begin{lemma}
    \label{lem: loose_hamming}
    For positive integers $k$ and $n > k$ satisfying $n \leq k + \sqrt{k}$, any code $\C \subseteq \F_2^n$ with dimension $k$ must have 
    \[
        \mindist(\C) \leq 4 \cdot \frac{n-k}{\log n} + 2
        \; .
    \]

    Furthermore, if $n > k + \sqrt{k}$, then
    \[
        \mindist(\C) \leq 4 \cdot \frac{\sqrt{k}}{\log k} + 2 + (n-k-\sqrt{k})
    \]
\end{lemma}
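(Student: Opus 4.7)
The plan is to derive the first bound from the Hamming bound (using a standard estimate on $\binom{n}{r}$) and then reduce the second bound to the first via puncturing.

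For the first bound, set $d := \mindist(\C)$ and $r := \lfloor (d-1)/2 \rfloor$, so $d \leq 2r + 2$. The Hamming bound $|\C| \cdot V(n,r) \leq 2^n$ together with $\binom{n}{r} \leq V(n,r)$ and the standard inequality $\binom{n}{r} \geq (n/r)^r$ yields
\[
r \log(n/r) \;\leq\; n-k.
\]
I first argue that $r < \sqrt{n}$ (for sufficiently large $n$; very small $n$ are handled by direct inspection). Indeed, if $r \geq \sqrt{n}$ (and $r \leq n/2$, else $d > n$), then $\binom{n}{r} \geq \binom{n}{\lceil \sqrt{n} \rceil} \geq (\sqrt{n}/2)^{\sqrt{n}}$, which forces $n-k \geq (\sqrt{n}/2)(\log n - 2)$ and contradicts the hypothesis $n - k \leq \sqrt{k} \leq \sqrt{n}$ once $\log n > 4$. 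With $r < \sqrt{n}$ in hand, $\log(n/r) > (\log n)/2$, so $r \log(n/r) \leq n - k$ gives $r \leq 2(n-k)/\log n$, and hence $d \leq 4(n-k)/\log n + 2$ as claimed.

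For the second bound, I puncture $\C$ down to length $k + \lfloor \sqrt{k} \rfloor$ and apply the first bound. Let $\vec{G} \in \F_2^{k \times n}$ be any generator matrix for $\C$ and fix a set $I \subseteq [n]$ of $k$ columns of $\vec{G}$ that are linearly independent (such a set exists since $\vec{G}$ has rank $k$). Choose any $S \subseteq [n] \setminus I$ with $|S| = n - k - \lfloor \sqrt{k} \rfloor$, which is nonnegative by the hypothesis $n > k + \sqrt{k}$. Let $\C'$ be the code obtained by puncturing $\C$ on $S$. Since the columns in $I$ are preserved, $\C'$ has dimension $k$ and length $n' := k + \lfloor \sqrt{k} \rfloor$, and the puncturing map $\C \to \C'$ is injective (its restriction to $I$ alone already is). Because $n' - k = \lfloor \sqrt{k} \rfloor \leq \sqrt{k}$, the first bound applies to $\C'$ and gives $\mindist(\C') \leq 4 \lfloor \sqrt{k} \rfloor / \log n' + 2 \leq 4\sqrt{k}/\log k + 2$. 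Any codeword $\vec{c}' \in \C'$ lifts uniquely to a codeword $\vec{c} \in \C$ whose weight is at most $|\vec{c}'| + |S|$ (the lift agrees with $\vec{c}'$ outside $S$ and has weight at most $|S|$ inside $S$), so
\[
\mindist(\C) \;\leq\; \mindist(\C') + |S| \;\leq\; 4\sqrt{k}/\log k + 2 + (n - k - \lfloor \sqrt{k} \rfloor),
\]
which matches the stated bound up to the $\sqrt{k} - \lfloor \sqrt{k} \rfloor < 1$ rounding error (absorbed into the additive constant).

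The main technical obstacle is the $r < \sqrt{n}$ step in the first bound: one must handle small $n$ separately and track a few floor/ceiling constants in the $\binom{n}{r} \geq (n/r)^r$ estimate. The remaining steps are a routine Hamming-bound calculation and a standard puncturing argument, where the only subtlety is ensuring dimension preservation and injectivity of the puncturing map by fixing a linearly independent column set $I$ before choosing $S$.
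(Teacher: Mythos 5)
Your proposal is correct in substance and follows essentially the same route as the paper: the first bound via the Hamming bound together with $\binom{n}{t}\geq (n/t)^t$, and the ``furthermore'' via puncturing down to length roughly $k+\sqrt{k}$ and applying the first bound. Two remarks. First, your case analysis to establish $r<\sqrt{n}$ (with a separate ``direct inspection'' for small $n$) is unnecessary work: the paper gets this in one line from the Singleton bound, since $t\leq d-1\leq n-k\leq\sqrt{k}\leq\sqrt{n}$ under the hypothesis $n\leq k+\sqrt{k}$, which also disposes of the small-$n$ cases (for $n\leq 16$ the claimed bound already follows from $d\leq n-k+1$ anyway). Second, your puncturing step is sound---fixing an information set $I$ before choosing $S$ correctly preserves the dimension and gives $\mindist(\C)\leq\mindist(\C')+|S|$---but the rounding slack you wave away is not literally ``absorbed into the additive constant'': what you prove is the bound with $\lfloor\sqrt{k}\rfloor$ in the subtracted term, which for $k>16$ can exceed the stated right-hand side by up to $(\sqrt{k}-\lfloor\sqrt{k}\rfloor)(1-4/\log k)<1$, and the constant $2$ in the statement is fixed. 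To be fair, the paper's own proof of the ``furthermore'' is a one-sentence appeal to \cref{lem: shortening} and, when fleshed out with an integer number of punctured coordinates, carries the same sub-$1$ slack; moreover the slack is immaterial for the lemma's only use (in \cref{thm:k1_lower_bound_new}, which has ample room). So this is a cosmetic mismatch with the stated constants rather than a conceptual gap, but you should either state the bound you actually prove (with $\lfloor\sqrt{k}\rfloor$) or add the small extra argument needed to recover the exact inequality, rather than assert it is absorbed.
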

\begin{proof}
    We prove the first statement first. Let $d := \mindist(\C)$.
    The Hamming bound (which one can prove via a packing argument) tells us that 
    \[
        2^{n-k} \geq \sum_{i=0}^{t} \binom{n}{i}
        \; ,
    \]
    where $t := \floor{(d-1)/2}$.
    In particular,
    \[
        2^{n-k} \geq \binom{n}{t} \geq (n/t)^t
        \; .
    \] 
    Rearranging, we see that
    \[
        t \leq (n-k)/\log (n/t) 
        \; .
    \]
    Furthermore, recall from the Singleton bound that $d \leq n-k+1$, so $t \leq d-1 \leq (n-k+1)-1 = n-k \leq \sqrt{n}$.
    Plugging this in to the above, we see that
    \[
        t \leq (n-k)/\log(\sqrt{n}) = 2(n-k)/\log n
        \; .
    \]
    It follows that $d \leq 2t + 2 \leq 4(n-k)/\log n + 2$.

    The case when $n > k + \sqrt{k}$ then follows, e.g., by \cref{lem: shortening}.
\end{proof}

We now prove our main theorem of this section, which in particular tells us that for, say, $i = \log^2(k)/20$, a Griesmer-reduced basis will have $|\supp(\basis_{[i,k]})| > k$, or equivalently that $k_1^* > i$, as needed. Notice that this holds even if for very small supports $n \approx k + \sqrt{k}$.

\begin{theorem}
    \label{thm:k1_lower_bound_new}
    For sufficiently large integers $k$ and $n \geq k + \sqrt{k}/2$, a Griesmer-reduced basis $\vec{B} \in \F_2^{k \times n}$ of a dimension-$k$ code $\C$ with $|\supp(C)| = n$ satisfies 
    \[
        |\supp(\basis_{[i,k]})| \geq k- i + 1 + e^{-10i/\log k} \cdot \sqrt{k}/2
    \]
    for all $i \leq \log^2(k)/20$.
\end{theorem}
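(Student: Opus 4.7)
The plan is to prove by induction on $i$ the strengthened invariant $s_i \geq (\sqrt{k}/2)(1-5/\log k)^{i-1}$, where $s_i := |\supp(\basis_{[i,k]})| - (k-i+1)$. This implies the theorem's bound since $(1-5/\log k)^{i-1} \geq e^{-10(i-1)/\log k} \geq e^{-10i/\log k}$ for $\log k \geq 10$. Setting $\C_i := \C(\basis_{[i,k]})$, $k_i := k-i+1$, and $n_i := k_i + s_i$, the Griesmer property gives $|\vec{b}_i^+| = \mindist(\C_i)$ and the recurrence $s_{i+1} = s_i - (|\vec{b}_i^+| - 1)$; the base case $s_1 \geq \sqrt{k}/2$ follows directly from the hypothesis.

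For the inductive step, the large-excess case $s_i > \sqrt{k_i}$ is dispatched in a single step via the puncturing form of \cref{lem: loose_hamming}, which yields $s_{i+1} \geq (1-o(1))\sqrt{k/2}$, comfortably exceeding the target; since $s_i$ is non-increasing by properness of $\basis$, we then stay in the small-excess case $s_i \leq \sqrt{k_i}$ thereafter. There, \cref{lem: loose_hamming} gives $|\vec{b}_i^+| - 1 \leq 4s_i/\log n_i + 1$. The key algebraic step is to show that when $s_i \geq C \log k$ for a suitable absolute constant $C$ (which I carry as an auxiliary invariant), this tightens to $|\vec{b}_i^+| - 1 \leq 5s_i/\log k$; the required inequality reduces to $s_i \geq (\log k)(\log n_i)/(5\log n_i - 4\log k)$, which is $\Theta(\log k)$ since $\log n_i \geq \log k - o(1)$ for $k$ large (using $k_i \geq k - \log^2(k)/20 \geq k/2$). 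This yields the clean multiplicative recurrence $s_{i+1} \geq s_i(1-5/\log k)$, propagating the main invariant.

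The main obstacle is verifying that the auxiliary invariant $s_i \geq C\log k$ is preserved throughout the entire range $i \leq \log^2(k)/20$. The main invariant supplies $s_i \geq (\sqrt{k}/2)(1-5/\log k)^{i-1}$, and using the tight estimate $(1-x)^i \geq e^{-ix/(1-x)}$, the worst-case value at $i = \log^2(k)/20$ evaluates to $(\sqrt{k}/2) e^{-\log(k)(1+o(1))/4} = k^{0.139 + o(1)}/2$, which polynomially dominates $\log k$ for $k$ sufficiently large, closing the induction. The factor $10$ in the theorem's exponent $e^{-10i/\log k}$ (rather than, say, $5$) is exactly the slack needed to pass from the rate $(1-5/\log k)$ that the Hamming-based bound naturally produces to the cleaner exponential form stated in the theorem, and the phrase ``for sufficiently large $k$'' absorbs all of the $o(1)$ terms that appear along the way.
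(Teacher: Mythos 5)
Your proposal follows essentially the same route as the paper's proof: the same excess-support quantity $s_i$, the same multiplicative invariant with rate $1-5/\log k$ obtained from the two cases of \cref{lem: loose_hamming} (small excess via the Hamming-type bound, large excess via the puncturing "furthermore"), and the same final conversion to $e^{-10i/\log k}$; your explicit auxiliary invariant $s_i \geq C\log k$ is simply a cleaner packaging of the "sufficiently large $k$" step by which the paper absorbs the additive constants, and it is indeed maintained since the main invariant keeps $s_i \geq k^{\Omega(1)}$ throughout $i \leq \log^2(k)/20$. One small caveat: your remark that after a large-excess step the process "stays in the small-excess case thereafter" is neither justified (if $\mindist(\C_i)=1$ the excess does not drop while the threshold $\sqrt{k_i}$ shrinks, so the large-excess case can recur) nor needed, and it is harmless because your argument, like the paper's, handles both cases at every index of the induction.
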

\begin{proof}
    Let $\C_i := \C(\basis_{[i,k]})$ be the $i$th projected code, $n_i := |\supp(\C_i)|$ its support, $k_i := k-i+1$ its dimension, and $s_i := n_i - k_i$ be the ``excess support'' of $\C_i$. Set 
    \[
        \delta := 1 - \frac{5}{\log k}
    \; .
    \]
    (Since $k$ is sufficiently large, $\delta$ is positive.)
    We prove by induction that $s_i \geq \delta^i \cdot \sqrt{k}/2$. The result then follows by noting that $\delta \geq e^{-10/\log k}$.

    The base case $i = 1$ is trivial, since by assumption $s_1 = |\supp(\C_1)|-k = |\supp(\C)|-k = n-k \geq \sqrt{k}/2 > \delta \sqrt{k}/2$.
    For the induction step, we assume that $s_i \geq \delta^i \cdot \sqrt{k}$ for some $1 \leq i < \log^2 (k)/20$ and use this to prove that $s_{i+1} \geq \delta^{i+1} \sqrt{k}/2$. Notice that by our bound on $i$, we have that $ k_i \geq k - \log^2(k)/20 \geq k/2$, where we have used the fact that $k$ is sufficiently large.

    There are two cases to consider. If $s_i \leq \sqrt{k_i}$, then we may apply \cref{lem: loose_hamming} to conclude that 
    \[
        \mindist(\C_i) \leq 4 \frac{s_i}{\log n_i} + 2
        \; .
    \]
    Since $\basis$ is Griesmer reduced, we must have $|\vec{b}_i^+| = \mindist(\C_i)$, and therefore that
    \[
        s_{i+1} = s_i - |\vec{b}_i^+| + 1 \geq s_i \cdot \Big(1- \frac{4}{\log n_i} \Big) - 1 \geq \delta^{i} (\sqrt{k}/2) \cdot \Big(1- \frac{4}{\log (k/2)} \Big) - 1 \geq \delta^{i+1} \sqrt{k}/2
        \; ,
    \]
    where in the second-to-last step we used the inequality $n_i \geq k_i \geq k/2$, and in the last step we used the assumption that $k$ is sufficiently large.

    If $s_i > \sqrt{k_i}$, then by the ``furthermore'' in \cref{lem: loose_hamming}, we see that
    \[
        \mindist(\C_i) \leq 4 \cdot \frac{\sqrt{k_i}}{\log k_i} + 2 + (s_i - \sqrt{k_i})
        \; .
    \]
    Again using the fact that the basis is Griesmer reduced, we have
    \[
        s_{i+1} = s_i - |\vec{b}_i^+| + 1 \geq \sqrt{k_i} - 4 \cdot \frac{\sqrt{k_i}}{\log k_i} -1 > \delta^{i+1} \sqrt{k}/2 
        \; ,
    \]
    where the last inequality uses the fact that $k_i \geq k/2$ and that $k$ is sufficiently large.

    The result follows.
\end{proof}

    \section{Two illustrative algorithms}
    \label{sec:pathological}
    
    \subsection{Approximate Griesmer reduction}
    \label{sec: approx}
We first present an algorithm that we call approximate Griesmer reduction. See \cref{alg:approx}. In fact, this is a family of algorithms parameterized by some subprocedure that is used to find a short (but not necessarily shortest) non-zero codeword in a code. This algorithm does not seem to be amenable to analysis, but we view it as an important algorithm to keep in mind when studying basis reduction for codes. Indeed, in our experiments in \cref{sec: experiments}, this algorithm when instantiated with the Lee-Brickell-Babai algorithm of \cite{DDvAlgorithmicReductionTheory2022} produces bases with much better profiles than our other algorithms---at the expense of a much worse running time.

\RestyleAlgo{ruled}
\begin{algorithm2e}
\caption{Approximate Griesmer Reduction}
\label{alg:approx}

\KwIn{A proper basis $\vec{B} = (\vec{b}_1; \dots; \vec{b}_k) \in \F_q^{k \times n}$ for $\C$}
\KwOut{A proper basis $\vec{B} = (\vec{b}_1; \dots; \vec{b}_k) \in \F_q^{k \times n}$ for $\C$}

\For{$i = 1,\ldots, k-1$}{

    Find a short non-zero codeword $\vec{c}$ in the code generated by $\basis_{[i,k]}$\

     $\vec{p} \leftarrow \text{MakePrimitive}(\basis_{[i,k]}, \vec{c})$\
     
     $\vec{A} \leftarrow \text{InsertPrimitive}(\basis_{[i,k]}, \vec{p})$\

    $\vec{B} \leftarrow (\vec{I}_{i-1} \oplus \vec{A}) \vec{B}$
}

\textbf{output} $\vec{B}$

\end{algorithm2e}

\begin{claim}
    \cref{alg:approx} returns a proper basis for $\C$ and runs in time $\sum_{i=1}^{k-1} f(n, k-i+1, q) + O(n k^3 \log^2 q)$, where $f(n, k, q)$ is a bound on the running time of the subprocedure for finding short non-zero codewords on $[n',k]_q$ codes for all $n' \leq n$.
\end{claim}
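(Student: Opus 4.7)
The plan is to maintain as a loop invariant that, at the start of iteration $i$, the current basis $\vec{B}$ is a proper basis for $\C$ (equivalently, by \cref{fact:primitive_and_proper}, that every block $\vec{B}_{[j,j']}$ is proper). The invariant holds at the start because the input basis is assumed proper. For the inductive step, I would argue as follows. By \cref{fact:primitive_and_proper}(\ref{item:projections_proper}) applied to the invariant, the block $\vec{B}_{[i,k]}$ is proper, so by \cref{clm:make_primitive} the call to $\text{MakePrimitive}$ returns a primitive codeword $\vec{p}$ of $\C(\vec{B}_{[i,k]})$, and then by \cref{claim: insert_primitive} the call to $\text{InsertPrimitive}$ returns an invertible $\vec{A}$ such that $\vec{A}\vec{B}_{[i,k]}$ is proper with first row $\vec{p}$. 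Applying the block-diagonal transformation $(\vec{I}_{i-1}\oplus \vec{A})$ to $\vec{B}$ leaves $\C(\vec{B})$ unchanged (since $\vec{A}$ is invertible), and by \cref{lem: lifting_reduction} the new block $\vec{B}_{[i,k]}$ is exactly $\vec{A}$ applied to the old one, hence proper. By \cref{lem: epipodal_locality}, the epipodal vectors with index $<i$ are unchanged, so they remain non-zero; combined with properness of the new $\vec{B}_{[i,k]}$, \cref{fact:primitive_and_proper}(\ref{item:some_projections_proper}) gives that the full basis is again proper. This closes the induction, and after $k-1$ iterations the output is a proper basis of $\C$.

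For the running time, I would account for the cost of one iteration with index $i$ as follows. Using the auxiliary data structure $\vec{S}(\vec{B})$ described in \cref{sec: fq}, the block $\vec{B}_{[i,k]} \in \F_q^{(k-i+1)\times n}$ can be computed with no overhead beyond reading it. The short-codeword subprocedure is then called on a code of dimension $k-i+1$ and support size at most $n$, at cost $f(n,k-i+1,q)$ by hypothesis. The calls to $\text{MakePrimitive}$ and $\text{InsertPrimitive}$ on the block run in time $O(n(k-i+1)^2 \log^2 q) = O(nk^2 \log^2 q)$ by \cref{clm:make_primitive} and \cref{claim: insert_primitive}. Finally, applying $(\vec{I}_{i-1}\oplus \vec{A})$ to $\vec{B}$ corresponds to at most $k-i+1$ elementary row operations and can be performed in $O(nk^2 \log^2 q)$ time (updating the auxiliary data structure costs no more than this). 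Summing over the $k-1$ iterations yields
\[
\sum_{i=1}^{k-1} f(n,k-i+1,q) \;+\; O(k \cdot n k^2 \log^2 q) \;=\; \sum_{i=1}^{k-1} f(n,k-i+1,q) + O(nk^3 \log^2 q),
\]
as claimed.

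There is no real obstacle here; the two non-trivial bookkeeping points are (i) verifying that $\text{MakePrimitive}$ is called on a genuinely non-zero codeword, which is guaranteed as long as the short-codeword subprocedure returns a non-zero vector (we may assume this without loss of generality, since the block has dimension $\geq 1$), and (ii) checking carefully that properness is preserved globally and not just within the modified block, which is handled by the combination of \cref{lem: epipodal_locality} and \cref{fact:primitive_and_proper}. Both are routine once the preliminaries of \cref{sec: fq} and \cref{sec:proper} are in place.
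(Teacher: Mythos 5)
Your proposal is correct and follows essentially the same route as the paper: correctness comes from the properness guarantee of InsertPrimitive combined with \cref{lem: epipodal_locality} (the paper states this directly rather than as an explicit loop invariant), and the running time is obtained by summing the per-iteration costs $f(n,k-i+1,q) + O(nk^2\log^2 q)$ over the $k-1$ iterations. Your extra bookkeeping (the induction and the remark about the non-zero codeword) is fine but not needed beyond what the paper already records.
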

\begin{proof}
    The fact that the output basis is proper follows from the fact that the $\text{InsertPrimitive}$ subroutine always outputs $\vec{A}$ such that $\vec{A} \vec{B}_{[i,k]}$ is proper, together with \cref{lem: epipodal_locality}, which in particular implies that $(I_{i-1} \oplus \vec{A}) \vec{B}$ is proper if $\vec{B}_{[1,i-1]}$ is proper and $\vec{A} \vec{B}_{[i,k]}$ is proper (both of which are true).

    The running time of the all of the calls to a short-codeword-finding algorithm is clearly $\sum_{i=1}^{k-1} f(n,k-i+1, q)$, as claimed. The MakePrimitive subprocedure runs in time $O(nk^2 \log^2 q)$ by \cref{clm:make_primitive}, as does the InsertPrimitive subprocedure, by \cref{claim: insert_primitive}. Therefore, the total running time is 
    \[
        \sum_{i=1}^{k-1} (f(n,k-i+1, q) + O(n k^2 \log^2 q)) = \sum_{i=1}^{k-1} f(n,k-i+1, q) + O(nk^3 \log^2 q)
        \; ,
    \]
    as claimed.
\end{proof}     
    \subsection{One-block reduction}
    \label{sec: one_block}
We now introduce a very simple reduction algorithm that we argue is, in many cases, as good as more advanced reduction techniques for finding short non-zero codewords. See \cref{alg: one_block}. The algorithm itself can be viewed as a sort of algorithmic instantiation of the shortening technique that is often used to bound the minimum distance of codes (i.e., an algorithmic instantiation of \cref{lem: shortening}). (This is closely related to information set decoding.)

\RestyleAlgo{ruled}
\begin{algorithm2e}
\caption{One-block Reduction}
\label{alg: one_block}

\KwIn{A basis $\vec{B} = (\vec{b}_1; \dots; \vec{b}_k) \in \F_q^{k \times n}$ and block size $\beta \in [2, k]$}
\KwOut{A short non-zero codeword in $\C(\basis)$}

Systematize $\vec{B}$\

\Return A shortest non-zero codeword in $\C(\vec{b}_1, \dots, \vec{b}_{\beta})$

\end{algorithm2e}

\begin{claim}
    \label{clm:one_block}
    On input a basis $\vec{B} \in \F_q^{k \times n}$ and block size $\beta \in [2,k]$, \cref{alg: one_block} finds a non-zero codeword with length at most $d_q(n-k+\beta,\beta)$, where $d_q(n',k')$ is the maximal minimum distance of an $[n',k']_q$ code. Furthermore, the algorithm runs in time $O(n k^2 \log^2 q) + T$ where $T$ is a bound on the running time needed to find the shortest codeword in a $[n', \beta]_q$ code for any $n' \leq n-k+\beta$.
\end{claim}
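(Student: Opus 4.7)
The plan is to observe that after systematizing $\vec{B}$, the prefix basis $\basis_{[1,\beta]} = (\vec{b}_1;\ldots;\vec{b}_\beta)$ generates a subcode of $\C(\basis)$ whose support is small, and then simply apply the definition of $d_q$.

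First I would note that systematization produces a proper basis $\vec{B}$ for $\C(\basis)$ such that there exist $k$ distinct coordinates $i_1,\ldots,i_k$ with $\vec{B}|_{\{i_1,\ldots,i_k\}} = \vec{I}_k$ (up to permutation). Each basis vector $\vec{b}_j$ thus has its ``pivot'' coordinate $i_j$ where it is $1$ and every other $\vec{b}_{j'}$ is $0$. Consequently, the subcode $\C' := \C(\vec{b}_1,\ldots,\vec{b}_\beta)$ has dimension exactly $\beta$ (the vectors are linearly independent since they have distinct pivots), and its support consists of the $\beta$ pivot coordinates $i_1,\ldots,i_\beta$ plus at most $n-k$ further coordinates (those outside of $\{i_1,\ldots,i_k\}$). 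Therefore $|\supp(\C')| \leq \beta + (n-k) = n-k+\beta$.

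Next I would apply the definition of $d_q$ directly: restricting $\C'$ to its support yields a $[|\supp(\C')|,\beta]_q$ code with the same minimum distance as $\C'$, and then appending $n - |\supp(\C')|$ zero coordinates gives an $[n-k+\beta,\beta]_q$ code (still with the same minimum distance). By definition of $d_q$ as the maximal minimum distance of any such code, we get $\mindist(\C') \leq d_q(n-k+\beta,\beta)$. Since the algorithm returns a shortest non-zero codeword of $\C'$, its output has Hamming weight at most $d_q(n-k+\beta,\beta)$, and it clearly lies in $\C(\basis) \supseteq \C'$.

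For the running time, systematization amounts to Gaussian elimination on a $k \times n$ matrix over $\F_q$, which runs in time $O(nk^2 \log^2 q)$. The second step invokes the shortest-non-zero-codeword subprocedure on the $\beta$-dimensional code $\C'$ whose support has size at most $n-k+\beta$, which by assumption costs at most $T$. Summing gives the stated running time. No real obstacle here: the only thing to be careful about is the support-size bound, which is the one content-bearing observation (and the reason this simple algorithm already meets the Griesmer bound when $\beta = \lceil \log_q n \rceil$, as discussed informally in the introduction).
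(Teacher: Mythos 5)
Your proposal is correct and follows essentially the same argument as the paper: systematization forces the first $\beta$ rows to vanish on the $k-\beta$ remaining pivot coordinates, so $|\supp(\C(\vec{b}_1,\ldots,\vec{b}_\beta))| \leq n-k+\beta$, and the bound follows from the definition of $d_q$ (your zero-padding step is just an explicit proof of the monotonicity of $d_q$ in the length, which the paper invokes directly). The running-time accounting is identical.
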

\begin{proof}
    We first bound the running time. The systematization step runs in time $O(n k^2 \log^2 q)$. The systematization of $\vec{B}$ ensures that there exist at least $k-\beta$ coordinates such that $\vec{b}_1, \dots, \vec{b}_\beta$ are all zero on those coordinates. In particular, the support of $\C' := \C(\vec{b}_1, \ldots, \vec{b}_\beta)$ has size at most $n-k+\beta$, and therefore the time to find the shortest non-zero codeword is bounded by $T$.

    To prove correctness, we again notice that $\C'$ has support $n' \leq n-k+\beta$. Since $d_q$ is non-decreasing as a function of $n$, it follows that $d_q(n',\beta) \leq d_q(n-k+\beta,\beta)$. Since the output vector is a shortest codeword in an $[n',\beta]_q$ code, it must be shorter than $d_q(n',\beta)$, and the result follows.
\end{proof}

Finally, we note that, just like the LLL algorithm, \cref{alg: one_block} with $\beta = \lceil \log_q n \rceil$ finds a codeword that matches the Griesmer bound. Notice also that with $\beta$ so small, the step of finding the short non-zero codeword can be done in time $T \leq O(q^\beta nk) \leq q n^2 k$ (by simply enumerating all $q^\beta$ codewords), which is efficient for $q \leq n^{O(1)}$. So, one can view this as an alternative algorithmic instantiation of the Griesmer bound.

\begin{claim}
    \label{clm:one_block_griesmer}
    Let $\vec{c}$ be the codeword found by \cref{alg: one_block} on input $\vec{B} \in \F_q^{k \times n}$ and $\beta=\ceil{\log_q(n)}$.
    $$n \geq \sum_{i = 1}^{k} \left\lceil {\frac{|\vec{c}|}{q^i}} \right\rceil$$
\end{claim}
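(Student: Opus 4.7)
The plan is to combine the $q$-ary Griesmer bound applied to the small subcode from which \cref{alg: one_block} extracts $\vec{c}$, together with a trivial bound on the tail of the sum coming from the choice $\beta = \lceil \log_q n \rceil$.

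First, by \cref{clm:one_block} the systemization step guarantees that $\vec{c}$ is a shortest non-zero codeword of the $[n',\beta,|\vec{c}|]_q$ code $\C' := \C(\vec{b}_1,\ldots,\vec{b}_\beta)$, with $n' := |\supp(\C')| \leq n-k+\beta$ (the remaining $k-\beta$ systematic basis vectors each cover one coordinate outside $\supp(\C')$ via their systematic pivots, contributing at least $k-\beta$ coordinates to $\supp(\C)$ beyond those of $\C'$). The $q$-ary Griesmer bound, shown in the corollary to \cref{lem: epipodal_decay} for LLL-reduced bases and obtainable in general by iterating \cref{lem: shortening}, then yields
$$ n - k + \beta \;\geq\; n' \;\geq\; \sum_{i=0}^{\beta-1}\lceil |\vec{c}|/q^i \rceil \;=\; |\vec{c}| + \sum_{i=1}^{\beta-1}\lceil |\vec{c}|/q^i \rceil. $$

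Next, since $\beta = \lceil \log_q n \rceil$ forces $q^\beta \geq n \geq |\vec{c}|$ and $\vec{c}$ is non-zero, every term $\lceil |\vec{c}|/q^i\rceil$ with $i \geq \beta$ satisfies $0 < |\vec{c}|/q^i \leq 1$ and therefore equals exactly $1$. Summing over the $k-\beta+1$ tail indices gives $\sum_{i=\beta}^{k}\lceil |\vec{c}|/q^i\rceil = k-\beta+1$. Combining with the Griesmer inequality above,
$$ \sum_{i=1}^{k}\lceil |\vec{c}|/q^i \rceil \;=\; \sum_{i=1}^{\beta-1}\lceil |\vec{c}|/q^i\rceil + \sum_{i=\beta}^{k}\lceil |\vec{c}|/q^i\rceil \;\leq\; (n-k+\beta-|\vec{c}|) + (k-\beta+1) \;=\; n+1-|\vec{c}| \;\leq\; n, $$
where the last step uses $|\vec{c}|\geq 1$.

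I do not anticipate a real obstacle here; the only points that need care are verifying the support bound $n' \leq n-k+\beta$ coming from the systematic pivots (already built into \cref{clm:one_block}) and cleanly splitting the target sum at index $\beta$ so that the Griesmer inequality controls the head while the tail is trivial. Both are straightforward bookkeeping.
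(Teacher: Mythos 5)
Your proof is correct and takes essentially the same route as the paper: apply the Griesmer bound to the $\beta$-dimensional block of support at most $n-k+\beta$ (the content of \cref{clm:one_block}), then note that every term of index at least $\beta$ equals $1$ because $q^{\beta} \geq n \geq |\vec{c}|$. The only difference is bookkeeping---you use the standard Griesmer indexing starting at $i=0$ and absorb the resulting extra $+1$ via $|\vec{c}| \geq 1$, whereas the paper works with the index-shifted (slightly weakened) form of the sum directly.
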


\begin{proof}
    By \cref{clm:one_block}, 
    \[
    |\vec{c}| \leq d_q(n-k+\beta,\beta)
    \; .
    \]
    Applying the Griesmer bound for an $[n-k+\beta,\beta]$ code, we see that
    \[
        n-k+\beta \geq \sum_{i=1}^\beta \left\lceil \frac{|\vec{c}|}{q^i} \right\rceil 
        \; .
    \]
    Finally, since $|\vec{c}| \leq n$, notice that for $i > \beta$, we must have $\ceil{|\vec{c}|/q^i} = 1$. Therefore,
    \[
        n \geq \sum_{i=1}^\beta \left\lceil \frac{|\vec{c}|}{q^i}\right\rceil + (k-\beta) = \sum_{i=1}^k \left\lceil \frac{|\vec{c}|}{q^i} \right\rceil
        \; ,
    \]
    as claimed.
\end{proof}     
    
    \section{Experiments}
    \label{sec: experiments}
\usetikzlibrary{calc}

Finally, we present the results of our experiments. At a high level, the results seem to present a relatively clear message, which is that full backward reduction is a good idea. The more detailed picture is a bit nuanced, though.

For our experiments we considered bases $\vec{B} \in \F_2^{k \times n}$ where $k = n/2$. Each experiment was run by sampling a basis uniformly at random from the set $\F_2^{k \times n}$, systematizing the basis (so that it is proper), and then running a particular reduction algorithm with the systematized basis as input. Algorithm running times do not include the sampling or systematization steps. 
We stress that the running times are meant to be a rough guide only. Our code is written in Python and not heavily optimized, and the experiments were run using an Apple M2 processor. 

\paragraph{The tested algorithms.} We now discuss the different algorithms that we test.

We tested BKZ with block sizes $\beta=2$, $\beta=4$, and $\beta=8$. In particular, recall that BKZ with block size $\beta = 2$ is the LLL algorithm. 

We also tested slide-reduction with block sizes $\beta = 4$ and $\beta = 8$, followed by LLL. We chose to apply LLL to the basis after running slide reduction, since unlike our other reduction algorithms, a basis being slide-reduced only provides guarantees on $|\vec{b}_{i}^+|$ where $i = 1 \bmod \beta$, so that an additional post-processing step is necessary if we wish to study, e.g., the entire profile of the basis. Similar issues arise in the literature on lattices, where one typically either \emph{defines} a slide-reduced basis to also be LLL reduced, or runs LLL as postprocessing. It is not hard to prove that running LLL after slide reduction does not affect $|\vec{b}_1|$. (To be clear the running time of LLL is counted as part of the running time.) 

For approximate Griesmer reduction, we use the Lee-Brickell-Babai implementation from \cite{DDvAlgorithmicReductionTheory2022} with $w_2 = 2$ as our subprocedure for finding short non-zero codewords. We include the running time of approximate Griesmer reduction for completeness. However, since the Lee-Brickell-Babai subroutine is implemented in C++ and heavily optimized, and our other algorithms are implemented in Python (and not heavily optimized), we note that the results should not be taken too seriously. Since approximate Griesmer reduction is particularly expensive to run (even with this optimized version of Lee-Brickell-Babai), we do not reduce a block $i$ if $\ell_i$ is already less than six since the expected change to basis quality in such a case is rather small.

Full backward reduction was run with threshold $\tau = 3 \log_2 n$. This threshold was chosen to be well above $k_1$, so that choosing a larger $\tau$ would not affect the basis quality.

\paragraph{The figures. } In \cref{fig: graph_decay}, we show a plot of the average length of the $i$th longest epipodal vector. (Here, we follow \cite{DDvAlgorithmicReductionTheory2022} in averaging the lengths of the \emph{$i$th longest} epipodal vectors, rather than the length $\ell_i$ of the $i$th epipodal vector itself. There is little difference for small $i$, but as $i$ becomes large and $\ell_i$ becomes small, the mean of $\ell_i$ itself becomes dominated by outliers.) This shows how well each reduction algorithm does at balancing the epipodal profile of a basis. We also zoom in to show just the first three values, and just $\ell_1$. 

Unsurprisingly, approximate Griesmer reduction produces the most balanced profile. And, we see that all of the new algorithms perform significantly better than LLL. Perhaps surprisingly, full backward reduction creates quite a balanced profile, though unsurprisingly it does not produce very short vectors (since it does not explicitly find any short codewords in any blocks). We also see that increasing the block size yields a notable improvement in basis quality for both slide reduction and BKZ, and that BKZ does seem to produce better profiles than slide reduction (as in the case of lattices).

\cref{fig: runtimes_1} gives a comparison of the running times of our various algorithms. Again, these running times are meant to only provide a rough sense of the relative performance of these algorithms. However, we do still see here that full backward reduction is quite fast, as is slide reduction. BKZ with $\beta = 8$ seems notably slower than these algorithms. The running time of approximate Griesmer reduction is difficult to compare with the others because, as we mentioned above, we are using an optimized algorithm from \cite{DDvAlgorithmicReductionTheory2022}, but we still see that even with these caveats approximate Griesmer becomes notably slower than the other algorithms as the dimension increases.

Finally, \cref{fig: average_k1} shows the average value of $k_1$ for all our reduction algorithms.
Again, approximate Greismer reduction clearly produces the best (most balanced basis) but also has the longest running time. Here, we see that slide reduction only offers a modest improvement over LLL, while BKZ with block size $\beta = 8$ is notably better (at the cost of a higher running time). Finally, full backward reduction again performs nearly as well as BKZ with block size $\beta = 8$ with far superior running time.

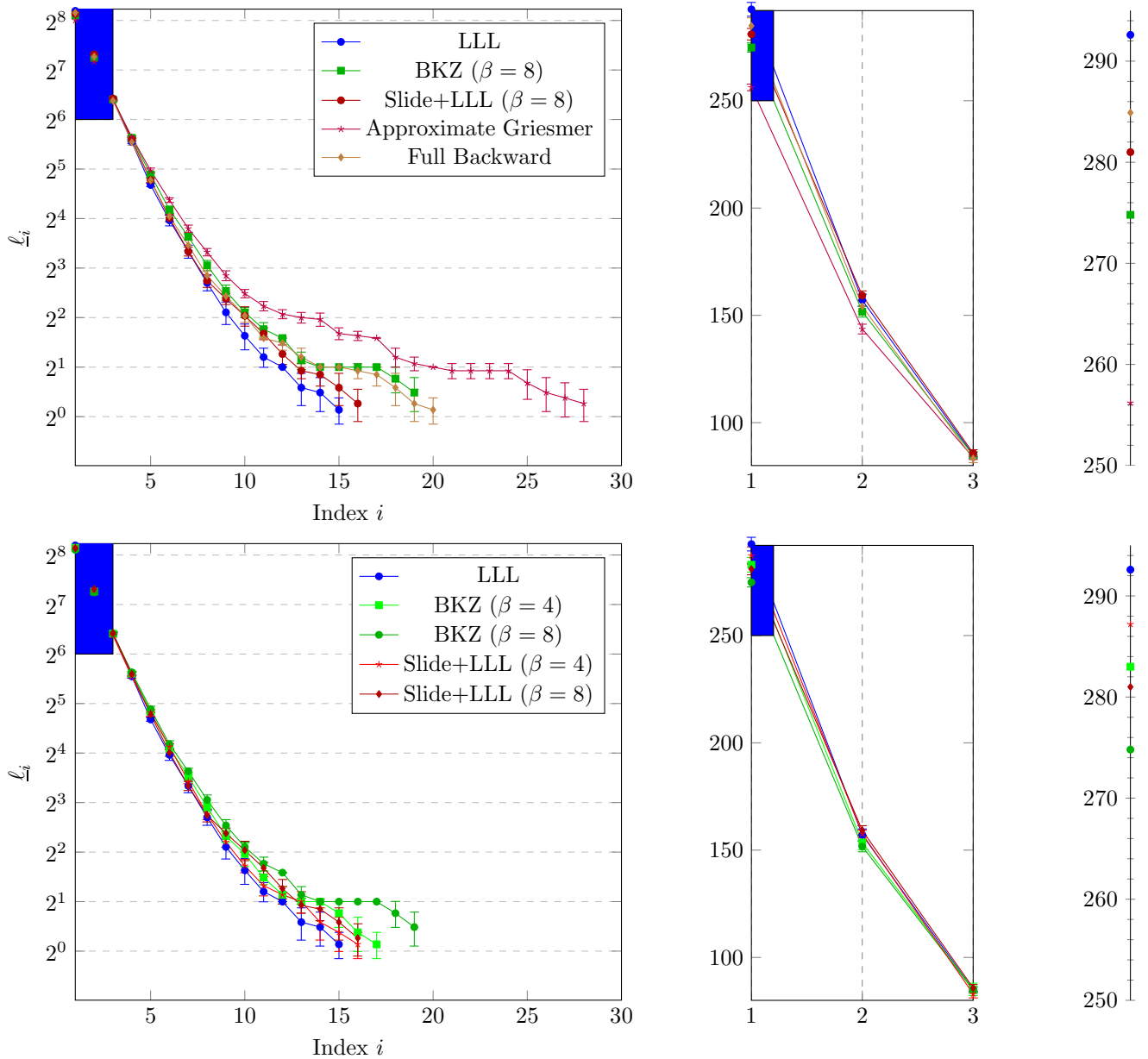
\begin{figure}
\begin{tikzpicture}
\begin{axis}[
    xlabel={Index $i$},
    ylabel={$\underline{\ell}_i$},
    xmin=1, xmax=30,
    ymin=0, ymax=300,
    legend pos=north east,
    ymajorgrids=true,
    grid style=dashed,
    ymode=log,
    log basis y={2},
    name=ax1,
]

\addplot+[
  blue, mark options={blue, scale=0.75},
  error bars/.cd, 
    y fixed,
    y dir=both, 
    y explicit
] table [x=x, y=y,y error=error, col sep=comma] {
    x,  y,       error
    1,  292.6, 3.214
    2, 157.1, 2.772
    3, 85, 1.154
    4, 46.8, 1.067
    5, 25.6, 0.611
    6, 15.6, 1.162
    7, 10.1, 0.917
    8, 6.5, 0.683
    9, 4.3, 0.670
    10, 3.1, 0.554
    11,  2.3, 0.306
    12, 2, 0
    13, 1.5, 0.333
    14, 1.4, 0.327
    15, 1.1, .2
};
\addlegendentry{LLL}

\addplot+[
  black!30!green, mark options={black!30!green, scale=0.75},
  error bars/.cd, 
    y fixed,
    y dir=both, 
    y explicit
] table [x=x, y=y,y error=error, col sep=comma] {
    x,  y,       error
    1,  274.8, 2.146
    2, 151.7, 2.513
    3, 84.8, 1.572
    4, 49.5, 0.537
    5, 29.6, 1.2
    6, 18.1, 0.917
    7, 12.4, 0.533
    8, 8.3, .6
    9, 5.8, 0.499
    10, 4.3, 0.306
    11, 3.4, 0.327
    12, 3, 0
    13, 2.2,  0.267
    14, 2, 0
    15, 2, 0
    16, 2, 0
    17, 2, 0
    18, 1.7, 0.306
    19, 1.4, 0.327
};
\addlegendentry{BKZ ($\beta=8$)}

\addplot+[
  black!30!red, mark options={black!30!red, scale=0.75},
  error bars/.cd, 
    y fixed,
    y dir=both, 
    y explicit
] table [x=x, y=y,y error=error, col sep=comma] {
    x,  y,       error
    1,  281, 2.683
    2, 159.2, 2.187
    3, 85.8, 1.572
    4, 48.4, 1.466
    5, 27.4, 0.904
    6, 16.1, 0.814
    7, 10.1, 0.629
    8, 6.7, 0.6
    9, 5.2, 0.4
    10, 4.1, 0.554
    11, 3.2, 0.266
    12, 2.4, 0.327
    13, 1.9, 0.2
    14, 1.8, 0.267
    15, 1.5, 0.333
    16, 1.2, 0.26666666326
};
\addlegendentry{Slide+LLL ($\beta=8$)}

\addplot+[
  purple, mark options={purple, scale=0.75},
  error bars/.cd, 
    y fixed,
    y dir=both, 
    y explicit
] table [x=x, y=y,y error=error, col sep=comma] {
    x,  y,       error
    1, 256.2, 1.54387755102
    2, 143.6, 2.27448979592
    3, 83.8, 0.97959183673
    4, 50.1, 0.75714285714
    5, 31.3, 1.11734693878
    6, 20.7,0.6693877551
    7, 13.9, 0.69591836734
    8, 10, 0.51632653061
    9, 7.2, 0.49897959183
    10, 5.6, 0.32653061224
    11, 4.7, 0.30510204081
    12, 4.2, 0.26632653061
    13, 4, 0.29795918367
    14, 3.9, 0.35918367346
    15, 3.2, 0.26632653061
    16, 3.1, .2
    17, 3, 0
    18, 2.3, 0.30510204081
    19, 2.1, .2
    20, 2, 0
    21, 1.9, .2
    22, 1.9, .2
    23, 1.9, .2
    24, 1.9, .2
    25, 1.6, 0.32653061224
    26, 1.4, 0.32653061224
    27, 1.3, 0.30510204081
    28, 1.2, 0.26632653061
};
\addlegendentry{Approximate Griesmer}

\addplot+[
  brown, mark options={brown, scale=0.75},
  error bars/.cd, 
    y fixed,
    y dir=both, 
    y explicit
] table [x=x, y=y,y error=error, col sep=comma] {
    x,  y,       error
    1, 284.9, 3.75306122449
    2, 154.2, 2.45510204082
    3, 83.3, 1.93367346939
    4, 46.5, 1.79489795918
    5, 27.3, .734
    6, 16.5, 1.044
    7, 11, 0.789
    8, 7.2, 0.499
    9, 5.4, 0.327
    10, 4.1, 0.359
    11, 3, 0
    12, 2.8, 0.266
    13, 2.3, 0.305
    14, 2, 0
    15, 2, 0
    16, 1.9, 0.2
    17, 1.8, 0.266
    18, 1.5, 0.334
    19, 1.2, 0.266
    20, 1.1, 0.2
};
\addlegendentry{Full Backward}

  \draw[draw=black, fill=blue, opacity=0.2] (0,6) rectangle ++(20,5);

\end{axis}

 \begin{axis}[
   name=ax2,
   height= 8.6cm,
   width=5cm,
   scaled ticks=false,
   xmin=1,xmax=3,
   ymin=80,ymax=292,
   xtick distance=1,
   xmajorgrids=true,
   grid style={help lines,dashed},
   at={($(ax1.south east)+(2cm,0)$)},
   clip=false
 ]
\addplot+[
  blue, mark options={blue, scale=0.75},
  error bars/.cd, 
    y fixed,
    y dir=both, 
    y explicit
] table [x=x, y=y,y error=error, col sep=comma] {
    x,  y,       error
    1,  292.6, 3.214
    2, 157.1, 2.772
    3, 85, 1.154
    };
    
\addplot+[
  black!30!green, mark options={black!30!green, scale=0.75},
  error bars/.cd, 
    y fixed,
    y dir=both, 
    y explicit
] table [x=x, y=y,y error=error, col sep=comma] {
    x,  y,       error
    1,  274.8, 2.146
    2, 151.7, 2.513
    3, 84.8, 1.572
    };

\addplot+[
  black!30!red, mark options={black!30!red, scale=0.75},
  error bars/.cd, 
    y fixed,
    y dir=both, 
    y explicit
] table [x=x, y=y,y error=error, col sep=comma] {
    x,  y,       error
    1,  281, 2.683
    2, 159.2, 2.187
    3, 85.8, 1.572
    };

\addplot+[
  purple, mark options={purple, scale=0.75},
  error bars/.cd, 
    y fixed,
    y dir=both, 
    y explicit
] table [x=x, y=y,y error=error, col sep=comma] {
    x,  y,       error
    1, 256.2, 1.54387755102
    2, 143.6, 2.27448979592
    3, 83.8, 0.97959183673
    };

\addplot+[
  brown, mark options={brown, scale=0.75},
  error bars/.cd, 
    y fixed,
    y dir=both, 
    y explicit
] table [x=x, y=y,y error=error, col sep=comma] {
    x,  y,       error
    1, 284.9, 3.75306122449
    2, 154.2, 2.45510204082
    3, 83.3, 1.93367346939
    };

\draw[draw=black, fill=blue, opacity=0.2] (0,170) rectangle ++(20,42);
\end{axis}

\begin{axis}[
            axis x line=none,
            axis y line=middle,
            height=8.6cm,
            width=\axisdefaultwidth,
            ymin=250,
            ymax=295,
            axis line style={-},
            minor y tick num=4,
            at={($(ax2.south east)+(-1cm,0)$)}
        ]
            \addplot+[blue, mark options={blue, scale=0.75}] coordinates {
                (0, 292.6)
            };
            \addplot+[black!30!green, mark options={black!30!green, scale=0.75}] coordinates {
                (0, 274.8)
            };
            \addplot+[black!30!red, mark options={black!30!red, scale=0.75}] coordinates {
                (0, 281)
            };
            \addplot+[purple, mark options={purple, scale=0.75}] coordinates {
                (0, 256.2)
            };
            \addplot+[brown, mark options={brown, scale=0.75}] coordinates {
                (0, 284.9)
            };
\end{axis}

\end{tikzpicture}

\begin{tikzpicture}
\begin{axis}[
    xlabel={Index $i$},
    ylabel={$\underline{\ell}_i$},
    xmin=1, xmax=30,
    ymin=0, ymax=300,
    legend pos=north east,
    ymajorgrids=true,
    grid style=dashed,
    ymode=log,
    log basis y={2},
    name=ax1,
]
\addplot+[
  blue, mark options={blue, scale=0.75},
  error bars/.cd, 
    y fixed,
    y dir=both, 
    y explicit
] table [x=x, y=y,y error=error, col sep=comma] {
    x,  y,       error
    1,  292.6, 3.181
    2, 157.1, 2.772
    3, 85, 1.154
    4, 46.8, 1.067
    5, 25.6, 0.611
    6, 15.6, 1.162
    7, 10.1, 0.917
    8, 6.5, 0.683
    9, 4.3, 0.670
    10, 3.1, .554
    11,  2.3, 0.306
    12, 2, 0
    13, 1.5, 0.333
    14, 1.4, 0.327
    15, 1.1, 0.2
};
\addlegendentry{LLL}

\addplot+[
  green, mark options={green, scale=0.75},
  error bars/.cd, 
    y fixed,
    y dir=both, 
    y explicit
] table [x=x, y=y,y error=error, col sep=comma] {
    x,  y,       error
    1,  283, 3.412
    2, 153.5, 3.296
    3, 85.1, 2.867
    4, 48.6, 1.638
    5, 28.6, 0.8
    6, 17.2, 0.980
    7, 11.5, 0.683
    8, 7.5, 0.447
    9, 5, 0.422
    10, 3.9, 0.629
    11, 2.8, 0.267
    12, 2.2, 0.267
    13, 2, 0
    14, 2, 0
    15, 1.7, 0.306
    16, 1.3, 0.306
    17, 1.1, 0.2
};
\addlegendentry{BKZ ($\beta=4$)}

\addplot+[
  black!30!green, mark options={black!30!green, scale=0.75},
  error bars/.cd, 
    y fixed,
    y dir=both, 
    y explicit
] table [x=x, y=y,y error=error, col sep=comma] {
    x,  y,       error
    1,  274.8, 2.146
    2, 151.7, 2.513
    3, 84.8, 1.572
    4, 49.5, 0.537
    5, 29.6, 1.2
    6, 18.1, 0.917
    7, 12.4, 0.533
    8, 8.3, .6
    9, 5.8, 0.499
    10, 4.3, 0.306
    11, 3.4, 0.327
    12, 3, 0
    13, 2.2,  0.267
    14, 2, 0
    15, 2, 0
    16, 2, 0
    17, 2, 0
    18, 1.7, 0.306
    19, 1.4, 0.327
};
\addlegendentry{BKZ ($\beta=8$)}

\addplot+[
  red, mark options={red, scale=0.75},
  error bars/.cd, 
    y fixed,
    y dir=both, 
    y explicit
] table [x=x, y=y,y error=error, col sep=comma] {
    x,  y,       error
    1, 287.2, 4.140
    2, 157.6, 1.692
    3, 83, 1.862
    4, 46.5, 0.856
    5, 28.1, 1.209
    6, 17.5, 0.745
    7, 10.8, 0.4
    8, 6.8, 0.499
    9, 4.6, 0.327
    10, 3.3, 0.306
    11, 2.5, 0.333
    12, 2.2, 0.267
    13, 2, 0.298
    14, 1.5, 0.333
    15, 1.3, 0.306
    16, 1.1, 0.2
};
\addlegendentry{Slide+LLL ($\beta=4$)}

\addplot+[
  black!30!red, mark options={black!30!red, scale=0.75},
  error bars/.cd, 
    y fixed,
    y dir=both, 
    y explicit
] table [x=x, y=y,y error=error, col sep=comma] {
    x,  y,       error
    1,  281, 2.683
    2, 159.2, 2.187
    3, 85.8, 1.572
    4, 48.4, 1.466
    5, 27.4, 0.904
    6, 16.1, 0.814
    7, 10.1, 0.629
    8, 6.7, 0.6
    9, 5.2, 0.4
    10, 4.1, 0.554
    11, 3.2, 0.266
    12, 2.4, 0.327
    13, 1.9, 0.2
    14, 1.8, 0.267
    15, 1.5, 0.333
    16, 1.2, 0.26666666326
};
\addlegendentry{Slide+LLL ($\beta=8$)}

  \draw[draw=black, fill=blue, opacity=0.2] (0,6) rectangle ++(20,5);

\end{axis}

 \begin{axis}[
   name=ax2,
   height= 8.6cm,
   width=5cm,
   scaled ticks=false,
   xmin=1,xmax=3,
   ymin=80,ymax=292,
   xtick distance=1,
   xmajorgrids=true,
   grid style={help lines,dashed},
   at={($(ax1.south east)+(2cm,0)$)},
   clip=false
 ]
\addplot+[
  blue, mark options={blue, scale=0.75},
  error bars/.cd, 
    y fixed,
    y dir=both, 
    y explicit
] table [x=x, y=y,y error=error, col sep=comma] {
    x,  y,       error
    1,  292.6, 3.181
    2, 157.1, 2.772
    3, 85, 1.154
    };

\addplot+[
  green, mark options={green, scale=0.75},
  error bars/.cd, 
    y fixed,
    y dir=both, 
    y explicit
] table [x=x, y=y,y error=error, col sep=comma] {
    x,  y,       error
    1,  283, 3.412
    2, 153.5, 3.296
    3, 85.1, 2.867
    };

\addplot+[
  black!30!green, mark options={black!30!green, scale=0.75},
  error bars/.cd, 
    y fixed,
    y dir=both, 
    y explicit
] table [x=x, y=y,y error=error, col sep=comma] {
    x,  y,       error
    1,  274.8, 2.146
    2, 151.7, 2.513
    3, 84.8, 1.572
    };

\addplot+[
  red, mark options={red, scale=0.75},
  error bars/.cd, 
    y fixed,
    y dir=both, 
    y explicit
] table [x=x, y=y,y error=error, col sep=comma] {
    x,  y,       error
    1, 287.2, 4.140
    2, 157.6, 1.692
    3, 83, 1.862
    };

\addplot+[
  black!30!red, mark options={black!30!red, scale=0.75},
  error bars/.cd, 
    y fixed,
    y dir=both, 
    y explicit
] table [x=x, y=y,y error=error, col sep=comma] {
    x,  y,       error
    1,  281, 2.683
    2, 159.2, 2.187
    3, 85.8, 1.572
    };

\draw[draw=black, fill=blue, opacity=0.2] (0,170) rectangle ++(20,42);
\end{axis}

\begin{axis}[
            axis x line=none,
            axis y line=middle,
            height=8.6cm,
            width=\axisdefaultwidth,
            ymin=250,
            ymax=295,
            axis line style={-},
            minor y tick num=4,
            at={($(ax2.south east)+(-1cm,0)$)}
        ]
            \addplot+[blue, mark options={blue, scale=0.75}] coordinates {
                (0, 292.6)
            };
            
            \addplot+[green, mark options={green, scale=0.75}] coordinates {
                (0, 283)
            };
            \addplot+[black!30!green, mark options={black!30!green, scale=0.75}] coordinates {
                (0, 274.8)
            };
            
            \addplot+[red, mark options={red, scale=0.75}] coordinates {
                (0, 287.2)
            };
            \addplot+[black!30!red, mark options={black!30!red, scale=0.75}] coordinates {
                (0, 281)
            };
\end{axis}

\end{tikzpicture}
\caption{Average sorted output profile for various reduction algorithms run on random bases over 10 iterations  for $n=1280$ and $k=640$ (cryptographic parameters). If we let $\ell_i'$ denote the length of the $i$th longest epipodal vector after basis reduction, $\underline{\ell}_i$ is the average of $\ell_i'$ over 10 iterations. Also shown on the leftmost graphs are error bars representing $\pm 2$ sample standard deviations.}
\label{fig: graph_decay}
\end{figure}

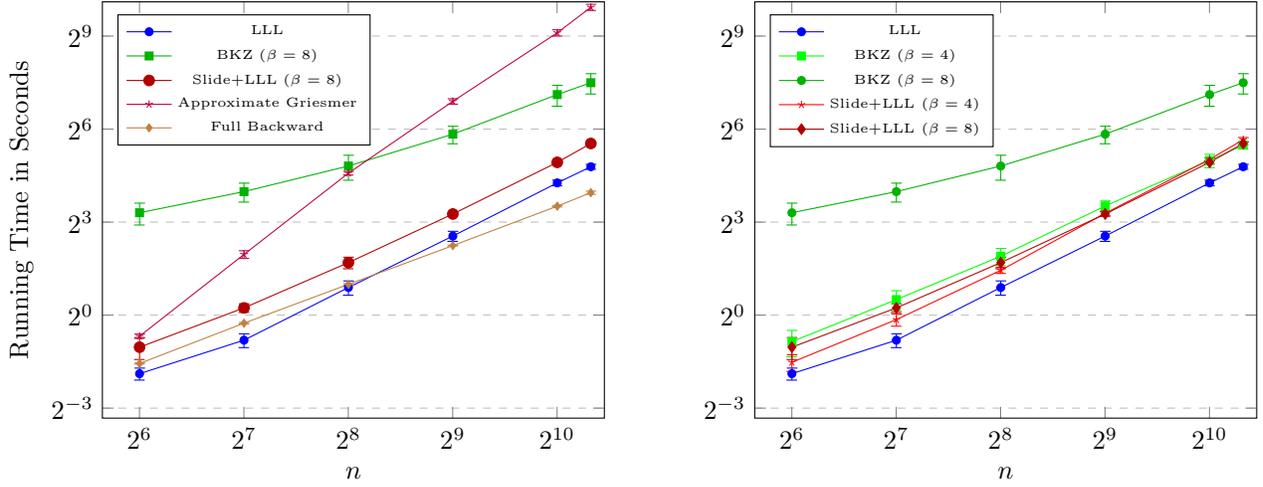
\begin{figure}
\begin{tikzpicture}
\begin{axis}[
    width=0.5 \textwidth,
    xlabel={$n$},
    ylabel={Running Time in Seconds},
    xmin=50, xmax=1400,
    ymin=0.1, ymax=1100,
    legend pos=north west,
    ymajorgrids=true,
    grid style=dashed,
    xmode=log,
    log basis x={2},
    ymode=log,
    log basis y={2},
    name=ax1,
    legend style={font=\tiny}
]

\addplot+[
  blue, mark options={blue, scale=0.75},
  error bars/.cd, 
    y fixed,
    y dir=both, 
    y explicit
] table [x=x, y=y,y error=error, col sep=comma] {
    x,  y,       error
    64,  0.26991240978240966, 0.03601360163
    128, 0.5708382606506348, 0.08748068073
    256, 1.8532333374023438, 0.29354902558
    512, 5.853604507446289, 0.65658109665
    1024, 19.22052412033081, 1.20989317019
    1280, 27.57225368022919, 1.53023538445
};
\addlegendentry{LLL}

\addplot+[
  black!30!green, mark options={black!30!green, scale=0.75},
  error bars/.cd, 
    y fixed,
    y dir=both, 
    y explicit
] table [x=x, y=y,y error=error, col sep=comma] {
    x,  y,       error
    64,  9.85727026462555, 2.36180240804
    128, 15.80422534942627, 3.30317440161
    256, 28.01509294509888, 7.62594305365
    512, 57.09714403152466, 11.0257417945
    1024, 138.37516667842866, 31.7749156332
    1280, 180.43294060230255, 40.3384154707
};
\addlegendentry{BKZ ($\beta=8$)}

\addplot+[
  black!30!red, mark options={black!30!red},
  error bars/.cd, 
    y fixed,
    y dir=both, 
    y explicit
] table [x=x, y=y,y error=error, col sep=comma] {
    x,  y,       error
    64,  0.48792319297790526, 0.11730224311
    128, 1.1748393297195434, 0.11317179875
    256, 3.2336469411849977, 0.4156064305
    512, 9.583624076843261, 0.47412220344
    1024, 30.404942274093628, 1.1299291273
    1280, 46.35443739891052, 1.39524915397
};
\addlegendentry{Slide+LLL ($\beta=8$)}

\addplot+[
  purple, mark options={purple, scale=0.75},
  error bars/.cd, 
    y fixed,
    y dir=both, 
    y explicit
] table [x=x, y=y,y error=error, col sep=comma] {
    x,  y,       error
    64,  0.624, 0.030
    128, 3.883, 0.336
    256, 23.877, 0.957
    512, 118.890, 7.240
    1024, 550.067, 38.5887755102
    1280, 973.778884100914, 66.8704081633
};
\addlegendentry{Approximate Griesmer}

\addplot+[
    brown, mark options={brown, scale=0.75},
    error bars/.cd, 
    y fixed,
    y dir=both, 
    y explicit
] table [x=x, y=y,y error=error, col sep=comma] {
    x,  y,       error
    64,  0.339, 0.004
    128, 0.835, 0.014
    256, 1.990, 0.033
    512,  4.731, 0.045
    1024, 11.414, 0.2
    1280, 15.432, 0.538
};
\addlegendentry{Full Backward}

\end{axis}

\begin{axis}[
    width=0.5 \textwidth,
    xlabel={$n$},
    xmin=50, xmax=1400,
    ymin=0.1, ymax=1100,
    legend pos=north west,
    ymajorgrids=true,
    grid style=dashed,
    xmode=log,
    log basis x={2},
    ymode=log,
    log basis y={2},
    at={($(ax1.south east)+(2cm,0)$)},
    legend style={font=\tiny}
]
\addplot+[
  blue, mark options={blue, scale=0.75},
  error bars/.cd, 
    y fixed,
    y dir=both, 
    y explicit
] table [x=x, y=y,y error=error, col sep=comma] {
    x,  y,       error
    64,  0.26991240978240966, 0.03601360163
    128, 0.5708382606506348, 0.08748068073
    256, 1.8532333374023438, 0.29354902558
    512, 5.853604507446289, 0.65658109665
    1024, 19.22052412033081, 1.20989317019
    1280, 27.57225368022919, 1.53023538445
};
\addlegendentry{LLL}

\addplot+[
  green, mark options={green, scale=0.75},
  error bars/.cd, 
    y fixed,
    y dir=both, 
    y explicit
] table [x=x, y=y,y error=error, col sep=comma] {
    x,  y,       error
    64,  0.5521990299224854, 0.15794969637
    128, 1.4144768953323363, 0.31013923078
    256, 3.7319756507873536, 0.70243936121
    512, 11.461761975288391, 1.36641228292
    1024, 31.706561613082886, 4.67621176117
    1280, 44.69328365325928, 3.61865976889
};
\addlegendentry{BKZ ($\beta=4$)}

\addplot+[
  black!30!green, mark options={black!30!green, scale=0.75},
  error bars/.cd, 
    y fixed,
    y dir=both, 
    y explicit
] table [x=x, y=y,y error=error, col sep=comma] {
    x,  y,       error
    64,  9.85727026462555, 2.36180240804
    128, 15.80422534942627, 3.30317440161
    256, 28.01509294509888, 7.62594305365
    512, 57.09714403152466, 11.0257417945
    1024, 138.37516667842866, 31.7749156332
    1280, 180.43294060230255, 40.3384154707
};
\addlegendentry{BKZ ($\beta=8$)}

\addplot+[
  red, mark options={red, scale=0.75},
  error bars/.cd, 
    y fixed,
    y dir=both, 
    y explicit
] table [x=x, y=y,y error=error, col sep=comma] {
    x,  y,       error
    64,  0.34838132858276366, 0.06425359424
    128, 0.9035999774932861, 0.12336972264
    256, 2.7252809047698974, 0.18862387498
    512, 9.732483792304993, 0.53418582012
    1024, 32.542892527580264, 1.16171831126
    1280, 50.416817498207095, 2.56491173869
};
\addlegendentry{Slide+LLL ($\beta=4$)}

\addplot+[
  black!30!red, mark options={black!30!red},
  error bars/.cd, 
    y fixed,
    y dir=both, 
    y explicit
] table [x=x, y=y,y error=error, col sep=comma] {
    x,  y,       error
    64,  0.48792319297790526, 0.11730224311
    128, 1.1748393297195434, 0.11317179875
    256, 3.2336469411849977, 0.4156064305
    512, 9.583624076843261, 0.47412220344
    1024, 30.404942274093628, 1.1299291273
    1280, 46.35443739891052, 1.39524915397
};
\addlegendentry{Slide+LLL ($\beta=8$)}

\end{axis}
\end{tikzpicture}
\caption{Running time for various basis reduction algorithms run on random bases averaged over 10 iterations. Also shown are error bars representing $\pm 2$ sample standard deviations.}
\label{fig: runtimes_1}
\end{figure}

\begin{figure}
\begin{tikzpicture}
\begin{axis}[
    width=0.5 \textwidth,
    xlabel={$n$},
    ylabel={$k_1$},
    xmin=50, xmax=1400,
    ymin=0, ymax=25,
    legend pos=north west,
    ymajorgrids=true,
    grid style=dashed,
    xmode=log,
    log basis x={2},
    name=ax1,
    legend style={font=\tiny}
]
\addplot+[
  blue, mark options={blue, scale=0.75},
  error bars/.cd, 
    y fixed,
    y dir=both, 
    y explicit
] table [x=x, y=y,y error=error, col sep=comma] {
    x,  y,       error
    64,  7.9, 0.69602043392
    128, 9.1, 0.55377492419
    256, 10.1, 0.69602043392
    512, 11.1, 0.86666666666
    1024, 12.5, 0.85634883857
    1280, 12.7, 0.8969082698
};
\addlegendentry{LLL}

\addplot+[
  black!30!green, mark options={black!30!green, scale=0.75},
  error bars/.cd, 
    y fixed,
    y dir=both, 
    y explicit
] table [x=x, y=y,y error=error, col sep=comma] {
    x,  y,       error
    64,  11.3, 0.52068331172
    128, 13.6, 0.8
    256, 14.4, 0.85374989832
    512, 16, 0.51639777949
    1024, 17, 0.78881063774
    1280, 18.1, 0.35901098714
};
\addlegendentry{BKZ ($\beta=8$)}

\addplot+[
  black!30!red, mark options={black!30!red},
  error bars/.cd, 
    y fixed,
    y dir=both, 
    y explicit
] table [x=x, y=y,y error=error, col sep=comma] {
    x,  y,       error
    64,  8.5, 0.85634883857
    128, 10.7, 0.8969082698
    256, 12, 0.596284794
    512, 13.4, 0.61101009266
    1024, 13.8, 0.65319726474
    1280, 14.4, 0.8
};
\addlegendentry{Slide+LLL ($\beta=8$)}

\addplot+[
  purple, mark options={purple, scale=0.75},
  error bars/.cd, 
    y fixed,
    y dir=both, 
    y explicit
] table [x=x, y=y,y error=error, col sep=comma] {
    x,  y,       error
    64,  9.8, 0.77746025264
    128, 13.1, 0.69602043392
    256, 16, 0.78881063774
    512, 20.3, 0.8969082698
    1024, 23, 0.84327404271
    1280, 24.3, 0.7916228058
};
\addlegendentry{Approximate Griesmer}

\addplot+[
    brown, mark options={brown, scale=0.75},
    error bars/.cd, 
    y fixed,
    y dir=both, 
    y explicit
] table [x=x, y=y,y error=error, col sep=comma] {
    x,  y,       error
    64,  8.9, 0.46666666666
    128, 10.6, 0.53333333333
    256, 13.1, 0.62893207547
    512, 14.7, 0.6
    1024, 16.5, 0.80277297191
    1280,  16.9, 0.62893207547
};
\addlegendentry{Full Backward}

\addplot [
    domain=1:2000, 
    samples=100, 
    color=gray,
    ]
    {log2(x)};

\addplot [
    domain=1:2000, 
    samples=100, 
    color=gray,
    ]
    {2*log2(x)};

\end{axis}

\begin{axis}[
    width=0.5 \textwidth,
    xlabel={$n$},
    xmin=50, xmax=1400,
    ymin=0, ymax=25,
    legend pos=north west,
    ymajorgrids=true,
    grid style=dashed,
    xmode=log,
    log basis x={2},
    at={($(ax1.south east)+(2cm,0)$)},
    legend style={font=\tiny}
]
\addplot+[
  blue, mark options={blue, scale=0.75},
  error bars/.cd, 
    y fixed,
    y dir=both, 
    y explicit
] table [x=x, y=y,y error=error, col sep=comma] {
    x,  y,       error
    64,  7.9, 0.69602043392
    128, 9.1, 0.55377492419
    256, 10.1, 0.69602043392
    512, 11.1, 0.86666666666
    1024, 12.5, 0.85634883857
    1280, 12.7, 0.8969082698
};
\addlegendentry{LLL}

\addplot+[
  green, mark options={green, scale=0.75},
  error bars/.cd, 
    y fixed,
    y dir=both, 
    y explicit
] table [x=x, y=y,y error=error, col sep=comma] {
    x,  y,       error
    64,  9.5, 0.53748384988
    128, 10.7, 0.66999170807
    256, 12.1, 0.81377037438
    512, 12.8, 0.65319726474
    1024, 14.7, 0.84590516936
    1280, 15.1, 0.75718777944
};
\addlegendentry{BKZ ($\beta=4$)}

\addplot+[
  black!30!green, mark options={black!30!green, scale=0.75},
  error bars/.cd, 
    y fixed,
    y dir=both, 
    y explicit
] table [x=x, y=y,y error=error, col sep=comma] {
    x,  y,       error
    64,  11.3, 0.52068331172
    128, 13.6, 0.8
    256, 14.4, 0.85374989832
    512, 16, 0.51639777949
    1024, 17, 0.78881063774
    1280, 18.1, 0.35901098714
};
\addlegendentry{BKZ ($\beta=8$)}

\addplot+[
  red, mark options={red, scale=0.75},
  error bars/.cd, 
    y fixed,
    y dir=both, 
    y explicit
] table [x=x, y=y,y error=error, col sep=comma] {
    x,  y,       error
    64,  8.7, 0.52068331172
    128, 9.7, 0.8969082698
    256, 10.8, 0.65319726474
    512, 12.3, 0.6
    1024, 13.9, 0.81377037438
    1280, 13.8, 0.77746025264
};
\addlegendentry{Slide+LLL ($\beta=4$)}

\addplot+[
  black!30!red, mark options={black!30!red},
  error bars/.cd, 
    y fixed,
    y dir=both, 
    y explicit
] table [x=x, y=y,y error=error, col sep=comma] {
    x,  y,       error
    64,  8.5, 0.85634883857
    128, 10.7, 0.8969082698
    256, 12, 0.596284794
    512, 13.4, 0.61101009266
    1024, 13.8, 0.65319726474
    1280, 14.4, 0.8
};
\addlegendentry{Slide+LLL ($\beta=8$)}

\addplot [
    domain=1:2000, 
    samples=100, 
    color=gray,
    ]
    {log2(x)};

\addplot [
    domain=1:2000, 
    samples=100, 
    color=gray,
    ]
    {2*log2(x)};
    
\end{axis}
\end{tikzpicture}
\caption{Average $k_1$ over 10 iterations for basis reduction algorithms run on random bases. Also shown are error bars representing $\pm 2$ sample standard deviations. Gray lines correspond to $k_1 = \log_2(n)$ and $k_1 = 2 \log_2(n)$.}
\label{fig: average_k1}
\end{figure}
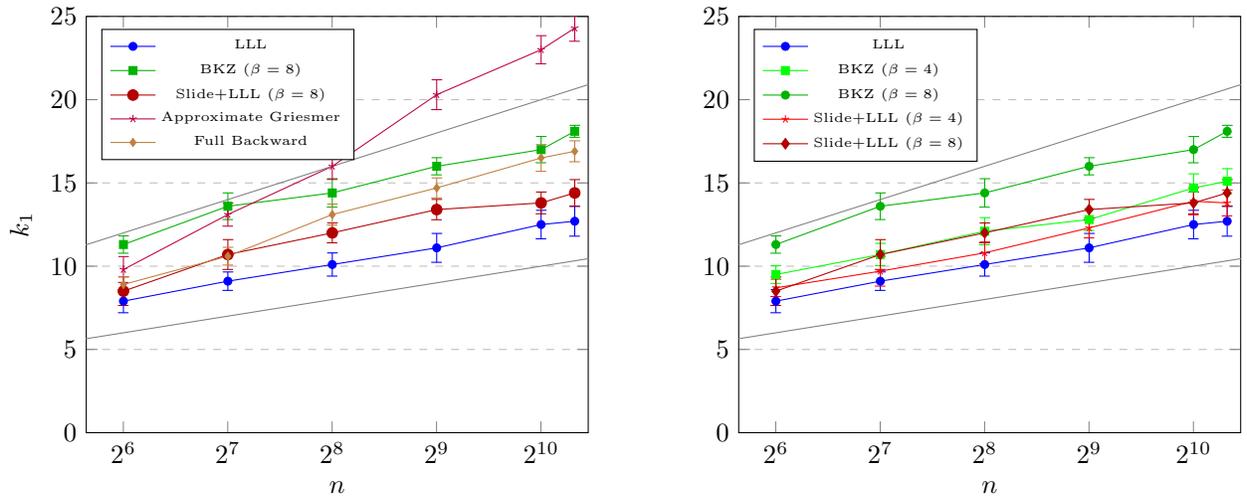

\appendix

\section{A very weak bound on the running time of BKZ}
\label{app:BKZ_running_time}
\begin{claim}
    The while loop in \cref{alg:bkz} runs at most $\beta n^{k-1} (n-k+2) - 1 + k$ times.
\end{claim}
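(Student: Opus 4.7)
The plan is to combine a lexicographic potential argument (to bound the number of updates) with an up-down counting argument (to convert this into a bound on total while-loop iterations), directly paralleling the LLL analysis earlier in the paper.

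First, I would define the potential $\Phi(\vec{B}) := (|\vec{b}_1^+|, |\vec{b}_2^+|, \ldots, |\vec{b}_k^+|) \in \mathbb{N}^k$ compared in lex order, and show that each iteration in which an update is performed strictly decreases $\Phi$. When the algorithm updates block $\vec{B}_{[i,i+\beta-1]}$, it uses \text{InsertPrimitive} to place the shortest non-zero codeword $\vec{p} \in \mathcal{C}(\vec{B}_{[i,i+\beta-1]})$ as the first vector of that block. By \cref{lem: epipodal_locality}, the values $|\vec{b}_j^+|$ for $j < i$ are unaffected, while the new $|\vec{b}_i^+| = |\vec{p}|$ is strictly smaller than the old $|\vec{b}_i^+|$ because by assumption the block was not forward reduced, so the old $\vec{b}_i^+$ was not a shortest non-zero codeword. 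Hence $\Phi$ strictly decreases in lex order at every update.

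Next, I would count the possible values of $\Phi$. Since the basis remains proper, $|\vec{b}_i^+| \geq 1$ for each $i$; together with $\sum_i |\vec{b}_i^+| = |\supp(\mathcal{C}(\vec{B}))| \leq n$ (which holds by \cref{fact: lengt_invariance} and the fact that $\vec{B}$ is only multiplied by invertible matrices), this forces $|\vec{b}_1^+| \leq n-k+1$ and $|\vec{b}_i^+| \leq n$ for $i \geq 2$. Thus the total number of possible values of $\Phi$ is at most $(n-k+1)\, n^{k-1} \leq n^{k-1}(n-k+2)$, and so the number of updates $d$ satisfies $d \leq n^{k-1}(n-k+2)$.

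Finally, I would close with the up-down argument. Let $u$ denote the number of iterations that increment $i$ by $1$, and $d$ the number that perform an update. Since $i$ starts at $1$ and the loop only terminates once $i = k$, the net change is $k-1$; as each update decreases $i$ by at most $\beta - 1$, one gets $u \leq k - 1 + (\beta-1)d$, and therefore
\[
u + d \;\leq\; k - 1 + \beta d \;\leq\; \beta\, n^{k-1}(n-k+2) \;+\; k - 1,
\]
which is exactly the claimed bound. The only genuinely delicate point is verifying the first step, namely that the new $\vec{b}_i^+$ after the update truly equals $\vec{p}$ (so that $|\vec{b}_i^+|$ strictly drops); this follows from the specification of \text{InsertPrimitive} (\cref{claim: insert_primitive}), which places $\vec{p}$ as the first row of the block while preserving properness, so the new first epipodal vector of the block is exactly $\vec{p}$.
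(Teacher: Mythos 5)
Your proof is correct, and its overall skeleton matches the paper's: bound the number of basis updates via a potential that strictly decreases at each update, then convert this into a bound on total iterations with the up/down counting argument ($u \le k-1+(\beta-1)d$, hence $u+d \le k-1+\beta d$). The one genuine difference is the potential itself. The paper uses the scalar potential $\Phi = \sum_{j} n^{k-j}\ell_j$, shows $0 \le \Phi \le n^{k-1}(n-k+2)-1$, and verifies by an explicit calculation that each update decreases $\Phi$ by at least $1$; you instead order the profiles $(\ell_1,\ldots,\ell_k)$ lexicographically, observe (via \cref{lem: epipodal_locality} and \cref{lem: lifting_reduction}/\cref{claim: insert_primitive}, which guarantee the new $\vec{b}_i^+$ is exactly $\vec{p}$ and hence strictly shorter) that each update is a strict lex decrease, and bound the number of updates by counting admissible profiles: $\ell_1 \le n-k+1$ and $\ell_i \le n$ give at most $(n-k+1)n^{k-1} \le n^{k-1}(n-k+2)$ of them. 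The two devices are essentially equivalent---the paper's scalar potential is in effect a base-$n$ encoding of your lex order---but your counting argument avoids the paper's arithmetic manipulations (the bound $\sum_{j>i} n^{k-j}\ell_j' \le n^{k-i-1}(n-1)$, etc.), at the cost of having to argue that the profile space is totally ordered and finite; both yield the same numeric bound $\beta n^{k-1}(n-k+2)+k-1$. Your treatment of the key step (that the new $i$th epipodal vector equals $\vec{p}$, so $\ell_i$ strictly drops) is in fact spelled out more carefully than in the paper's appendix, which simply asserts it; the only small gloss is that you implicitly use that properness is preserved (so $\ell_m \ge 1$ throughout), which follows from \cref{claim: insert_primitive} and \cref{lem: epipodal_locality} exactly as in the paper's correctness claim.
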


\begin{proof}
    Consider the following potential function
    $$\Phi(\ell_1, \dots, \ell_k) = \sum_{j=1}^{k} n^{k-j} \ell_j$$

    Since $\sum_{i=1}^k \ell_i \leq n$, $\Phi$ is upper bounded by the maximum of $\Phi$ under the constraint that $\sum_{i=1}^k \ell_i \leq n$ and that $\ell_i \geq 0$ for all $i$. Under those constraints, $\Phi$ is maximized if $\ell_1 = n$ and $\ell_i = 0$ for all $i \in [2, k]$. A simple calculation confirms that $\Phi(n-k+1, 1, \dots, 1) = n^{k-1} (n-k+1) + n^{k-2} + n^{k-3} + \dots + 1 = n^{k-1} (n-k+2) - 1$.
    Thus $\Phi$ is upper bounded by $n^{k-1} (n-k+2) - 1$. Notice also that since $\ell_i \geq 1$ for $i \in [1, k]$, $\Phi$ always positive. 
    
    Next, we observe that any time $\vec{B}$ is changed, $\Phi$ decreases by at least one. 
    For all $j \in [1, k]$, let $\ell_j'$ be the value of $\ell_j$ after the while loop is executed and consider the change to $\Phi$ when block $i$ is updated (to ensure $\vec{B}_{[i, i+\beta-1]}$ is forward reduced). We note three important facts relating $\{ \ell_j \}_{j \in [1, k]}$ to $\{ \ell_j' \}_{j \in [1, k]}$ and then use them to show $\Phi(\ell_1, \dots, \ell_k) - \Phi(\ell_1', \dots, \ell_k') \geq 1$.

    \begin{enumerate}
        \item
        $\ell_i - \ell_i' \geq 1$. Follows from the fact that a change was made to $\vec{B}$.

        \item
        $\ell_j = \ell_j'$ for all $j \notin [i, i+\beta-1]$. Follows from \cref{lem: epipodal_locality}.

        \item
        $\sum_{j=i+1}^k \ell_j' \leq n-1$. Follows from the fact that $\sum_{j=1}^k \ell_j' \leq n$ and $\ell_i' \geq 1$.
    \end{enumerate}
    \begin{align*}
        \Phi(\ell_1, \dots, \ell_k) - \Phi(\ell_1', \dots, \ell_k')
        &= \sum_{j=1}^k n^{k-j} \ell_j - \sum_{j=1}^k n^{k-j} \ell_j'\\
        &= \sum_{j=1}^{i-1} n^{k-j} (\ell_j - \ell_j') + n^{k-i} (\ell_i - \ell_i') + \sum_{j=i+1}^{k} n^{k-j} (\ell_j - \ell_j')\\
        &\geq 0 + n^{k-i} + \sum_{j=i+1}^{k} n^{k-j} (\ell_j - \ell_j')\\
        &\geq n^{k-i} - \sum_{j = i+1}^k n^{k-j} \ell_j'\\
        &\geq n^{k-i} - n^{k-i-1} (n-1)\\
        &\geq 1
    \end{align*}
     The second to last inequality follows from the fact that $\sum_{j = i+1}^k n^{k-j} \ell_j'$ is upper bounded by the maximum of $\sum_{j = i+1}^k n^{k-j} \ell_j'$ under the constraints that $\sum_{j=i+1}^k \ell_j' \leq n-1$ and $\ell_j' \geq 0$ for all $j \in [i+1, k]$. A simple calculation shows that the maximum is $n^{k-i-1} (n-1)$. Since $\Phi$ is upper bounded by $n^{k-1} (n-k+1) - 1$, lower bounded by $0$, and decreases by one each time $\vec{B}$ is changed, $\vec{B}$ changes at most $n^{k-1} (n-k+1) - 1$ times.

     Finally, we show that \cref{alg:bkz} terminates in $\beta n^k + k$ iterations. Let $u$ (for up) denote the number of times the algorithm increments $i$ and $d$ denote the number of times the algorithm does not increment $i$ (possibly leaving it unchanged). 
     Notice that the algorithm runs for $u+d$ iterations since each iteration either results in $i$ being incremented or not being incremented. 
     We must have that $u-(\beta-1)d \leq k$, since the algorithm terminates as soon as $i = k$, but the final index $i$ is clearly at least $u-(\beta-1)d$ (since $i$ is incremented $u$ times and decreases by at most $(\beta-1)$ a total of $d$ times). And, we must have $d \leq n^{k-1} (n-k+2) - 1$ since $i$ is only not incremented when $\vec{B}$ is updated, which we showed happens at most $n^{k-1} (n-k+2) - 1$ times. It follows that
     \[
        u+d = \beta d + u - (\beta-1)d \leq  \beta d + k\leq \beta n^{k-1} (n-k+2) - 1 + k
        \; .
     \]
     So, \cref{alg:bkz} terminates after at most $\beta n^{k-1} (n-k+2) - 1 + k$ iterations, as claimed.
\end{proof}

\section{Subprocedures for one- and two-dimensional codes}
\label{subsec: log_q}

Here, we describe very simple $O(n \log^2 (q))$ algorithms that solve two simple problems that arise in our size-reduction algorithm and in LLL. In one, we wish to find $a \in \F_q$ minimizing $\vec{x}_1 + a \vec{x}_2$ (and worry about tie breaking), which one can think of as the nearest codeword problem in a one-dimensional code. In the other, we wish to find a non-zero shortest codeword in the two-dimensional code generated by $\vec{x}_1$ and $\vec{x}_2$.

\begin{definition}
    The problem Size Reduction Coefficient is defined as follows. Given $\vec{x}_1 = (x_{1, 1}, \dots, x_{1, n}) \in \F_q^n$, $\vec{x}_2 = (x_{2, 1}, \dots, x_{2, n}) \in \F_q^n$, find $a \in \F_q$ such that $|\vec{x}_1 + a \vec{x}_2|+\text{TB}_{\vec{x}_2}(\vec{x}_1 + a \vec{x}_2)$ is minimized.
\end{definition}

Notice that the $i$th coordinate of $\vec{y} = \vec{x}_1 + a \vec{x}_2$ is zero if and only if $a = -x_{1, i} x_{2, i}^{-1}$. It is also not hard to see that if $|\vec{x}_1 + a \vec{x}_2|+\text{TB}_{\vec{x}_2}(\vec{x}_1 + a \vec{x}_2)$ is minimized, then $\vec{x}_1 + a \vec{x}_2$ contains at least one zero, and thus $a = -x_{1, i} x_{2, i}^{-1}$ for some $i \in [1, n]$. Therefore finding $a$ is simply a matter of enumerating over all $i \in [1, n]$ to find $a = -x_{1, i} x_{2, i}^{-1}$ minimizing $|\vec{x}_1 + a \vec{x}_2|+\text{TB}_{\vec{x}_2}(\vec{x}_1 + a \vec{x}_2)$. The whole processes requires time $O(n \log^2(q))$.

\begin{definition}
    The problem Shortest Codeword Coefficients is defined as follows. Given $\vec{x}_1 \in \F_q^n$, $\vec{x}_2 \in \F_q^n$, find $a_1 \in \F_q$ and $a_2 \in \F_q$ such that $(a_1, a_2) \neq (0, 0)$ and $|a_1 \vec{x}_1 + a_2 \vec{x}_2|$ is minimized.
\end{definition}

There are two cases to consider. If $a_1 \neq 0$, then without loss of generality, we can assume $a_1 = 1$ since $|a_1 \vec{x}_1 + a_2 \vec{x}_2| = |a_1^{-1} (a_1 \vec{x}_1 + a_2 \vec{x}_2)| = |\vec{x}_1 + a_1^{-1} a_2 \vec{x}_2|$. If $a_1 = 0$, then for all $a_2 \neq 0$, $|a_1 \vec{x}_1 + a_2 \vec{x}_2|$ is the same. Thus to solve Shortest Codeword Coefficients, we solve Size Reduction Coefficients to find $a$ and return $(1, a)$ if $|\vec{x}_1 + a \vec{x}_2| < |\vec{x}_2|$ and $(0, 1)$ otherwise. The whole process requires only time $O(n \log^2(q))$

\end{document}